\documentclass[11pt]{article}

\usepackage{float, amssymb, amsmath, amsthm, graphicx, bbm, multirow, siunitx, placeins, authblk, geometry, fullpage}
\usepackage[round]{natbib}

\usepackage[dvipsnames]{xcolor}
\usepackage{diagbox}
\usepackage{tikz}
\usetikzlibrary{arrows.meta}
\usetikzlibrary{decorations.pathreplacing}
\usepackage[shortlabels]{enumitem}
\setlist{nolistsep}

\usepackage{array}
\newcolumntype{C}[1]{>{\centering\arraybackslash}p{#1}}
\usepackage{algorithm}
\usepackage{algpseudocode}
\algnewcommand\algorithmicinput{\textbf{INPUT:}}
\algnewcommand\INPUT{\item[\algorithmicinput]}

\algnewcommand\algorithmicoutput{\textbf{OUTPUT:}}
\algnewcommand\OUTPUT{\item[\algorithmicoutput]}

\algnewcommand\algorithmicexploration{\textbf{Initialize:}}
\algnewcommand\Initialise{\item[\algorithmicexploration]}

\algnewcommand\algorithmicexploitation{\textbf{Exploitation:}}
\algnewcommand\Exploitation{\item[\algorithmicexploitation]}

\usepackage{booktabs}

\usepackage{xr-hyper}
\usepackage{hyperref}[]
\hypersetup{
    colorlinks=true,
    linkcolor=blue,
    filecolor=magenta,      
    urlcolor=cyan,
      citecolor=blue,
    }

\usepackage{cleveref}

\newtheorem{theorem}{Theorem}
\newtheorem{lemma}[theorem]{Lemma}
\newtheorem{proposition}[theorem]{Proposition}

\newtheorem{definition}{Definition}

\newtheorem{assumption}{Assumption}

\makeatletter
\newcommand{\neutralize}[1]{\expandafter\let\csname c@#1\endcsname\count@}
\makeatother

\allowdisplaybreaks

\DeclareMathOperator*{\argmin}{argmin}

\DeclareMathOperator*{\arginf}{arginf}

\renewcommand{\arraystretch}{1.2}

\usepackage{mathtools}

\def\E{\mathbb{E}}
\def\P{\mathbb{P}}

\def\rank{\mathrm{rank}}

\def\R{\mathbb{R}}
\def\Z{\mathbb{Z}}

\title{Online Learning for Autoregressive Multilayer Stochastic Block Models under Stationarity and Non-Stationarity}

\author[1]{Fan Wang}
\author[2]{Haotian Xu}
\author[3]{Yi Yu}

\affil[1]{School of Mathematics and Statistics, University of Melbourne}
\affil[2]{Department of Mathematics and Statistics, Auburn University}
\affil[3]{Department of Statistics, University of Warwick}

\begin{document}

\maketitle

\begin{abstract}
Dynamic multilayer networks arise in many applications where multiple types of relations among a common set of nodes evolve over time. Existing approaches often assume temporal independence, focus on single-layer networks or impose stationarity, limiting their applicability in practice.
In this paper, we introduce a first-order autoregressive multilayer stochastic block model (AR(1)-MSBM), in which edge formation and dissolution probabilities between consecutive time points are determined by latent community memberships and shared across layers. Under stationarity, we propose an online estimation procedure based on recursive updates and tensor-based spectral refinement. We establish non-asymptotic estimation rates, prove their minimax optimality and derive guarantees for community recovery.
We further consider a non-stationary setting that allows both abrupt changes and gradual shifts, and develop an adaptive windowed online algorithm that automatically adjusts to unknown structural changes. Under a quasi-stationary segmentation framework, we derive estimation and community recovery guarantees that match the stationary results when applied segmentwise. Our theoretical findings are supported by extensive numerical experiments, with code available online \url{https://github.com/HaotianXu/AR1-MSBM_Simulation}.
\end{abstract}

\section{Introduction}

Statistical network analysis studies the relationships among a set of entities. A network is typically represented by a collection of nodes and edges, corresponding to the units in a population of interest and their pairwise relationships.  Multilayer networks generalize this framework by allowing multiple types of interactions, referred to as layers, to coexist among a common set of nodes. When such multilayer network data are observed sequentially over time, they give rise to dynamic multilayer networks. 

Analyzing dynamic multilayer networks presents several key challenges.  First, the coexistence of multiple relation types requires statistical models that capture inter-layer structural relationships rather than treating each relation type in isolation. 
Second, temporal dependence is often intrinsic in many applications, calling for models that explicitly characterize the evolution of network structures over time. 
Third, many systems operate in streaming, non-stationary environments, where the underlying connectivity may drift gradually or change abruptly.
It is therefore crucial to develop adaptive online learning procedures that efficiently update estimates as new data arrive, leveraging historical information while promptly responding to substantial changes, whether abrupt or arising from accumulated drift.

Motivated by these considerations, we propose an autoregressive multilayer network process, in which edges evolve over time while the node and layer sets remain fixed. Autoregressive network models, which provide an explicit mechanism for capturing temporal dependencies between consecutive networks, have been studied for single-layer networks by \cite{jiang2023autoregressive} and \cite{chang2026autoregressive}.  We extend this framework to the multilayer setting and to non-stationary regimes, where the underlying dynamics may evolve gradually or undergo abrupt changes. We propose an adaptive online procedure that adjusts its effective look-back window in response to unknown structural shifts, thereby efficiently leveraging past information while maintaining robustness.

Such temporally dependent multilayer networks with non-stationary dynamics arise naturally in a wide range of application domains. For example, in international trade networks \citep[e.g.][]{de2015structural, jing2021community, wang2025multilayer}, countries are nodes and trade relationships form edges.
Different product categories constitute distinct layers and sequential observations over time yield dynamic multilayer data.
Temporal dependence arises because trade patterns tend to persist due to factors such as long-term contracts.
However, the overall system is rarely stationary: gradual drifts occur due to supply-chain reconfiguration or economic development, while abrupt structural shifts occur following exogenous shocks such as trade policy changes, tariff impositions, financial crises or major global events like the COVID-19 pandemic.
The ability to adaptively learn under non-stationarity is therefore essential for timely analysis of evolving networked systems.

\subsection{Related literature}

Our work lies at the intersection of three areas: dynamic networks, multilayer networks and non-stationary online learning. We review key developments in each area below.

\medskip
\noindent
\textbf{Dynamic networks.} 
A substantial body of work has focused on modeling the temporal evolution of networks, particularly through dynamic SBMs. To capture gradual structural evolution, early studies \citep[e.g.][]{yang2011detecting} modelled node community memberships as Markov chains with fixed block probabilities, while later extensions \citep[e.g.][]{xu2014dynamic, matias2016statistical, pensky2019spectral} allowed both memberships and connection probabilities to evolve over time. In contrast, networks may also undergo abrupt structural changes, in which either the community structure or the connectivity patterns shift suddenly. This regime falls under the domain of change point analysis, with works such as \cite{de2016detection}, \cite{padilla2022change}, \cite{bhattacharjee2020change} and \cite{wang2021optimal} developing methods and theoretical guarantees for detecting and localizing such structural breaks in dynamic SBMs.

It is worth noting that in all the aforementioned studies, latent community memberships are modelled through their connection probabilities. In contrast, \cite{jiang2023autoregressive} introduced a first-order autoregressive stochastic block model (AR(1)-SBM) in which the latent community memberships are characterized by their transition probabilities.  
More recently, \cite{chang2026autoregressive} proposed dependent-edge AR$(m)$ network models, where the transition probability of an edge may depend on multiple lagged observations as well as the histories of other edges.

\medskip
\noindent
\textbf{Multilayer networks.}
Multilayer network modeling has emerged as a powerful framework for representing multiple types of interactions among a common set of nodes \citep[e.g.][]{cardillo2013modeling, kivela2014multilayer, zhang2020flexible}. For multilayer SBMs, notable examples include the work of \cite{paul2020spectral}, who studied the consistency of various community detection methods, and \cite{lei2024computational}, who analyzed the computational and statistical thresholds for consistent recovery across varying network densities. Other formulations of multilayer SBMs, such as \cite{stanley2016clustering} and \cite{jing2021community}, allow heterogeneous community structures across layers.

Dynamic multilayer SBMs that incorporate temporal evolution remain comparatively less explored.
\cite{lopez2025dynamicstochasticblockmodel} proposed a dynamic multilayer SBM in which community memberships evolve according to hidden Markov chain dynamics, allowing each node’s community membership in a given layer to depend on its previous memberships across all layers. \cite{ting2020detecting} employed state-dependent Markov-switching dynamics to model temporal variations in connection probabilities while keeping node community memberships fixed over time.
In contrast, \cite{wang2025multilayer} allowed both community memberships and connection probabilities to change abruptly, focusing on detecting and localizing such structural breaks.

\medskip
\noindent
\textbf{Non-stationary online learning.}
A growing body of research investigates online learning in non-stationary environments, where the data-generating process drifts over time. In online convex optimization, studies such as \cite{besbes2015non}, \cite{jadbabaie2015online}, \cite{zhao2021improved} and \cite{pmlr-v151-baby22a} 
analyze settings in which the loss functions evolve over time and characterize dynamic regret as the performance gap between the learner and a dynamic oracle. Theoretical guarantees are typically expressed in terms of variation metrics such as the cumulative change in optimizers or loss functions.

Beyond convex optimization, non-stationary multi-armed bandit problems, where reward distributions change over time, have also been widely studied. For instance, \cite{auer2019adaptively} proposed adaptive algorithms that achieve optimal regret without prior knowledge of the number or timing of changes. Building on this, \cite{suk2022tracking} and \cite{suk2025tracking} introduced the concept of significant shifts, a refined measure of non-stationarity that focuses on substantial distributional changes requiring algorithmic adaptation, leading to sharper performance guarantees.
More recently, \cite{huang2023stability} proposed a general framework for learning under non-stationarity by introducing quasi-stationary segments and developed adaptive learning algorithms with minimax-optimal guarantees.

\subsection{List of contributions}

The main contributions of this work are summarized as follows.

Firstly, we develop a comprehensive framework for dynamic multilayer networks that explicitly captures temporal dependence through an edge-refresh mechanism, in which each edge in each layer evolves over time via formation and dissolution governed by an autoregressive structure.  Such frameworks have been considered in \cite{jiang2023autoregressive} and \cite{chang2026autoregressive} under different modeling assumptions and stationarity regimes.
In particular, we consider multilayer network sequences that may exhibit both gradual and abrupt changes in their transition probabilities.

Secondly, to address the challenges posed by non-stationarity in dynamic multilayer network data, we propose a novel adaptive online learning algorithm that automatically adjusts its temporal look-back window according to local data variability. Inspired by the stability-based principle introduced by \cite{huang2023stability}, the algorithm balances bias and stochastic error by comparing empirical losses across nested windows and selecting the largest look-back window that remains stable. This mechanism enables automatic adaptation to both smooth and abrupt structural changes without requiring prior knowledge of change frequencies or magnitudes.
Moreover, the algorithm achieves low storage and computational cost by maintaining only a small number of sufficient statistics and updating estimates recursively rather than recomputing from scratch. Building upon this adaptive framework, we further incorporate tensor-based spectral methods that enable sharp estimation of transition probabilities.

Thirdly, in the non-stationary setting, we propose a segmentation framework that partitions the entire data sequence into quasi-stationary segments. To control the bias induced by temporal drift, we assume locally dominantly monotone trajectories for the segment-level transition probabilities, allowing occasional reversals while preserving an overall trend. Under these mild conditions, our adaptive online algorithm achieves sharp estimation accuracy, as well as consistent and exact community recovery under different conditions, with estimation rates and requirement assumptions matching those of the stationary setting when applied segmentwise.

Lastly, we establish rigorous theoretical guarantees for both the stationary and non-stationary settings. In the stationary setting, we derive sharp non-asymptotic estimation bounds for the transition probabilities governing edge formation and dissolution, which we show to be minimax optimal when the time horizon $t$ is sufficiently large relative to the number of nodes and layers.  As part of this analysis, we also derive a minimax lower bound for estimating transition probabilities; to the best of our knowledge, this is the first such lower bound for dynamic multilayer network models.
We further establish both consistent and exact recovery of the latent communities under different conditions, showing an improvement upon existing results.

\subsection{Notation and organization}
For any positive integer $p$, let $[p] = \{1, \ldots, p\}$ and let $\Pi([p])$ denote the set of all permutations of $[p]$. 
We write $a_n = O(b_n)$ if there exists a constant $C > 0$, independent of $n$, such that $a_n \leq C b_n$ for all sufficiently large $n$. We write 
$a_n = \Theta(b_n)$ if both $a_n = O(b_n)$ and $b_n = O(a_n)$. 

For any matrix $A \in \R^{p_1 \times p_2}$, let $A_i$ and $A^j$ denote the $i$th row and $j$th column of $A$, respectively.  Let $A_{i, j}$ be the entry of $A$ in the $i$th row and $j$th column and let $\sigma_1(A) \geq \dots \geq  \sigma_{p_1 \wedge p_2} (A)\geq 0$ denote its singular values. Let $\|A\|$ and $\|A\|_\mathrm{F}$  be the spectral and Frobenius norms of $A$, respectively.   We define $\sigma_{\min}(A)$ as the smallest non-zero singular value of $A$ if $A \neq 0$, and as zero if $A = 0$. For any positive integer $p_1 \geq p_2 $, let $\mathbb{O}_{p_1 \times p_2} =  \{ O \in \R^{p_1\times p_2}\colon O^{\top}O = I_{p_2}\}$.

For any order-3 tensors $\mathbf{M}, \mathbf{Q}  \in \R^{p_1 \times p_2 \times p_3}$, define the inner product
\[
\langle \mathbf{M}, \mathbf{Q} \rangle = \sum_{i = 1}^{p_1} \sum_{j=1}^{p_2} \sum_{l = 1}^{p_3}   \mathbf{M} _{i, j, l } \mathbf{Q} _{i, j, l },
\]
the entrywise infinity norm $\| \mathbf{M} \|_{\infty} = \max_{i \in [p_1], j \in [p_2], l \in [p_3]} \vert \mathbf{M}_{i, j, l}\vert $ and the Frobenius norm
$\|\mathbf{M}  \|_{\mathrm{F}} =\sqrt{\langle \mathbf{M}, \mathbf{M} \rangle}$
For any $l \in [p_3]$, let $\mathbf{M}_{:,:,l} \in \R^{p_1 \times p_2}$ with $(\mathbf{M}_{:,:,l})_{i,j} = \mathbf{M}_{i,j,l}$ for any $i \in [p_1]$ and $j \in [p_2]$.
The mode-$1$ matricization of a tensor  $\mathbf{M}$  is defined as  
\[
\mathcal{M}_1(\mathbf{M}) \in \mathbb{R}^{p_{1} \times (p_{2} p_{3})} \mbox{ with entries }
\mathcal{M}_1(\mathbf{M} )_{i_1, (i_2 - 1)p_{3} +i_3 } = \mathbf{M} _{i_1, i_2, i_3}.
\]
The mode-$2$ and mode-$3$ matricizations are analogously defined as $\mathcal{M}_2(\mathbf{M}) \in \mathbb{R}^{p_2 \times (p_3 p_1)}$ and $\mathcal{M}_3(\mathbf{M}) \in \mathbb{R}^{p_3 \times (p_1 p_2)}$, respectively.
Define 
\[
\sigma_{\min}(\mathbf{M}) = \min\big\{ \sigma_{\min}\big(\mathcal{M}_1(\mathbf{M})\big), \sigma_{\min}\big(\mathcal{M}_2(\mathbf{M})\big), \sigma_{\min}\big(\mathcal{M}_3(\mathbf{M})\big)  \big\}
\]
and 
\[
\|\mathbf{M}\| = \max\big\{ \|\mathcal{M}_1(\mathbf{M})\|, \|\mathcal{M}_2(\mathbf{M})\|, \|\mathcal{M}_3(\mathbf{M})\|  \big\}.
\]
For any $s \in [3]$, let $\rank_s(\mathbf{M}) = \rank(\mathcal{M}_s(\mathbf{M}))$.
The Tucker ranks $(r_1, r_2, r_3)$ of $\mathbf{M}$ are given by 
\[
r_s = \rank_s(\mathbf{M}), \quad \forall s \in [3].
\]
For any $s \in [3]$ and matrix $U_s \in \mathbb{R}^{q_s \times p_s}$, the marginal multiplication operator $\times_1$ is defined as
\[
\mathbf{M}  \times_1 U_1 =  \bigg\{\sum_{k = 1}^{p_{1}} \mathbf{M}_{k, j, l} (U_1)_{i, k} \bigg\}_{i \in [q_1],  j \in [p_2],  l \in [p_3]} \in \mathbb{R}^{q_1 \times p_2 \times p_3}.
\]
Marginal multiplications $\times_2$ and $\times_3$ are defined analogously.

The rest of the paper is organized as follows. In \Cref{sec-stat}, we introduce the stationary AR(1)-MSBM and develop an online estimation procedure together with its theoretical guarantees. \Cref{sec-nonstat} extends the framework to the non-stationary setting, where we propose an adaptive window-based algorithm and establish its statistical properties. Extensive numerical experiments, including both simulation studies and real data analysis, are presented in \Cref{sec-num}. \Cref{sec-dis} concludes with a discussion of possible extensions. All technical proofs and auxiliary results are deferred to the appendices.

\section{Optimal transition tensor estimation and community recovery under stationarity} \label{sec-stat}

We begin with the stationary AR(1) multilayer network process, in which the transition probabilities are time-invariant. This setting induces temporal dependence among consecutive networks while preserving identical marginal distributions over time.  For completeness, we first define the stationary AR(1)-MSBM in \Cref{def-sarmsb}. A tensor-based online estimation procedure is then presented in \Cref{section-stat-alg}, followed by its theoretical guarantees in \Cref{section-stat-theory}.

\begin{definition}[Stationary first-order autoregressive multilayer stochastic block model, stationary AR(1)-MSBM]\label{def-sarmsb}
A stationary $\mbox{AR}(1)\mbox{-MSBM}$ is a sequence of adjacency tensors $\{\mathbf{A}^t \}_{t \geq 0} \subset\{0, 1\}^{n \times n \times L}$ defined recursively for $t\geq 1$ by
\[
\mathbf{A}^t_{i,j, l} = \mathbf{A}^{t-1}_{i,j,l} \mathbbm{1}\{\mathbf{E}^t_{i,j, l} = 0\} + \mathbbm{1}\{\mathbf{E}^t_{i,j, l} = 1\}, \quad \forall  1 \leq i \leq  j \leq n, l \in [L],
\]
where $\{\mathbf{E}^t_{i,j, l}\}_{1 \leq i \leq  j \leq n, l \in [L], t \geq 1} \subset \{-1, 0, 1\}$ are mutually independent random variables with 
\[
\P \{ \mathbf{E}^t_{i,j, l}  = 1 \} = \boldsymbol{\Theta}_{i, j, l} = Z_i \mathbf{W}_{:,: l} Z_j^{\top},  \quad \P \{ \mathbf{E}^t_{i,j, l}  = -1 \}  =  \boldsymbol{\Delta}_{i, j, l} = Z_i  \mathbf{M}_{:,: l} Z_j^{\top}
\]
and
\[
\P \{ \mathbf{E}^t_{i,j, l}  = 0  \}  = 1 - \boldsymbol{\Theta}_{i, j, l}  - \boldsymbol{\Delta}_{i, j, l}.
\]
The initial adjacency tensor $\mathbf{A}^0 \in \{0, 1\}^{n \times n \times L}$ is defined as
\[
 \P\{\mathbf{A}_{i,j,l}^{0} = 1\}  = \frac{\boldsymbol{\Theta}_{i, j, l}}{ \boldsymbol{\Theta}_{i, j, l} + \boldsymbol{\Delta}_{i, j, l}}, \quad \forall  1 \leq i \leq  j \leq n, l \in [L].
\]
Here, $Z \in \{0,1\}^{n \times K}$ is the community membership matrix and $\mathbf{W}, \mathbf{M} \in [c_{\min}, 1-c_{\min}]^{K \times K \times L}$ are connectivity tensors, with an absolute constant $ c_{\min} \in (0, 1/2)$. 
\end{definition}

This edge-refresh mechanism has been considered in 
\cite{jiang2023autoregressive} and \cite{chang2026autoregressive}. 
Between two consecutive time points, each edge $(i, j, l)$ may undergo formation (transition from $0$ to $1$) or dissolution (transition from $1$ to $0$), governed by the probabilities $\boldsymbol{\Theta}_{i, j, l}$ and $\boldsymbol{\Delta}_{i, j, l}$, respectively, i.e.
\begin{equation}\label{eq-tran}
 \P\{\mathbf{A}_{i,j,l}^t = 1 \mid \mathbf{A}_{i,j,l}^{t-1} = 0 \} = \boldsymbol{\Theta}_{i, j, l}  \quad \mbox{and} \quad 
 \P\{\mathbf{A}_{i,j,l}^t = 0 \mid \mathbf{A}_{i,j,l}^{t-1} = 1 \} = \boldsymbol{\Delta}_{i, j, l}.
\end{equation}
The transition probability tensors $\boldsymbol{\Theta}$ and $\boldsymbol{\Delta}$ encode a shared community structure across all layers. As we show in \Cref{prop-stat} in \Cref{sec-app-add-stat}, each edge process $\{\mathbf{A}_{i,j,l}^t\}_{t \ge 0}$ forms a two-state Markov chain that admits a unique stationary distribution.  Consequently,  the marginal edge probabilities 
\[
\boldsymbol{\Pi}_{i,j,l}^t = \P\{\mathbf{A}_{i,j,l}^t = 1\}, \quad \forall  1 \leq i \leq  j \leq n, l \in [L],
\]
also inherit this community structure. As emphasized by \cite{jiang2023autoregressive} and \cite{chang2026autoregressive}, 
leveraging both $\boldsymbol{\Theta}$ and $\boldsymbol{\Delta}$ for community detection provides substantially improved accuracy compared with relying solely on the marginal probabilities $\boldsymbol{\Pi}$. This advantage is further demonstrated by the numerical experiments in \Cref{sec:stationary}.

\subsection{An online learning algorithm}\label{section-stat-alg}

\begin{algorithm}
\caption{Online estimation for stationary $\mbox{AR}(1)\mbox{-MSBM}$}
\label{alg:fast-stat-msbm}
\begin{algorithmic}
\INPUT{Ranks $(K, r_1, r_2)$}
\Initialise{$m \leftarrow 0$, $\mathbf{N}^{0, 0, 1}_{i,j,l} \leftarrow 0$, $\mathbf{N}^{0, 1, 0}_{i,j,l} \leftarrow 0$, $\mathbf{N}^{0}_{i,j,l}  \leftarrow 0$}
\ForAll{$t \in [T]$}
\Statex \vspace{2pt}
\textbf{Stage I: Initial estimator update}
    \ForAll{$1\leq i\leq j\leq n, l\in[L]$}
        \State{ 
        $\mathbf{N}^{t,0,1}_{i,j,l} \leftarrow\mathbf{N}_{i,j,l}^{t-1,0,1}+\mathbf{A}^{t}_{i,j,l}(1-\mathbf{A}^{t-1}_{i,j,l}), \quad 
        \mathbf{N}^{t,1,0}_{i,j,l}\leftarrow \mathbf{N}_{i,j,l}^{t-1,1,0}+(1-\mathbf{A}^{t}_{i,j,l})\mathbf{A}^{t-1}_{i,j,l}$}
        \State{ $\mathbf{N}^{t}_{i,j,l} \leftarrow \mathbf{N}^{t-1}_{i,j,l} + \mathbf{A}^{t-1}_{i,j,l}$}
        \State{ ${\widehat{\boldsymbol{\Theta}}}^{t}_{i, j, l}
   \leftarrow   \mathbf{N}^{t, 0, 1}_{i, j, l}/ (t- \mathbf{N}^{t}_{i,j,l}) $, 
   $ {\widehat{\boldsymbol{\Delta}}}^{t}_{i, j, l}
   \leftarrow   \mathbf{N}^{t, 1, 0}_{i, j, l}/\mathbf{N}^{t}_{i,j,l}$
 }
    \EndFor
       \State{Form $({\widehat{\boldsymbol{\Theta}}}^{t}, \widehat{\boldsymbol{\Delta}}^{t})$ by setting $\widehat{\boldsymbol{\Theta}}^{t}_{j,i,l} = \widehat{\boldsymbol{\Theta}}^{t}_{i,j,l}$ and $\widehat{\boldsymbol{\Delta}}^{t}_{j,i,l} = \widehat{\boldsymbol{\Delta}}^{t}_{i,j,l}$ for $i < j, l \in [L]$.}
\Statex \vspace{2pt}
\textbf{Stage II: Low-rank refinement via tensor-based methods}    
   \If{$t \geq 2^m$}
   \State{$\widehat{U}^{t}_Z \leftarrow \mbox{H-PCA} \big(\mathcal{M}_1 (\widehat{\boldsymbol{\Theta}}^{t } + \widehat{\boldsymbol{\Delta}}^{t}) \mathcal{M}_1 (\widehat{\boldsymbol{\Theta}}^{t } + \widehat{\boldsymbol{\Delta}}^{t })^{\top}, K\big)$}
   \State{$\widehat{U}^{t}_W \leftarrow  \mbox{H-PCA} \big(\mathcal{M}_3 (\widehat{\boldsymbol{\Theta}}^{t } ) \mathcal{M}_3 (\widehat{\boldsymbol{\Theta}}^{t })^{\top}, r_1\big)$}
    \State{$\widehat{U}^{t}_M \leftarrow \mbox{H-PCA} \big(\mathcal{M}_3 (\widehat{\boldsymbol{\Delta}}^{t } ) \mathcal{M}_3 (\widehat{\boldsymbol{\Delta}}^{t })^{\top}, r_2\big)$}
   \State{$ m \leftarrow m+1$}
   \Else
   \State{$\widehat{U}_Z^{t} \leftarrow \widehat{U}_Z^{t-1}$,  $\widehat{U}_W^{t} \leftarrow \widehat{U}_W^{t-1}$,  $\widehat{U}_M^{t} \leftarrow \widehat{U}_M^{t-1}$}
   \EndIf
\State{$\widetilde{\boldsymbol{\Theta}}^{t} = \widehat{\boldsymbol{\Theta}}^{t} \times_1 \widehat{U}^{t}_Z (\widehat{U}^{t}_Z)^{\top}\times_2 \widehat{U}^t_Z(\widehat{U}^{t}_Z)^\top  \times_3 \widehat{U}^t_W (\widehat{U}^{t}_W)^\top$} \State{$\widetilde{\boldsymbol{\Delta}}^{t} = \widehat{\boldsymbol{\Delta}}^{t} \times_1 \widehat{U}^t_Z  (\widehat{U}^{t}_Z)^{\top} \times_2 \widehat{U}^t_Z(\widehat{U}^{t}_Z)^{\top}\times_3  \widehat{U}_M^t(\widehat{U}^{t}_M)^\top$} 

\Statex \vspace{2pt}
\textbf{Stage III: Community recovery}   
\State{$ \widetilde{U}^t 
= \argmin_{U \in \mathcal{M}_{n,K}} \|\widehat{U}_Z^{t} - U\|_{\mathrm{F}}^2. 
$}
\State{Let $\{\tilde{u}^{(t, k)}\}_{k=1}^{K}$ denote the distinct row patterns of $\widetilde{U}^t$. }
\State{$\widehat{Z}^t_{i, j} =  \mathbbm{1} \big\{\widetilde{U}_i^t = \tilde{u}^{(t, j)}\big\}, \quad \forall i \in [n], j \in [K]$
}
\EndFor
       
\OUTPUT{ 
$\{(\widetilde{\boldsymbol{\Theta}}^{t},
  \widetilde{\boldsymbol{\Delta}}^{t})\}_{t\in [T]}$ and 
$\{\widehat{Z}^t\}_{t\in [T]}$}
\end{algorithmic}
\end{algorithm}

We now present an efficient online procedure for estimating the transition probabilities and latent community structure in the stationary $\mbox{AR}(1)\mbox{-MSBM}$. 
At a high level, \Cref{alg:fast-stat-msbm} consists of three steps. First, it recursively updates edgewise sufficient statistics and forms closed-form initial estimators of the transition probabilities. Second, it refines these estimators by exploiting the low-rank Tucker structure through tensor-based spectral methods. Third, it recovers the latent communities by clustering the estimated node subspace. We describe these three components below.

\medskip
\noindent
\textbf{Likelihood formulation and initial estimators.}
At each time point $t$, conditionally on the initial state $\mathbf{A}^0$, the joint likelihood of the observed sequence
$\{\mathbf{A}^s\}_{s \in [t] }$ factorizes over time and edges as
\[
       \prod_{s=1}^{t}  \prod_{1 \leq i \leq  j \leq n, l \in [L]} \boldsymbol{\Theta}_{i, j, l}^{\mathbf{A}^s_{i,j, l} (1-\mathbf{A}^{s-1}_{i,j, l})} (1- \boldsymbol{\Theta}_{i, j, l})^{(1-\mathbf{A}^s_{i,j, l})(1-\mathbf{A}^{s-1}_{i,j, l})} \boldsymbol{\Delta}_{i, j, l}^{(1-\mathbf{A}^s_{i,j, l}) \mathbf{A}^{s-1}_{i,j, l}} (1-\boldsymbol{\Delta}_{i, j, l})^{\mathbf{A}^s_{i,j, l} \mathbf{A}^{s-1}_{i,j, l}}.
\]
Maximizing this likelihood yields the closed-form initial estimators of the transition probabilities:
\[
\widehat{\boldsymbol{\Theta}}^t_{i, j, l} = \frac{\sum_{s=1}^{t} \mathbf{A}^s_{i,j, l} (1-\mathbf{A}^{s-1}_{i,j, l})}{\sum_{s=1}^{t} (1-\mathbf{A}^{s-1}_{i,j, l})}, 
\quad
\widehat{\boldsymbol{\Delta}}^t_{i,j,l}= \frac{(1-\mathbf{A}^s_{i,j, l}) \mathbf{A}^{s-1}_{i,j, l}}{\sum_{s=1}^{t} \mathbf{A}^{s-1}_{i,j, l}}.
\]
These estimators depend only on three edgewise sufficient statistics:
the cumulative number of $0 \to 1$ transitions, 
the cumulative  number of $1 \to 0$ transitions 
and the total number of active edges, i.e.~the tensors $\mathbf{N}^{t,0,1}, \mathbf{N}^{t,1,0}, \mathbf{N}^{t} \in \R^{n \times n \times L}$, each computed separately for every edge $(i,j,l)$.
As each new observation $\mathbf{A}^t$ arrives, these statistics are updated recursively without revisiting past data, resulting in an algorithm that is both computationally and memory efficient.

\medskip
\noindent
\textbf{Low-rank refinement via tensor-based methods.}
Since both $\boldsymbol{\Theta}$ and $\boldsymbol{\Delta}$ are low-rank tensors under the Tucker decomposition (see eq.~\eqref{eq-tucker-decom} in Appendix \ref{sec-app-add-stat}), we further refine the initial estimators using a tensor-based spectral method, namely tensor heteroskedastic principal component analysis (TH-PCA) introduced by \cite{han2022optimal}. TH-PCA extends heteroskedastic principal component analysis (H-PCA), a matrix-based spectral method proposed by \cite{zhang2022heteroskedastic} and detailed in \Cref{hpca} in Appendix~\ref{section:add-algorithm}, to higher-order tensors.
To reduce computational cost, the spectral refinement is performed only at geometrically spaced time points $t = 2^m$ for integers $m \geq 0$.
At these times, the left singular subspaces of each mode matricization are estimated via H-PCA and the resulting subspaces are reassembled through tensor multiplications to produce low-rank tensor estimators for $\boldsymbol{\Theta}$ and $\boldsymbol{\Delta}$.

\medskip
\noindent
\textbf{Community recovery.}
To recover community memberships, we use $\widehat{U}^t_Z$, the estimated singular subspace corresponding to the mode-$1$ matricization (associated with the community membership matrix $Z$). We then project $\widehat{U}^t_Z$ onto the membership space
\[
\mathcal{M}_{n,K} = \{ M \in \mathbb{R}^{n \times K} : M \mbox{ has } K \mbox{ distinct rows} \}.
\]
via $K$-means clustering, yielding the estimated community membership matrix.

\medskip

In terms of the computational cost, updating the initial estimators $\widehat{\boldsymbol{\Theta}}^t$ and $\widehat{\boldsymbol{\Delta}}^t$ requires recursively updating the sufficient statistics $\mathbf{N}^{t,0,1}$, $\mathbf{N}^{t,1,0}$ and $\mathbf{N}^{t}$ at each time step,  which incurs $O(n^2L)$ operations per iteration.
The spectral procedure H-PCA is performed only at at geometrically spaced time points $t=2^m$, so up to time $t$, it is invoked only $O(\log (t))$ times. Each such update costs $O(n^2Lr)$, where $r=\max\{K,r_1,r_2\}$. 
Consequently, the total computational cost up to time $t$ is
$O(tn^2L + \log(t) n^2Lr)$. 

The storage cost is dominated by the three sufficient-statistics tensors $\mathbf{N}^{t,0,1}$, $\mathbf{N}^{t,1,0}$ and $\mathbf{N}^{t}$, together with the current adjacency tensor and the current estimators, all of dimension $n\times n\times L$. Hence, the overall memory requirement is $O(n^2L)$. The subspace estimates $\widehat{U}_Z^t$, $\widehat{U}_W^t$ and $\widehat{U}_M^t$ require only $O(nK+Lr)$ additional memory, which is lower order. Thus, the algorithm uses linear storage in the size of the current network tensor, or equivalently, constant working memory per edge.

\subsection{Theoretical guarantees of Algorithm \ref{alg:fast-stat-msbm}}\label{section-stat-theory}

We now present the main statistical guarantees for \Cref{alg:fast-stat-msbm}.  
We first establish estimation error bounds for the transition probability tensors  
$\boldsymbol{\Theta}$ and $\boldsymbol{\Delta}$ in \Cref{thm-tensor-Frobenius-stat}, followed by a minimax lower bound in \Cref{thm-minimax-lower}. We then derive conditions for consistent and exact recovery of the community structure in \Cref{thm-comunity-recovery-stat}.  To start off, we introduce the regularity conditions used throughout this subsection.

\begin{assumption}\label{ass-stat-rank}
    Let the process $\{\mathbf{A}^t \}_{t \geq 0} \subset\{0, 1\}^{n \times n \times L}$ be defined in \Cref{def-sarmsb}. Let $ \mathbf{Q}  = \mathbf{W} + \mathbf{M} \in \R^{K \times K \times L}$.
\begin{enumerate}[$(i)$]
    \item Assume that $\mathrm{rank}( \mathcal{M}_1 (\mathbf{W}) )=\mathrm{rank}( \mathcal{M}_1 (\mathbf{M}) )=\mathrm{rank}( \mathcal{M}_1 (\mathbf{Q}) ) = K$. 
    Let $r_1 = \mathrm{rank}( \mathcal{M}_3 (\mathbf{W})) $ and $ r_2 =\mathrm{rank}( \mathcal{M}_3 (\mathbf{M})) $.
    \item   Assume that $ s_{\max} \leq C_{\sigma} s_{\min} $, where $C_{\sigma} > 0$ is an absolute constant, $s_{\min} = \min_{k \in [K]} s_k$ and $s_{\max} = \max_{k \in [K]} s_k$ with for any $k \in [K]$, $s_k = \sum_{i=1}^n Z_{i, k}$.
    \item Assume that $\max\{\|\mathbf{W}\| /\sigma_{\min}(\mathbf{W}),  \|\mathbf{M}\| /\sigma_{\min}(\mathbf{M}), \|\mathbf{Q}\| /\sigma_{\min}(\mathbf{Q})\} \leq C_{\sigma}$ for some  absolute constant $ C_{\sigma}>0$.
    Further assume that $\sigma_{\min}(\mathbf{Q}) \geq \min\{\sigma_{\min}(\mathbf{M}), \sigma_{\min}(\mathbf{W})\}$. 
\end{enumerate}
\end{assumption}

\Cref{ass-stat-rank}$(i)$ imposes rank constraints on the mode-$1$ matricizations of the connectivity tensors, commonly used in tensor-based formulations of multilayer networks \citep[e.g.,][]{jing2021community, wang2025multilayer}.   \Cref{ass-stat-rank}$(ii)$ is a standard community-balance condition, while \Cref{ass-stat-rank}$(iii)$ controls the condition numbers of the connectivity tensors.
When either of these conditions is violated, the estimation rate in \eqref{thm-tensor-Frobenius-1} deteriorates, with additional factors depending on $s_{\max}/s_{\min}$ and $\max\{\|\mathbf{W}\| /\sigma_{\min}(\mathbf{W}), \|\mathbf{M}\| /\sigma_{\min}(\mathbf{M}), \|\mathbf{Q}\| /\sigma_{\min}(\mathbf{Q})\}$, respectively.

We next show the estimation rate of the refined tensor-based estimators obtained in \textbf{Stage II} of \Cref{alg:fast-stat-msbm}.  

\begin{theorem}\label{thm-tensor-Frobenius-stat}
Let the process $\{\mathbf{A}^t \}_{t \geq 0} \subset\{0, 1\}^{n \times n \times L}$ be defined in \Cref{def-sarmsb}. Suppose that \Cref{ass-stat-rank} holds.  For any $t \in [T]$, under \Cref{alg:fast-stat-msbm}, with probability at least $1- (n \vee L \vee T)^{-c}$,
\begin{align}\label{thm-tensor-Frobenius-1}
 \| \widetilde{\boldsymbol{\Theta}}^t -\boldsymbol{\Theta} \|_{\mathrm{F}} +  \| \widetilde{\boldsymbol{\Delta}}^t -\boldsymbol{\Delta}\|_{\mathrm{F}}  \leq & C    \sigma^{-1}  \bigg( \frac{  K\sqrt{(K\vee r)L} \log^2(n \vee L \vee T)}{t}  + \frac{K Ln}{t^2} \bigg)   \nonumber\\
  & \hspace{0.5cm}  +   C    \bigg( \sqrt{\frac{ (nK + Lr  +K^2r)  \log(n \vee L \vee T)}{t}} + \frac{n\sqrt{L}}{t} \bigg), 
\end{align}
where $\sigma = \min\{ \sigma_{\min}(\mathbf{W}), \sigma_{\min}(\mathbf{M})\} $, $r = \max\{r_1, r_2\}$ and $C, c >0$ are absolute constants.    
\end{theorem}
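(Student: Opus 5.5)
The plan is to follow the standard route for spectral refinement of a noisy low-rank tensor. First I bound the noise tensors $\mathbf{E}_\Theta = \widehat{\boldsymbol{\Theta}}^t - \boldsymbol{\Theta}$ and $\mathbf{E}_\Delta = \widehat{\boldsymbol{\Delta}}^t - \boldsymbol{\Delta}$. Then I feed these bounds into the perturbation theory for H-PCA (\cite{zhang2022heteroskedastic}) and TH-PCA (\cite{han2022optimal}). Finally I decompose the projected error.

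\textbf{Step 1: the initial estimators.} Write
\[
\widehat{\boldsymbol{\Theta}}^t_{i,j,l} - \boldsymbol{\Theta}_{i,j,l} = \frac{\sum_{s\le t}(1-\mathbf{A}^{s-1}_{i,j,l})\{\mathbf{A}^s_{i,j,l} - \boldsymbol{\Theta}_{i,j,l}\}}{t - \mathbf{N}^t_{i,j,l}}.
\]
The numerator is a martingale with bounded increments with respect to the filtration generated by $\{\mathbf{E}^s\}$.

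For the denominator, the two-state chain is geometrically ergodic with spectral gap $\boldsymbol{\Theta}+\boldsymbol{\Delta}\ge 2c_{\min}$ (\Cref{prop-stat}), and the stationary mass of state $0$ is bounded below. A Markov-chain Bernstein inequality then gives $t-\mathbf{N}^t_{i,j,l}\ge ct$ uniformly over all $(i,j,l)$ with probability $1-(n\vee L\vee T)^{-c}$, after a union bound.

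I then linearize the ratio as
\[
\mathbf{E}_\Theta = \mathbf{X}/(t\pi_0) + \mathbf{R},
\]
where $\mathbf{X}$ is the martingale numerator tensor, whose entries are independent across edges, and $\mathbf{R}$ is a second-order remainder with $|\mathbf{R}_{i,j,l}|\lesssim \log(\cdot)/t$.

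\textbf{Step 2: spectral and projected norms of the noise.} Because edges are independent across $(i,j,l)$, I can condition on the time-$t$ path of each edge separately. Each entry of $\mathbf{X}/t$ has variance $O(1/t)$ and is bounded by $O(\log/t)$. From this I derive three bounds.

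\begin{itemize}
\item Matrix Bernstein on the Gram matrices gives the H-PCA-type bound for the diagonal-deleted $\mathcal{M}_s(\mathbf{E})\mathcal{M}_s(\mathbf{E})^\top$, namely of order $\sqrt{n^2L}/t + n\sqrt{\log}/\sqrt t$.
\item A bound on the projected norms $\|\mathbf{E}\times_1 U_Z^\top\times_2U_Z^\top\times_3U_W^\top\|_{\mathrm F}\lesssim\sqrt{K^2r\log/t}$ for the true, fixed subspaces.
\item Maximal bounds on one-sided projections, used to obtain the $\sqrt{(nK+Lr)\log/t}$ terms via an $\varepsilon$-net over $\mathbb{O}_{n\times K}$ and $\mathbb{O}_{L\times r}$.
\end{itemize}

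The deterministic diagonal bias of $\mathbf{E}\mathbf{E}^\top$, which H-PCA removes iteratively, together with the remainder $\mathbf{R}$, produces the $n\sqrt L/t$ term and the $KLn/t^2$ term.

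\textbf{Step 3: subspace perturbation.} Using \Cref{ass-stat-rank}, the signal tensors have
\[
\sigma_{\min}(\boldsymbol{\Theta})\gtrsim \sigma_{\min}(\mathbf{W})\, s_{\min} \asymp \sigma n/K,
\]
and similarly for $\boldsymbol{\Delta}$ and $\boldsymbol{\Theta}+\boldsymbol{\Delta}$; the mode-1 factor of $\boldsymbol{\Theta}+\boldsymbol{\Delta}$ is $U_Z$. The H-PCA theorem then yields
\[
\|\sin\Theta(\widehat U_Z,U_Z)\|\lesssim \frac{\text{Gram noise}}{\sigma_{\min}^2},
\]
and similarly for $\widehat U_W$ and $\widehat U_M$. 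Since Stage II is performed only at $t'=2^m\in(t/2,t]$, I apply these bounds at $t'$. The noise rates at $t'$ and $t$ agree up to constants.

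\textbf{Step 4: error decomposition.} Write $P_Z=\widehat U_Z\widehat U_Z^\top$ and $P_W=\widehat U_W\widehat U_W^\top$. Then
\[
\widetilde{\boldsymbol{\Theta}}^t-\boldsymbol{\Theta} = \mathbf{E}_\Theta\times_1P_Z\times_2P_Z\times_3P_W + \big(\boldsymbol{\Theta}\times_1P_Z\times_2P_Z\times_3P_W - \boldsymbol{\Theta}\big).
\]
I handle the two pieces as follows.

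\begin{itemize}
\item The first piece is bounded by the projected-noise norms, giving the $\sqrt{(nK+Lr+K^2r)\log/t}$ rate.
\item The second piece is bounded by $\|\boldsymbol{\Theta}\|$ times the $\sin\Theta$ distances. Using $\|\boldsymbol{\Theta}\|/\sigma_{\min}\le C_\sigma$, this gives the $\sigma^{-1}$-weighted terms.
\end{itemize}

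In the second piece, the cross terms between the $\sin\Theta$ errors and the noise are where $K\sqrt{(K\vee r)L}\log^2/t$ arises. There the subspace error contributes a factor of order $\sqrt{\log/t}/\sigma$, and it multiplies a projected noise norm of order $\sqrt{\log/t}$, which is a product of two $1/\sqrt t$ rates.

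\textbf{Main obstacle.} The hardest step will be Step 2. The estimator is a ratio whose entries are dependent over time, so the independent-entry tensor concentration results in \cite{han2022optimal} do not apply directly. I would first try conditioning on the occupation counts $\mathbf{N}^t$. Given these counts, the transition outcomes at the visits to state $0$ are i.i.d.\ Bernoulli$(\boldsymbol{\Theta}_{i,j,l})$, by the strong Markov property and the edge-refresh construction. This recovers independent, centered entries with variance about $1/(t\pi_0)$, to which the standard heteroskedastic bounds apply conditionally. A union bound over the high-probability event from Step 1 then controls the denominators.
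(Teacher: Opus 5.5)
Your skeleton matches the paper's in four respects:
\begin{itemize}
\item the same split of $\widetilde{\boldsymbol\Theta}^t-\boldsymbol\Theta$ into a projected-noise piece and the noise-free piece $\boldsymbol\Theta\times_1P_Z\times_2P_Z\times_3P_W-\boldsymbol\Theta$, bounded by $\|\boldsymbol\Theta\|$ times $\sin\Theta$ distances;
\item the H-PCA bound of \cite{zhang2022heteroskedastic} applied at $\tilde t=2^{\lfloor\log_2 t\rfloor}$;
\item the diagonal-deleted Gram perturbation split into noise--noise and noise--signal parts;
\item a projected-noise bound that is uniform over subspaces (the paper cites Lemma E.5 of \cite{han2022optimal}), so the dependence between $\widehat U$ and $\widehat{\boldsymbol\Theta}^t$ is harmless.
\end{itemize}
You differ only in how the entrywise noise is controlled, and two steps there need repair.

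First, conditioning on $\mathbf N^t$ does not make the outcomes at visits to state $0$ i.i.d.\ Bernoulli$(\boldsymbol\Theta_{i,j,l})$. The occupation count is itself a function of those outcomes: with $t=2$ and $\mathbf A^0_{i,j,l}=0$, the event $\mathbf N^2_{i,j,l}=0$ forces the first outcome to be a ``stay''. You do not need this device. Each entry of $\mathbf E_\Theta$ is a function of its own edge path, and these paths are independent over $1\le i\le j\le n$, $l\in[L]$. Combine this with three facts:
\begin{itemize}
\item Azuma for the martingale $\mathbf X$; note that entries of $\mathbf X/t$ are $O(\sqrt{\log(n\vee L\vee T)/t})$ with high probability, not $O(\log/t)$;
\item an exponential lower-tail bound for $t-\mathbf N^t_{i,j,l}$;
\item $|\mathbf E_\Theta|\le 1$.
\end{itemize}
Together these show that $\mathbf E_\Theta$ has independent $Ct^{-1/2}$-sub-Gaussian entries. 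That is exactly what the paper feeds into matrix Bernstein and Lemma E.5. The paper obtains it from the $\alpha$-mixing Bernstein inequality in \Cref{lemma-sub-Gaussian}; your martingale route is the more direct one.

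Second, the remainder $\mathbf R$ has to be handled in expectation, and this is the ingredient missing from your plan. Bounding $|\mathbf R_{i,j,l}|\lesssim\log(n\vee L\vee T)/t$ on a high-probability event and aggregating gives $n\sqrt L\log(\cdot)/t$ and $\sigma^{-1}KLn\log^2(\cdot)/t^2$, not the stated $n\sqrt L/t$ and $\sigma^{-1}KLn/t^2$. Cross terms such as $\mathcal M_1(\mathbf X/t)\mathcal M_1(\mathbf R)^\top$, bounded through operator norms, lose even more.

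The paper instead writes $\mathbf E_\Theta=(\mathbf E_\Theta-\E\mathbf E_\Theta)+\E\mathbf E_\Theta$ and treats the centered part with independent-entry tools. It controls the deterministic part by $\|\E\widehat{\boldsymbol\Theta}^t-\boldsymbol\Theta\|_\infty\le C/t$, imported from \cite{jiang2023autoregressive} (see \eqref{prop-sin-1.5}). This bias bound is what produces $n\sqrt L/t$ in the projected-noise piece and $n^2L/t^2$ in the Gram perturbation.

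Your linearization actually supplies it. Since $\E\mathbf X=0$, Cauchy--Schwarz and the geometric correlation decay in \Cref{prop-stat} give $|\E(\mathbf E_\Theta)_{i,j,l}|=|\E\mathbf R_{i,j,l}|\lesssim t^{-2}\{\E\mathbf X_{i,j,l}^2\,\mathrm{Var}(t-\mathbf N^t_{i,j,l})\}^{1/2}+e^{-ct}\lesssim t^{-1}$.

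Finally, your attribution of the $\sigma^{-1}K\sqrt{(K\vee r)L}\log^2(\cdot)/t$ term is off, because the noise-free piece has no cross terms with the noise. That term comes from the noise--noise Gram bound $\|\mathcal M_1(\mathbf E_\Theta+\mathbf E_\Delta)\mathcal M_1(\mathbf E_\Theta+\mathbf E_\Delta)^\top-\mathrm{diag}(\cdot)\|\lesssim n\sqrt L\log^2(\cdot)/t$ inside the H-PCA $\sin\Theta$ bound. It is then multiplied by $\|\boldsymbol\Theta\|/\sigma_K^2(\mathcal M_1(\boldsymbol\Theta+\boldsymbol\Delta))$, which under \Cref{ass-stat-rank} is $\lesssim K/(n\sigma)$.
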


When  $\sigma \gtrsim  \max \{\xi_1 \wedge \xi_2,  \xi_3 \wedge \xi_4\}$, 
where $\xi_1 =  K \sqrt{K \vee r}\log^2(n \vee L \vee T)/n$, $\xi_2 = K \sqrt{(K \vee r)L} $ $\log^{3/2}(n \vee L \vee T)/ \sqrt{t(nK + Lr  +K^2r)}$, $\xi_3 =   K \sqrt{L}$  and  $\xi_4 =   K Ln /( t\sqrt{t(nK + Lr  +K^2r)})$, the upper bound in \eqref{thm-tensor-Frobenius-1} simplifies, up to poly-logarithmic factors, to
\begin{equation}\label{eq-stat-tensor-final}
    \sqrt{\frac{nK + Lr + rK^2}{t}} + \frac{n\sqrt{L}}{t}.
\end{equation}
The first term in \eqref{eq-stat-tensor-final} reflects the stochastic fluctuation in estimating the low-rank tensor structure, while the second captures the bias inherited from the initial transition-probability MLEs. 
This decomposition is consistent with the standard autoregressive literature, where the model complexity $n$ is usually considered as small or even fixed, and the bias quickly becomes negligible when $t$ grows.  In the single-layer AR(1)-SBM \citep[Theorem 10 in][]{jiang2023autoregressive}, the error bound similarly separates into stochastic and bias components, with the former dominating when $t \gg n$.  An analogous phenomenon occurs here: whenever
\[
t \gg \min\bigg\{\frac{nL}{K},\ \frac{n^2}{r},\ \frac{n^2L}{K^2 r}\bigg\},
\]
the stochastic term dominates the bias term in \eqref{eq-stat-tensor-final}. This condition can be viewed as a natural extension of the single-layer result of \cite{jiang2023autoregressive} to the multilayer setting.

We would like to highlight that our results benefit from a refined tensor estimation. Prior to \textbf{Stage~II}, as shown in \Cref{lemma-sub-Gaussian} in Appendix \ref{sec-app-add-stat}, each entry of the initial estimators incurs an error of order $t^{-1/2}$ up to logarithmic factors. Aggregating over $O(n^2L)$ entries yields a Frobenius loss of order
\[
    \frac{n\sqrt{L}}{\sqrt{t}}.
\]
If one instead unfolds the tensor into a matrix and applies a matrix spectral refinement, this procedure fails to exploit the low-rank structure along the third (layer) mode. As a result, the stochastic error scales as
\[
\sqrt{\frac{nLK}{t}},
\]
up to logarithmic factors. In contrast, TH-PCA fully leverages the joint low-rank structure across both node and layer modes, effectively reducing the intrinsic dimensionality. This leads to a sharper stochastic error of order
\[
\sqrt{\frac{nK + Lr + K^2r}{t}},
\]
again up to logarithmic factors.

\subsubsection{Minimax lower bounds on the transition probability tensors estimation}

To assess the optimality of the estimator in \Cref{alg:fast-stat-msbm} for estimating the transition probability tensors $(\boldsymbol{\Theta}, \boldsymbol{\Delta})$, we derive minimax lower bounds for this problem.  The parameter space considered is defined as
\begin{align}\label{def-par-space-new}
\mathcal{P}= \Big\{ &  (\boldsymbol{\Theta}, \boldsymbol{\Delta}) 
\in (0, 1)^{n \times n \times L} \colon
\boldsymbol{\Theta}_{:, :, l} =  Z \mathbf{W}_{:, :, l} Z^{\top},
\boldsymbol{\Delta}_{:, :, l} =  Z \mathbf{M}_{:, :, l} Z^{\top},  \forall l \in [L], \mbox{ where }  Z \in \{0, 1\}^{n \times K}
\nonumber\\
& \mbox{ is a community membership matrix satisfying } \max_{j \in [K]} \sum_{i=1}^n Z_{i, j} \leq C_{\sigma} \min_{j \in [K]}\sum_{i=1}^n Z_{i, j}, \mbox{and }
\nonumber\\
& 
\mathbf{W}, \mathbf{M} \in [c_{\min}, 1 - c_{\min}]^{K \times K \times L} \mbox{ satisfy }
\max \big\{\mbox{rank}_3(\mathbf{W}),  \mbox{rank}_3(\mathbf{M}) \big\}\leq r\Big\}.
\end{align}

\begin{proposition}\label{thm-minimax-lower}
Let the sequence of adjacency tensors $\{\mathbf{A}^t\}_{t \in [T] \cup \{0\}}$ 
be generated according to \Cref{def-sarmsb} 
with parameters $(\boldsymbol{\Theta}, \boldsymbol{\Delta}) \in \mathcal{P}$,
where the parameter space $\mathcal{P}$ is defined in \eqref{def-par-space-new}. 
Then there exists an absolute constant $C> 0$ such that
\[
\inf_{(\widehat{\boldsymbol{\Theta}}, \widehat{\boldsymbol{\Delta}})}\sup_{(\boldsymbol{\Theta}, \boldsymbol{\Delta}) \in \mathcal{P}}
\mathbb{E}_{\boldsymbol{\Theta}, \boldsymbol{\Delta}}\Big\{\big\| \widehat{\boldsymbol{\Theta}} -\boldsymbol{\Theta}\big\|_{\mathrm{F}}^2 + \big\| \widehat{\boldsymbol{\Delta}} -\boldsymbol{\Delta}\big\|_{\mathrm{F}}^2 \Big\}
\geq  C \frac{ r K^2 + n\log(K) + Lr }{ T }. 
\]
where $r = \max \{r_1, r_2\}$.
\end{proposition}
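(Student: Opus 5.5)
Since the lower bound is a sum of three terms, it suffices to prove each term separately; the maximum of three lower bounds is at least a third of their sum. Each term will come from a Fano-type (or Assouad/Varshamov--Gilbert) argument over a finite packing inside $\mathcal{P}$. The common ingredient is a bound on the KL divergence between the laws of $(\mathbf{A}^0,\dots,\mathbf{A}^T)$ under two parameter pairs $(\boldsymbol{\Theta},\boldsymbol{\Delta})$ and $(\boldsymbol{\Theta}',\boldsymbol{\Delta}')$.

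\textbf{KL divergence.} The edges are independent, and each edge is a two-state Markov chain. By the chain rule,
\[
\mathrm{KL} = \mathrm{KL}(\mathbf{A}^0) + \sum_{s=1}^{T}\sum_{i\le j,l}\Big[\P\{\mathbf{A}^{s-1}_{i,j,l}=0\}\,\mathrm{KL}\big(\mathrm{Ber}(\boldsymbol{\Theta}_{i,j,l})\,\|\,\mathrm{Ber}(\boldsymbol{\Theta}'_{i,j,l})\big) + \P\{\mathbf{A}^{s-1}_{i,j,l}=1\}\,\mathrm{KL}\big(\mathrm{Ber}(\boldsymbol{\Delta}_{i,j,l})\,\|\,\mathrm{Ber}(\boldsymbol{\Delta}'_{i,j,l})\big)\Big].
\]
All probabilities lie in $[c_{\min},1-c_{\min}]$, and the initial law has parameter $\boldsymbol{\Theta}/(\boldsymbol{\Theta}+\boldsymbol{\Delta})$, which is Lipschitz in $(\boldsymbol{\Theta},\boldsymbol{\Delta})$ on this range. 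Hence $\mathrm{KL}\le C(T+1)\big(\|\boldsymbol{\Theta}-\boldsymbol{\Theta}'\|_{\mathrm F}^2+\|\boldsymbol{\Delta}-\boldsymbol{\Delta}'\|_{\mathrm F}^2\big)$. This reduces each part to building a packing whose members are $\delta$-separated in Frobenius norm and have pairwise squared distances of order $\delta^2$, with $T\delta^2\lesssim \log(\text{packing size})$.

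\textbf{Term $(rK^2+Lr)/T$.} Fix a balanced $Z$ with blocks of size $\asymp n/K$, and fix $\boldsymbol{\Delta}$. Perturb $\mathbf{W}=\mathbf{W}_0+\epsilon\,\mathbf{G}$, where $\mathbf{W}_0\equiv 1/2$. The tensor $\mathbf{G}$ has mode-3 rank at most $r-1$ and is symmetric in its first two modes. Write $\mathcal{M}_3(\mathbf{G})=V B^\top$, with $V\in\R^{L\times (r-1)}$ having entries in $\{\pm1\}$ and $B\in\R^{K(K+1)/2\times(r-1)}$. Two families are used:
\begin{enumerate}[(a)]
\item Fix $B$ with orthogonal columns of norm $\asymp K$, and let $V$ range over a Varshamov--Gilbert packing of size $2^{cL(r-1)}$.
\item Fix $V$ with orthogonal columns, and let $B$ range over a hypercube packing of size $2^{cK^2(r-1)}$.
\end{enumerate}
After lifting through $Z$, each entry difference is multiplied by the block sizes, so $\|\boldsymbol{\Theta}-\boldsymbol{\Theta}'\|_{\mathrm F}^2\asymp \epsilon^2 (n/K)^2\times(\text{Hamming distance})\times(\text{scale})$. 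Choosing $\epsilon$ so that $T$ times the maximal squared distance is at most $c\,(\text{log-size})$, Fano gives separation squared $\gtrsim (Lr+rK^2)/T$. The constraints are checked as follows: a small $\epsilon$ keeps entries in $[c_{\min},1-c_{\min}]$; the mode-3 rank is at most $r$ because of the constant part plus the rank $r-1$ perturbation; and $\mathcal{P}$ imposes no condition-number requirement.

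\textbf{Term $n\log K/T$ (main obstacle).} Now fix $\mathbf{W}$ and $\mathbf{M}$ with well-separated block rows, and vary $Z$. Use the standard construction from SBM graphon lower bounds (Gao et al.): start from a balanced $Z_0$, and reassign nodes within a designated set of $\asymp n$ nodes. Picking assignments via a Varshamov--Gilbert packing over $[K]^{n'}$ gives $\exp(cn\log K)$ membership matrices that are pairwise separated in Hamming distance $\asymp n$, up to permutations, while staying balanced within factor $C_\sigma$. To keep the Frobenius distance of order $\sqrt{\log K/T}\cdot\sqrt n$ rather than of order $n$, the connectivity $\mathbf{W}$ must itself be a small perturbation, $\mathbf{W}_{a,b,l}=1/2+\epsilon\,\omega_{a,b}$, with rows of $\omega$ pairwise separated. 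One then checks that changing one node's label moves $\|\boldsymbol{\Theta}-\boldsymbol{\Theta}'\|_{\mathrm F}^2$ by $\asymp \epsilon^2 nL$, for both the upper and the lower bound; the lower bound follows from the separation of the rows of $\omega$, after restricting to a single layer if convenient. Taking $\epsilon^2\asymp \log K/(TL)$ yields loss $\gtrsim n\log K/T$.

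The delicate part is handling label permutations: the separation must hold for the tensors themselves, not for the labels. I would address this as in Gao et al., by fixing a large block of nodes to anchor the labels so that distinct $Z$ in the packing give genuinely different $\boldsymbol{\Theta}$. Finally, I would combine the three bounds, using $\max\ge\frac13\sum$.
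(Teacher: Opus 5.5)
Your route is the paper's. You use the same chain-rule bound $\mathrm{KL}\lesssim T(\|\boldsymbol{\Theta}-\boldsymbol{\Theta}'\|_{\mathrm F}^2+\|\boldsymbol{\Delta}-\boldsymbol{\Delta}'\|_{\mathrm F}^2)$. You use the same three Fano constructions (core, layer loadings, and memberships with half the nodes frozen to anchor labels), and you finish with $\max\ge\frac13\sum$.

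\textbf{Membership term: the calibration is off by a factor of $n$.} By your own count, each relabelled node moves $\|\boldsymbol{\Theta}-\boldsymbol{\Theta}'\|_{\mathrm F}^2$ by $\asymp\epsilon^2 nL$. Packing members differ on $\asymp n$ nodes, so the squared distance is $\asymp\epsilon^2 n^2L$ and the KL is $\asymp T\epsilon^2 n^2L$. Fano needs this to be $\lesssim\log M\asymp n\log K$, which forces $\epsilon^2\asymp\log K/(TnL)$. The separation is then $\epsilon^2 n^2L\asymp n\log K/T$, as required. With your $\epsilon^2\asymp\log K/(TL)$, the KL is $\asymp n^2\log K\gg\log M$, so Fano gives nothing; $\epsilon$ also need not keep entries in $[c_{\min},1-c_{\min}]$. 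The paper does the same computation with the two-level choice $\omega=I$. There the per-node change is $\asymp\epsilon^2 nL/K$, and it takes $(p_1-p_2)^2\asymp K\log K/(TLn)$.

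\textbf{Membership term: balancedness is not automatic.} A Varshamov--Gilbert code over $[K]^{n'}$ need not consist of balanced labelings. One community can absorb a constant fraction of the free nodes, which breaks $s_{\max}\le C_\sigma s_{\min}$ once $K$ grows. You must pack inside the set of balanced labelings. This is what \Cref{lemma-packing} does, comparing the count of balanced labelings against the Hamming-ball volume from \Cref{lemma-binomial}.

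\textbf{The $(rK^2+Lr)/T$ part has two smaller holes.}
\begin{enumerate}
\item Reserving one mode-3 direction for the constant $\mathbf{W}_0$ makes both families empty when $r=1$. Then neither $K^2/T$ nor $L/T$ is obtained.
\item $\epsilon$ is pinned down by Fano; it is not a free small parameter. With dense factors (your fixed $\pm1$ matrix $V$ in (b), or a dense $B$ in (a)), $|\mathbf{G}_{a,b,l}|$ can reach $r-1$. The range constraint then becomes a side condition, e.g.\ $K^2r^2\lesssim Tn^2L$ in (b), which the statement does not supply.
\end{enumerate}

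\textbf{Fix for both holes.} Put the constant inside the rank-$r$ factorization and use disjoint-support factors, as in the paper's Step 2. Take $\mathbf{W}=\mathbf{H}\times_3U$ with $\mathbf{H}=\frac12+\delta\mathbf{S}\in\R^{K\times K\times r}$, where $U$ is the indicator matrix of a partition of $[L]$ into $r$ groups. The mirror construction handles the $Lr/T$ term: support the $r$ core slices on disjoint blocks of entries, draw loadings $1\pm\epsilon$ from a Varshamov--Gilbert code, and take $\epsilon^2\asymp r/(Tn^2)$.

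This needs no side condition, unlike the assumption $rK\le c\sqrt{T}n$ that the paper's own Step 4 invokes. In that step, $\frac12\mathbf{1}\mathbf{1}^\top+\alpha\sum_{s\le r}V_{l,s}S^{(s)}$ can also have mode-3 rank $r+1$; your $r-1$ bookkeeping avoids that slip.
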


As shown in \eqref{eq-tucker-decom}  in Appendix \ref{sec-app-add-stat}, the transition probability tensor admits the Tucker decomposition 
\begin{equation}\label{eq-lowe-tucker}
\boldsymbol{\Theta} = \mathbf{R} \times_1  U_Z \times_2 U_Z \times_3 U_W.
\end{equation}
The three terms in the lower bound correspond to distinct sources of statistical difficulty.
 The term $rK^2/T$ corresponds to estimating the core tensor $\mathbf{R} \in \mathbb{R}^{K \times K \times r_1}$ when the singular subspaces $(U_Z, U_W)$ for node communities and layer loadings are known.
The second term, $n\log(K) / T$, encodes the difficulty of recovering the unknown community membership matrix $Z \in \{0, 1\}^{n \times K}$. The third term, $Lr / T$, arises from estimating the layer-loading matrix $U_W \in \mathbb{R}^{L \times r}$.

To better interpret \Cref{thm-minimax-lower}, we compare it with existing minimax lower bounds for several related models, including the single-layer SBM, dynamic SBMs and tensor denoising. For ease of comparison, we rewrite these results using our notation. 
We begin with the single-layer SBM
\[
A_{i,j} \sim \mathrm{Bernoulli}\big(P_{i, j}\big), \quad P =Z BZ^{\top} \quad 1 \le i \leq  j \le n,
\]
where $Z \in \{0, 1\}^{n \times K}$ is the community membership matrix and $B \in [0,1]^{K \times K}$ is the connectivity matrix. \cite{gaorate-optimal2015} established that the minimax rate for estimating $P$ is
\[
  K^2 + n\log(K),
\]
where the first term corresponds to estimating the block probabilities and the second term captures the combinatorial complexity of recovering the community assignments.
We next consider dynamic single-layer SBMs:
\[
A^{t}_{i,j} \sim \mathrm{Bernoulli}\big(P^{t}_{i, j}\big),  \quad P =Z B^{t}Z^{\top}, \quad  \quad 1 \le i \leq  j \le n, t\in[T],
\]
where the community membership matrix $Z$ is fixed over time, while the connectivity matrices $B^t \in [0,1]^{K \times K}$ varies with $t$. In this setting, \cite{pensky2019dynamic} established that the minimax rate
\[
\inf_{\{\widehat P^t\}_{t \in [T]}} \sup_{\{P^t\}_{t \in [T]}}
\mathbb{E}\bigg\{ \sum_{t=1}^T \|\widehat P^t-P^t\|_{\mathrm{F}}^2 \bigg\}
\asymp
s\log(TK^2/s)+n\log(K),
\]
where $s$ quantifies the temporal complexity of $\{B^t\}_{t \in [T]}$ (with $s \asymp K^2$ in the time-invariant case and $s \asymp TK^2$ in the fully time-varying case).
This formulation involves estimating $T$ matrices $\{P^t\}_{t \in [T]}$, while in contrast, our setting focuses on estimating a single set of time-invariant transition parameters under temporal dependence, yielding a $1/T$-scaled rate. \Cref{thm-minimax-lower} complements these results by capturing both the autoregressive temporal dependence pattern and the multilayer structure.

For the tensor denoising problem, \cite{zhang2018tensor} consider the model
\[
\mathbf{Y} = \mathbf{R} \times_1 U_Z \times_2 U_Z \times_3 U_W + \mathbf{Z} 
\]
where $\mathbf{Y}, \mathbf{Z} \in \R^{n \times n \times L}$,   $\mathbf{R} \in \R^{K \times K \times r}$, $U_Z  \in \mathbb{O}_{n \times K}$, $U_W  \in \mathbb{O}_{n \times K}$ and $\{\mathbf{Z}_{i, j, l}\}_{i, j \in [n], l \in [L]} \stackrel{\text{i.i.d.}}{\sim} \mathcal{N}(0, 1)$.
They established a minimax rate of order
\[
nK + Lr.
\]
Our setting refines this rate by including an additional $K^2r$ term, representing the contribution of the core tensor in the Tucker decomposition. The difference between $nK$ and $n\log (K)$ reflecting the combinatorial complexity of recovering discrete community memberships rather than continuous low-rank factors.

Comparing \Cref{thm-minimax-lower} with the upper bound in \eqref{eq-stat-tensor-final}, we observe three sources of mismatch. First, the clustering term in the upper bound is $nK/T$, whereas the lower bound yields the sharper term $n\log (K)/T$. This difference corresponds to the computational-statistical gap in community detection: from an information-theoretic perspective, recovering the community assignments costs $n \log (K)$, while polynomial-time procedures typically exhibit a linear dependence on $K$ \citep{xu2018rates}. 

Second, the upper bound contains an additional term of order $n^2L/T^2$. This term is specific to the autoregressive setting, which arises from the bias of the maximum likelihood estimator for the transition parameters in the autoregressive setting. Such bias is well documented for autoregressive estimators \citep{shaman1988bias}.  In particular, whenever
\[
T \gtrsim \min\{nL/K,\ n^2/r,\ n^2L/(K^2r)\},
\]
the bias term $n^2L/T^2$ is dominated by the stochastic term $(rK^2+nK+Lr)/T$ and \Cref{alg:fast-msbm} attains the minimax rate up to the $K/\log K$ computational gap. Bias-corrected and median-unbiased estimators have been proposed in the literature \citep{andrews1993exactly,andrews1994approximately,patterson2000bias}; incorporating them into our framework is left for future work. 

Third, the upper bound is established under \Cref{ass-stat-rank}(iii), which requires the condition numbers of the transition probability tensors $\mathbf{W}$, $\mathbf{M}$, and $\mathbf{Q}$ to be bounded by an absolute constant, whereas the lower bound is proved over a larger parameter class without this restriction. This mismatch is mainly technical. 
The lower bound constructions are designed to ensure sufficient Frobenius separation under bounded-entry and low-rank constraints, but they do not preserve uniform conditioning. In particular, to obtain the $rK^2/T$ term, we use the Varshamov--Gilbert lemma to construct core tensors $R$ in \eqref{eq-lowe-tucker}, which guarantees large Hamming separation but does not control the singular spectrum of the associated matricizations.
As a result, the upper and lower bounds are derived over slightly different parameter classes. Nevertheless, \Cref{ass-stat-rank}(iii) can be relaxed in the upper bound analysis, at the cost of an additional multiplicative factor polynomial in the condition number in \Cref{thm-tensor-Frobenius-stat}.

\subsubsection{Theoretical guarantees for community recovery}

Having established estimation error bounds for the transition probability tensors $(\boldsymbol{\Theta}, \boldsymbol{\Delta})$, we now turn to their downstream statistical implications, namely the estimation of singular subspaces and, in particular,  community recovery.

\begin{proposition}\label{thm-comunity-recovery-stat}

Let the process $\{\mathbf{A}^t \}_{t \geq 0} \subset\{0, 1\}^{n \times n \times L}$ be defined in \Cref{def-sarmsb} and suppose that \Cref{ass-stat-rank} holds.  For  any $t \in [T]$, let 
\begin{align}\label{eq-sin-theta-stat-new}
\mathcal{E}_t =  C_\mathcal{E}  \sigma_Q^{-2} \bigg( \frac{ K^2 \sqrt{L} \log^2(n \vee L \vee T)}{tn}  + \frac{K^2L}{t^2}\bigg) +  C\sigma_Q^{-1}  \bigg(K\sqrt{\frac{  \log(n \vee L \vee T)}{tn}} + \frac{K\sqrt{L}}{t} \bigg), 
\end{align}
where $\sigma_Q = \sigma_{\min}(\mathbf{Q})$ and $C_\mathcal{E}  >0$ is an absolute constant.  
\begin{enumerate}
\item 
Define the normalized Hamming loss between the estimated and true community membership matrices by
\begin{equation}\label{eq-def-hamming}
\mathcal{L}(\widehat{Z}^t, Z)
=\frac{1}{n} \min_{\tau \in \Pi([K])}
\sum_{i=1}^n \sum_{k=1}^K  \mathbbm{1} \big\{ 
  \widehat{Z}^t_{i, \tau(k)} \neq Z_{i, k} \big\}.
\end{equation}
Then for any $t \in [T]$, if
\begin{equation}\label{eq-ass-singular-1}
    C K  \mathcal{E}_t^2 \leq 1,
\end{equation}
under \Cref{alg:fast-stat-msbm}, with probability at least $1- (n \vee L \vee T)^{-c}$, 
\begin{equation}\label{thm-com-1}
\mathcal{L}(\widehat{Z}^t, Z) \leq C' \mathcal{E}_t^2, 
\end{equation}
where  $C>0$ is a sufficiently large constant and $C', c>0$ are absolute constants. 

\item For any $t \in [T]$, if
\begin{equation}\label{eq-ass-singular}
     C n  \mathcal{E}_t^2   \leq  1,
\end{equation}
then under \Cref{alg:fast-stat-msbm},  with probability at least $1- (n \vee L \vee T)^{-c}$,   
\begin{equation}\label{thm-com-2}
\widehat{Z}_i^t = \widehat{Z}_j^t  \quad \mbox{if and only if} \quad Z_i = Z_j, \quad \forall i, j \in [n], 
\end{equation}
where $C>0$ is a sufficiently large constant and $c>0$ is an absolute constant.
\end{enumerate}
\end{proposition}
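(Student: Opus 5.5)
The plan is to reduce both parts to a $\sin\Theta$ bound on the estimated node subspace $\widehat{U}_Z^t$, and then to invoke standard $K$-means perturbation arguments. The subspace bound should come out of the analysis behind \Cref{thm-tensor-Frobenius-stat}. Write $\mathbf{P} = \boldsymbol{\Theta} + \boldsymbol{\Delta}$. Then $\mathcal{M}_1(\mathbf{P}) = Z\,\mathcal{M}_1(\mathbf{Q})(Z\otimes I_L)^\top$, so its left singular space is $U_Z = Z(Z^\top Z)^{-1/2}$, up to rotation. By \Cref{ass-stat-rank}$(ii)$ and $(iii)$,
\[
\sigma_K(\mathcal{M}_1(\mathbf{P})) \gtrsim (n/K)\,\sigma_Q .
\]
The first step is to bound $\|\sin\Theta(\widehat{U}_Z^t, U_Z)\|$ using the H-PCA perturbation bound of \cite{zhang2022heteroskedastic}, applied to the matrix $X = \mathcal{M}_1(\widehat{\boldsymbol{\Theta}}^t + \widehat{\boldsymbol{\Delta}}^t)$. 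Here $t$ is taken as the last refinement time $2^m \le t$, which is within a factor 2 of $t$. Write the noise as $E = X - \mathcal{M}_1(\mathbf{P})$. Using \Cref{lemma-sub-Gaussian}, $E$ splits into:
\begin{itemize}
\item a centered part with entries of sub-Gaussian scale $t^{-1/2}$, independent across edges (conditional on the chains);
\item a bias part with entries $O(1/t)$.
\end{itemize}
The H-PCA bound then gives
\[
\|\sin\Theta\| \lesssim \frac{\sqrt{n}\,\sigma_{\rm noise}+\sqrt{n}\sqrt{nL}\,\sigma^2_{\rm noise}}{\sigma_K} + \frac{\|{\rm bias}\|}{\sigma_K},
\]
with $\sigma_{\rm noise}\asymp\sqrt{\log/t}$ and $\|{\rm bias}\|\lesssim n\sqrt{L}/t$. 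After dividing by $\sigma_K \gtrsim n\sigma_Q/K$, this should reproduce the four terms of $\mathcal{E}_t$:
\begin{itemize}
\item $\sigma_Q^{-1}K\sqrt{\log/(tn)}$ from the linear noise term;
\item $\sigma_Q^{-2}K^2\sqrt{L}\log^2/(tn)$ from the quadratic (Gram) noise term of H-PCA, where the diagonal deletion matters;
\item $\sigma_Q^{-1}K\sqrt{L}/t$ from the bias;
\item $\sigma_Q^{-2}K^2L/t^2$ from the bias squared, or the bias–noise cross term.
\end{itemize}
So the target is $\|\sin\Theta(\widehat{U}_Z^t,U_Z)\|\le \mathcal{E}_t$ with probability $1-(n\vee L\vee T)^{-c}$. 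This step is the main obstacle. The entries of $E$ are ratios of dependent Markov-chain counts, so the independence-based H-PCA theory cannot be applied off the shelf. I would first linearize $\widehat{\boldsymbol{\Theta}}^t_{ijl}-\boldsymbol{\Theta}_{ijl}$ as a martingale average divided by the occupation count, plus a remainder of order $1/t$. The martingale parts are independent across edges, so the H-PCA concentration applies to them. The remainder is absorbed into the bias, which I would control crudely through Frobenius norms.

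For part 1, let $O$ be the aligning rotation. The rows of $U_ZO$ take $K$ distinct values, which are pairwise separated by at least $\sqrt{2/s_{\max}}\gtrsim\sqrt{K/n}$. Since $\|\widehat{U}_Z^t - U_ZO\|_{\mathrm F}\le\sqrt{2K}\,\|\sin\Theta\|$, and the $K$-means solution $\widetilde{U}^t$ is at least as good as $U_ZO$, we get
\[
\|\widetilde{U}^t-U_ZO\|_{\mathrm F}\lesssim\sqrt{K}\,\mathcal{E}_t.
\]
The standard argument of Lemma 5.3 in Lei and Rinaldo (2015) then applies. Any node whose row error exceeds half the separation is counted as misclustered, and the number of such nodes is at most $C\,K\mathcal{E}_t^2\,(n/K)$. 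Condition \eqref{eq-ass-singular-1} guarantees that this count is smaller than $s_{\min}$, so every community still retains correctly clustered nodes and a permutation $\tau$ exists. This gives $\mathcal{L}\lesssim\mathcal{E}_t^2$.

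For part 2, I would use the same argument: the number of misclustered nodes is at most $C n\mathcal{E}_t^2<1$ under \eqref{eq-ass-singular}, so it is zero. Exact recovery \eqref{thm-com-2} follows.
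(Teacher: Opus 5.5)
Your plan is essentially the paper's proof: the paper also works on the event given by the H-PCA $\sin\Theta$ bound of \Cref{prop-sin-theta-stat}, and then runs the same $K$-means perturbation argument with row separation $\sqrt{2/s_{\max}}$ (proved by hand rather than by citing Lei--Rinaldo). Your exact-recovery step, which observes that the misclustered count $\lesssim n\mathcal{E}_t^2$ is below $1$, is a cleaner shortcut than the paper's separate two-way contradiction argument.

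Two small corrections apply to the subspace step. First, the second-order terms in your displayed H-PCA bound must be divided by $\sigma_K^2$, not $\sigma_K$. Second, bounding the random linearization remainder crudely with high probability would add $\log(n\vee L\vee T)$ factors to the bias terms of $\mathcal{E}_t$. To avoid them, split the error around $\E\widehat{\boldsymbol{\Theta}}^{\tilde t}$ as the paper does: a centered sub-Gaussian part by \Cref{lemma-sub-Gaussian}, plus a deterministic bias of at most $C/t$ entrywise.
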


\Cref{thm-comunity-recovery-stat} establishes two types of statistical consistency under distinct conditions. The first yields Hamming-consistent clustering under $K\mathcal{E}_t^2 \lesssim 1$, while the second guarantees exact recovery (up to label permutation) under the stronger condition $n\mathcal{E}_t^2 \lesssim 1$. 

 We note that $\mathcal{E}_t$ characterizes the estimation error of the subspace estimator $\widehat{U}_Z^t$; see \Cref{prop-sin-theta-stat} in Appendix \ref{sec-app-add-stat}. When $K = O(1)$  and $\sigma_Q \asymp \sqrt{L}$, $\mathcal{E}_t$ simplifies (up to logarithmic factors) to
\begin{equation}\label{eq-stat-rate}
    \frac{1}{\sqrt{tnL}} + \frac{1}{t}.
\end{equation}
The first term reflects stochastic variability, while the second captures the bias from the autoregressive MLE.
Consequently, Hamming consistency holds whenever
\[
\frac{1}{tLn} + \frac{1}{t^2} \ll 1,
\]
with the corresponding Hamming error rate of order
\begin{equation}\label{eq-stat-rate-hamming}
\frac{1}{tnL} + \frac{1}{t^2}.
\end{equation}
Exact recovery is guaranteed under the stronger condition
\begin{equation}\label{thm-com-1-K}
\frac{1}{tL} + \frac{n}{t^2} \ll 1,
\end{equation}
which is strictly stronger than that for Hamming consistency.

A closely related line of work is \cite{jiang2023autoregressive}, which studies a single-layer AR(1)-SBM and establishes only an exact-recovery guarantee,  without providing a separate Hamming-loss rate. Translated into our notation (with $K = O(1)$), their result requires
\[
 \frac{1}{t} + \frac{n}{t^2} + \frac{1}{n} \ll 1,
\]
which is strictly stronger than \eqref{thm-com-1-K}. 
Our result improves upon this in two aspects. First, the multilayer structure  reduces the stochastic term from $1/t$ to $1/(tL)$  by leveraging information across $L$ layers. Second, our estimator avoids the additional $1/n$ bias term, since it performs spectral estimation directly on the initial estimators rather than on a normalized Laplacian matrix as in  \cite{jiang2023autoregressive}, thereby eliminating the bias induced by diagonal normalization.

More broadly, community detection in dynamic SBMs has been extensively studied under independent or conditionally independent network snapshots with evolving structures. 
For example, \cite{pensky2019spectral} consider H\"older-continuous connection probabilities in time (exponent $\beta>0$) with at most $s$ nodes switching communities between consecutive times. They establish a Hamming error rate which, in our notation and under $K=O(1)$, takes the form
\[
   \frac{1}{n} \Big(\frac{n}{t^{2\beta}}\Big)^{\frac{1}{2\beta +1}} + \sqrt{s}.
\]
In the static case without temporal variation, this reduces to $1/(tn)$. Compared with \eqref{eq-stat-rate-hamming}, our result improves the stochastic term to $1/(tnL)$ by incorporating multilayer structure, while introducing an additional bias term $1/t^2$ due to the autoregressive dependence. Thus, our result is complementary to this literature, as it explicitly exploits both temporal dependence and multilayer structure.

\section{Transition tensor estimation and community recovery under non-stationarity}\label{sec-nonstat}

Building on the stationary analysis in \Cref{sec-stat}, we now consider a non-stationary $\mbox{AR}(1)\mbox{–MSBM}$, where the edge transition probability tensors evolve over time. This setting introduces additional challenges due to temporal variation, requiring methods that adapt to both gradual drifts and abrupt changes.
To address this, we develop a memory- and compute-efficient online algorithm with adaptive windowing in \Cref{section-nonstat-alg}, and establish its theoretical guarantees in \Cref{section-nonstat-theory}.

We begin by extending the stationary $\mbox{AR}(1)\mbox{-MSBM}$  in \Cref{def-sarmsb} to a non-stationary setting in which the transition probability tensors vary over time.

\begin{definition}[Non-stationary first-order autoregressive multilayer stochastic block models, non-stationary $\mbox{AR}(1)\mbox{-MSBM}$]\label{def-armsb}
A non-stationary $\mbox{AR}(1)\mbox{-MSBM}$
is a adjacency tensor sequence $\{\mathbf{A}^t \}_{t \geq 0}$ $\subset\{0, 1\}^{n \times n \times L}$ defined recursively for $t\geq 1$ by
\[
\mathbf{A}^t_{i,j, l} = \mathbf{A}^{t-1}_{i,j,l} \mathbbm{1}\{\mathbf{E}^t_{i,j, l} = 0\} + \mathbbm{1}\{\mathbf{E}^t_{i,j, l} = 1\}, \quad \forall  1 \leq i \leq  j \leq n, \, l \in [L],
\]
where $\{\mathbf{E}^t_{i,j, l}\}_{1 \leq i \leq  j \leq n, l \in [L], t \geq 1} \subset \{-1, 0, 1\}$ are mutually independent random variables with 
\[
\P \{ \mathbf{E}^t_{i,j, l}  = 1 \} = \boldsymbol{\Theta}^t_{i, j, l} = Z_i \mathbf{W}^t_{:, :, l} Z_j,  \quad \P \{ \mathbf{E}^t_{i,j, l}  = -1 \}  =  \boldsymbol{\Delta}^t_{i, j, l} = Z_i \mathbf{M}^t_{:, :, l} Z_j
\]
and
\[
\P \{ \mathbf{E}^t_{i,j, l}  = 0  \}  = 1 - \boldsymbol{\Theta}_{i, j, l}^t  - \boldsymbol{\Delta}_{i, j, l}^t.
\]
The initial adjacency tensor $\mathbf{A}^0 \in \{0, 1\}^{n \times n \times L}$ is defined so that
\[
 c_{\min} \leq \mathbb{E}\{\mathbf{A}_{i,j,l}^{0}\}  \leq 1-c_{\min} \quad \forall  1 \leq i \leq  j \leq n, \, l \in [L].
\]
Here, $Z \in \{0,1\}^{n \times K}$ denotes the community membership matrix, while $\{\mathbf{W}^t\}_{t \geq 1}, \{\mathbf{M}^t\}_{t \geq 1} \subset [c_{\min},1 - c_{\min}]^{K \times K \times L}$ are time-varying connectivity tensors, with an absolute constant $ c_{\min} \in (0, 1/2)$. 
\end{definition}

In \Cref{def-armsb}, the temporal variation in $\boldsymbol{\Theta}^t$ and $\boldsymbol{\Delta}^t$ arises solely from the time-varying connectivity tensors $\mathbf{W}^t$ and $\mathbf{M}^t$, while the community memberships remain fixed.  This framework can be extended to accommodate time-varying community memberships; see Appendix \ref{app:latent} for details.

\subsection{An online adaptive learning algorithm}\label{section-nonstat-alg}

\begin{algorithm}
\caption{Adaptive windowed online estimation for non-stationary $\mbox{AR}(1)\mbox{-MSBM}$}
\label{alg:fast-msbm}
\begin{algorithmic}
\INPUT{Ranks $K$, $\{r^{t, k}\}_{k \in [t], t \geq 1}$, $\{\tilde{r}^{t, k}\}_{k \in [t], t \geq 1}$, dynamic grid map $\{\mathcal {G}^{(t)}\}_{t\geq 1}$ and  tolerance level $\{\tau(t)\}_{t\geq 1}$.}
\Initialise{$m \leftarrow 0$, $\mathbf{N}^{0, 0, 1}_{i,j,l} \leftarrow 0$, $\mathbf{N}^{0, 1, 0}_{i,j,l} \leftarrow 0$, $\mathbf{N}^{0}_{i,j,l}  \leftarrow 0$}
\ForAll{$t \in [T]$}
\Statex \vspace{2pt}
\textbf{Stage I: Initial estimator update}

\Statex \vspace{2pt}
\textbf{Stage I.1: Cumulative transition count updates and memory management}
        \State{$\mathbf{N}^{t,0,1} \leftarrow\mathbf{N}^{t-1,0,1}+\mathbf{A}^{t}(1-\mathbf{A}^{t-1}), \quad 
        \mathbf{N}^{t,1,0}\leftarrow \mathbf{N}^{t-1,1,0}+(1-\mathbf{A}^{t})\mathbf{A}^{t-1}$}
        \State{$\mathbf{N}^{t} \leftarrow \mathbf{N}^{t-1} + \mathbf{A}^{t-1}$}
 
    \State{Discard all $\{\mathbf{N}^{s,0,1}, \mathbf{N}^{s,1,0}, \mathbf{N}^{s}\}$ for $s \notin \{t-k \colon k\in\mathcal{G}^{(t)}\}$}
    \State{Store $\{\mathbf{N}^{t,0,1}, \mathbf{N}^{t,1,0}, \mathbf{N}^{t}\}$}
    \State{Let $\mathcal{G}^{(t)}= \big\{k_{(1)}, k_{(2)}, \cdots, k_{(h)}\big\}$ with $h=|\mathcal{G}^{(t)}|$ and $k_{(1)}<k_{(2)}<\cdots<k_{(h)}$.} 
    
\Statex \vspace{2pt}
\textbf{Stage I.2: Windowed MLE computation}
    \ForAll{$k\in\mathcal{G}^{(t)}$}
        \State{$\widehat{\boldsymbol{\Theta}}^{t, k}=    (\mathbf{N}^{t, 0, 1}- \mathbf{N}^{t - k, 0, 1})/ (k -(\mathbf{N}^{t}- \mathbf{N}^{t - k}) )$}
     \State{$\widehat{\boldsymbol{\Delta}}^{t, k} = 
(\mathbf{N}^{t, 1, 0}- \mathbf{N}^{t - k, 1, 0})/(\mathbf{N}^{t}- \mathbf{N}^{t - k} )
$}
     \EndFor
\Statex \vspace{2pt}

\textbf{Stage I.3: Adaptive window selection} 
\State{$q \leftarrow 1$, $\textbf{brk}\leftarrow 0$}    
\While{$\textbf{brk}=0$ and $q\leq h$}
\If{$ \big\|f^{t,k_{(u)}}\big(\widehat{\boldsymbol{\Theta}}^{t,k_{(q)}}, \widehat{\boldsymbol{\Delta}}^{t,k_{(q)}} \big) - f^{t,k_{(u)}}\big(\widehat{\boldsymbol{\Theta}}^{t,k_{(u)}}, \widehat{\boldsymbol{\Delta}}^{t,k_{(u)}}\big)\big\|_{\infty} \leq \tau(k_{(u)}),  \forall u \in [q]$} 
     \State{$q\leftarrow q+1$}
     \Else
          \State{$\textbf{brk}\leftarrow1$}
      \EndIf
\EndWhile
\State {$ \hat{k}_t \leftarrow
       \begin{cases}
         k_{(q-1)}, & \text{if }\textbf{brk}=1,\\
         k_{(h)},   & \text{if }\textbf{brk}=0.
       \end{cases}$}
\EndFor
\Statex \vspace{2pt}
\textbf{Stage II: Low-rank refinement via tensor-based methods}       \State{$\widehat{U}^{t}_Z \leftarrow $H-PCA$(\mathcal{M}_1 (\widehat{\boldsymbol{\Theta}}^{t, \hat k_t } + \widehat{\boldsymbol{\Delta}}^{t,\hat k_t }) \mathcal{M}_1 (\widehat{\boldsymbol{\Theta}}^{t, \hat k_t } + \widehat{\boldsymbol{\Delta}}^{t, \hat k_t})^{\top}, K)$}
   \State{$\widehat{U}^{t}_W \leftarrow $H-PCA$\big(\mathcal{M}_3 (\widehat{\boldsymbol{\Theta}}^{t, \hat k_t } ) \mathcal{M}_3 (\widehat{\boldsymbol{\Theta}}^{t, \hat k_t })^{\top}, r^{t, \hat{k}_t} \big)$}
   \State{$\widehat{U}^{t}_M \leftarrow $H-PCA$\big(\mathcal{M}_3 (\widehat{\boldsymbol{\Delta}}^{t, \hat{k}_t} ) \mathcal{M}_3 (\widehat{\boldsymbol{\Delta}}^{t, \hat{k}_t})^{\top}, \tilde{r}^{t, \hat{k}_t}\big)$}
\State{$\widetilde{\boldsymbol{\Theta}}^{t, \hat{k}_t} = \widehat{\boldsymbol{\Theta}}^{t, \hat{k}_t} \times_1 \widehat{U}^{t}_Z (\widehat{U}^{t}_Z)^{\top}\times_2 \widehat{U}^t_Z(\widehat{U}^{t}_Z)^\top  \times_3 \widehat{U}^t_W (\widehat{U}^{t}_W)^\top$} \State{$\widetilde{\boldsymbol{\Delta}}^{t, \hat{k}_t} = \widehat{\boldsymbol{\Delta}}^{t, \hat{k}_t} \times_1 \widehat{U}^t_Z  (\widehat{U}^{t}_Z)^{\top} \times_2 \widehat{U}^t_Z(\widehat{U}^{t}_Z)^{\top}\times_3  \widehat{U}_M^t(\widehat{U}^{t}_M)^\top$}
\Statex \vspace{2pt}
\textbf{Stage III: Community recovery} 
\State{$ \widetilde{U}^t 
= \argmin_{U \in \mathcal{M}_{n,K}} \|\widehat{U}_Z^{t} - U\|_{\mathrm{F}}^2. 
$}
\State{Let $\{\tilde{u}^{(t, k)}\}_{k=1}^{K}$ denote the distinct row patterns of $\widetilde{U}_Z^t$. }
\State{$\widehat{Z}^t_{i, j} =  \mathbbm{1} \big\{\widetilde{U}_i^t = \tilde{u}^{(t, j)}\big\}, \quad \forall i \in [n], j \in [K]$
}
\OUTPUT{ 
$\{(\widetilde{\boldsymbol{\Theta}}^{t},
  \widetilde{\boldsymbol{\Delta}}^{t})\}_{t\in [T]}$ and 
$\{\widehat{Z}^{t}\}_{t\in [T]}$}
\end{algorithmic}
\end{algorithm}

We present an adaptive online algorithm in \Cref{alg:fast-msbm} for estimation under non-stationarity. Similar to the stationary procedure in \Cref{alg:fast-stat-msbm}, the algorithm consists of three stages: constructing initial estimators, refining them via tensor-based spectral methods and recovering communities via subspace clustering.
The key difference lies in the construction of the initial estimators, which must adapt to temporal variation. To this end, we employ an adaptive windowing scheme that balances bias and variance by selecting a suitable look-back window based on local stability.

Specifically, \textbf{Stage I} consists of three components: (i) maintaining cumulative transition counts on a dynamic geometric grid for efficiency, (ii) computing windowed maximum likelihood estimators over candidate windows, and (iii) selecting the window adaptively via a stability criterion. We describe \textbf{Stage I} in detail below, while \textbf{Stages II} and \textbf{III} proceed as their counterparts in the stationary case in \Cref{alg:fast-stat-msbm}.

\medskip
\noindent
\textbf{Cumulative transition count updates and memory management.}
At any time $t \in [T]$ and for each look-back window $k \in [t]$, the sufficient statistics consist of
the cumulative number of $0 \to 1$ transitions, $1 \to 0$ transitions and active edges between times $t-k$ and $t$, namely
\[
\mathbf{N}^{(t, k),0,1} = \mathbf{N}^{t,0,1} - \mathbf{N}^{t-k,0,1}_{i,j,l}, \quad \mathbf{N}^{(t, k),1,0} = \mathbf{N}^{t,1,0}- \mathbf{N}^{t-k,1,0}, \quad \mathbf{N}^{(t, k)} = \mathbf{N}^{t} - \mathbf{N}^{t-k,1,0}.
\]
To evaluate all possible window sizes $k \in [t]$, one would need to store the entire collection of sufficient statistics 
\[
\{ ( \mathbf{N}^{t-k,0,1},  \mathbf{N}^{t-k,1,0}, \mathbf{N}^{t-k}) \}_{ k  \in [t]} \cup \{ ( \mathbf{N}^{t,0,1},  \mathbf{N}^{t,1,0}, \mathbf{N}^{t}) \},
\]
which incurs a storage cost of order $O(tn^2L)$, and computing all 
\[
\{(\mathbf{N}^{(t, k),0,1},  \mathbf{N}^{(t, k),1,0}, \mathbf{N}^{(t, k)}) \}_{k \in [t]},
\]
incurs a computational cost of order $O(tn^2L)$. 

To substantially reduce both memory usage and computational cost, we restrict attention to a dynamic geometric grid $\mathcal{G}^{(t)}$ that retains only a logarithmic subset of candidate window sizes.
This idea, introduced by \cite{moen2025general}, defines  $\mathcal{G}^{(t)}$ for $t \geq 2$ as
\begin{equation}\label{def-dynamic-grid} 
\mathcal{G}^{(t)} = \{1\} \cup \bigcup_{j=1}^{\lfloor \log_2\{(t-1)/3\} \rfloor + 1} \big\{ \mathcal{G}^{(t)}_{\mathrm{L},j} \big\} \cup \bigcup_{j=1}^{\lfloor \log_2(t-1) \rfloor - 1} \big\{ \mathcal{G}^{(t)}_{\mathrm{R},j} \big\},
\end{equation}
where 
\[
\mathcal{G}^{(t)}_{\mathrm{L},j} =  2^j +  \{(t-1) \bmod 2^{j-1}\} \quad \mbox{and} \quad \mathcal{G}^{(t)}_{\mathrm{R},j}  =   \mathcal{G}^{(t)}_{\text{L},j} + 2^{j-1}.
\]
For each $t\geq 2$, we store
\[
 \{ (\mathbf{N}^{t-k, 0, 1} , \mathbf{N}^{t-k, 1, 0}, \mathbf{N}^{t-k} ) \}_{ k \in  \mathcal{G}^{(t)}} \cup 
 \{ (\mathbf{N}^{t, 0, 1} , \mathbf{N}^{t, 1, 0}, \mathbf{N}^{t})\},
\]
requiring only $O(\log_2(t)n^2L)$ memory.

When a new observation $\mathbf{A}^{t+1}$ arrives,the following inclusion holds:
\[
\{t+1-k: k\in\mathcal{G}^{(t+1)}\} \subseteq \{t-k \colon k \in \mathcal{G}^{(t)}\} \cup\{t\}, 
\]
which implies that updating the sufficient statistics requires computing only
\[
  ( \mathbf{N}^{t+1, 0, 1}, \mathbf{N}^{t+1, 1, 0}, \mathbf{N}^{t+1}),
\]
at a computational cost of order $O(n^2L)$. The updated collection is then utilized to compute 
\[
   \{ (\mathbf{N}^{(t+1, k), 0, 1}, \mathbf{N}^{(t+1, k), 1, 0}, \mathbf{N}^{(t+1, k)})\}_{k \in \mathcal{G}^{(t+1)}},
\]
which incurs only $O(\log(t)n^2L)$ additional computational cost.

\medskip
\noindent
\textbf{Windowed MLE computation.}
At time $t$ and for a look-back window $k \in [t]$, conditioning on the state $\mathbf{A}^{t-k}$, the joint likelihood of the segment
$\{\mathbf{A}^s\}_{s \in [t] \backslash[t-k] }$ factorizes over time and edges as
\[
       \prod_{1 \leq i \leq  j \leq n, l \in [L]} \prod_{s=t-k+1}^{t}   \boldsymbol{\Theta}_{i, j, l}^{\mathbf{A}^s_{i,j, l} (1-\mathbf{A}^{s-1}_{i,j, l})} (1- \boldsymbol{\Theta}_{i, j, l})^{(1-\mathbf{A}^s_{i,j, l})(1-\mathbf{A}^{s-1}_{i,j, l})} \boldsymbol{\Delta}_{i, j, l}^{(1-\mathbf{A}^s_{i,j, l}) \mathbf{A}^{s-1}_{i,j, l}} (1-\boldsymbol{\Delta}_{i, j, l})^{\mathbf{A}^s_{i,j, l} \mathbf{A}^{s-1}_{i,j, l}}.
\]
Motivated by this factorization, we introduce the averaged negative log-likelihood tensor-valued function
\[
f^{t,k} \colon [0, 1]^{n \times n \times L} \times [0, 1]^{n \times n \times L} \to  \R^{n \times n \times L}
\]
which maps a candidate pair $(\boldsymbol{\Theta}', \boldsymbol{\Delta}' ) $ to a tensor whose $(i,j,l)$-th entry is 
\begin{align}
& \hspace{1cm} f^{t,k}_{i, j, l} (\boldsymbol{\Theta}'_{i, j, l}, \boldsymbol{\Delta}'_{i, j, l})
 = -\frac{1}{k} \sum_{u=t-k+1}^{t}  \Big\{ \mathbf{A}^u_{i, j, l} (1-\mathbf{A}^{u-1}_{i, j, l}) \log(\boldsymbol{\Theta}'_{i, j, l})  \nonumber  \\
 &  + (1-\mathbf{A}^u_{i, j, l}) (1-\mathbf{A}^{u-1}_{i, j, l}) \log (1-\boldsymbol{\Theta}'_{i, j, l}) + (1- \mathbf{A}^u_{i, j, l}) \mathbf{A}^{u-1}_{i, j, l}\log(\boldsymbol{\Delta}'_{i, j, l})  + \mathbf{A}^u_{i, j, l} \mathbf{A}^{u-1}_{i, j, l}\log (1-\boldsymbol{\Delta}'_{i, j, l})  \Big\}. \nonumber
\end{align}
The windowed MLEs are defined as minimizers
\[
(\widehat{\boldsymbol{\Theta}}^{t, k}, \widehat{\boldsymbol{\Delta}}^{t, k} )
=\argmin_{(\boldsymbol{\Theta}, \boldsymbol{\Delta}) \in [0, 1]^{n \times n \times L} \times [0, 1]^{n \times n \times L} } f^{t, k} (\boldsymbol{\Theta}, \boldsymbol{\Delta}),
\]
which admit the closed forms
\begin{align}
\widehat{\boldsymbol{\Theta}}^{t, k}
   =     \frac{\sum_{u=t-k+1}^{t} \mathbf{A}^u (1-\mathbf{A}^{u-1})}{\sum_{u=t-k+1}^{t} (1-\mathbf{A}^{u-1})} \quad \mbox{and} \quad
\widehat{\boldsymbol{\Delta}}^{t, k} = \frac{\sum_{u=t-k+1}^{t} (1- \mathbf{A}^u) \mathbf{A}^{u-1}}{\sum_{u=t-k+1}^{t}  \mathbf{A}^{u-1}}. \nonumber
\end{align}
In terms of cumulative transition counts, equivalently, we have that
\[
\widehat{\boldsymbol{\Theta}}^{t, k}
   =     
   \frac{\mathbf{N}^{t, 0, 1}- \mathbf{N}^{t - k, 0, 1}}{ k -(\mathbf{N}^{t}- \mathbf{N}^{t - k}) } \quad \mbox{and} \quad
\widehat{\boldsymbol{\Delta}}^{t, k} =   
   \frac{\mathbf{N}^{t, 1, 0}- \mathbf{N}^{t - k, 1, 0}}{\mathbf{N}^{t}- \mathbf{N}^{t - k}}. 
\]

\medskip
\noindent
\textbf{Adaptive window selection.} Motivated by \cite{huang2023stability}, we adopt the following window selection procedure. At any time $t \in [T]$, the candidate look-back windows are $k_{(1)} < k_{(2)} < \cdots < k_{(h)}$ taken from the dynamic grid $\mathcal{G}^{(t)}$. For each candidate $k_{(q)}$,
we compare its windowed MLEs $(\widehat{\boldsymbol{\Theta}}^{t,k_{(q)}}, \widehat{\boldsymbol{\Delta}}^{t,k_{(q)}})$ against all shorter windows $(\widehat{\boldsymbol{\Theta}}^{t,k_{(u)}}, \widehat{\boldsymbol{\Delta}}^{t,k_{(u)}})$, $u \in [q]$, using the averaged loss $f^{t,k_{(u)}}(\cdot,\cdot)$. We accept $k_{(q)}$ if the worst per-edge deterioration in loss satisfies
\[ \big\|f^{t,k_{(u)}}\big(\widehat{\boldsymbol{\Theta}}^{t,k_{(q)}}, \widehat{\boldsymbol{\Delta}}^{t,k_{(q)}} \big) - f^{t,k_{(u)}}\big(\widehat{\boldsymbol{\Theta}}^{t,k_{(u)}}, \widehat{\boldsymbol{\Delta}}^{t,k_{(u)}}\big)\big\|_{\infty} \leq \tau(k_{(u)}),  \quad \forall u \in [q].
\]
If the condition fails, we stop and set
$\hat{k}_t = k_{q-1}$, the largest window that still satisfies all tolerance constraints; if it never fails, we set  $\hat{k}_t = k_{(h)}$.
An illustration of the selection procedure for $t=8$ and $\mathcal{G}^{(t)}=\{1,2,3,5\}$ is shown in \Cref{Fig-alg}.

\begin{figure}[t] 
        \centering
        \includegraphics[width=0.85 \textwidth]{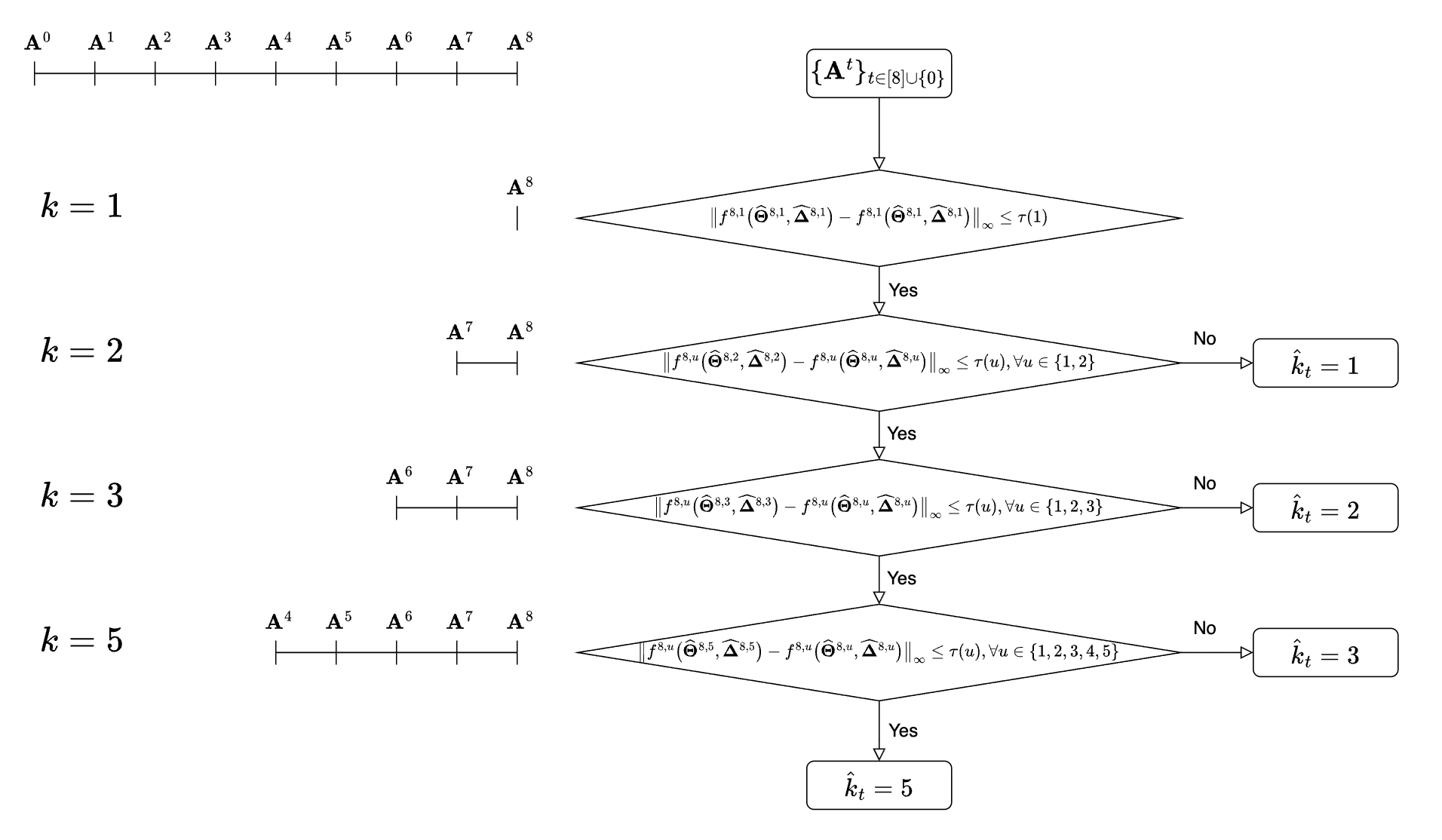}
        \caption{Illustration of the adaptive window selection procedure (\textbf{Stage I.3} in \Cref{alg:fast-msbm}) at time $t=8$, where the dynamic grid of candidate windows is $\mathcal{G}^{(t)}=\{1,2,3,5\}$.}
    \label{Fig-alg}
\end{figure}

\subsection{Theoretical guarantees of Algorithm \ref{alg:fast-msbm}}\label{section-nonstat-theory}

To establish the statistical guarantees of Algorithm \ref{alg:fast-msbm} under non-stationarity, the main challenge is that the data may exhibit both abrupt structural changes and smaller local drifts.  
A naive approach is to classify each local variation as either large or small according to a fixed threshold. 
However, this is insufficient, as it ignores cumulative effects: individually small variations may accumulate into a substantial shift that a local thresholding rule fails to detect.  

We instead adopt a segmentation view and introduce the notion of quasi-stationary segments. 
In this framework, both substantial changes and accumulated small variations determine segment boundaries, while smaller fluctuations are allowed within each segment. 
This is formalized in \Cref{def-seg}, based on the concepts of dominant and local dominant monotonicity defined in \Cref{def-local-dom}.

\begin{figure}[t] 
        \centering
        \includegraphics[width=1 \textwidth]{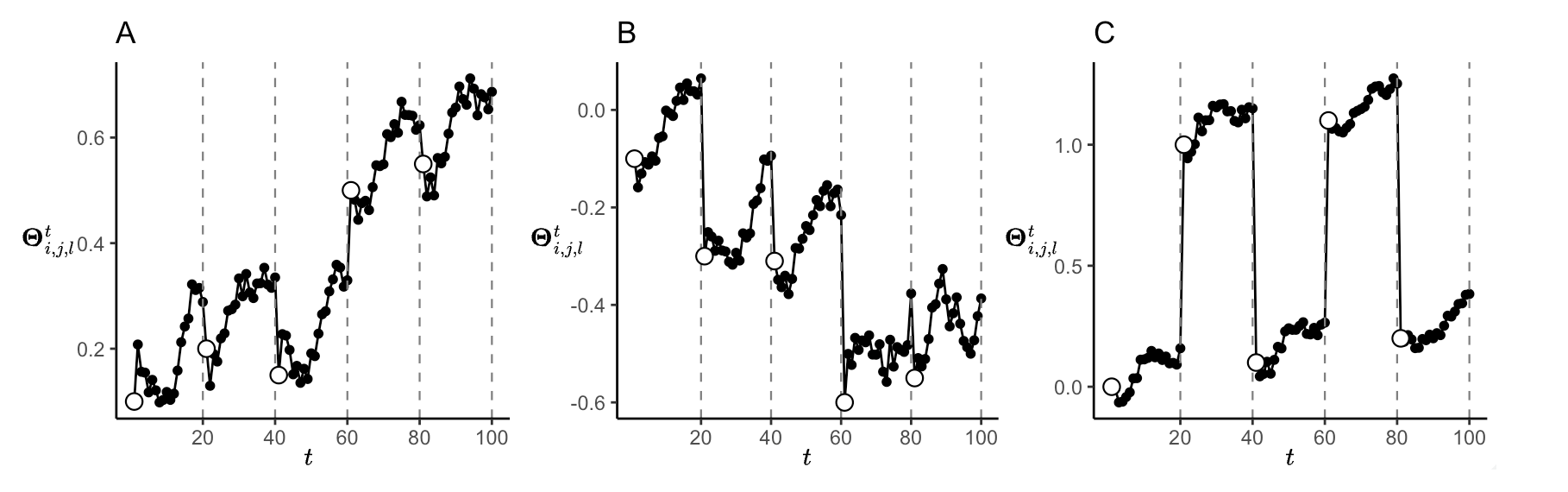}
        \caption{Examples of the sequence $\{\boldsymbol{\Theta}_{i, j, l}^{t}\}_{t \in [T]}$ with $T =100$, $V(t) = t^{-1/2}$, $G = 5$,  $\{T_1, T_2, T_3, T_4\} = \{20, 40, 60, 80\}$. Circles  mark the segment starting values$\{\boldsymbol{\Theta}_{i, j, l}^{ T_{{g-1}+1}}\}_{g \in [G]}$. }
        \label{Fig-locally}
\end{figure}

\begin{definition}[Dominantly monotone and locally dominantly monotone sequences]\label{def-local-dom}
Let $\{u_t\}_{t=1}^p \subset \mathbb{R}$ be a real-valued sequence. For any $1 \leq s < e \leq p$, let
\[
p_{s,e} = \sum_{t=s}^{e-1} (u_{t+1}-u_t)_+ \quad \mbox{and} \quad 
n_{s,e} = \sum_{t=s}^{e-1} (u_{t+1}-u_t)_-,
\]
where $(u)_+ = \max\{u,0\}$ and $(u)_- = \max\{-u,0\}$.
The sequence is dominantly monotone if there exists a constant $C_{\alpha} \in [0,1)$ such that
\[
n_{1,p} \leq C_{\alpha} p_{1,p}
\quad \mbox{or} \quad  p_{1,p} \leq C_{\alpha} n_{1,p}.
\]
It is locally dominantly monotone if there exists a constant $C_{\alpha} \in \big[0,c_{\min} (1-c_{\min}^{-1}) \big)$ such that for every $1 \leq s < e \leq p$, 
\[
n_{s,e} \leq C_{\alpha} p_{s,e}
\quad \mbox{or} \quad  p_{s,e} \leq C_{\alpha} n_{s,e}.  
\]
\end{definition}

\begin{definition}[Segmentation]\label{def-seg}
Let the process $\{\mathbf{A}^t \}_{t \geq 0} \subset\{0, 1\}^{n \times n \times L}$ be defined in \Cref{def-armsb}. The sequence $\{(\boldsymbol{\Theta}^t, \boldsymbol{\Delta}^t)\}_{t=1}^{T}$ is said to have $G$ $V$-quasi-stationary segments for some $G \in \Z^{+}$ and non-increasing function $V\colon \R^+ \to \R^+$ if  there exist time points $0 = T_0 < T_1 < \cdots < T_G = T$ such that 
\[
   \max_{s, u \in [T_g] \backslash [T_{g-1}]} \left\{\| \boldsymbol{\Theta}^{s} - \boldsymbol{\Theta}^{u} \|_{\infty}+  \| \boldsymbol{\Delta}^{s} - \boldsymbol{\Delta}^{u} \|_{\infty} \right\}
\leq V(T_g - T_{g-1}), \quad \forall g \in [G],
\]
the sequences of segment-starting values $\{\boldsymbol{\Theta}_{i, j, l}^{T_{g-1}+1}\}_{g \in [G]}$ and  $\{\boldsymbol{\Delta}_{i, j, l}^{T_{g-1}+1}\}_{g \in [G]}$ are locally dominantly monotone; and within each segment, the trajectories $\{\boldsymbol{\Theta}_{i, j, l}^{t}\}_{t \in [T_g]\backslash[T_{g-1}]}$ and $\{\boldsymbol{\Theta}_{i, j, l}^{t}\}_{t \in [T_g]\backslash[T_{g-1}]}$ are dominantly monotone, all as defined in \Cref{def-local-dom}.
\end{definition}

\Cref{def-seg} introduces a flexible framework through a general drift function $V(t)$, which can be seen as a generalization of the fixed rate $t^{-1/2}$ used in \cite{huang2023stability}. The function $V(t)$ regulates the maximal within-segment variation: abrupt or accumulated changes are captured by the segmentation points $T_1,\ldots,T_{G-1}$, while smaller fluctuations are allowed within each segment. This framework accommodates piecewise-stationary trajectories, smoothly drifting trajectories and mixtures of these two. 

To further characterize the temporal structure, \Cref{def-seg} incorporates dominant and local dominant monotonicity (see \Cref{def-local-dom}). This is to say, we allow that within and across segments, the evolution is predominantly in one direction, preventing cancellation between positive and negative variations.  
This condition is illustrated in \Cref{Fig-locally}.
Panels A and B satisfy the monotonicity conditions both across and within segments, whereas Panel~C does not. In Panel~C, although each segment exhibits an upward trend, the increments cancel across segments, so no dominant direction emerges.
This condition is crucial for controlling the bias of the windowed MLEs in \textbf{Stage I} of \Cref{alg:fast-msbm}.  In the stationary setting, the entrywise MLE bias is of order $t^{-1}$.  
 Without monotonicity, the within-segment bias can be as large as
\[
(t - T_{g-1})^{-1/2} + V(t-T_{g-1}) 
\]
whereas under \Cref{def-seg} it improves to 
\[
(t - T_{g-1})^{-1} + \{ V(t-T_{g-1}) \}^2,
\]
as shown in \Cref{prop-bias-final} in Appendix \ref{app-sec-non-proof}. Here $V(t-T_{g-1})$ captures local fluctuations and in the regime of interest, it is small, so squaring it yields a genuine improvement. 

In the stationary case, $(\boldsymbol{\Theta}^t, \boldsymbol{\Delta}^t)$ are time-invariant and thus $G=1$.  At the other extreme, if the sequence is highly non-stationary, one may take $T_j = j$, yielding $G = T$. Therefore, for a fixed~$V$, a larger value of $G$ corresponds to a higher degree of non-stationarity.

\begin{assumption}\label{ass-bias}
Let the process $\{\mathbf{A}^t \}_{t=1}^T \subset\{0, 1\}^{n \times n \times L}$ be defined in \Cref{def-armsb} and let $G$ be the number of $V$-quasi-stationary segments in $\{(\boldsymbol{\Theta}^t, \boldsymbol{\Delta}^t)\}_{t=1}^{T}$ as defined in \Cref{def-seg}. 
\begin{enumerate}[$(i)$]
    \item  Assume that the drift function $V$ is $\log$–$\log$ Lipschitz continuous, i.e.~there exists an absolute constant $C_{\mathrm{Lip}} > 0$ such that
\[
\big\vert \log \big( V(t_1) \big)  - \log \big( V(t_2) \big) \big\vert \leq C_{\mathrm{Lip}} \vert \log(t_1) - \log(t_2) \vert, \quad  \forall t_1, t_2 > 0,  
\quad   
\]
\item Let $T_{\min} = \min_{g \in [G]} (T_g - T_{g-1})$. Assume $ T_{\min} = \Theta(T)$.
    \item  For all $g \in [G]$, $1 \leq i\leq j \leq n$ and $l\in [L] $, assume that
\begin{align}
\big\vert \boldsymbol{\Pi}^{T_{g-1}+2}_{i,j,l}-\boldsymbol{\Pi}^{T_{g-1}+1}_{i,j,l} \big\vert   \leq C_{\pi} \big( \big\vert \boldsymbol{\Theta}^{T_{g-1}+2}_{i,j,l} -   \boldsymbol{\Theta}^{T_{g-1}+1}_{i,j,l}  \big\vert  +  \big\vert  \boldsymbol{\Delta}^{T_{g-1}+2}_{i,j,l} - \boldsymbol{\Delta}^{T_{g-1}+1}_{i,j,l}  \big\vert  \big),\nonumber
\end{align}
where $\boldsymbol{\Pi}^{t}  = \E\{ \mathbf{A}^t\}$ and  $C_{\pi} \geq 1$ is an absolute constant.   
\item   Assume $ s_{\max} \leq C_{\sigma} s_{\min} $, where $C_{\sigma} > 0$ is an absolute constant, $s_{\min} = \min_{k \in [K]} s_k$ and $s_{\max} = \max_{k \in [K]} s_k$ with, for any $k \in [K]$, $s_k = \sum_{i=1}^n Z_{i, k}$. For any $t \in [T]$ and $s \in [t]$, assume that \[
\mathrm{rank} (\mathcal{M}_1 ( \mathbf{W}^{t, s}))= \mathrm{rank} (\mathcal{M}_1 ( \mathbf{M}^{t, s}))= \mathrm{rank} (\mathcal{M}_1 ( \mathbf{Q}^{t, s}))= K,
\]
\[
\max\{\|\mathbf{W}^{t, s}\| /\sigma_{\min}(\mathbf{W}^{t, s}),  \|\mathbf{M}^{t, s}\| /\sigma_{\min}(\mathbf{M}^{t, s}), \|\mathbf{Q}^{t, s}\| /\sigma_{\min}(\mathbf{Q}^{t, s})\} \leq C_{\sigma},
\]
and $\sigma_{\min}(\mathbf{Q}^{t, s}) \geq \min\{\sigma_{\min}(\mathbf{M}^{t, s}), \sigma_{\min}(\mathbf{W}^{t, s})\}$   for some  absolute constant $ C_{\sigma}>0$.
Here $\mathbf{W}^{t,s} = s^{-1}\sum_{u={t-s+1}}^t  \mathbf{W}^s$, $\mathbf{M}^{t,s} = s^{-1}\sum_{u={t-s+1}}^t     \mathbf{M}^s$ and $ \mathbf{Q}^{t,s} = s^{-1}\sum_{u={t-s+1}}^t  ( \mathbf{W}^s +  \mathbf{M}^s )$, and we define $r^{t, s} = \mathrm{rank}( \mathcal{M}_3 (\mathbf{W}^{t, s})) $ and $ \tilde{r}^{t, s} =\mathrm{rank}( \mathcal{M}_3 (\mathbf{M}^{t, s}))$.
\end{enumerate}
\end{assumption}

\Cref{ass-bias}$(i)$ specifies our measurement of the magnitude of temporal variation: changes exceeding the scale $V$ are regarded as substantial and separated by segment boundaries.
Together with \Cref{ass-bias}$(ii)$, this implies that only $O(1)$ such substantial changes occur over the full time horizon.  
Assumptions of this type are commonly used in change-point analysis for complex models \citep[e.g.][]{padilla2022change,wang2025multilayer}.    
\Cref{ass-bias}$(iii)$ links the marginal probabilities~$\boldsymbol{\Pi}^t$ to the transition probabilities $(\boldsymbol{\Theta}^t,\boldsymbol{\Delta}^t)$ at segment boundaries and rules out pathological cases in which the marginals fluctuate much more erratically than the underlying transition parameters.  
Finally, \Cref{ass-bias}$(iv)$ imposes community balance and well-conditioned low-rank structure for the window-averaged connectivity tensors. This parallels \Cref{ass-stat-rank} in the stationary setting, which ensures a low-rank Tucker structure and enables the TH-PCA refinement step in \textbf{Stage II} of \Cref{alg:fast-msbm}.

\begin{theorem}\label{thm-tensor-Frobenius}
Let the process $\{\mathbf{A}^t \}_{t \geq 0} \subset\{0, 1\}^{n \times n \times L}$ be defined in \Cref{def-armsb}  and let $G$ be the number of $V$-quasi-stationary segments in $\{(\boldsymbol{\Theta}^t, \boldsymbol{\Delta}^t)\}_{t=1}^{T}$ as defined in \Cref{def-seg}. Suppose that \Cref{ass-bias} holds.
Define
\begin{equation}\label{def-tau-new}
\tau(t) =  C_{\tau}\max \Big\{t^{-1} \log(n \vee L \vee T) \log(T) \log\log(T), \big(V(t)\big)^2 \Big\}, \quad \forall t \in [T],
\end{equation}
where $ C_{\tau} >0 $ is a sufficiently large constant. 
For any $g \in [G]$ and any $t \in [T_g]\backslash[T_{g-1}]$, under \Cref{alg:fast-msbm},  with probability at least $1- (n \vee L \vee T)^{-c}$,
\begin{align}\label{thm-tensor-Frobenius-1-non}
 \| \widetilde{\boldsymbol{\Theta}}^t -\boldsymbol{\Theta}^t \|_{\mathrm{F}} + \| \widetilde{\boldsymbol{\Delta}}^t -\boldsymbol{\Delta}^t\|_{\mathrm{F}} \leq  & C    \sigma^{-1}  \big(  K\sqrt{(K \vee r)L}  \epsilon_t^2
  + K Ln \epsilon_t^4    \big)  \nonumber\\
& \hspace{0.5cm}+   C   \big(  \sqrt{nK + Lr +K^2r} \epsilon_t 
+ n\sqrt{L}  \epsilon_t^2 \big),
\end{align}
where 
\begin{equation}\label{def-epsilon-t-new}
 \epsilon_{t} = \max \bigg\{\sqrt{\frac{ \log(n \vee L \vee T)\log(T) \log\log(T)}{t-T_{g-1}}}, V(t-T_{g-1})\bigg\},
\end{equation}
$\sigma = \min_{t \in [T], s \in [t]} \min \{ \sigma_{\min}(\mathbf{W}^{t, s}), \sigma_{\min}(\mathbf{M}^{t, s})\}$, $r = \max_{t \in [T], s \in [t]} \max\{ r^{t, s}, $ $\tilde{r}^{t, s}\}$ and $C, c >0$ are absolute constants.
\end{theorem}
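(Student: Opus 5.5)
The proof reduces the non-stationary statement to the stationary argument behind \Cref{thm-tensor-Frobenius-stat}, with $t$ replaced by the effective window $t-T_{g-1}$ and the stochastic scale $t^{-1/2}$ replaced by $\epsilon_t$. The key intermediate claim is an entrywise bound for the adaptively selected windowed MLEs. Fix $g$ and $t\in[T_g]\backslash[T_{g-1}]$. With probability at least $1-(n\vee L\vee T)^{-c}$, the selected window satisfies $\hat k_t\gtrsim t-T_{g-1}$, and
\[
\max\big\{\|\widehat{\boldsymbol{\Theta}}^{t,\hat k_t}-\boldsymbol{\Theta}^{t}\|_\infty,\ \|\widehat{\boldsymbol{\Delta}}^{t,\hat k_t}-\boldsymbol{\Delta}^{t}\|_\infty\big\}\lesssim \epsilon_t .
\]
In addition, the entrywise bias $\E$-part is of order $\epsilon_t^2$, and the centred fluctuation is sub-Gaussian at scale $\epsilon_t$.

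\textbf{Fixed window.} For a deterministic window $k$, I split $\widehat{\boldsymbol{\Theta}}^{t,k}-\boldsymbol{\Theta}^{t}$ into three parts:
\begin{enumerate}[(a)]
\item a martingale term $\sum_u (1-\mathbf{A}^{u-1})(\mathbf{A}^u-\boldsymbol{\Theta}^u)$ divided by the denominator $\sum_u(1-\mathbf{A}^{u-1})$;
\item a drift term $\sum_u(1-\mathbf{A}^{u-1})(\boldsymbol{\Theta}^u-\boldsymbol{\Theta}^t)/\sum_u(1-\mathbf{A}^{u-1})$;
\item the ratio bias.
\end{enumerate}
Term (a) is handled by Freedman's inequality together with the fact that the denominator is at least of order $k$. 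The denominator bound uses the geometric mixing of the two-state chain, whose entries lie in $[c_{\min},1-c_{\min}]$, as in \Cref{lemma-sub-Gaussian}; a union bound over the $O(\log t)$ grid windows and the $n^2L$ edges costs the extra $\log T\log\log T$ factors. Term (b) is controlled by $V(k)$ when $k\le t-T_{g-1}$, by \Cref{def-seg}. Term (c), together with the product of the fluctuation of $\mathbf{A}^{u-1}$ and the drift, gives the refined bias $k^{-1}+V(k)^2$ of \Cref{prop-bias-final}. That proposition relies on dominant monotonicity and on \Cref{ass-bias}$(iii)$ to make the marginal drift comparable to the transition drift.

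\textbf{Window selection (Lepski-type).} This is the step I expect to be the main obstacle. I would follow \cite{huang2023stability}. The entrywise loss is strongly convex with bounded curvature on $[c_{\min},1-c_{\min}]$ in a neighbourhood of the truth, so the loss difference in Stage I.3 is comparable to the squared distance between the two MLEs, up to a linear term of order $\epsilon^2$.

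First, every grid window $k\le t-T_{g-1}$ passes every test. For $u\le q$, each of the two estimators is within $k_{(u)}^{-1/2}\sqrt{\log\cdot}+V$ of $\boldsymbol{\Theta}^t$, so the excess loss is at most $\tau(k_{(u)})$ once $C_\tau$ is large. Here the $\log$–$\log$ Lipschitz property in \Cref{ass-bias}$(i)$ is used to compare $V(k_{(q)})$ with $V(k_{(u)})$ across the geometric grid. Consequently $\hat k_t$ is at least the largest grid point below $t-T_{g-1}$, which is $\gtrsim t-T_{g-1}$.

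Second, if $\hat k_t$ exceeds $t-T_{g-1}$ and so reaches into earlier segments, then passing the test against the grid window $k^\ast\asymp t-T_{g-1}$ forces the excess loss of $\widehat{\boldsymbol{\Theta}}^{t,\hat k_t}$ under $f^{t,k^\ast}$ to be at most $\tau(k^\ast)$. By strong convexity this gives $\|\widehat{\boldsymbol{\Theta}}^{t,\hat k_t}-\widehat{\boldsymbol{\Theta}}^{t,k^\ast}\|_\infty\lesssim\epsilon_t$. The delicate point is that the bias of the longer window must still be $O(\epsilon_t^2)$ and not merely $O(\epsilon_t)$. I would get this from local dominant monotonicity of the segment-starting values: contributions from earlier segments all push in one direction, so a large bias in the long window would show up as a detectable loss gap and fail the test. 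The restriction on $C_\alpha$ in \Cref{def-local-dom} is exactly what makes this comparison go through.

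\textbf{Spectral refinement.} With the entrywise control in hand, I rerun the TH-PCA perturbation analysis from the proof of \Cref{thm-tensor-Frobenius-stat}, replacing $t^{-1/2}$ by $\epsilon_t$ and $t^{-1}$ by $\epsilon_t^2$. The target is the window-averaged tensor $\boldsymbol{\Theta}^{t,\hat k_t}$, which has the Tucker structure guaranteed by \Cref{ass-bias}$(iv)$. Its distance to $\boldsymbol{\Theta}^t$ is $O(\epsilon_t^2)$ entrywise in the bias sense, contributing $n\sqrt L\,\epsilon_t^2$ in Frobenius norm. The H-PCA sin-$\Theta$ bounds then give the $\sigma^{-1}$ terms, namely $K\sqrt{(K\vee r)L}\,\epsilon_t^2$ from the noise and $KLn\,\epsilon_t^4$ from the squared bias. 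The projection step gives the term $\sqrt{nK+Lr+K^2r}\,\epsilon_t$.

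One gap remains: $\hat k_t$ is data-dependent. I would handle it by a union bound over the $O(\log T)$ grid windows, since the deviation bounds hold simultaneously for all grid windows.
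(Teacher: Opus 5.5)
Your architecture matches the paper's route:
\begin{itemize}
\item entrywise control of the selected-window MLE through the stability selection of \cite{huang2023stability}, with $\hat k_t\geq\bar k_t$ and strong convexity (\Cref{main-theorem-1});
\item a second-order bias bound from (local) dominant monotonicity (\Cref{prop-bias} and \Cref{prop-bias-final});
\item the Tucker structure of window averages (\Cref{lemma-singular-values-non-stat});
\item then the stationary TH-PCA template.
\end{itemize}
Freedman's inequality is a fine substitute for the $\alpha$-mixing Bernstein bound. Your union bound over the deterministic grid windows is the right way to handle the data dependence of $\hat k_t$: the $\|\cdot\|_\infty$ test couples all edges, which the paper's appeal to the stationary argument passes over.

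\textbf{The failing step.} You claim that $\boldsymbol{\Theta}^{t,\hat k_t}$, or $\E\{\widehat{\boldsymbol{\Theta}}^{t,\hat k_t}\}$, is within $O(\epsilon_t^2)$ of $\boldsymbol{\Theta}^t$ entrywise. Second order holds only relative to the window average. By \Cref{prop-bias}, entrywise $\E\{\widehat{\boldsymbol{\Theta}}^{t,k}\}-\boldsymbol{\Theta}^{t,k}=O(k^{-1})+\mathcal{E}^{t,k}$. Here $\mathcal{E}^{t,k}$ is a covariance over the window between $\boldsymbol{\Theta}^{s}$ and $\boldsymbol{\Pi}^{s-1}$, so it is bounded by the product of their spreads. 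Each spread is $O(\epsilon_t)$:
\begin{itemize}
\item for $\boldsymbol{\Theta}^s$, from the test combined with local dominant monotonicity as in \eqref{eq-bias-calim}; the test alone only certifies first-order closeness, since a loss gap $\tau\asymp\epsilon_t^2$ is a parameter gap $\asymp\epsilon_t$;
\item for $\boldsymbol{\Pi}^{s-1}$, from \Cref{ass-bias}$(iii)$ and within-segment dominant monotonicity.
\end{itemize}
So the square comes from this product, not from test detectability as your window-selection paragraph suggests. Your own term (b) already shows that, measured against $\boldsymbol{\Theta}^t$, the drift enters at first order. Nothing in your argument squares $\boldsymbol{\Theta}^{t,\hat k_t}-\boldsymbol{\Theta}^t$.

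\textbf{This is not a technicality.} Take the following instance, for which \Cref{ass-bias} holds:
\begin{itemize}
\item $G=1$ and $V(u)=u^{-1/2}$;
\item $\mathbf{M}^t\equiv\mathbf{M}$ and $\mathbf{W}^t=\mathbf{W}^0+(t/T)T^{-1/2}\mathbf{J}$, with $\mathbf{J}$ the all-ones tensor and $\mathbf{W}^0,\mathbf{M}$ well conditioned and constant across layers;
\item $\mathbf{A}^0$ drawn from the stationary law of $(\boldsymbol{\Theta}^1,\boldsymbol{\Delta}^1)$.
\end{itemize}
Every grid window passes the test, so $\hat k_T\gtrsim T$. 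Then $\boldsymbol{\Theta}^{T,\hat k_T}-\boldsymbol{\Theta}^T$ is a multiple of the all-ones tensor with entries $\asymp T^{-1/2}$. Your argument does establish that $\widetilde{\boldsymbol{\Theta}}^T$ lies within the right-hand side of \eqref{thm-tensor-Frobenius-1-non} of $\boldsymbol{\Theta}^{T,\hat k_T}$. Hence $\|\widetilde{\boldsymbol{\Theta}}^T-\boldsymbol{\Theta}^T\|_{\mathrm{F}}\gtrsim n\sqrt{L}\,T^{-1/2}$. With $K$ fixed and $n=L\asymp T^{1/3}$, this stays bounded below, while the right-hand side of \eqref{thm-tensor-Frobenius-1-non} tends to zero.

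So the inequality cannot hold for $\boldsymbol{\Theta}^t$ in this generality. The paper's ingredients do not rescue it either, because \Cref{prop-bias-final} also controls only $\E\{\widehat{\boldsymbol{\Theta}}^{t,\hat k_t}\}-\boldsymbol{\Theta}^{t,\hat k_t}$. What your route, and the paper's, actually yields is \eqref{thm-tensor-Frobenius-1-non} with $(\boldsymbol{\Theta}^t,\boldsymbol{\Delta}^t)$ replaced by $(\boldsymbol{\Theta}^{t,\hat k_t},\boldsymbol{\Delta}^{t,\hat k_t})$. Passing to $\boldsymbol{\Theta}^t$ has two further costs:
\begin{itemize}
\item an additional $\|\boldsymbol{\Theta}^{t,\hat k_t}-\boldsymbol{\Theta}^t\|_{\mathrm{F}}$, which can be of order $n\sqrt{L}\,\epsilon_t$;
\item a separate treatment of the mode-$3$ projection against $\boldsymbol{\Theta}^t$, whose layer subspace need not coincide with that of $\boldsymbol{\Theta}^{t,\hat k_t}$.
\end{itemize}
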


To interpret \Cref{thm-tensor-Frobenius}, we decompose the upper bound in \eqref{thm-tensor-Frobenius-1-non} into three main components.  
The term $\sqrt{nK + Lr + K^2 r}\epsilon_t$
captures the stochastic error from estimating the low-rank tensor structure, while 
$n\sqrt{L}\epsilon_t^2$
corresponds to the bias inherited from the autoregressive MLEs.  
The remaining terms $\sigma^{-1} ( K\sqrt{(K \vee r)L}\epsilon_t^2 $ $+ KLn\epsilon_t^4 \big)$
arise from subspace estimation errors in the tensor refinement step.

When  $\sigma \gtrsim  \max \{\tilde{\xi}_1 \wedge \tilde{\xi}_2,  \tilde{\xi}_3 \wedge \tilde{\xi}_4\}$, 
where $\tilde{\xi}_1 =  K \sqrt{K \vee r}/n$, $\tilde{\xi}_2 = \epsilon_t K \sqrt{(K \vee r)L} (nK + Lr  +K^2r)^{-1/2}$, $\tilde{\xi}_3 =   K \sqrt{L}$  and  $\tilde{\xi}_4 =   K Ln \epsilon_t^3 (nK + Lr  +K^2r)^{-1/2}$, these higher-order terms become negligible and the bound in \eqref{thm-tensor-Frobenius-1-non} simplifies to
\[
\sqrt{nK + Lr + K^2 r}\epsilon_t + n\sqrt{L}\epsilon_t^2,
\]
which has the same stochastic–bias decomposition as in the stationary setting.

Compared with the stationary rate in \eqref{eq-stat-tensor-final}, the global factor $t^{-1/2}$ is replaced by the local difficulty parameter $\epsilon_t$.  
Consequently, the effective sample size is determined by the segment length $t - T_{g-1}$, while temporal drift enters through $V(t - T_{g-1})$.
This dependence on $V$ reflects a bias–variance trade-off: faster decay of $V$ leads to shorter segments and larger stochastic error, whereas slower decay reduces stochastic variability but increases the drift-induced bias.

In particular, if $V(t) = t^{-1/2}$, then, up to logarithmic factors, the bound in \eqref{thm-tensor-Frobenius-1-non} reduces to
\[
\sqrt{\frac{nK + Lr + K^2 r}{t - T_{g-1}}}
+
\frac{n\sqrt{L}}{t - T_{g-1}},
\]
which matches the stationary rate in \eqref{eq-stat-tensor-final}, when restricted to each quasi-stationary segment.  
This shows that, under this regime, the statistical cost of non-stationarity is limited to replacing the global sample size $t$ by the local segment length $t - T_{g-1}$.  
It also highlights the adaptive nature of \Cref{alg:fast-msbm}: the effective look-back window shortens in the presence of substantial drift or abrupt changes and expands when the process is locally stable.

\begin{proposition}\label{thm-comunity-recovery-non}
Let the process $\{\mathbf{A}^t \}_{t \geq 0} \subset\{0, 1\}^{n \times n \times L}$ be defined in \Cref{def-armsb}  and let $G$ be the number of $V$-quasi-stationary segments in $\{(\boldsymbol{\Theta}^t, \boldsymbol{\Delta}^t)\}_{t=1}^{T}$ as defined in \Cref{def-seg}. Suppose that \Cref{ass-bias} holds.  Let $\{\tau(t)\}_{t \in [T]}$ be defined in \eqref{def-tau-new}. For any $t \in [T]$, define
\begin{align}\label{eq-sin-theta-new}
\widetilde{\mathcal{E}}_t =  C_{\mathcal{E}}  \sigma_Q^{-2}  \bigg( \frac{K^2\sqrt{L}}{n}  \epsilon_{t}^2 + K^2L\epsilon_{t}^4 \bigg)   
+  C\sigma_Q^{-1}  \bigg( \frac{ K}{\sqrt{n}}   \epsilon_{t}  +     K  \sqrt{L} \epsilon_{t}^2  \bigg), 
\end{align}
where $\sigma_Q =\min_{t \in [T], s \in [t]}\sigma_{\min}(\mathbf{Q}^{t, s})$, $\epsilon_{t}$ is defined in \eqref{def-epsilon-t-new} and $ C_{\mathcal{E}} >0 $ is an absolute constant. 

\begin{enumerate}
 \item For any $g \in [G]$ and any $t \in [T_g]\backslash[T_{g-1}]$, if 
\begin{equation}\label{eq-ass-singular-non-1}
    C K  \widetilde{\mathcal{E}}_t^2 \leq 1,
\end{equation}
then under \Cref{alg:fast-msbm},  with probability at least $1- (n \vee L \vee T)^{-c}$, 
\begin{equation}\label{thm-com-1-non}
\mathcal{L}(\widehat{Z}^t, Z) \leq C' \widetilde{\mathcal{E}}_t^2, 
\end{equation}
where $\mathcal{L}(\widehat{Z}^t, Z)$ is the normalized Hamming loss defined in \eqref{eq-def-hamming},  $\widetilde{\mathcal{E}}_t$ is defined \eqref{eq-sin-theta-new}, $C>0$ is a sufficiently large constant and $C', c>0$ are absolute constants. 

\item  For any $g \in [G]$ and any $t \in [T_g]\backslash[T_{g-1}]$, if 
\begin{equation}\label{eq-ass-singular-non}
     C n  \widetilde{\mathcal{E}}_t^2   \leq  1,
\end{equation}
then under \Cref{alg:fast-msbm}, with probability at least $1- (n \vee L \vee T)^{-c}$, 
\begin{equation}\label{thm-com-2-non}
\widehat{Z}_i^t = \widehat{Z}_j^t  \quad \mbox{if and only if} \quad Z_i = Z_j, \quad \forall i, j \in [n], 
\end{equation}
where $\widetilde{\mathcal{E}}_t$ is as in \eqref{eq-sin-theta-new}, $C>0$ is a sufficiently large constant and $c>0$ is an absolute constant. 
\end{enumerate}
\end{proposition}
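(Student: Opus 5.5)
The proof will follow the stationary argument behind \Cref{thm-comunity-recovery-stat} closely. The only change is that the global sample size $t$ is replaced by the local difficulty parameter $\epsilon_t$, which the adaptive window selection in \textbf{Stage I.3} of \Cref{alg:fast-msbm} provides. The reduction has two parts. First, a $\sin\Theta$ bound for $\widehat U_Z^t$: with probability at least $1-(n\vee L\vee T)^{-c}$,
\[
\min_{O\in\mathbb{O}_{K\times K}}\|\widehat U_Z^t O - U_Z\|\le \widetilde{\mathcal{E}}_t,
\]
where $U_Z$ is the left singular subspace of $\mathcal{M}_1(\boldsymbol{\Theta}^{t,\hat k_t}+\boldsymbol{\Delta}^{t,\hat k_t})$ built from the window-averaged tensor $\mathbf{Q}^{t,\hat k_t}$. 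This is the non-stationary analogue of \Cref{prop-sin-theta-stat}. Second, a deterministic clustering step that turns this subspace bound into the Hamming and exact-recovery statements. Because $Z$ is time-invariant, $U_Z$ equals $Z(Z^\top Z)^{-1/2}$ up to rotation for \emph{every} window. So the target of the clustering step does not drift, even though $\boldsymbol{\Theta}^t$ does.

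For the $\sin\Theta$ bound, I will reuse the entrywise control of the windowed MLEs already needed for \Cref{thm-tensor-Frobenius}, namely the bias bound \Cref{prop-bias-final} together with the stability-selection oracle inequality. Together these give, on a high-probability event, that the selected window $\hat k_t$ is comparable to $t-T_{g-1}$ (up to the $V$-tolerance). They also split $\widehat{\boldsymbol{\Theta}}^{t,\hat k_t}-\bar{\boldsymbol{\Theta}}$ into two pieces:
\begin{itemize}
\item a conditionally centered martingale-type noise tensor with entrywise sub-Gaussian scale $\epsilon_t$;
\item a bias tensor with entries of order $\epsilon_t^2$.
\end{itemize}
Here $\bar{\boldsymbol{\Theta}}$ is the block-structured averaged tensor, and the same decomposition holds for $\boldsymbol{\Delta}$. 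I then feed this into the H-PCA perturbation bound of \cite{zhang2022heteroskedastic}, as used in \cite{han2022optimal}, applied to $\mathcal{M}_1(\widehat{\boldsymbol{\Theta}}+\widehat{\boldsymbol{\Delta}})$. The signal has $\sigma_K\ge \sigma_Q\, s_{\min}\gtrsim \sigma_Q n/K$ by \Cref{ass-bias}$(iv)$.
\begin{itemize}
\item The noise contributes $\sqrt{n}\,\epsilon_t\cdot(\sigma_Q n/K)^{-1}$ to first order, which gives $K\epsilon_t/(\sqrt n\sigma_Q)$. Its quadratic (diagonal-deletion) part gives $K^2\sqrt{L}\epsilon_t^2/(n\sigma_Q^2)$.
\item The bias, with spectral norm at most $n\sqrt{L}\epsilon_t^2$, contributes $K\sqrt{L}\epsilon_t^2/\sigma_Q$ and, squared, $K^2L\epsilon_t^4/\sigma_Q^2$.
\end{itemize}
These four terms are exactly those in \eqref{eq-sin-theta-new}.

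For the clustering step, part~1 uses the standard $K$-means argument of Lei--Rinaldo type. Distinct rows of $U_Z$ are separated by at least $\sqrt{2/s_{\max}}\asymp\sqrt{K/n}$, and the $K$-means solution satisfies $\|\widetilde U^t-\widehat U^t_Z O\|_{\mathrm F}\le 2\|\widehat U^t_ZO-U_Z\|_{\mathrm F}\le 2\sqrt{K}\,\widetilde{\mathcal{E}}_t$. So the number of misclustered nodes is at most $C\,K\widetilde{\mathcal{E}}_t^2/(K/n)$, which gives $\mathcal{L}\le C'\widetilde{\mathcal{E}}_t^2$. Condition \eqref{eq-ass-singular-non-1} guarantees that every community retains a correctly clustered node, so that a permutation $\tau$ exists. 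Part~2 needs a row-wise ($\ell_{2,\infty}$) bound. Under \eqref{eq-ass-singular-non}, $\|\widehat U_Z^tO-U_Z\|_{\mathrm F}\le\sqrt{K}\widetilde{\mathcal{E}}_t\le c\sqrt{K/n}$ is already smaller than half the separation, which forces every row to its own center and yields \eqref{thm-com-2-non}. I would cite the same argument used for part~2 of \Cref{thm-comunity-recovery-stat}.

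I expect the main obstacle to be the first step: making the stochastic and bias decomposition valid at the data-dependent window $\hat k_t$. I will handle this by proving the $\sin\Theta$ bound uniformly over all $k\in\mathcal{G}^{(t)}$, which adds only a union bound over $O(\log T)$ windows and is absorbed by the $\log(T)\log\log(T)$ factor in $\epsilon_t$. I will then intersect with the event from the proof of \Cref{thm-tensor-Frobenius} on which $\hat k_t$ yields entrywise error $\lesssim\epsilon_t$ and bias $\lesssim\epsilon_t^2$. Within this step, the delicate point is that when the chosen window crosses a segment boundary, the averaged target $\mathbf{Q}^{t,\hat k_t}$ differs from the current $\mathbf{Q}^t$. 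This is harmless for community recovery, since only the node subspace matters and it is shared by all windows, and the conditioning of $\mathbf{Q}^{t,\hat k_t}$ is guaranteed by \Cref{ass-bias}$(iv)$.
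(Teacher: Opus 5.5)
Your proposal is correct and takes essentially the same route as the paper. The paper gets the non-stationary $\sin\Theta$ bound $\widetilde{\mathcal{E}}_t$ (\Cref{prop-sin-theta}) by rerunning the stationary H-PCA argument with $t^{-1/2}$ and $t^{-1}$ replaced by $\epsilon_t$ and $\epsilon_t^2$, using \Cref{main-theorem-1}, \Cref{prop-bias-final} and \Cref{lemma-singular-values-non-stat}, and then repeats the clustering argument from the proof of \Cref{thm-comunity-recovery-stat}. Your explicit handling of the data-dependent window $\hat k_t$ is a step the paper leaves implicit and is a reasonable way to make it rigorous; that handling is to bound each fixed grid window separately, take a union bound, and intersect with the event $\mathcal{B}$ on which $\hat k_t \geq \bar k_t$ and the selected window has bias of order $\epsilon_t^2$.
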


\Cref{thm-comunity-recovery-non} establishes two levels of community recovery under non-stationarity. When $K\widetilde{\mathcal{E}}_t^2 \lesssim 1$, the estimated partition is Hamming-consistent, with error of order $\widetilde{\mathcal{E}}_t^2$. Under the stronger condition $n\widetilde{\mathcal{E}}_t^2 \lesssim 1$, the procedure further achieves exact recovery.
This result is a non-stationary analogue of \Cref{thm-comunity-recovery-stat}. The key difference lies in the definition of $\widetilde{\mathcal{E}}_t$, which now depends on the local difficulty parameter $\epsilon_t$.  Consequently, recovery performance is governed by both the effective sample size determined by the segment length $t - T_{g-1}$ and the temporal drift $V$ within the current segment.
In particular if $K = O(1)$, $\sigma_Q \asymp \sqrt{L}$ and
\[
V(t) = t^{-1/2}, \quad \forall t > 0,
\]
then, up to logarithmic factors, both the conditions \eqref{eq-ass-singular-non-1} and \eqref{eq-ass-singular-non}, as well as the resulting guarantees, reduce to their stationary counterparts applied segmentwise.

To the best of our knowledge, these are the first results achieving both Hamming-consistent and exact community recovery for non-stationary, temporally dependent dynamic network models. Existing community detection results for dynamic networks are typically developed under independence assumptions, where network snapshots are independent or conditionally independent over time \citep[e.g.][]{han2015consistent, pensky2019spectral} and therefore do not cover the non-stationary autoregressive setting considered here.

\section{Numerical experiments}\label{sec-num}

In this section, we evaluate the empirical performance of the proposed methods. In \Cref{sec-simu}, we present results based on simulated data, while in \Cref{sec:realdata_air}, we analyze a real-world data set on U.S.~air transportation. The code and datasets used in all experiments are available online \footnote{\url{https://github.com/HaotianXu/AR1-MSBM_Simulation}}.

\subsection{Simulation studies}\label{sec-simu}

\subsubsection{Stationary setting}\label{sec:stationary}

We evaluate \Cref{alg:fast-stat-msbm} under the stationary AR(1)-MSBM in \Cref{def-sarmsb}. We consider four simulation scenarios that isolate the effects of temporal dependence and layer heterogeneity, allowing a systematic comparison with baselines.

\medskip
\noindent
\textbf{Simulation setup.}
We simulate stationary AR(1)-MSBM networks following \Cref{def-sarmsb}, with $n=100$ nodes, $L=2$ layers, $K=2$ balanced communities and time horizon $T=300$. 
Let $Z\in\{0,1\}^{n\times K}$ denote the community membership matrix. The connectivity tensors $\mathbf{W}, \mathbf{M} \in [0,1]^{K \times K \times L}$ induce the transition probability tensors $\boldsymbol{\Theta}, \boldsymbol{\Delta} \in [0,1]^{n \times n \times L}$ and the adjacency sequence $\{\mathbf{A}^t\}_{t \in [T] \setminus \{1\}} \subset \{0, 1\}^{n \times n \times L}$.

For reference, the stationary marginal probabilities at the community level are
\[
\widetilde{\boldsymbol{\Pi}}
=
\frac{\mathbf{W}}{\mathbf{W}+\mathbf{M}},
\]
with node-level probabilities
\[
\boldsymbol{\Pi}_{i,j,l}
=
Z_i \widetilde{\boldsymbol{\Pi}}_{:,:,l} Z_j^\top,
\quad 1 \le i \le j \le n, \, l \in [L].
\]
These capture edge densities but not temporal dependence. We initialize $\mathbf{A}^1$ from $\boldsymbol{\Pi}$ to start near stationarity. All networks are symmetric with zero diagonals unless stated otherwise.

\medskip
\noindent
\textbf{Methods.}
We compare three methods, including the proposed approach (\Cref{alg:fast-stat-msbm}) and two baselines.
\begin{itemize}
\item \textbf{Online estimation for stationary AR(1)-MSBM (\Cref{alg:fast-stat-msbm}).}
We implement \Cref{alg:fast-stat-msbm}, hereafter referred to as \textbf{Proposed}, with ranks $r_1 = r_2 = L$. 

 \item \textbf{Static cumulative MSBM estimator.}
For any time $t \in [T]$, we form the cumulative-average tensor
\[
\widetilde{\mathbf{A}}^t = \frac{1}{t} \sum_{\tau=1}^{t} \mathbf{A}^\tau.
\]
This baseline estimates communities from $\widetilde{\mathbf{A}}^t$ using the same tensor spectral clustering procedure as \textbf{Stages II} and \textbf{III} of \Cref{alg:fast-stat-msbm}, but ignores temporal dependence and relies only on stationary marginals. We refer to this method as \textbf{Static}.

\item \textbf{Layer-aggregated AR(1)-SBM estimator} \citep{jiang2023autoregressive}.
For any time $t \in [T]$, we average the layers,
\[
\bar{A}^t = \frac{1}{L} \sum_{l=1}^{L} \mathbf{A}^t_{:,:, l},
\]
and fit the single-layer AR(1)-SBM using the method of \cite{jiang2023autoregressive} to $\{\bar{A}^t\}_{t=1}^{T} \subset \{0, 1\}^{n \times n}$. This baseline exploits temporal dependence but discards layer-specific structure. We refer to this method as \textbf{Aggregated}.
\end{itemize}

\medskip
\noindent
\textbf{Scenarios.}
We consider four scenarios, each designed to highlight a different challenge in community recovery.

\begin{itemize}
\item  \textbf{Scenario I: Weak signal in both transition probabilities and marginals.}
For each layer $l \in [2]$, we set
\[
(\mathbf{W}_{1,1,l}, \mathbf{M}_{1,1,l})=(0.10,0.20),\quad
(\mathbf{W}_{2,2,l},\mathbf{M}_{2,2,l})=(0.08,0.20),\quad
(\mathbf{W}_{1,2,l},\mathbf{M}_{1,2,l})=(0.05,0.20).
\]
This yields
\[
\widetilde{\boldsymbol{\Pi}}_{1,1,l}\approx 0.33,\qquad
\widetilde{\boldsymbol{\Pi}}_{2,2,l}\approx 0.29,\qquad
\widetilde{\boldsymbol{\Pi}}_{1,2,l}=0.20.
\]
Both the stationary marginals and the transition probabilities encode the two-community structure, but the corresponding second singular values are small. As a result, the signal is weak and accurate recovery requires a sufficiently long time horizon.

\item  \textbf{Scenario II: Signal only in transition probabilities.} For each layer $l \in [2]$, we set
\[
(\mathbf{W}_{1, 1, l},\mathbf{M}_{1, 1, l})=(0.4, 0.4),\quad
(\mathbf{W}_{2, 2, l},\mathbf{M}_{2, 2, l})=(0.2, 0.2),\quad
(\mathbf{W}_{1, 2, l},\mathbf{M}_{1, 2, l})=(0.3, 0.3),
\]
which yields
\[
\widetilde{\boldsymbol{\Pi}}_{1,1,l}=\widetilde{\boldsymbol{\Pi}}_{2,2,l}=\widetilde{\boldsymbol{\Pi}}_{1,2,l}=0.5.
\]
Thus, the stationary marginals contain no community information and the community structure is encoded in the transition probabilities.

\item \textbf{Scenario III: Opposing layer structures.}
For Layer $1$, we set 
\[
(\mathbf{W}_{1,1,1},\mathbf{M}_{1,1,1})=(0.4,0.4),\quad
(\mathbf{W}_{2,2,1},\mathbf{M}_{2,2,1})=(0.4,0.4),\quad
(\mathbf{W}_{1,2,1},\mathbf{M}_{1,2,1})=(0.05,0.45),
\]
which yields
\[
\widetilde{\boldsymbol{\Pi}}_{1,1,1}=\widetilde{\boldsymbol{\Pi}}_{2,2,1}=0.5,\quad \widetilde{\boldsymbol{\Pi}}_{1,2,1}=0.1.
\]
For Layer $2$, we reverse the pattern:
\[
(\mathbf{W}_{1,1,2},\mathbf{M}_{1,1,2})=(0.05,0.45),\quad
(\mathbf{W}_{2,2,2},\mathbf{M}_{2,2,2})=(0.05,0.45),\quad
(\mathbf{W}_{1,2,2},\mathbf{M}_{1,2,2})=(0.4,0.4),
\]
which yields
\[
\widetilde{\boldsymbol{\Pi}}_{1,1,2}=\widetilde{\boldsymbol{\Pi}}_{2,2,2}=0.1,\quad \widetilde{\boldsymbol{\Pi}}_{1,2,2}=0.5.
\]
Each layer individually contains clear community structure, but in opposite directions. As a result, aggregating the layers cancels the signal.

\item \textbf{Scenario IV: Density imbalance across layers.}
For Layer $1$, we set
\[
(\mathbf{W}_{1,1,1}, \mathbf{M}_{1,1,1}) = (\mathbf{W}_{2,2,1}, \mathbf{M}_{2,2,1}) = (0.015, 0.09),\quad
(\mathbf{W}_{1,2,1}, \mathbf{M}_{1,2,1}) = (0.010, 0.10),
\]
which yields
\[
\widetilde{\boldsymbol{\Pi}}_{1,1,1} = \widetilde{\boldsymbol{\Pi}}_{2,2,1} \approx 0.14,\quad
\widetilde{\boldsymbol{\Pi}}_{1,2,1} \approx 0.09.
\]
Thus, Layer $1$ is sparse but contains clear community structure. 
For Layer $2$, we set
\[
(\mathbf{W}_{a,b,2}, \mathbf{M}_{a,b,2}) = (0.15, 0.35) \quad \forall a,b \in [2],
\]
which yields
\[
\boldsymbol{\Pi}_{a,b, 2} \approx 0.3, \quad \forall a,b \in [2],
\]
Thus, Layer $2$ is dense and does not contain community information. 
As a result, the informative signal from the sparse layer is dominated by the dense layer under aggregation over the layers.
\end{itemize}

\medskip
\noindent\textbf{Evaluation criteria.}
At each time $t \in [T]$, let $\widehat Z^{t}$ and $Z$ denote the estimated and true community membership matrices, respectively. The clustering error is measured by
\[
\mathrm{Err}_Z(t)=1-\mathrm{ARI}\big(\widehat Z^{t}, Z\big),
\]
where ARI denotes the Adjusted Rand Index \citep{hubert1985comparing}. To be specific, for two partitions $U$ and $V$, let $n_{ab}=|\{i:U_i=a,V_i=b\}|$, $n_{a:}=\sum_b n_{ab}$, $n_{:b}=\sum_a n_{ab}$, and $N_2=\binom{n}{2}$. 
Then
\[
\mathrm{ARI}(U,V)=\frac{\mathcal A-\mathcal B\mathcal C/N_2}{(\mathcal B+\mathcal C)/2-\mathcal B\mathcal C/N_2},
\]
where 
\[
\mathcal A=\sum_{a,b}\binom{n_{ab}}{2},\quad
\mathcal B=\sum_a\binom{n_{a:}}{2},\quad
\mathcal C=\sum_b\binom{n_{: b}}{2}.
\]
The ARI equals $1$ for identical partitions, is close to $0$ under chance-level agreement and may be negative.

\begin{figure}[t]
    \centering
    \includegraphics[width=0.8\textwidth]{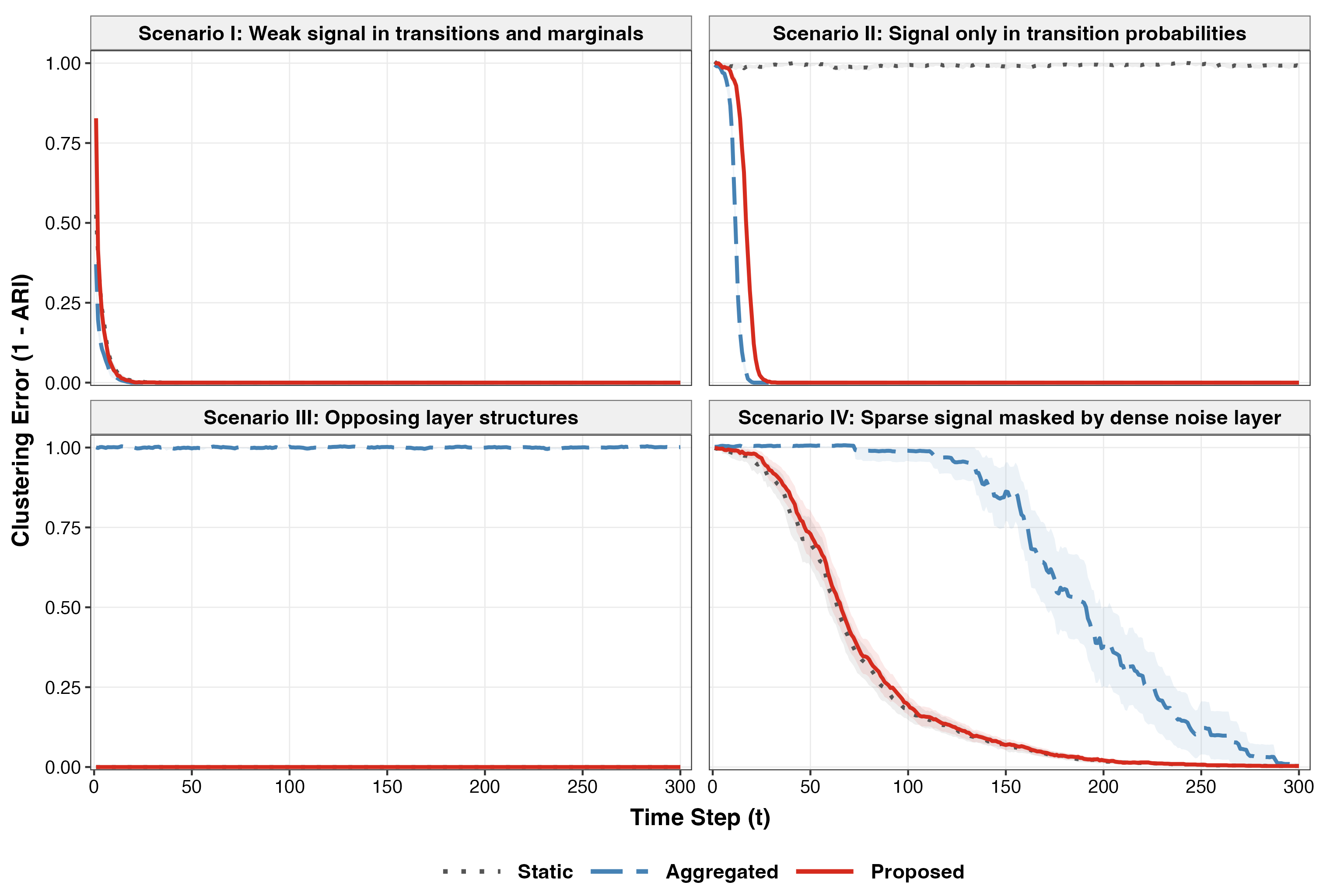}
    \caption{\small Stationary simulation results: mean clustering error ($1 - \mathrm{ARI}$) over time with pointwise $95\%$ Monte Carlo confidence bands. Top-Left: Scenario I (weak signal in both transition probabilities and marginals). Top-Right: Scenario II (signal only in transition probabilities). Bottom-Left: Scenario III (opposing layer structures; signals cancel under aggregation). Bottom-Right: Scenario IV (sparse signal masked by a dense noise layer).}
    \label{fig:sim_results}
\end{figure}

\medskip
\noindent
\textbf{Results.}
Figure~\ref{fig:sim_results} reports the averaged clustering error trajectories over $R=50$ Monte Carlo trials, with pointwise $95\%$ Monte Carlo confidence bands.
In \textbf{Scenario I}, all methods converge and eventually recover the communities.
In \textbf{Scenario II}, \textbf{Static} fails because marginals contain no signal, while \textbf{Aggregated} and \textbf{Proposed} recover the communities by exploiting transition probabilities.
In \textbf{Scenario III}, \textbf{Aggregated} fails because cross-layer averaging cancels the signal, whereas the other methods recover the communities by preserving layer-specific information. 
In \textbf{Scenario IV}, \textbf{Aggregated} converges more slowly, as the dense noise layer dominates under aggregation, while the other methods exhibit faster and nearly identical convergence.
Overall, \textbf{Proposed} performs consistently well across all four scenarios, highlighting its ability to jointly capture temporal dependence and multilayer structure.

\subsubsection{Non-stationary setting}\label{sec:nonstationary}  
We evaluate \Cref{alg:fast-msbm} under a non-stationary  AR(1)-MSBM with fixed memberships and time-varying transition probabilities. We consider four simulation scenarios that capture different types of temporal variation, enabling a systematic comparison with full-history and fixed-window baselines.

\medskip
\noindent
\textbf{Simulation setup.}
We simulate non-stationary AR(1)-MSBM networks following \Cref{def-armsb}, with $n=100$ nodes, $K=2$ balanced communities, $L=2$ layers and horizon $T=175$. The community membership matrix $Z\in\{0,1\}^{n\times K}$ is fixed over time. Time-varying connectivity tensors $\{\mathbf{W}^t\}_{t\in [T]}, \{\mathbf{M}^t\}_{t\in [T]}\subset[0,1]^{K\times K\times L}$ which will be specified in each scenario. The induced transition probability tensors are denoted by $\{\boldsymbol{\Theta}^t\}_{t\in [T]}$ and $\{\boldsymbol{\Delta}^t\}_{t\in [T]}$, and the adjacency sequence is denoted by $\{\mathbf{A}^t\}_{t \in [T] \setminus \{1\}} $.
We initialize $\mathbf{A}^1$ from $\boldsymbol{\Pi}$, where
\[
\boldsymbol{\Pi}_{i,j,l}
=
Z_i \widetilde{\boldsymbol{\Pi}}_{:,:,l} Z_j^\top,
\quad 1 \le i \le j \le n,\ l \in [L],  
\]
with $\widetilde{\boldsymbol{\Pi}}
= \mathbf{W}^1 / (\mathbf{W}^1+\mathbf{M}^1).$
All networks are symmetric with zero diagonals unless stated otherwise.

We consider the following connectivity tensors to construct different scenarios for each $l \in [L]$:
\[
\mathbf{W}^{(1)}_{:, :, l}=\begin{bmatrix}
0.10 & 0.05\\
0.05 & 0.10
\end{bmatrix},\quad
\mathbf{M}^{(1)}_{:, :, l}=\begin{bmatrix}
0.20 & 0.05\\
0.05 & 0.20
\end{bmatrix},
\]
\[
\mathbf{W}^{(2)}_{:, :, l}=\begin{bmatrix}
0.15 & 0.05\\
0.05 & 0.15
\end{bmatrix},\quad
\mathbf{M}^{(2)}_{:, :, l}=\begin{bmatrix}
0.25 & 0.05\\
0.05 & 0.25
\end{bmatrix},
\]
\[
\mathbf{W}^{(3)}_{:, :, l}=\begin{bmatrix}
0.35 & 0.05\\
0.05 & 0.35
\end{bmatrix},\quad
\mathbf{M}^{(3)}_{:, :, l}=\begin{bmatrix}
0.45 & 0.05\\
0.05 & 0.45
\end{bmatrix},
\]
\[
\mathbf{W}^{(4)}_{:, :, l}=\begin{bmatrix}
0.50 & 0.05\\
0.05 & 0.50
\end{bmatrix},\quad
\mathbf{M}^{(4)}_{:, :, l}=\begin{bmatrix}
0.50 & 0.05\\
0.05 & 0.50
\end{bmatrix}.
\]

\medskip
\noindent
\textbf{Methods.} We compare four methods, including the proposed approach (\Cref{alg:fast-msbm}) and three
baselines. All methods use the same low-rank refinement and clustering stages (\textbf{Stages II} and \textbf{III} of \Cref{alg:fast-msbm}) and differ only in the window selection step in \textbf{Stage I}.

\begin{itemize}
\item \textbf{Adaptive method (\Cref{alg:fast-msbm}).}   
We run \Cref{alg:fast-msbm} with $r^{t,k}=\tilde r^{t,k}=L$, the dynamic grid $\{\mathcal{G}^{(t)}\}_{t\in[T]}$ in \eqref{def-dynamic-grid}, and the tolerance $\tau(t)$ in \eqref{def-tau-new} with $V(t)=t^{-1/2}$. The constant $C_\tau$ is calibrated by \Cref{alg:tune-Ctau} (Appendix~\ref{section:add-algorithm}) on $50$ trajectories simulated from the stationary AR(1)-MSBM (\Cref{def-sarmsb}) with $(\mathbf{W}^{(1)},\mathbf{M}^{(1)})$, taking the smallest value for which the probability of selecting a shorter window than the full history is at most $\alpha=0.05$ under stationarity.  
\item \textbf{Full-history method.} At time $t$, use all past observations ($\hat{k}_t=t$).
\item \textbf{Fixed-window ($\hat{k}_t=30$) method.} Use a fixed window of length $30$. 
\item \textbf{Fixed-window ($\hat{k}_t=20$) method.} Use a fixed window of length $20$. 
\end{itemize}

\medskip
\noindent
\textbf{Scenarios.} 
We consider four scenarios, each representing a different type of temporal variation.
\begin{itemize}
    \item \textbf{Scenario I: Abrupt changes with a larger second shift.} We define two abrupt changes with a larger shift occurring at the second change:
\[
(\mathbf{W}^t,\mathbf{M}^t)=
\begin{cases}
(\mathbf{W}^{(1)}, \mathbf{M}^{(1)}), & 1\le t\le 50,\\
(\mathbf{W}^{(2)},\mathbf{M}^{(2)}), & 51\le t\le 100,\\
(\mathbf{W}^{(4)},\mathbf{M}^{(4)}), & 101\le t\le 175.
\end{cases}
\]
   \item \textbf{Scenario II: Abrupt changes with a larger first shift.} We define two abrupt changes with a larger shift occurring at the first change:
\[
(\mathbf{W}^t,\mathbf{M}^t)=
\begin{cases}
(\mathbf{W}^{(1)},\mathbf{M}^{(1)}), & 1\le t\le 50,\\
(\mathbf{W}^{(3)},\mathbf{M}^{(3)}), & 51\le t\le 100,\\
(\mathbf{W}^{(4)},\mathbf{M}^{(4)}), & 101\le t\le 175.
\end{cases}
\]
\item \textbf{Scenario III: Gradual changes.} We construct a smooth transition between two regimes in the middle segment:
\[
(\mathbf{W}^t,\mathbf{M}^t)=
\begin{cases}
(\mathbf{W}^{(1)},\mathbf{M}^{(1)}), & 1\le t\le 50,\\
\big((1-s_t)\mathbf{W}^{(1)}+s_t\mathbf{W}^{(4)},\ (1-s_t)\mathbf{M}^{(1)}+s_t\mathbf{M}^{(4)}\big), & 51\le t\le 100,\\
(\mathbf{W}^{(4)},\mathbf{M}^{(4)}), & 101\le t\le 175,
\end{cases}
\]
where $s_t=(t-51)/49$ for $t \in [100] \setminus [50]$.
\item   \textbf{Scenario IV: Rapidly alternating changes.} We alternate between two regimes in the middle segment at a high frequency:
\[
(\mathbf{W}^t,\mathbf{M}^t)=
\begin{cases}
(\mathbf{W}^{(1)},\mathbf{M}^{(1)}), & 1\le t\le 50,\\
(\mathbf{W}^{(2)},\mathbf{M}^{(2)}), & 51\le t\le 100,\ \left\lfloor\frac{t-51}{2}\right\rfloor\text{ even},\\
(\mathbf{W}^{(3)},\mathbf{M}^{(3)}), & 51\le t\le 100,\ \left\lfloor\frac{t-51}{2}\right\rfloor\text{ odd},\\
(\mathbf{W}^{(4)},\mathbf{M}^{(4)}), & 101\le t\le 175.
\end{cases}
\]
\end{itemize}

\medskip
\noindent\textbf{Evaluation criteria.}
At each time $t \in [T]$, let $\widehat{\boldsymbol{\Theta}}^t$ and $\boldsymbol{\Theta}^t$, and $\widehat{\boldsymbol{\Delta}}^t$ and $\boldsymbol{\Delta}^t$, denote the estimated and true transition probability tensors, respectively, and let $\widehat Z^{t}$ and $Z$ denote the estimated and true community membership matrices. We evaluate the estimation errors
\[
\mathrm{Err}_{\boldsymbol{\Theta}}(t)=\frac{\|\widehat{\boldsymbol{\Theta}}^t-\boldsymbol{\Theta}^t\|_{\mathrm F}}{n\sqrt{L}},\qquad
\mathrm{Err}_{\boldsymbol{\Delta}}(t)=\frac{\|\widehat{\boldsymbol{\Delta}}^t-\boldsymbol{\Delta}^t\|_{\mathrm F}}{n\sqrt{L}},
\]
and the clustering error
\[
\mathrm{Err}_{Z}(t)=1-\mathrm{ARI}\big(\widehat Z^{t}, Z\big).
\]
Here ARI is defined as in Section~\ref{sec:stationary}. We report averages over the burn-in period $t>15$. For each metric $M \in \{\mathrm{Err}_{\boldsymbol{\Theta}}, \mathrm{Err}_{\boldsymbol{\Delta}}, \mathrm{Err}_{Z}\}$, method $m$, scenario $s$ and replication $r$, we define the post-burn-in average
\[
\bar M^{(r)}_{m,s}
=
\frac{1}{160}\sum_{t=16}^{175} M^{(r)}_{m,s}(t),
\]
The mean and standard deviation over $R=50$ replications are
\begin{equation}\label{eq-mean-sd}
\mu_{m,s}(M)=\frac{1}{R}\sum_{r=1}^{R}\bar M^{(r)}_{m,s}, \qquad
\sigma_{m,s}(M)=\sqrt{\frac{1}{R-1}\sum_{r=1}^{R}\big(\bar M^{(r)}_{m,s}-\mu_{m,s}(M)\big)^2},
\end{equation}

\begin{table}[t]
\centering
\caption{Non-stationary simulation results for Scenarios I--IV: mean (standard deviation) of post-burn-in averages over $R =50$ Monte Carlo trials as defined in \eqref{eq-mean-sd}}
\label{tab:nonstat_advantage}
\renewcommand{\arraystretch}{1.15}
\setlength{\tabcolsep}{4pt}
\footnotesize
\begin{tabular}{clccc}
\toprule
\textbf{Scenario} & \textbf{Method} & \textbf{Clustering Error} & \textbf{Theta Error} & \textbf{Delta Error} \\
\midrule
\multirow{4}{*}{1} & Adaptive & 0.0000 (0.0000) & \textbf{0.0446 (0.0008)} & \textbf{0.0395 (0.0009)} \\
& Full-history & 0.0000 (0.0000) & 0.1126 (0.0003) & 0.0765 (0.0005) \\
 & Fixed-window ($\hat{k}_t=30$) & 0.0000 (0.0000) & 0.0498 (0.0004) & 0.0471 (0.0005) \\
 & Fixed-window ($\hat{k}_t=20$) & 0.0000 (0.0000) & 0.0456 (0.0004) & 0.0482 (0.0005) \\
\midrule
\multirow{4}{*}{2} & Adaptive & 0.0000 (0.0000) & 0.0481 (0.0015) & \textbf{0.0391 (0.0007)} \\
& Full-history & 0.0000 (0.0000) & 0.1204 (0.0003) & 0.0757 (0.0005) \\
 & Fixed-window ($\hat{k}_t=30$) & 0.0000 (0.0000) & 0.0483 (0.0003) & 0.0429 (0.0005) \\
 & Fixed-window ($\hat{k}_t=20$) & 0.0000 (0.0000) & \textbf{0.0438 (0.0003)} & 0.0425 (0.0005) \\
\midrule
\multirow{4}{*}{3}  & Adaptive & 0.0000 (0.0000) & 0.0438 (0.0013) & \textbf{0.0389 (0.0008)} \\
 & Full-history & 0.0000 (0.0000) & 0.1245 (0.0003) & 0.0803 (0.0005) \\
 & Fixed-window ($\hat{k}_t=30$) & 0.0000 (0.0000) & 0.0454 (0.0004) & 0.0419 (0.0005) \\
 & Fixed-window ($\hat{k}_t=20$) & 0.0000 (0.0000) & \textbf{0.0389 (0.0004)} & 0.0404 (0.0005) \\
\midrule
\multirow{4}{*}{4}  & Adaptive & 0.0000 (0.0000) & 0.0566 (0.0014) & \textbf{0.0515 (0.0008)} \\
  & Full-history & 0.0000 (0.0000) & 0.1175 (0.0003) & 0.0822 (0.0004) \\
 & Fixed-window ($\hat{k}_t=30$) & 0.0000 (0.0000) & 0.0582 (0.0003) & 0.0543 (0.0005) \\
 & Fixed-window ($\hat{k}_t=20$) & 0.0000 (0.0000) & \textbf{0.0552 (0.0003)} & 0.0544 (0.0004) \\
\bottomrule
\end{tabular}
\normalsize
\end{table}

\begin{figure}[t]
    \centering
    \includegraphics[width=0.92\textwidth]{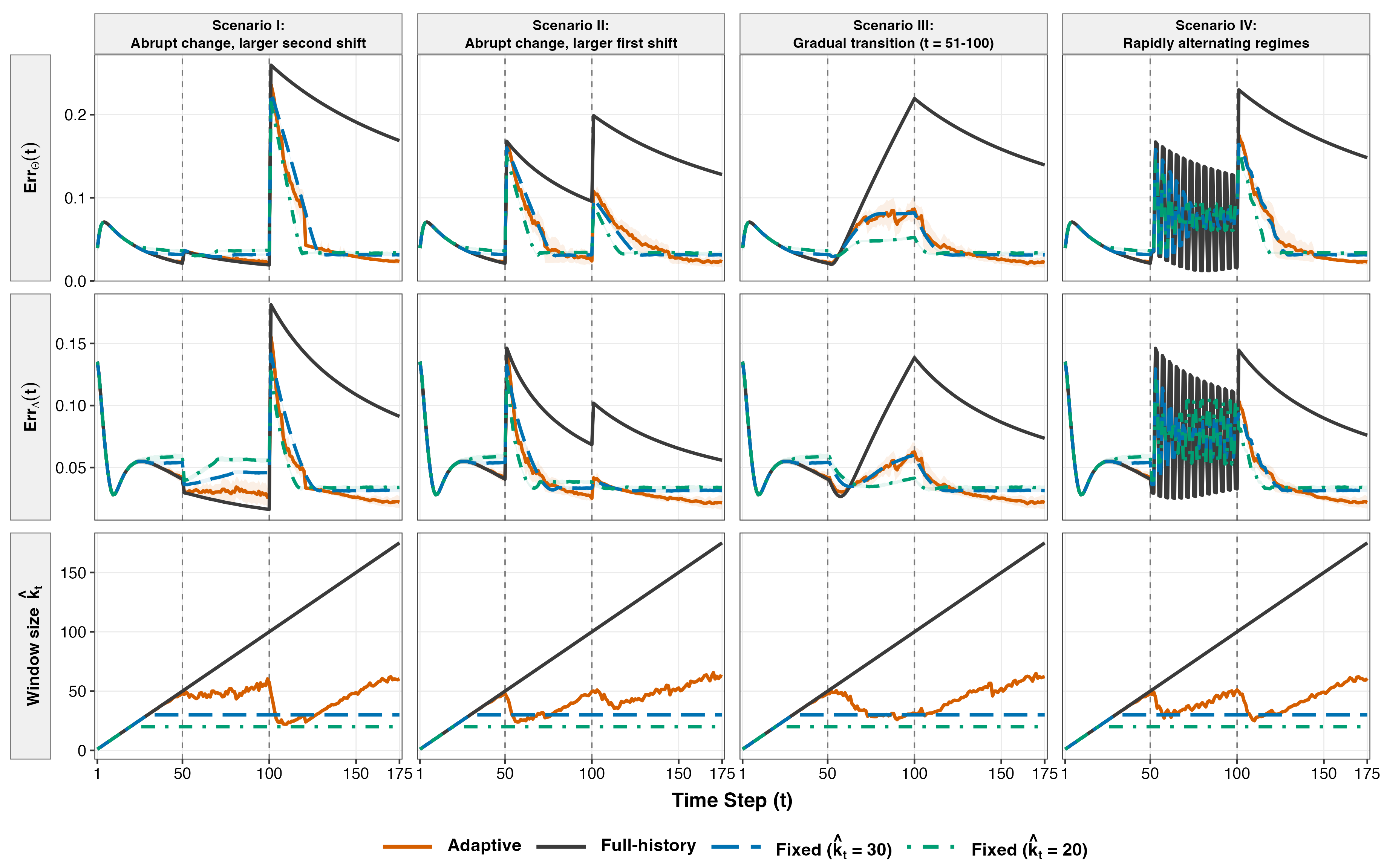}
    \caption{\small Non-stationary simulation results.  mean over time with  Top row: $\mathrm{Err}_{\Theta}(t)$. Middle row: $\mathrm{Err}_{\Delta}(t)$. Bottom row: window size $\hat{k}_t$.  Columns (left to right): Scenarios I--IV. Curves show Adaptive, Full-history, Fixed-window ($\hat{k}_t=30$) and Fixed-window ($\hat{k}_t=20$); shaded bands denote mean $\pm$ standard deviation (error rows only). Dashed lines indicate phase boundaries at $t=50$ and $t=100$.}
    \label{fig:nonstationary_adaptive}
\end{figure}

\medskip
\noindent
\textbf{Results.}
Table~\ref{tab:nonstat_advantage} reports post-burn-in means and standard deviations for the four scenarios. Since the communities are strongly separated in every phase, the clustering error is essentially zero for all methods; the comparison is therefore driven by the estimation of $\boldsymbol{\Theta}$ and $\boldsymbol{\Delta}$. Across all scenarios, \textbf{Full-history} consistently has the largest estimation errors. \textbf{Adaptive} achieves the smallest error for $\boldsymbol{\Delta}$ in all scenarios. For $\boldsymbol{\Theta}$, \textbf{Adaptive} performs best in \textbf{Scenario I}, while \textbf{Fixed-window ($\hat{k}_t=20$)} is slightly better in \textbf{Scenarios II--IV}; however, the differences are small. 

Figure~\ref{fig:nonstationary_adaptive} shows the trajectories of $\mathrm{Err}_{\boldsymbol{\Theta}}(t)$, $\mathrm{Err}_{\boldsymbol{\Delta}}(t)$ and the selected window size $\hat{k}_t$. \textbf{Scenarios I} and \textbf{II} involve abrupt changes, with the larger shift occurring at $t=100$ in \textbf{Scenario I} and at $t=50$ in \textbf{Scenario II}. \textbf{Scenario III} evolves smoothly over $t\in [100] \setminus [50]$, whereas \textbf{Scenario IV} alternates rapidly between two regimes in the middle phase. For \textbf{Full-history}, the window grows over time, leading to slow response to regime changes and large post-change errors due to increased bias. \textbf{Fixed-window} baselines respond more quickly, but their performance depends on how well the chosen window length aligns with the local regime structure. By contrast, \textbf{Adaptive} adjusts the window size according to the local regime: it shortens the window after abrupt changes, keeps it relatively short during the alternating phase, and increases it once the process stabilizes. As a result, in the final stationary phase after $t=100$, \textbf{Adaptive} typically outperforms \textbf{Fixed-window} baselines across all scenarios.

\subsection{Real data analysis}
\label{sec:realdata_air}

We analyze the U.S.~air transportation network \citep{bts2026t100} using monthly data from January 2015 to November 2025 ($T=131$).  We construct dynamic multilayer networks by retaining the four most active carriers (Southwest, United, Delta, and SkyWest) as layers ($L =4$) and selecting the top $50$ airports by frequency of appearance as origins or destinations as nodes. For each month $t$, origin $i$, destination $j$ and carrier $l$, we define a directed edge
\[
\mathbf{A}^t_{i,j,l}=\mathbbm{1}\{\text{at least one flight from $i$ to $j$ by carrier $l$ in month $t$}\},
\]
and remove self-loops by setting $\mathbf{A}^t_{i, i, l}=0$. This yields a sequence $\{\mathbf{A}^t\}_{t=1}^{T}\subset\{0,1\}^{50\times 50\times 4}$.
Data from January 2015 to December 2017 ($T_{\mathrm{train}} =36$) are used as the training set and the remaining observations form the test set. We apply the adaptive estimator (\Cref{alg:fast-msbm}) with $K=3$ and select the remaining tuning parameters as in \Cref{sec:nonstationary}.

For community recovery, let $\widehat Z^{(t)}$ denote the estimated community membership matrix at time $t \in [T]$. We quantify temporal variation using the community sizes
\[
\hat{s}_k^t  = \frac{1}{n}\sum_{i=1}^n \mathbbm 1\{\widehat Z_{i,k}^{(t)} = 1\}, \quad \forall k \in [K],
\]
and the switch rate
\[
\text{SwitchRate}(t) = \frac{1}{n}\sum_{i=1}^n \mathbbm 1\{\widehat Z_{i}^{(t)} \neq \widehat Z_{i}^{(t-1)}\},
\]
which measures the proportion of nodes whose estimated community membership changes between consecutive months (after label alignment).

The results are shown in Figure~\ref{fig:air_comm_threepanel}. The top panel indicates that one dominant community consistently contains most nodes, with the rest forming small peripheral groups. The middle panel shows that community switching is concentrated around major events, with a prominent spike in mid-2020 corresponding to the COVID disruption; the larger fluctuations in the early period are mainly due to the limited amount of data. After 2022, switching becomes infrequent, suggesting gradual stabilization. The bottom panel shows that the selected window size $\hat{k}_t$ expands during stable periods and contracts around structural changes, aligning with spikes in the switch rate and reflecting adaptive adjustment to network dynamics.

Estimation results for formation ($\boldsymbol{\Theta}^t$), dissolution ($\boldsymbol{\Delta}^t$), persistence $\boldsymbol{\Pi}^t$ and turnover $\boldsymbol{\Theta}^t+\boldsymbol{\Delta}^t$ are reported in Table~\ref{tab:air_transition_probs}. Panel A summarizes the yearly trends. From 2015 to 2025, formation decreases steadily while dissolution increases consistently. Turnover remains relatively stable, whereas persistence declines, indicating reduced network connectivity.
Panel B compares pre- and post-COVID periods by carrier. All carriers exhibit lower formation and higher dissolution after 2020, reflecting reduced flight activity and a contraction of the route network. Delta shows the highest post-COVID turnover, whereas SkyWest has the lowest, indicating greater stability. These patterns are consistent with disruption during COVID-19 followed by gradual network reconfiguration.

\begin{figure}[t]
    \centering
    \includegraphics[width=0.88\textwidth]{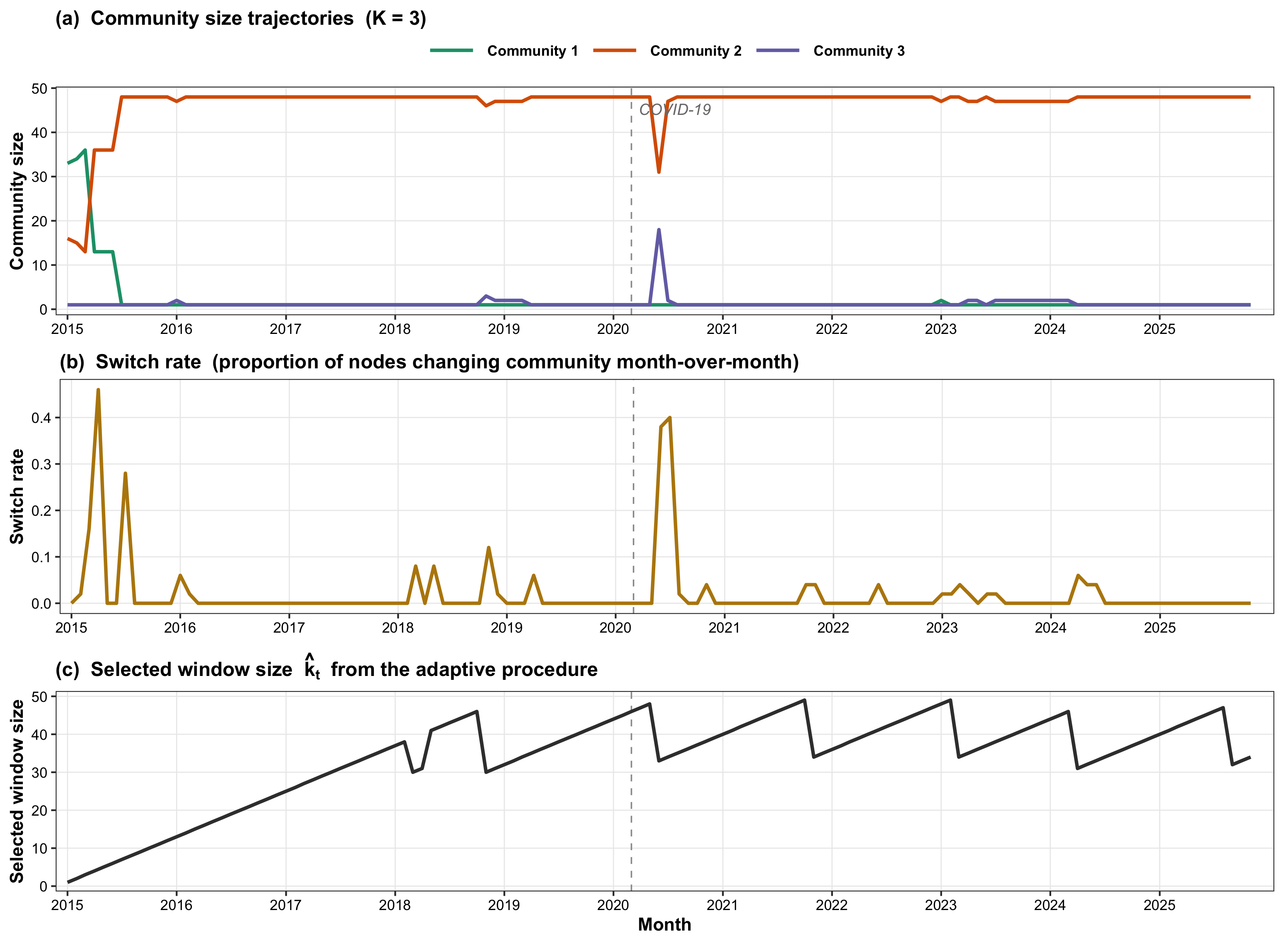}
    \caption{\small Estimated community and adaptation dynamics for the U.S.~air transportation network. Top: community size trajectories ($K=3$). Middle: switch rate. Bottom: selected window size $\hat{k}(t)$.}
    \label{fig:air_comm_threepanel}
\end{figure}

\begin{table}[t]
\centering
\caption{Estimated results for the U.S.~air transportation network. Panel A reports yearly network averages. Panel B reports carrier-level averages before and after COVID (Pre-COVID: January 2018–February 2020; Post-COVID: March 2020–November 2025).}
\label{tab:air_transition_probs}
\renewcommand{\arraystretch}{1.08}
\setlength{\tabcolsep}{7pt}
\small
\begin{tabular}{llcccc}
\toprule
\textbf{Carrier} & \textbf{Period} & $\widehat{\boldsymbol{\Theta}}^t$ & $\widehat{\boldsymbol{\Delta}}^t$ & $\widehat{\boldsymbol{\Pi}}^t$ & $\widehat{\boldsymbol{\Theta}}^t+\widehat{\boldsymbol{\Delta}}^t$ \\
\midrule
\multicolumn{6}{l}{\textbf{Panel A:  Yearly network averages}} \\[3pt]
\multirow{11}{*}{All carriers}
 & 2015 & 0.296 & 0.103 & 0.726 & 0.399 \\
 & 2016 & 0.263 & 0.177 & 0.597 & 0.440 \\
 & 2017 & 0.251 & 0.207 & 0.545 & 0.458 \\
 & 2018 & 0.195 & 0.222 & 0.445 & 0.417 \\
 & 2019 & 0.089 & 0.225 & 0.294 & 0.314 \\
 & 2020 & 0.090 & 0.231 & 0.291 & 0.321 \\
 & 2021 & 0.098 & 0.234 & 0.301 & 0.332 \\
 & 2022 & 0.089 & 0.242 & 0.279 & 0.331 \\
 & 2023 & 0.083 & 0.243 & 0.261 & 0.326 \\
 & 2024 & 0.069 & 0.262 & 0.211 & 0.331 \\
 & 2025 & 0.062 & 0.274 & 0.186 & 0.336 \\
\midrule
\multicolumn{6}{l}{\textbf{Panel B: arrier-level averages (pre- and post-COVID)}}  \\[3pt]
\multirow{2}{*}{Southwest} & Pre-COVID  & 0.228 & 0.164 & 0.545 & 0.392 \\
 & Post-COVID & 0.100 & 0.210 & 0.315 & 0.310 \\
\multirow{2}{*}{United}    & Pre-COVID  & 0.103 & 0.294 & 0.243 & 0.397 \\
   & Post-COVID & 0.076 & 0.270 & 0.221 & 0.346 \\
\multirow{2}{*}{Delta}     & Pre-COVID  & 0.104 & 0.280 & 0.256 & 0.384 \\
 & Post-COVID & 0.074 & 0.329 & 0.182 & 0.403 \\
\multirow{2}{*}{SkyWest}   & Pre-COVID  & 0.117 & 0.160 & 0.409 & 0.277 \\
& Post-COVID & 0.077 & 0.181 & 0.301 & 0.258 \\
\bottomrule
\end{tabular}
\normalsize
\end{table}

\section{Discussion}\label{sec-dis}

In this paper, we introduced the AR(1)-MSBM for dynamic multilayer networks. The model captures temporal dependence through edge formation and dissolution probabilities, while allowing a common set of nodes to interact across multiple layers. We developed online estimation procedures for both stationary and non-stationary settings. In the stationary case, the proposed method recursively updates sufficient statistics and refines the initial estimators via tensor-based spectral methods. In the non-stationary case, we proposed an adaptive windowing procedure that automatically selects the appropriate amount of historical information, enabling the algorithm to respond to both abrupt structural changes and gradual temporal drift. We established non-asymptotic estimation guarantees, minimax lower bounds and community recovery results, and demonstrated the empirical performance of the proposed methods. 

Several extensions remain open for future research.
Firstly, the bias arises from the estimation procedure in both the stationary and non-stationary settings, since the initial estimators are based on maximum likelihood, which is known to be inherently biased in autoregressive models. In classical low-dimensional autoregressive settings, this bias is typically negligible compared to stochastic fluctuations when the time horizon is moderately large. In our high-dimensional tensor setting, however, the situation is more complex. While the tensor refinement step substantially reduces stochastic error by exploiting low-rank structure, the bias persists from the initial entrywise estimators. As a result, when the time horizon in the stationary setting or the adaptive window size in the non-stationary setting is small, the bias can become non-negligible and may even dominate the overall error. A possible direction is to construct bias-corrected versions of the initial transition-probability estimators, inspired by bias-corrected and median-unbiased methods in the autoregressive literature \citep{andrews1993exactly,andrews1994approximately,patterson2000bias}.

Secondly, our current non-stationary theoretical guarantees assume that the number of quasi-stationary segments is of constant order. This assumption simplifies the analysis of the bias, as shown in the proof of \Cref{prop-bias-final} in Appendix \ref{sec-pro-bias}. We conjecture that incorporating bias-corrected versions of the initial estimators may help relax this assumption and we leave this direction for future work.

Thirdly, the current non-stationary framework assumes fixed community memberships over time, but it can be extended to allow time-varying memberships; see Appendix~\ref{app:latent} for the formulation and corresponding theoretical adjustments.

Lastly, our framework only allows dependence across snapshots for edges with the same indices $(i,j,l)$ and focuses on first-order (AR(1)) temporal dynamics. 
An important direction for future work is to extend the model to incorporate cross-edge dependence and higher-order autoregressive dynamics. For single-layer networks, \cite{chang2026autoregressive} showed that allowing edge transition probabilities to depend on multiple lagged observations and on the histories of other edges can capture important features such as transitivity, degree heterogeneity and persistence.  Extending this to multilayer networks requires more involved tensor analysis and we will leave this as future work.

\bibliographystyle{plainnat}
\bibliography{references.bib}

\appendix

\section*{Appendices}

All technical details of this paper are provided in the appendices. In particular, additional algorithms are presented in Appendix~\ref{section:add-algorithm}. Extensions of the non-stationary framework in \Cref{sec-nonstat} to allow time-varying community memberships are discussed in Appendix~\ref{app:latent}. Proofs of the results in \Cref{sec-stat} are given in Appendix~\ref{app-proof-sec-stat}, while those for \Cref{sec-nonstat} are provided in Appendix~\ref{app-proof-sec-nonstat}.

\section{Additional algorithms}\label{section:add-algorithm} 

We present the heteroskedastic principal component analysis (H-PCA) algorithm proposed by \cite{zhang2022heteroskedastic} in \Cref{hpca}, which is used for tensor spectral estimation in \textbf{Stage II} of both Algorithms~\ref{alg:fast-stat-msbm} and \ref{alg:fast-msbm}.

To implement \Cref{alg:fast-msbm} in practice, we require a data-driven choice of the tolerance level $\tau(t)$ defined in \eqref{def-tau-new}. The constant $C_\tau$ is calibrated by \Cref{alg:tune-Ctau} on $50$ trajectories simulated from the stationary AR(1)-MSBM (\Cref{def-sarmsb}) with $(\mathbf{W}^{(1)}, \mathbf{M}^{(1)})$, taking the smallest value for which the probability of selecting a shorter window than the full history is at most $\alpha=0.05$ under stationarity.

\begin{algorithm}[ht] 
\caption{H-PCA$(\Sigma, r)$} \label{hpca}
\begin{algorithmic}
    \INPUT{Matrix $\Sigma \in \mathbb{R}^{n \times n}$ and rank $r$}
   \Initialise{$\widehat{\Sigma}^{(0)} \leftarrow \Sigma$, $\mathrm{diag}\big(\widehat{\Sigma}^{(0)}\big) \leftarrow 0$, $\tilde{t} \leftarrow  5 \log\{\sigma_{\min}(\Sigma)/n\} $.}
    \For{$t \in \{0\} \cup [\tilde{t}-1]$}
        \State{Compute the singular value decomposition on $\widehat{\Sigma}^{(t)}$:
        \[
        \widehat{\Sigma}^{(t)} = \sum_{i=1}^n\sigma^{i, (t)} \mathbf{u}^{i,(t)} ( \mathbf{v}^{i, (t)})^{\top},  \quad  \sigma^{1, (t)} \geq \cdots \geq  \sigma^{n, (t)}  \geq 0
        \]}
        \State{Form the rank-$r$ approximation:
        \[
        \widetilde{\Sigma}^{(t)} \leftarrow \sum_{i = 1}^r \sigma^{i, (t)} \mathbf{u}^{i, (t)} \big(\mathbf{v}^{i, (t)}\big)^{\top}
        \]}
        \State{Update the diagonal:
        \[
        \widehat{\Sigma}^{(t+1)} \leftarrow \widehat{\Sigma}^{(t)},  \quad \mathrm{diag}\big(\widehat{\Sigma}^{(t+1)}\big) \leftarrow \mathrm{diag}\big(\widetilde{\Sigma}^{(t)}\big)
        \]}
    \EndFor
    \State{Let $\{\mathbf{u}^i\}_{i = 1}^r$ be the top-$r$ left singular vectors of $\widehat{\Sigma}^{(\tilde{t})}$.}
    \OUTPUT{$U\leftarrow (\mathbf{u}^1, \ldots, \mathbf{u}^r) \in \mathbb{R}^{n \times r}$}
\end{algorithmic}
\end{algorithm}

\begin{algorithm}[ht]
\caption{Calibration of $C_\tau$ via Stationary Training Data}
\label{alg:tune-Ctau}
\begin{algorithmic}
\INPUT{Training sequence $\{\mathbf{A}^t\}_{t=1}^{T_{\mathrm{train}}}$, grid $\mathcal{C}=\{c_1<\cdots<c_m\}$, burn-in time $t_0$, level $\alpha$, bootstrap size $B$}

\Initialise{$\mathbf{A}^0 \leftarrow 0$}

\Statex \vspace{2pt}
\textbf{Stage I: Estimation of stationary transition parameters}
\State{$\mathbf{N}^{0,1} \leftarrow \sum_{t=1}^{T_{\mathrm{train}}} \mathbf{A}^t \odot (1-\mathbf{A}^{t-1})$, 
$\mathbf{N}^{1,0} \leftarrow \sum_{t=1}^{T_{\mathrm{train}}} (1-\mathbf{A}^t)\odot \mathbf{A}^{t-1}$, 
$\mathbf{N} \leftarrow \sum_{t=1}^{T_{\mathrm{train}}} \mathbf{A}^{t-1}$}
\State{$\widehat{\boldsymbol{\Theta}} \leftarrow \mathbf{N}^{0,1}/(T_{\mathrm{train}}-\mathbf{N})$,  
$\widehat{\boldsymbol{\Delta}} \leftarrow \mathbf{N}^{1,0}/\mathbf{N}$}

\Statex \vspace{2pt}
\textbf{Stage II: Parametric bootstrap generation}

\ForAll{$b \in [B]$}
    \State{$\mathbf{A}^{0,(b)} \leftarrow 0$}
    \ForAll{$t \in [T_{\mathrm{train}}]$}
        \ForAll{$1 \leq i \leq j \leq n, l \in [L]$}
        \State{
        $\mathbb{P}\big(\mathbf{A}^{t,(b)}_{i,j,l}=1 \mid \mathbf{A}^{t-1,(b)}_{i,j,l}=0\big)
        = \widehat{\boldsymbol{\Theta}}_{i,j,l},$
        }
        \State{
        $\mathbb{P}\big(\mathbf{A}^{t,(b)}_{i,j,l}=0 \mid \mathbf{A}^{t-1,(b)}_{i,j,l}=1\big)
        = \widehat{\boldsymbol{\Delta}}_{i,j,l}
        $}
        \EndFor
        \State{Form $\mathbf{A}^{t,(b)}$ by setting $\mathbf{A}^{t,(b)}_{j,i,l}=\mathbf{A}^{t,(b)}_{i,j,l}$ for $i<j$, $l\in[L]$}
    \EndFor
\EndFor

\Statex \vspace{2pt}
\textbf{Stage III: Threshold family construction}

\State{For each $c \in \mathcal{C}$, define}
\[
\tau_c(t) \leftarrow 
\frac{c}{t}
\log(n\vee L\vee T_{\mathrm{train}})
\log(T_{\mathrm{train}})
\log\log(T_{\mathrm{train}}),
\quad \forall t \in [T_{\mathrm{train}}]
\]

\Statex \vspace{2pt}
\textbf{Stage IV: Selection of optimal threshold constant}

\ForAll{$c \in \mathcal{C}$}
    \ForAll{$b \in [B]$}
        \State{Run Algorithm~\ref{alg:fast-msbm} with $\tau_c(\cdot)$ on $\{\mathbf{A}^{t,(b)}\}_{t\in[T_{\mathrm{train}}]}$ to obtain $\{\widehat{k}^{(b)}_t(c)\}_{t \in [T_{\mathrm{train}}]}$}
        \State{
        $a^{(b)}(c) \leftarrow 
        \mathbbm{1}\big\{
        \widehat{k}^{(b)}_t(c)=t, \ \forall t \in [t_0,T_{\mathrm{train}}]
        \big\}.
        $}
    \EndFor
    \State{
    $\bar a(c) \leftarrow \frac{1}{B}\sum_{b=1}^B a^{(b)}(c).
    $}
    \If{$\bar a(c)\ge 1-\alpha$}
        \State{\textbf{Return} $\widehat{C}_\tau \leftarrow c$}
    \EndIf
\EndFor

\State{If no $c \in \mathcal{C}$ satisfies the constraint, enlarge $\mathcal{C}$ and repeat \textbf{Stage IV}.}
\OUTPUT{$\widehat{C}_\tau$}
\end{algorithmic}
\end{algorithm}

\FloatBarrier

\section{Extension to time-varying community memberships}\label{app:latent}

In \Cref{sec-nonstat}, we consider a non-stationary $\mbox{AR}(1)\mbox{-MSBM}$ in which temporal variation arises through the connectivity tensors, while the community membership matrix remains fixed. 
We now extend this framework to a more general setting where both the community memberships and the connectivity tensors are allowed to evolve over time. 

Specifically, we consider the following non-stationary $\mbox{AR}(1)\mbox{-MSBM}$.

\begin{definition}[Non-stationary first-order autoregressive multilayer stochastic block models (non-stationary $\mbox{AR}(1)\mbox{-MSBM}$)]\label{def-armsb-ext}
A non-stationary $\mbox{AR}(1)\mbox{-MSBM}$ is a sequence of adjacency tensors $\{\mathbf{A}^t \}_{t \geq 0} \subset\{0, 1\}^{n \times n \times L}$ defined recursively for $t\geq 1$ by
\[
\mathbf{A}^t_{i,j, l} = \mathbf{A}^{t-1}_{i,j,l} \mathbbm{1}\{\mathbf{E}^t_{i,j, l} = 0\} + \mathbbm{1}\{\mathbf{E}^t_{i,j, l} = 1\}, \quad \forall  1 \leq i \leq  j \leq n, \ l \in [L],
\]
where $\{\mathbf{E}^t_{i,j, l}\} \subset \{-1, 0, 1\}$ are mutually independent with
\[
\P \{ \mathbf{E}^t_{i,j, l}  = 1 \} = \boldsymbol{\Theta}^t_{i, j, l} = Z_i^t \mathbf{W}^t_{:, :, l} Z_j^t, \quad 
\P \{ \mathbf{E}^t_{i,j, l}  = -1 \} = \boldsymbol{\Delta}^t_{i, j, l} = Z_i^t \mathbf{M}^t_{:, :, l} Z_j^t
\]
and
\[
\P \{ \mathbf{E}^t_{i,j, l}  = 0  \} = 1 - \boldsymbol{\Theta}_{i, j, l}^t - \boldsymbol{\Delta}_{i, j, l}^t.
\]
The initial tensor $\mathbf{A}^0 \in \{0, 1\}^{n \times n \times L}$ is defined so that
\[
 c_{\min} \leq \mathbb{E}\{\mathbf{A}_{i,j,l}^{0}\} \leq 1-c_{\min}, \quad \forall  1 \leq i \leq  j \leq n, \ l \in [L],
\]
Here, $\{Z^t\}_{t\geq1} \subset \{0, 1\}^{n \times K}$ are time-varying memberships and $\{\mathbf{W}^t\}_{t\geq1}, \{\mathbf{M}^t\}_{t\geq1} \subset [c_{\min},1-c_{\min}]^{K\times K\times L}$ are time-varying connectivity tensors, with an absolute constant $c_{\min}\in(0,1/2)$.
\end{definition}

The definition of quasi-stationary segmentation remains unchanged (see \Cref{def-seg}) and we denote by $G$ the number of $V$-quasi-stationary segments in $\{(\boldsymbol{\Theta}^t, \boldsymbol{\Delta}^t)\}_{t=1}^{T}$.

Assumptions~\ref{ass-bias}$(i)$--$(iii)$ extend directly to this setting. To ensure the validity of the tensor-based refinement step (TH-PCA) in \Cref{alg:fast-msbm}, we strengthen the low-rank condition in \Cref{ass-bias}$(iv)$. Specifically, we replace \Cref{ass-bias}$(iv)$ with the following assumption.

\begin{assumption}\label{ass-bias-new}
For any $t \in [T]$ and $s \in [t]$, define the window-averaged tensors
\[
\mathbf{\Theta}^{t,s} = \frac{1}{s}\sum_{u=t-s+1}^t \boldsymbol{\Theta}^u, 
\quad
\mathbf{\Delta}^{t,s} = \frac{1}{s}\sum_{u=t-s+1}^t \boldsymbol{\Delta}^u.
\]
Assume for any $t \in [T]$,  $s_{t, \max}\le C_\sigma s_{t, \min}$, where $C_{\sigma} > 0$ is an absolute constant, $s_{t, \min}=\min_{k\in[K]} s_{t, k}$, $s_{t, \max}=\max_{k\in[K]} s_{t, k}$, and for any $k \in [K]$, $s_{t,k}=\sum_{i=1}^n Z_{i,k}^t$.Moreover, for all $t \in [T]$ and $s \in [t]$, assume that
 \[
    \mathrm{rank}\big(\mathcal M_1(\mathbf{\Theta}^{t,s})\big)
    =
     \mathrm{rank}\big(\mathcal M_1(\mathbf{\Delta}^{t,s})\big)
    =
    \mathrm{rank}\big(\mathcal M_1(\mathbf{\Theta}^{t,s}+\mathbf{\Delta}^{t,s})\big)
    = K.
\]
Define
    \[
    r^{t,s} = \rank(\mathcal M_3(\mathbf{\Theta}^{t,s})), \quad 
    \tilde r^{t,s} = \rank(\mathcal M_3(\mathbf{\Delta}^{t,s})).
    \]
Finally, for all $t \in [T]$ and $s \in [t]$, assume that 
    \[
    \max\bigg\{
    \frac{\|\mathbf{\Theta}^{t,s}\|}{\sigma_{\min}(\mathbf{\Theta}^{t,s})},
    \frac{\|\mathbf{\Delta}^{t,s}\|}{\sigma_{\min}(\mathbf{\Delta}^{t,s})},
    \frac{\|\mathbf{\Theta}^{t,s}+\mathbf{\Delta}^{t,s}\|}{\sigma_{\min}(\mathbf{\Theta}^{t,s}+\mathbf{\Delta}^{t,s})}
    \bigg\}
    \le C_\sigma.
    \]
\end{assumption}

Compared with \Cref{ass-bias}$(iv)$ in the main text, these conditions are imposed directly on the transition probability tensors $(\boldsymbol{\Theta}^{t,s}, \boldsymbol{\Delta}^{t,s})$ rather than on the lower-dimensional connectivity tensors $(\mathbf{W}^{t,s}, \mathbf{M}^{t,s})$. 
This reflects the additional complexity introduced by time-varying community memberships and therefore constitutes a stronger assumption.

Under Assumptions~\ref{ass-bias} $(i)$--$(iii)$ together with the \Cref{ass-bias-new}, the theoretical guarantees established in the main text continue to hold. 
In particular, \Cref{alg:fast-msbm} achieves the same estimation rates for the transition probability tensors as in \Cref{thm-tensor-Frobenius} and the same community recovery guarantees as in \Cref{thm-comunity-recovery-non}.
This establishes that our framework accommodates dynamic multilayer networks with time-varying community structures, without requiring any modification to the core algorithms or the corresponding statistical guarantees.

\section{Technical details for results in Section \ref{sec-stat}}\label{app-proof-sec-stat}

All necessary auxiliary results for the stationary setting are collected in Appendix~\ref{sec-app-add-stat}, with proofs provided in Appendix~\ref{app-proof-app-stat}. The proofs of \Cref{thm-tensor-Frobenius-stat}, \Cref{thm-minimax-lower} and \Cref{thm-comunity-recovery-stat} are given in Appendices~\ref{app-proof-thm-tensor-Frobenius-stat}, \ref{app-proof-thm-minimax-lower} and \ref{app-proof-thm-comunity-recovery-stat}, respectively.

\subsection{Additional results}\label{sec-app-add-stat}

We first establish several basic properties of the stationary AR(1)–MSBM, which will be used throughout the analysis. 
\begin{proposition}\label{prop-stat}
Let the process $\{\mathbf{A}^t \}_{t \geq 0} \subset\{0, 1\}^{n \times n \times L}$ be defined in \Cref{def-sarmsb}. Then it is a strictly stationary process. Moreover, for any $ 1 \leq i \leq j \leq n$, $ l \in [L]$ and $t,s \geq 0$, we have
\[
    \boldsymbol{\Pi}_{i,j,l}^t = \mathbb{P} \{ \mathbf{A}_{i,j, l}^t = 1\}  =\frac{\boldsymbol{\Theta}_{i, j, l}}{ \boldsymbol{\Theta}_{i, j, l} + \boldsymbol{\Delta}_{i, j, l}}, \quad
    \mathrm{Var}(\mathbf{A}_{i, j, l}^t) =  \frac{\boldsymbol{\Theta}_{i, j, l} \boldsymbol{\Delta}_{i, j, l}}{ (\boldsymbol{\Theta}_{i, j, l} + \boldsymbol{\Delta}_{i, j, l})^2},
\]
and
\[
    \rho_{i,j, l}(|t-s|) = \mathrm{Corr}(\mathbf{A}_{i,j, l}^t, \mathbf{A}_{i,j, l}^s) = (1 -  \boldsymbol{\Theta}_{i, j, l} - \boldsymbol{\Delta}_{i, j, l} )^{|t-s|}.
\]
\end{proposition}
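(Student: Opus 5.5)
The plan is to reduce everything to a single two-state Markov chain and then use independence across edges. By \Cref{def-sarmsb}, for each fixed edge $(i,j,l)$ the innovations $\{\mathbf{E}^t_{i,j,l}\}_{t\geq 1}$ are i.i.d. and independent of $\mathbf{A}^0_{i,j,l}$. The recursion shows that $\mathbf{A}^t_{i,j,l}$ is a function of $\mathbf{A}^{t-1}_{i,j,l}$ and $\mathbf{E}^t_{i,j,l}$ only, so $\{\mathbf{A}^t_{i,j,l}\}_{t\ge0}$ is a time-homogeneous Markov chain on $\{0,1\}$. Its transition matrix is
\[
P=\begin{pmatrix} 1-\theta & \theta\\ \delta & 1-\delta\end{pmatrix},
\]
where $\theta=\boldsymbol{\Theta}_{i,j,l}$ and $\delta=\boldsymbol{\Delta}_{i,j,l}$; this is exactly \eqref{eq-tran}. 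Different edges are driven by disjoint, independent families of innovations and independent initial states. So the tensor process is a product of independent chains, and its strict stationarity follows once each coordinate chain is strictly stationary.

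\textbf{Stationarity and marginal law.} Solve $\pi P=\pi$ with $\pi=(1-p,p)$. The balance equation $(1-p)\theta=p\delta$ gives $p=\theta/(\theta+\delta)$. Since $\theta,\delta\ge c_{\min}>0$ and $\theta+\delta\le 2(1-c_{\min})<2$, the chain is irreducible and aperiodic, so this stationary law is unique. The initial law in \Cref{def-sarmsb} is exactly $\pi$, hence $\mathbf{A}^t_{i,j,l}\sim\pi$ for all $t$. A Markov chain started from its invariant law has all finite-dimensional distributions shift-invariant, because $\P(\mathbf{A}^{t+h}=a_0,\dots,\mathbf{A}^{t+h+m}=a_m)=\pi(a_0)\prod P(a_{k-1},a_k)$ does not depend on $h$. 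This gives strict stationarity of each edge chain, and taking the product over independent edges gives it for the tensor process. The marginal probability is $\boldsymbol{\Pi}^t_{i,j,l}=p$, and the variance is $p(1-p)=\theta\delta/(\theta+\delta)^2$.

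\textbf{Autocorrelation.} Set $\lambda=1-\theta-\delta$, the second eigenvalue of $P$. The key identity is
\[
\E\{\mathbf{A}^{t}\mid \mathbf{A}^{t-1}\}=\theta(1-\mathbf{A}^{t-1})+(1-\delta)\mathbf{A}^{t-1}=\theta+\lambda \mathbf{A}^{t-1}.
\]
Subtracting $p=\theta+\lambda p$ gives $\E\{\mathbf{A}^t-p\mid\mathbf{A}^{t-1}\}=\lambda(\mathbf{A}^{t-1}-p)$. Iterating via the tower property yields $\E\{\mathbf{A}^{s+h}-p\mid \mathbf{A}^s\}=\lambda^h(\mathbf{A}^s-p)$ for $h\ge0$. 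Therefore $\mathrm{Cov}(\mathbf{A}^{s+h},\mathbf{A}^s)=\lambda^h\mathrm{Var}(\mathbf{A}^s)$. Since the variances at both times are equal, the correlation is $\lambda^{h}$, and symmetry in $(t,s)$ gives exponent $|t-s|$.

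No step is a genuine obstacle. The only care needed is to justify, via the lower bound $c_{\min}$, that $\theta+\delta>0$ so that $\pi$ is well defined, and to note that strict stationarity of the joint tensor process uses the mutual independence across edges assumed in \Cref{def-sarmsb}, including independent initial coordinates.
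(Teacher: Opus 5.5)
Your proof is correct and is essentially the argument the paper intends. The paper omits the proof and defers to Proposition 2 of \cite{jiang2023autoregressive}, which uses the same ingredients: each edge is a two-state homogeneous Markov chain started from its invariant law $\boldsymbol{\Theta}_{i,j,l}/(\boldsymbol{\Theta}_{i,j,l}+\boldsymbol{\Delta}_{i,j,l})$, edges evolve independently, and the one-step dynamics with contraction factor $1-\boldsymbol{\Theta}_{i,j,l}-\boldsymbol{\Delta}_{i,j,l}$ give the geometric autocorrelation. Your remark that joint strict stationarity of the tensor process needs the entries of $\mathbf{A}^0$ to be mutually independent and independent of the innovations, which \Cref{def-sarmsb} leaves implicit, is a worthwhile precision.
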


The proof follows directly from the same argument as Proposition 2 in \cite{jiang2023autoregressive} and is omitted.

\medskip

To analyze temporal dependence, we recall the notion of strong mixing, which quantifies the decay of dependence across time. The following lemma establishes that each edgewise process in the stationary AR(1)-MSBM is strongly mixing.

\begin{definition}[Strongly mixing]\label{def-mixing}
Let $\{X_t\}_{t \in \mathbb{N}}$ be a stochastic process. 
The $\alpha$-mixing coefficient at lag $k \in \mathbb{Z}^+$ is defined as 
\[
\alpha(k) = \sup_{t \in \mathbb{Z}^+} \sup_{\mathcal{A} \in \mathcal{F}_0^t, \mathcal{B} \in \mathcal{F}_{t+k}^\infty} |\mathbb{P}(\mathcal{A} \cap \mathcal{B}) - \mathbb{P}(\mathcal{A})\mathbb{P}(\mathcal{B})|,
\]
where $\mathcal{F}_{0}^{t} = \sigma(X_0, X_1, \cdots, X_t)$ is the $\sigma$-algebra generated by the process up to time $t$ and $\mathcal{F}_{t+k}^{\infty} = \sigma(X_{t+k}, X_{t+k+1}, \cdots)$ is the $\sigma$-algebra generated by the process from time $t+k$ onward. The process $\{X_t\}_{t \in \mathbb{N}}$ is is said to be strongly mixing if
\[
  \alpha(k) \to 0 \quad \mbox{as} \quad k \to \infty.
\]
\end{definition}

\begin{lemma}\label{lemma-mixing-stat}
Let the process $\{\mathbf{A}^t \}_{t \geq 0} \subset\{0, 1\}^{n \times n \times L}$ be defined in \Cref{def-sarmsb}. For any $ 1 \leq i \leq j \leq n$ and $l \in [L]$, the univariate edge process $\{\mathbf{A}^t_{i, j, l} \}_{t \geq 0} \subset\{0, 1\}^{n \times n \times L}$ is  strongly mixing. 
\end{lemma}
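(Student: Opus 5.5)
The edge process $X_t = \mathbf{A}^t_{i,j,l}$ is a time-homogeneous two-state Markov chain started from its stationary law. The plan is to compute its $k$-step transition kernel explicitly and bound $\alpha(k)$ by the total-variation distance between that kernel and the stationary distribution. Write $\theta = \boldsymbol{\Theta}_{i,j,l}$, $\delta = \boldsymbol{\Delta}_{i,j,l}$ and $\pi = \theta/(\theta+\delta)$. First note from \Cref{def-sarmsb} that, conditionally on $\mathcal{F}_0^{t-1}$, $X_t$ depends only on $X_{t-1}$ and the independent innovation $\mathbf{E}^t_{i,j,l}$. Hence $\{X_t\}$ is Markov with transition matrix
\[
P=\begin{pmatrix}1-\theta & \theta\\ \delta & 1-\delta\end{pmatrix},
\]
and by \Cref{prop-stat} the chain is stationary with $\P\{X_t=1\}=\pi$. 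The eigenvalues of $P$ are $1$ and $\lambda = 1-\theta-\delta$. A direct computation (or induction on $k$) gives
\[
P^k(x,1) = \pi + \lambda^k\big(\mathbbm{1}\{x=1\}-\pi\big), \qquad x\in\{0,1\},
\]
so $\max_{x}\sum_{y}|P^k(x,y)-\pi(y)| \le 2|\lambda|^k$.

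Next we reduce the mixing coefficient to this kernel. Fix $t$, $\mathcal{A}\in\mathcal{F}_0^t$ and $\mathcal{B}\in\mathcal{F}_{t+k}^\infty$. By the Markov property, $\P(\mathcal{B}\mid\mathcal{F}_0^{t+k}) = h(X_{t+k})$ for a measurable $h$ with values in $[0,1]$, and by stationarity $\P(\mathcal{B}) = \sum_y \pi(y)h(y)$. Conditioning first on $\mathcal{F}_0^{t+k}$ and then on $\mathcal{F}_0^t$ gives
\[
|\P(\mathcal{A}\cap\mathcal{B})-\P(\mathcal{A})\P(\mathcal{B})| = \Big|\E\Big[\mathbbm{1}_{\mathcal{A}}\sum_y \big(P^k(X_t,y)-\pi(y)\big)h(y)\Big]\Big| \le \max_x \sum_y |P^k(x,y)-\pi(y)|.
\]
The right-hand side is at most $2|\lambda|^k$. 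Taking suprema over $\mathcal{A}$, $\mathcal{B}$ and $t$ yields $\alpha(k)\le 2|1-\theta-\delta|^k$.

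It remains to show $|1-\theta-\delta|<1$ uniformly. By \Cref{def-sarmsb}, $\theta,\delta\ge c_{\min}>0$, so $1-\theta-\delta \le 1-2c_{\min}$. Also $\theta+\delta\le 1$, because these are probabilities of disjoint outcomes of $\mathbf{E}^t_{i,j,l}$, so $1-\theta-\delta\ge 0$. Therefore $\alpha(k)\le 2(1-2c_{\min})^k\to 0$, which proves strong mixing, in fact with a geometric rate uniform over $(i,j,l)$. The only step needing care is the conditioning argument: we must check that $\P(\mathcal{B}\mid\mathcal{F}_0^{t+k})$ depends only on $X_{t+k}$, which holds because future values are driven by innovations independent of the past. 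The rest is routine.
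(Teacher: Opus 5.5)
Your proof is correct, but it takes a different route from the paper. The paper omits this proof and refers to the proof of \Cref{lemma-mixing-nonstat}, which never uses the transition matrix. That proof works with the ``no refresh'' event $\mathcal{R}=\bigcap_{s=t+1}^{t+k}\{\mathbf{E}^s_{i,j,l}=0\}$. On $\mathcal{R}^c$ the value $\mathbf{A}^{t+k}_{i,j,l}$ depends only on innovations after time $t$, so $\mathcal{B}\cap\mathcal{R}^c$ is determined by those innovations. Hence $\P(\mathcal{A}\cap\mathcal{B}\cap\mathcal{R}^c)=\P(\mathcal{A})\P(\mathcal{B}\cap\mathcal{R}^c)$, which gives $\alpha(k)\le\P(\mathcal{R})\le(1-2c_{\min})^k$.

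You instead diagonalise the two-state kernel, use the stationary start to write $\P(\mathcal{B})=\sum_y\pi(y)h(y)$, and bound $\alpha(k)$ by the total-variation distance of $P^k(x,\cdot)$ from $\pi$. Both arguments control the same quantity. Your formula can be rewritten as $P^k(x,\cdot)=\lambda^k\mathbbm{1}\{\cdot=x\}+(1-\lambda^k)\pi(\cdot)$, and $\lambda^k=(1-\theta-\delta)^k$ is exactly $\P(\mathcal{R})$, so this is the regeneration decomposition in spectral form. Your factor $2$ is an artefact: since $\sum_y\{P^k(x,y)-\pi(y)\}=0$ and $h\in[0,1]$, the same computation gives $\alpha(k)\le|\lambda|^k$.

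The coupling route needs neither time-homogeneity nor a stationary initial law. It therefore proves the non-stationary \Cref{lemma-mixing-nonstat} verbatim, with $\prod_{s}(1-\boldsymbol{\Theta}^s_{i,j,l}-\boldsymbol{\Delta}^s_{i,j,l})$ in place of $\lambda^k$, which is why the paper proves only that version. Your route is the standard Markov-chain computation and gives the exact $k$-step dependence. Extending it to the non-stationary model would require replacing $P^k$ by a product of time-varying kernels and $\pi$ by the time-$(t+k)$ marginal, for example through a Dobrushin contraction argument.
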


The proof of \Cref{lemma-mixing-stat} follows the same argument as that of \Cref{lemma-mixing-nonstat} and is in fact simpler; we omit the details for brevity.

\medskip

We next establish theoretical guarantees for the initial estimators of the transition probabilities, $\widehat{\boldsymbol{\Theta}}^t$ and $\widehat{\boldsymbol{\Delta}}^t$, corresponding to \textbf{Stage I} of \Cref{alg:fast-stat-msbm}.  This result extends the single-layer analysis in Proposition 6 of \cite{jiang2023autoregressive} to the multilayer setting.

\begin{proposition}\label{lemma-sub-Gaussian}
Let the process $\{\mathbf{A}^t \}_{t \geq 0} \subset\{0, 1\}^{n \times n \times L}$ be defined in \Cref{def-sarmsb}.  For any $t \in  [T]$ with $t \{\log(t) \log\log(t)\}^{-2} \gtrsim \log(n \vee L \vee T )$, there exist absolute constants $C, c >0$ such that 
\begin{align*}
    \P \Big\{ &  \big\vert  \widehat{\boldsymbol{\Theta}}_{i,j, l}^t -  \boldsymbol{\Theta}_{i,j, l}  \big\vert +  \big\vert  \widehat{\boldsymbol{\Delta}}_{i,j, l}^t - \boldsymbol{\Delta}_{i,j, l} \big\vert >  C\sqrt{ \frac{ \log(n \vee L \vee T)}{t}},    \forall 1 \leq i \leq j \leq n, l\in [L] \bigg\}
\leq (n\vee L \vee T)^{-c}.
\end{align*}
Moreover, for any $1 \leq i \leq j \leq n$, $l \in [L]$ and $t \in [T]$, both $ \widehat{\boldsymbol{\Theta}}_{i,j, l}^t -  \boldsymbol{\Theta}_{i,j, l}$ and $\widehat{\boldsymbol{\Delta}}_{i,j, l}^t -  \boldsymbol{\Delta}_{i,j, l}$ are $C^* t^{-1/2}$-sub-Gaussian distributed for some absolute constant $C^* >0$.
\end{proposition}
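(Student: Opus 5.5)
We work with a single edge $(i,j,l)$, write $\theta=\boldsymbol{\Theta}_{i,j,l}$, $\delta=\boldsymbol{\Delta}_{i,j,l}$, $X_s=\mathbf{A}^s_{i,j,l}$, and finish with a union bound over the $O(n^2L)$ edges. The natural decomposition is
\[
\widehat{\boldsymbol{\Theta}}^t_{i,j,l}-\theta=\frac{\sum_{s=1}^t (1-X_{s-1})(X_s-\theta)}{\sum_{s=1}^t (1-X_{s-1})}=\frac{M_t}{D_t},
\]
and analogously for $\widehat{\boldsymbol{\Delta}}^t_{i,j,l}-\delta$ with numerator $\sum_s X_{s-1}(1-X_s-\delta)$ and denominator $\sum_s X_{s-1}$. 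The numerator $M_t$ is a martingale with respect to $\mathcal{F}_s=\sigma(X_0,\dots,X_s)$. Indeed, conditionally on $\mathcal{F}_{s-1}$, on the event $X_{s-1}=0$ the variable $X_s$ is Bernoulli$(\theta)$ by \eqref{eq-tran}. Its increments are bounded by $1$ in absolute value.

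First, I control the numerator. By Azuma--Hoeffding (or Freedman), $\P\{|M_t|>x\}\le 2\exp(-x^2/(2t))$. Taking $x\asymp\sqrt{t\log(n\vee L\vee T)}$ makes this $(n\vee L\vee T)^{-c'}$ for any desired $c'$.

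Second, I lower bound the denominator. By \Cref{prop-stat}, $\E\{1-X_{s-1}\}=\delta/(\theta+\delta)\ge c_{\min}/(2-2c_{\min})$, a positive constant. I need $D_t\ge ct$ with high probability. The process $\{X_s\}$ is a two-state chain with spectral gap $\theta+\delta\ge 2c_{\min}$, and by \Cref{lemma-mixing-stat} it is strongly mixing with geometrically decaying coefficients ($\rho(k)=(1-\theta-\delta)^k$). I therefore apply a Bernstein inequality for geometrically $\alpha$-mixing sequences (Merlevède--Peligrad--Rio type). It gives
\[
\P\{|D_t-\E D_t|>x\}\le \exp\bigl(-cx^2/(t+x\log t\log\log t)\bigr).
\]
With $x=\E D_t/2\asymp t$, the exponent is of order $t/(\log t\log\log t)$. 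The hypothesis $t\{\log t\log\log t\}^{-2}\gtrsim\log(n\vee L\vee T)$ is exactly what makes this probability at most $(n\vee L\vee T)^{-c}$, with some room to spare.

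An alternative route, which would avoid the mixing machinery, is to note that the chain is uniformly ergodic with a one-step minorization. Each step $X_s$ is refreshed, independently of the past, with probability at least $2c_{\min}$, so one can use a regeneration or coupling Hoeffding bound. I would keep the mixing Bernstein route, since it matches the stated log factors.

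On the intersection of the two events, $|\widehat{\boldsymbol{\Theta}}^t_{i,j,l}-\theta|\le C\sqrt{\log(n\vee L\vee T)/t}$, and the same holds for $\boldsymbol{\Delta}$. A union bound over the edges costs a factor $n^2L\le(n\vee L\vee T)^3$, which is absorbed by choosing the constants large.

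For the sub-Gaussian claim, I combine the Azuma tail of $M_t/t$ with the bounded-ratio event $\{D_t\ge ct\}$. This gives $\P\{|M_t/D_t|>u\}\le 2\exp(-c u^2 t)+\P\{D_t<ct\}$. The complement event requires care: if $D_t=0$ the estimator is undefined, so I adopt the convention $0/0=0$. Since $|M_t/D_t|\le 1$ always, only $u\le 1$ matters. For such $u$, the probability $\P\{D_t<ct\}\le\exp(-ct/\mathrm{polylog}\,t)$ is itself dominated by a term $\exp(-c''u^2t)$ once $t$ is large. For small $t$, sub-Gaussianity with parameter $C^*t^{-1/2}$ is trivial because the variable is bounded. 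A cleaner alternative is to define sub-Gaussianity through the moment generating function conditional on the good event.

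The main obstacle is the denominator concentration, i.e.\ getting a dependent-data Bernstein bound with explicit log factors. That step is where the awkward condition on $t$ enters.
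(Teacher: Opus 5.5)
Your argument for the high-probability bound is correct, and it takes a different route from the paper. The paper applies the Bernstein inequality for $\alpha$-mixing sequences to both averages $t^{-1}\sum_{u}\mathbf{A}^u_{i,j,l}(1-\mathbf{A}^{u-1}_{i,j,l})$ and $t^{-1}\sum_{u}\mathbf{A}^{u-1}_{i,j,l}$ at deviation $\varepsilon\asymp\sqrt{\log(n\vee L\vee T)/t}$. It then sandwiches the ratio between $(\E Y\mp\varepsilon)/(1-\E X\pm\varepsilon)$. The stated condition on $t$ is exactly what keeps the term $\varepsilon\log t\log\log t$ in that Bernstein exponent bounded. You instead observe that the centred numerator $M_t$ is an exact martingale with bounded increments. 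Azuma therefore controls it with no logarithmic loss, and you need dependent-data concentration only for the denominator, at a constant-order deviation. This is cleaner and requires only $t/(\log t\log\log t)\gtrsim\log(n\vee L\vee T)$.

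The sub-Gaussian part, however, has a genuine gap. You claim that $\exp(-ct/\mathrm{polylog}\,t)$ is dominated by $\exp(-c''u^2t)$ for all $u\le 1$ once $t$ is large, but this goes the wrong way. For $u$ of constant order the target is $\exp(-c''t)$, which is much smaller than $\exp\{-ct/(\log t\log\log t)\}$, and the discrepancy grows with $t$. The domination holds only for $u\lesssim(\log t\log\log t)^{-1/2}$. As written, you therefore get a sub-Gaussian parameter of order $\sqrt{\log t\log\log t/t}$, not $C^*t^{-1/2}$. Switching to an MGF conditional on the good event does not prove the unconditional claim either.

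The fix is the regeneration idea you set aside. Since $\mathbf{E}^s_{i,j,l}=-1$ forces $\mathbf{A}^s_{i,j,l}=0$ whatever the past, $D_t\ge\sum_{s=1}^{t-1}\mathbbm{1}\{\mathbf{E}^s_{i,j,l}=-1\}$, which is a $\mathrm{Binomial}(t-1,\delta)$ variable with $\delta\ge c_{\min}$. Hoeffding then gives $\P\{D_t<\delta(t-1)/2\}\le\exp\{-\delta^2(t-1)/2\}$, a genuine $e^{-c't}$ tail. Combined with Azuma, this yields $\P\{|\widehat{\boldsymbol{\Theta}}^t_{i,j,l}-\theta|>u\}\le 4\exp(-c''u^2t)$ for $u\le 1$, and the probability is zero for $u>1$. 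That is exactly $C^*t^{-1/2}$-sub-Gaussianity. The same argument with $\mathbbm{1}\{\mathbf{E}^s_{i,j,l}=1\}$ handles $\widehat{\boldsymbol{\Delta}}^t_{i,j,l}$. It also makes the mixing inequality, and hence the condition on $t$, unnecessary for the first claim.

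The paper's own proof hits the same polylogarithmic obstruction at this step. Its moment bound $\|\cdot\|_q\lesssim\sqrt{q/t}+q\log t\log\log t/t$ is divided by $q$, which characterises a sub-exponential norm rather than a sub-Gaussian one.
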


We next study spectral properties of the transition probability tensors. Define
\begin{equation}\label{def-U-Z}
    U_Z = Z D_Z^{-1}, \quad \mbox{where } D_Z = \mbox{diag}(\sqrt{s_1}, \dots, \sqrt{s_K}),
\end{equation}
and let the singular value decomposition of   $\mathcal{M}_3(\mathbf{W})$ be 
$\mathcal{M}_3 (\mathbf{W}) = U_W D_W V_W^{\top}$,
with $U_W \in \mathbb{O}_{L \times r_1}$, $V_W \mathbb{O}_{n^2 \times r_1} $ and  $D_W \in \R^{r_1 \times r_1}$. Let $\mathbf{S} \in \mathbb{R}^{K \times K \times r_1}$ satisfy
$
\mathcal{M}_3(\mathbf{S}) = D_W V_W^{\top}$,
and define
\[
\mathbf{R} = \mathbf{S} \times_1 D_Z \times_2 D_Z \in \mathbb{R}^{K \times K \times r_1}.
\]
Under \Cref{ass-stat-rank}$(i)$, the transition probability tensor $\boldsymbol{\Theta}$ admits the Tucker decomposition
\begin{equation}\label{eq-tucker-decom}
   \boldsymbol{\Theta} =  \mathbf{R} \times_1  U_Z \times_2 U_Z \times_3 U_W \quad \mbox{with Tucker ranks }(K, K, r_1),
\end{equation}
and an analogous representation holds for $\boldsymbol{\Delta}$.

\begin{lemma}\label{lemma-singular-values-stat}
Let $\mathbf{Q} \in \mathbb{R}^{K\times K \times L}$ be defined in \Cref{ass-stat-rank} and let $\boldsymbol{\Omega} \in \mathbb{R}^{n \times n \times L}$ be given by $\boldsymbol{\Omega} = \boldsymbol{\Theta} + \boldsymbol{\Delta}$. Assume that $\mathrm{rank}\big( \mathcal{M}_1 (\mathbf{Q}) \big) = K$ and let $r = \mathrm{rank}\big( \mathcal{M}_3 (\mathbf{Q}) \big)$.   Then, we have that 
\[
\mathrm{rank} \big(\mathcal{M}_1(\mathbf{\Omega})\big)= \mathrm{rank} \big(\mathcal{M}_2(\mathbf{\Omega})\big)= K,
\quad \mathrm{rank} \big(\mathcal{M}_3(\mathbf{\Omega})\big)= r, 
\]
and
\[ s_{\min} \sigma_{\min} (\mathbf{Q}) \leq  \sigma_{\min}\big(\mathcal{M}_s(\boldsymbol{\Omega})\big)  \leq \big\| \mathcal{M}_s(\boldsymbol{\Omega}) \big\| \leq   s_{\max} \| \mathbf{Q}\|, \quad \forall s \in [3],
\] 
where $s_{\min} = \min_{k \in [K]} s_k$ and $s_{\max} = \max_{k \in [K]} s_k$ with for any $k \in [K]$, $s_k = \sum_{i=1}^{n} Z_{i, k}$.
\end{lemma}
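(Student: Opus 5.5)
The plan is to write $\boldsymbol{\Omega}$ in Tucker form through $\mathbf{Q}$ and the normalized membership matrix $U_Z = Z D_Z^{-1}$ from \eqref{def-U-Z}. Since $\boldsymbol{\Theta}_{:,:,l} = Z\mathbf{W}_{:,:,l}Z^\top$ and $\boldsymbol{\Delta}_{:,:,l} = Z\mathbf{M}_{:,:,l}Z^\top$, we have
\[
\boldsymbol{\Omega} = \mathbf{Q}\times_1 Z \times_2 Z = (\mathbf{Q}\times_1 D_Z\times_2 D_Z)\times_1 U_Z\times_2 U_Z.
\]
Here $U_Z\in\mathbb{O}_{n\times K}$, because the columns of $Z$ have disjoint supports and column $k$ has $s_k$ ones.

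\emph{Rank identities.} First I would record the standard matricization identity
\[
\mathcal{M}_1(\mathbf{X}\times_1 A\times_2 B\times_3 C) = A\,\mathcal{M}_1(\mathbf{X})\,(B\otimes C)^\top ,
\]
where the ordering of the Kronecker factors is fixed by the paper's column indexing $(i_2-1)p_3+i_3$. The analogous identities hold for modes 2 and 3. With $C=I_L$ this gives $\mathcal{M}_1(\boldsymbol{\Omega}) = Z\,\mathcal{M}_1(\mathbf{Q})\,(Z\otimes I_L)^\top$. Since $Z$ and $Z\otimes I_L$ have full column rank, $\mathrm{rank}\,\mathcal{M}_1(\boldsymbol{\Omega}) = \mathrm{rank}\,\mathcal{M}_1(\mathbf{Q}) = K$. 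For mode 2, I would use the symmetry $\mathbf{Q}_{:,:,l}=\mathbf{Q}_{:,:,l}^\top$, which holds because $\mathbf{W}$ and $\mathbf{M}$ are symmetric in their first two indices (the edges are undirected). Then $\mathcal{M}_2(\mathbf{Q})$ is a column permutation of $\mathcal{M}_1(\mathbf{Q})$, so it also has rank $K$. For mode 3, $\mathcal{M}_3(\boldsymbol{\Omega}) = \mathcal{M}_3(\mathbf{Q})(Z\otimes Z)^\top$ up to ordering, and $Z\otimes Z$ has full column rank, so the rank is $r$.

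\emph{Singular value bounds.} I would write $\mathcal{M}_1(\boldsymbol{\Omega}) = U_Z\,\big[D_Z\mathcal{M}_1(\mathbf{Q})(D_Z\otimes I_L)\big]\,(U_Z\otimes I_L)^\top$. The outer factors have orthonormal columns, so the nonzero singular values of $\mathcal{M}_1(\boldsymbol{\Omega})$ coincide with those of the middle matrix $D_Z\,\mathcal{M}_1(\mathbf{Q})\,(D_Z\otimes I_L)$. The diagonal matrices $D_Z$ and $D_Z\otimes I_L$ have all entries in $[\sqrt{s_{\min}},\sqrt{s_{\max}}]$.
\begin{itemize}
\item Upper bound: $\|D_1 X D_2\|\le\|D_1\|\|X\|\|D_2\|\le s_{\max}\|\mathcal{M}_1(\mathbf{Q})\|\le s_{\max}\|\mathbf{Q}\|$.
\item Lower bound: $D_Z$ is $K\times K$ and invertible, and $\mathcal{M}_1(\mathbf{Q})$ has full row rank $K$. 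For any unit $x\in\mathbb{R}^K$,
\[
\|(D_Z\otimes I_L)\mathcal{M}_1(\mathbf{Q})^\top D_Z x\| \ge \sqrt{s_{\min}}\;\sigma_K\big(\mathcal{M}_1(\mathbf{Q})\big)\,\|D_Z x\| \ge s_{\min}\,\sigma_{\min}(\mathcal{M}_1(\mathbf{Q})).
\]
This controls the $K$-th singular value of the middle matrix.
\end{itemize}
Mode 2 is identical. For mode 3 the middle matrix is $\mathcal{M}_3(\mathbf{Q})(D_Z\otimes D_Z)$, which is rank $r$ but not full row rank when $r<L$. Here I would restrict to the row space of $\mathcal{M}_3(\mathbf{Q})$: for any unit $y$ in that space,
\[
\|(D_Z\otimes D_Z)\mathcal{M}_3(\mathbf{Q})^\top y\| \ge s_{\min}\,\|\mathcal{M}_3(\mathbf{Q})^\top y\| \ge s_{\min}\,\sigma_{\min}(\mathcal{M}_3(\mathbf{Q})).
\]
By the Courant–Fischer min-max characterization over this $r$-dimensional subspace, the $r$-th singular value is at least $s_{\min}\,\sigma_{\min}(\mathcal{M}_3(\mathbf{Q}))$. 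Finally, taking the minimum over modes and using the definitions of $\sigma_{\min}(\mathbf{Q})$ and $\|\mathbf{Q}\|$ gives the stated chain.

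\emph{Main obstacle.} The main care point is the mode-3 lower bound, where the middle matrix is rank-deficient and the min-max argument has to be restricted to the right subspace. A secondary point is the mode-2 identity, which relies on the slice symmetry of $\mathbf{Q}$ (if the model were directed, one would need the mode-2 rank of $\mathbf{Q}$ assumed separately). The rest is bookkeeping with the Kronecker ordering.
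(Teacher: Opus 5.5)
Your proposal is correct and follows essentially the paper's route: the paper also writes $\mathcal{M}_1(\boldsymbol{\Omega}) = Z\,\mathcal{M}_1(\mathbf{Q})\,(Z\otimes I_L)^{\top}$ and gets the rank identity and the bounds $s_{\min}\sigma_K(\mathcal{M}_1(\mathbf{Q})) \le \sigma_{\min}(\mathcal{M}_1(\boldsymbol{\Omega})) \le \|\mathcal{M}_1(\boldsymbol{\Omega})\| \le s_{\max}\|\mathcal{M}_1(\mathbf{Q})\|$ from full column rank of $Z$ and $Z\otimes I_L$, citing Lemmas S3--S4 of Wang et al.\ where you use the explicit $U_Z D_Z$ reduction. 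It then covers modes 2 and 3 with ``the same arguments'', so your explicit use of the slice symmetry of $\mathbf{Q}$ for the mode-2 rank and your treatment of the rank-deficient mode-3 case fill in what the paper leaves implicit; one wording fix is that in mode 3 the $r$-dimensional subspace you restrict to is the column space of $\mathcal{M}_3(\mathbf{Q})$ in $\mathbb{R}^L$ (the orthogonal complement of $\ker \mathcal{M}_3(\mathbf{Q})^{\top}$), not its row space.
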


Under \Cref{ass-stat-rank}, the estimated subspace $\widehat{U}_Z^t$ obtained via H-PCA (\textbf{Stage II} of \Cref{alg:fast-stat-msbm}) satisfies the following bound.

\begin{proposition}\label{prop-sin-theta-stat}
Let the process $\{\mathbf{A}^t \}_{t \geq 0} \subset\{0, 1\}^{n \times n \times L}$ be defined in \Cref{def-sarmsb}. Suppose that \Cref{ass-stat-rank} holds. 
Let $U_Z$ be defined in \eqref{def-U-Z}. For any $t \in [T]$, under \Cref{alg:fast-stat-msbm}, with probability at least $1- (n \vee L \vee T)^{-c}$,
\[
\inf_{O \in \mathbb{O}_{K \times K}}\| \widehat{U}^t_Z - U_Z O   \| \leq   \mathcal{E}_t
\]
where
\begin{align}\label{eq-sin-theta-stat}
\mathcal{E}_t =  C  \sigma_Q^{-2} \bigg( \frac{ K^2 \sqrt{L} \log^2(n \vee L \vee T)}{tn}  + \frac{K^2L}{t^2}\bigg) +  C\sigma_Q^{-1}  \bigg(K\sqrt{\frac{  \log(n \vee L \vee T)}{tn}} + \frac{K\sqrt{L}}{t} \bigg), 
\end{align}
with $\sigma_Q = \sigma_{\min}(\mathbf{Q})$ and absolute constants $C, c >0$. 
\end{proposition}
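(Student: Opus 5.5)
We treat $\widehat{\boldsymbol{\Omega}}^t = \widehat{\boldsymbol{\Theta}}^t + \widehat{\boldsymbol{\Delta}}^t$ as a noisy version of $\boldsymbol{\Omega} = \boldsymbol{\Theta}+\boldsymbol{\Delta}$ and apply the deterministic perturbation bound for H-PCA of \cite{zhang2022heteroskedastic}. That bound is applied to the Gram matrix $\mathcal{M}_1(\widehat{\boldsymbol{\Omega}}^t)\mathcal{M}_1(\widehat{\boldsymbol{\Omega}}^t)^\top$, whose population counterpart has column space spanned by $U_Z$, since $U_Z$ from \eqref{def-U-Z} is the left singular space of $\mathcal{M}_1(\boldsymbol{\Omega})$. \Cref{lemma-singular-values-stat} gives rank $K$ and $\sigma_K(\mathcal{M}_1(\boldsymbol{\Omega})) \ge s_{\min}\sigma_Q$. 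By \Cref{ass-stat-rank}$(ii)$, $s_{\min}\asymp n/K$, so the signal strength is $\lambda := \sigma_K^2 \gtrsim n^2\sigma_Q^2/K^2$.

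The first step is to decompose the error $\mathbf{Z}^t = \widehat{\boldsymbol{\Omega}}^t - \boldsymbol{\Omega}$ into a centered part $\mathbf{X}^t = \widehat{\boldsymbol{\Omega}}^t - \E\widehat{\boldsymbol{\Omega}}^t$ and a bias part $\mathbf{B}^t = \E\widehat{\boldsymbol{\Omega}}^t - \boldsymbol{\Omega}$. The bias comes from the ratio form of the MLE. Expanding numerator and denominator about their means, and using the geometric decay of correlations from \Cref{prop-stat}, should give $\|\mathbf{B}^t\|_\infty \lesssim 1/t$.

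By \Cref{lemma-sub-Gaussian}, the entries of $\mathbf{X}^t$ (up to symmetry) are independent across edges and $t^{-1/2}$-sub-Gaussian, with $\|\mathbf{X}^t\|_\infty \lesssim \sqrt{\log(n\vee L\vee T)/t}$ with high probability. Strictly, $\widehat{\boldsymbol{\Omega}}^t$ is a ratio and its mean may not exist. To handle this, I would work on the high-probability event where the denominators are of order $t$, and truncate there.

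For H-PCA (the diagonal-deletion version), the error bound takes the form
\[
\inf_O\|\widehat U_Z^t - U_Z O\| \lesssim \frac{\|\mathcal{M}_1(\boldsymbol{\Omega})\mathcal{M}_1(\mathbf{Z}^t)^\top\| + \|\Gamma(\mathcal{M}_1(\mathbf{Z}^t)\mathcal{M}_1(\mathbf{Z}^t)^\top)\|}{\lambda},
\]
where $\Gamma$ zeros the diagonal. The two numerator terms are bounded as follows.

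\emph{Cross term.} For the noise part, $\|\mathcal{M}_1(\boldsymbol{\Omega})\mathcal{M}_1(\mathbf{X}^t)^\top\|$ is controlled via a matrix Bernstein or sub-Gaussian argument with $\|\mathcal{M}_1(\boldsymbol{\Omega})\|\lesssim n\sqrt{L}$ in Frobenius-row scale. Its $U_Z$-projected structure (rank $K$) gives roughly $n\sqrt{L}\sqrt{K\log/t}$ times row norms. After dividing by $\lambda$, this yields the term $\sigma_Q^{-1}K\sqrt{\log/(tn)}$. The bias part is bounded crudely: $\|\mathcal{M}_1(\boldsymbol{\Omega})\|\,\|\mathcal{M}_1(\mathbf{B}^t)\| \lesssim (n\sqrt{L}/K)\sigma\cdot n\sqrt{L}/t$, which after normalization gives the $\sigma_Q^{-1}K\sqrt{L}/t$ term. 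Here I would use $\|\mathcal{M}_1(\boldsymbol{\Omega})\|\le s_{\max}\|\mathbf{Q}\|$ together with the condition-number bound.

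\emph{Quadratic term.} For the off-diagonal Gram term $\Gamma(\mathcal{M}_1(\mathbf{X})\mathcal{M}_1(\mathbf{X})^\top)$, I would apply the heteroskedastic Wishart concentration of \cite{zhang2022heteroskedastic}, which gives $\lesssim \sqrt{n\cdot nL}\,\log^2/t$. The bias cross terms $\mathbf{B}\mathbf{B}^\top$ and $\mathbf{X}\mathbf{B}^\top$ contribute $n^2L/t^2$ and lower order. Dividing by $n^2\sigma_Q^2/K^2$ gives the two $\sigma_Q^{-2}$ terms of \eqref{eq-sin-theta-stat}.

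The H-PCA bound only applies if the diagonal-imputation iterations converge. This requires the numerator to be at most $c\lambda$, and otherwise the bound is trivially at most $\sqrt2$, so it holds regardless of whether $\mathcal{E}_t$ is small.

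The main obstacle is the bias/mean issue: rigorously establishing that $\|\mathbf{B}^t\|_\infty = O(1/t)$ uniformly, and that the centered ratio estimators are independent across edges and sub-Gaussian so the matrix concentration applies. I would try a delta-method expansion $\widehat\Theta-\Theta = (\text{num}-\Theta\,\text{den})/\text{den}$, where the linear martingale part $\sum_s (1-A^{s-1})(A^s-\Theta)$ has mean zero. The second-order term, of size $O(1/t)$, is absorbed into $\mathbf{B}^t$ (allowing $\mathbf{B}^t$ to be random but uniformly $O(1/t)$ on the good event), and only the leading martingale part is treated as $\mathbf{X}^t$.
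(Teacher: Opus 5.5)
Your proposal is correct and is essentially the paper's own proof. The common steps are: the H-PCA perturbation bound of \cite{zhang2022heteroskedastic} with $\sigma_K^2(\mathcal{M}_1(\boldsymbol{\Omega}))\geq s_{\min}^2\sigma_Q^2$; the split of the off-diagonal Gram perturbation into the quadratic term and the cross term with $\mathcal{M}_1(\boldsymbol{\Omega})$; and, within each, a centered sub-Gaussian part (via \Cref{lemma-sub-Gaussian}) plus an entrywise $O(1/t)$ bias part, which the paper takes from Lemma~18 of \cite{jiang2023autoregressive}. Two small fixes are needed. First, \Cref{alg:fast-stat-msbm} only runs H-PCA at dyadic times, so you must analyze $\widehat{\boldsymbol{\Omega}}^{\tilde t}$ with $\tilde t=2^{\lfloor \log_2 t\rfloor}\in(t/2,t]$, which costs only constants. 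Second, in your delta-method fallback the second-order remainder is uniformly of order $\log(n\vee L\vee T)/t$, not $1/t$; it should be centered and treated as noise, keeping only its $O(1/t)$ mean as bias, rather than absorbed into $\mathbf{B}^t$, or the $K\sqrt{L}/t$ and $K^2L/t^2$ terms pick up extra logarithmic factors.
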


When $K$ is treated as a constant and $\sigma_Q \asymp \sqrt{L}$, the bound in \eqref{eq-sin-theta-stat} simplifies (up to logarithmic factors) to  
\[
    \frac{1}{\sqrt{tnL}} + \frac{1}{t}.
\]
The first term reflects stochastic variability, while the second captures the bias inherent to the initial estimators, which are maximum likelihood estimators (MLEs) of the autoregressive model. 
For comparison, the single-layer result in  Theorem 10 od \cite{jiang2023autoregressive} achieved an error rate of order
\[
    \frac{1}{\sqrt{tn}} + \frac{1}{t} + \frac{1}{n}.
\]
Our multilayer setting yields a stochastic term that scales as $(t n L)^{-1/2}$, reflecting the additional information aggregated across $L$ layers. Moreover, our approach does not incur the extra $n^{-1}$ bias term present in \cite{jiang2023autoregressive}, since we perform spectral estimation directly on the initial estimators instead on a normalized Laplacian matrix, thereby avoiding the bias induced by diagonal normalization.

Finally, a key ingredient for proving the minimax lower bound in \Cref{thm-minimax-lower} is a packing argument over the space of community labels. To construct the required packing set, we establish \Cref{lemma-packing}, whose proof relies on \Cref{lemma-binomial}.

\begin{lemma}\label{lemma-packing}
Let $k \geq 2$ and $n \in \Z^{+}$. A labeling is a function $Z : [n] \to [k]$ that assigns one of the $k$ labels to each of the $n$ indices. 
Let $q = \lfloor n/k\rfloor$ and $r = n -kq$. Define $\mathcal{Z}$ to be the collection of all balanced labelings, i.e.
\begin{align}
\mathcal{Z} =  \Big\{ Z \colon [n] \to [k]  \Big\vert  &   \big\vert\{ i \in [n] : Z(i) = a \} \big\vert  \in \{ q, q+1 \},   \forall a \in [k], \nonumber\\
& \mbox{ and }  \big\vert \big\{a \in [k]  \colon  \vert \{ i \in [n] : Z(i) = a \}  \vert  = q+1 \big\} \big\vert =r  \Big\}. \nonumber
\end{align}
Then there exist absolute constants $c,  c' > 0$  
and a subset $\mathcal{Z}^{*} \subset \mathcal{Z}$ such that 
\[\vert \mathcal{Z}^{*} \vert  \geq \exp\{c n \log (k)\}
\quad  \mbox{and} \quad
  d_H(Z, Z') \geq c' n, \quad \forall Z \neq Z' \in \mathcal{Z}^{*},
\]
where $d_H(Z, Z') = \vert \{ i \in [n] : Z(i) \neq Z'(i) \}\vert $ is the Hamming distance.
\end{lemma}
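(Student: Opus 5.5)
The plan is to combine a random construction with a Gilbert--Varshamov style greedy selection inside the set $\mathcal{Z}$ of balanced labelings. A volume argument suffices. First, lower bound $|\mathcal{Z}|$ by the multinomial coefficient
\[
|\mathcal{Z}| \geq \frac{n!}{(q+1)!^{k}}.
\]
Stirling's formula then gives $\log|\mathcal{Z}| \geq n\log k - C k\log(n/k+2) - O(n)$, and more carefully $\log|\mathcal{Z}| \geq n\log k - n - O(k\log n)$. To avoid the regime where $k$ is comparable to $n$ (there $q$ may be $0$ or $1$ and $\log k$ is close to $\log n$), I would use the cruder bound $|\mathcal{Z}| \geq (q!)^{-k} n!/\ldots$, or better, directly $\log|\mathcal{Z}| \geq \log(n!) - k\log((q+1)!) \geq c_0\, n\log k$ for $k \geq 2$. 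This holds because $(q+1)! \leq (q+1)^{q+1}$ and $n! \geq (n/e)^n$, so $\log|\mathcal{Z}| \geq n\log n - n - (n+k)\log(n/k+1)$. For $k \leq n$ this is $\gtrsim n\log k$ once $k$ exceeds a large constant. For small $k$ (including $k=2$) I would check it separately using $\binom{n}{n/2} \geq 2^n/(n+1)$.

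Second, I would bound the size of a Hamming ball within $\mathcal{Z}$. For any fixed $Z$, the number of labelings at Hamming distance at most $c'n$ is at most
\[
\binom{n}{c'n} k^{c'n} \leq \exp\{c'n\log(e/c') + c'n\log k\}.
\]
The greedy construction then goes as follows: pick any element, delete its ball, and repeat. This produces $\mathcal{Z}^*$ with
\[
|\mathcal{Z}^*| \geq \frac{|\mathcal{Z}|}{\max_Z |B(Z,c'n)|} \geq \exp\{c_0 n\log k - c'n\log k - c'n\log(e/c')\}.
\]
By construction, distinct elements are more than $c'n$ apart.

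The main obstacle is making the exponent $\gtrsim n\log k$ uniformly for all $k \geq 2$. When $k$ is small, the term $c'n\log(e/c')$ competes with $c_0 n\log k$. Choosing $c'$ small enough that $c'\log(e/c') \leq c_0\log 2/4$ and $c' \leq c_0/4$ makes the exponent at least $(c_0/2)n\log k$, which gives $c = c_0/2$. I therefore expect the real work to lie in the uniform lower bound $\log|\mathcal{Z}| \geq c_0 n\log k$. I would handle this by splitting into two cases, $k \leq \sqrt{n}$ (Stirling, with the $O(k\log n)$ error absorbed) and $k > \sqrt{n}$ (then $\log k \asymp \log n$ and $n! / ((q+1)!)^k \geq (n/e)^n/(2n/k)^{n+k}$ suffices). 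In the degenerate case $n < k$, every labeling is injective and the claim follows similarly from counting injections.
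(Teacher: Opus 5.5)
Your proposal is correct and follows essentially the paper's argument: a Stirling-type lower bound on $|\mathcal{Z}|$, a Hamming-ball bound $\sum_{s\le \delta n}\binom{n}{s}(k-1)^s\le e^{nH(\delta)}(k-1)^{\delta n}$, and greedy Gilbert--Varshamov packing. The paper fixes $\delta=0.1$, whereas you use $(e/c')^{c'n}k^{c'n}$ and tune $c'$ against $c_0$. Your separate handling of large $k$ and of $n<k$ is more careful than the paper, whose final step $\tfrac12 n\log k - C_1 k\log n \ge C_2 n\log k$ tacitly needs $k\log n\lesssim n\log k$; note that your intermediate claim $\log n! - k\log((q+1)!)\ge c_0 n\log k$ fails for $n<k$ (there $n!/((q+1)!)^k=n!$ ignores the $\binom{k}{n}$ choices of which labels are used), so the injection count you mention is genuinely needed.
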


\begin{lemma}\label{lemma-binomial}
For any $n \in \Z^+$ and  $ 0 < \delta < 1/2 $, it holds that  
\[
\sum_{s=0}^{\lfloor \delta n \rfloor} \binom{n}{s} \leq \exp \big\{n H(\delta)\big\}, 
\quad \mbox{where } H(\delta) = -\delta \log \delta - (1-\delta) \log(1-\delta).
\]
\end{lemma}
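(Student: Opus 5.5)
The statement to prove is \Cref{lemma-binomial}: $\sum_{s\le \lfloor\delta n\rfloor}\binom{n}{s}\le \exp\{nH(\delta)\}$ for $0<\delta<1/2$. I will use the standard probabilistic (Chernoff-type) argument built from the binomial theorem.

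The starting point is the identity
\[
1=\{\delta+(1-\delta)\}^n=\sum_{s=0}^{n}\binom{n}{s}\delta^s(1-\delta)^{n-s}.
\]
Dropping all terms with $s>\lfloor\delta n\rfloor$ gives
\[
1\ge \sum_{s=0}^{\lfloor\delta n\rfloor}\binom{n}{s}\delta^s(1-\delta)^{n-s}.
\]

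The key observation is that each summand can be bounded below uniformly in $s$. Write
\[
\delta^s(1-\delta)^{n-s}=(1-\delta)^n\Big(\frac{\delta}{1-\delta}\Big)^{s}.
\]
Since $\delta<1/2$, the ratio $\delta/(1-\delta)$ lies in $(0,1)$. Hence this expression is nonincreasing in $s$. For $s\le\lfloor\delta n\rfloor\le\delta n$ it is therefore at least
\[
(1-\delta)^n\Big(\frac{\delta}{1-\delta}\Big)^{\delta n}=\delta^{\delta n}(1-\delta)^{(1-\delta)n}=\exp\{-nH(\delta)\}.
\]
Here the exponent $\delta n$ may be non-integer. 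This is harmless: $x\mapsto a^x$ is decreasing for $a\in(0,1)$, so the comparison is valid for real exponents.

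Combining the two displays,
\[
1\ge \exp\{-nH(\delta)\}\sum_{s=0}^{\lfloor\delta n\rfloor}\binom{n}{s}.
\]
Rearranging gives the claim.

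The only step needing care is the monotonicity in $s$, which is exactly where the hypothesis $\delta<1/2$ is used. The non-integer exponent $\delta n$ is the one point to remark on explicitly. There is no real obstacle beyond this.
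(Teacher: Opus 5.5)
Your proof is correct and is essentially the paper's argument. The paper writes the sum as $2^n\,\mathbb{P}\{X\le\delta n\}$ for $X\sim\mathrm{Bin}(n,1/2)$, applies the Chernoff/Markov bound with $t<0$, and optimizes over $t$; at the optimum $e^{t}=\delta/(1-\delta)$ this reduces exactly to your termwise comparison $\bigl(\delta/(1-\delta)\bigr)^{s}\ge\bigl(\delta/(1-\delta)\bigr)^{\delta n}$ for $s\le\delta n$, and your version, by plugging in that optimal tilt from the start via the binomial theorem with weights $\delta^s(1-\delta)^{n-s}$, simply skips the moment-generating-function computation and the optimization step.
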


\subsection{Proof of Theorem \ref{thm-tensor-Frobenius-stat}}\label{app-proof-thm-tensor-Frobenius-stat}
\begin{proof}[Proof of Theorem \Cref{thm-tensor-Frobenius-stat}]

Recall that 
\[
\widetilde{\boldsymbol{\Theta}}^t =  \widehat{\boldsymbol{\Theta}}^{t} \times_1 \widehat{U}^{t}_Z (\widehat{U}^{t}_Z)^{\top}\times_2 \widehat{U}^t_Z(\widehat{U}^{t}_Z)^\top  \times_3 \widehat{U}^t_W (\widehat{U}^{t}_W)^\top.
\]
Hence, 
\begin{align}\label{eq-Frobenius-0}
\|\widetilde{\boldsymbol{\Theta}}^t-\boldsymbol{\Theta} \|_{\mathrm{F}} \leq &  
\big\|\boldsymbol{\Theta}\times_1 \widehat{U}^{t}_Z (\widehat{U}^{t}_Z)^{\top}\times_2 \widehat{U}^t_Z(\widehat{U}^{t}_Z)^\top  \times_3 \widehat{U}^t_W (\widehat{U}^{t}_W)^\top - \boldsymbol{\Theta}\big\|_{\mathrm{F}}
\nonumber\\
& \hspace{0.5cm}
+ \big\|(\widehat{\boldsymbol{\Theta}}^{t} - \boldsymbol{\Theta} )\times_1 \widehat{U}^{t}_Z (\widehat{U}^{t}_Z)^{\top}\times_2 \widehat{U}^t_Z(\widehat{U}^{t}_Z)^\top  \times_3 \widehat{U}^t_W (\widehat{U}^{t}_W)^\top \big\|_{\mathrm{F}} 
\nonumber\\
= & (I) +(II).
\end{align}

\medskip
\noindent
\textbf{Step 1.} In this step, we first consider the term $(I)$ in \eqref{eq-Frobenius-0}. Observe that 
\begin{align}\label{eq-Frobenius-1}
(I) = & \bigl\|\boldsymbol{\Theta}-\boldsymbol{\Theta}\times_1 P_{\widehat{U}^{t}_Z}\times_2 P_{\widehat{U}^{t}_Z}\times_3 P_{\widehat{U}^{t}_W}\bigr\|_{\mathrm{F}} \nonumber\\
=&  \bigl\| \boldsymbol{\Theta}\times_1 P_{\widehat{U}^{t}_{Z\perp}}
+ \boldsymbol{\Theta}\times_1 P_{\widehat{U}^{t}_Z}\times_2 P_{\widehat{U}^{t}_{Z\perp}} +
\boldsymbol{\Theta}\times_1 P_{\widehat{U}^{t}_Z}\times_2 P_{\widehat{U}^{t}_Z}\times_3 P_{\widehat{U}^{t}_{W\perp}}
\bigr\|_{\mathrm{F}} \nonumber\\
\leq & \big\|\boldsymbol{\Theta}\times_1 P_{\widehat{U}^{t}_{Z\perp}} \big\|_{\mathrm{F}}
+\big\|\boldsymbol{\Theta}\times_2P_{\widehat{U}^{t}_{Z\perp}} \big\|_{\mathrm{F}} +\big\|\boldsymbol{\Theta}\times_3 P_{\widehat{U}^{t}_{W\perp}}\big\|_{\mathrm{F}} \nonumber\\
=&\big\| \widehat{U}^{t}_{Z\perp}\mathcal{M}_1(\boldsymbol{\Theta})\big\|_{\mathrm{F}} + \big\|\widehat{U}^{t}_{Z\perp}\mathcal{M}_2(\boldsymbol{\Theta})\big\|_{\mathrm{F}} + \big\|\widehat{U}^{t}_{W\perp}\mathcal{M}_1(\boldsymbol{\Theta})\big\|_{\mathrm{F}} \nonumber\\
= & (I.1) + (I.2) + (I,3).
\end{align}
For the term $(I.1)$, we have that
\begin{align}\label{eq-Frobenius-1.1}
(I.1)  = &  \big\|(\widehat{U}^{t}_{Z\perp})^{\top} U_ZU_Z^{\top}\mathcal{M}_1(\boldsymbol{\Theta})\big\|_{\mathrm{F}} \leq \big\|\widehat{U}^{t}_{Z\perp} U_Z\big\|_{\mathrm{F}} \big\|U_Z^{\top}\mathcal{M}_1(\boldsymbol{\Theta})\big\| \nonumber\\
= & \big\| \sin\Theta( \widehat{U}^{t}_{Z}, U_Z^{\top}) \big\|_{\mathrm{F}} \big\|\mathcal{M}_1(\boldsymbol{\Theta})\big\| 
\leq  s_{\max} \big\|  \sin\Theta( \widehat{U}^{t}_{Z}, U_Z^{\top}) \big\|_{\mathrm{F}}  \|\mathbf{W}\|,
\end{align}
where the first inequity follows from \Cref{lemma-singular-values-stat}.
Similarly, we obtain that
\begin{align}\label{eq-Frobenius-1.2}
(I.2) \leq  s_{\max} \big\| \sin\Theta( \widehat{U}^{t}_{Z}, U_Z^{\top}) \big\|_{\mathrm{F}} \|\mathbf{W}\| \quad \mbox{and} \quad
 (I.3) \leq  s_{\max}  \big\| \sin\Theta( \widehat{U}^{t}_{W}, U_W^{\top}) \big\|_{\mathrm{F}} \|\mathbf{W}\|.
\end{align}

Under \Cref{ass-stat-rank}, by \Cref{prop-sin-theta-stat} we have that 
\begin{equation}\label{eq-Frobenius-1.4}
   \P \big\{ \big \| \sin \Theta \big(\widehat{U}^t_Z, U_Z )  \big\|_{\mathrm{F}} >  \mathcal{E}_t\big\} \leq (n \vee L \vee T)^{-c_1},
\end{equation}
where
\[
\mathcal{E}_t =  C_1  \sigma^{-2} \bigg( \frac{ K^2 \sqrt{K} \sqrt{L} \log^2(n \vee L \vee T)}{tn}  + \frac{K^2L}{t^2}\bigg) + C_1 \sigma^{-1}  \bigg(K\sqrt{\frac{ K \log(n \vee L \vee T)}{tn}} + \frac{K\sqrt{L}}{t} \bigg), 
\]
with absolute constants $C_1, c_1 >0$.
Similarly,   under \Cref{ass-stat-rank}, we have that 
\begin{equation}\label{eq-Frobenius-1.5}
   \P \big\{ \big \| \sin \Theta \big(\widehat{U}^t_W, U_W )  \big\|_{\mathrm{F}}  >   \mathcal{E}_t'\big\} \geq (n \vee L \vee T)^{-c_1},
\end{equation}
where
\[
\mathcal{E}_t' =  C_1  \sigma^{-2} \bigg( \frac{K^2  \sqrt{Lr} \log^2(n \vee L \vee T)}{tn}  + \frac{K^2L}{t^2} \bigg) +   C_1 \sigma^{-1}   \bigg(K\sqrt{\frac{ L r\log(n \vee L \vee T)}{tn^2}} + \frac{K\sqrt{L}}{t} \bigg).  
\]
Combining \eqref{eq-Frobenius-1}, \eqref{eq-Frobenius-1.1}, \eqref{eq-Frobenius-1.2}, \eqref{eq-Frobenius-1.4} and \eqref{eq-Frobenius-1.5} yields that 
\begin{align}\label{eq-Frobenius-1.6}
   & \P \bigg\{ (I) >    C_2  \sigma^{-1}  \bigg( \frac{  K\sqrt{(K\vee r)L} \log^2(n \vee L \vee T)}{t}  + \frac{K  Ln}{t^2} \bigg)   \nonumber\\
  & \hspace{0.5cm}  +   C_2    \bigg( \sqrt{\frac{ (Lr+ nK)  \log(n \vee L \vee T)}{t}} + \frac{\sqrt{L}n}{t} \bigg)\bigg\} \geq  2(n \vee L \vee T)^{c},  
\end{align}
where $C_2 >0$ is an absolute constant. 

\medskip
\noindent
\textbf{Step 2.}
In this step, we focus on the term $(II)$ in \eqref{eq-Frobenius-0}. Note that 
\begin{align}\label{eq-Frobenius-2.0}
(II) = & \sup_{\substack{\mathbf{V}\in\R^{n\times n \times L}\\ \|\mathbf{V}\|_{\mathrm{F}}\leq 1}}
\big\langle \widehat{\boldsymbol{\Theta}}^{t} - \boldsymbol{\Theta}, \mathbf{V} \times_1 P_{\widehat{U}^t_Z} \times_2 P_{\widehat{U}^t_Z} \times_3 P_{\widehat{U}^t_W} \big\rangle \nonumber\\
\leq & \sup_{\substack{\mathbf{V}\in\R^{n\times n \times L}\\ \|\mathbf{V}\|_{\mathrm{F}}\leq 1}}
\big\langle \widehat{\boldsymbol{\Theta}}^{t} - \boldsymbol{\Theta} - \E \{\widehat{\boldsymbol{\Theta}}^{t}\} + \boldsymbol{\Theta}, \mathbf{V} \times_1 P_{\widehat{U}^t_Z} \times_2 P_{\widehat{U}^t_Z} \times_3 P_{\widehat{U}^t_W} \big\rangle \nonumber\\
& \hspace{0.5cm} + \sup_{\substack{\mathbf{V}\in\R^{n\times n \times L}\\ \|\mathbf{V}\|_{\mathrm{F}}\leq 1}}
\big\langle  \E \{\widehat{\boldsymbol{\Theta}}^{t}\} - \boldsymbol{\Theta}, \mathbf{V} \times_1 P_{\widehat{U}^t_Z} \times_2 P_{\widehat{U}^t_Z} \times_3 P_{\widehat{U}^t_W} \big\rangle \nonumber\\
= & (II.1) + (II.2).
\end{align}
For the term $(II.1)$ in \eqref{eq-Frobenius-2.0}, by 
Lemma E.5 in the Supplement of \cite{han2022optimal}, we have that
\begin{align}\label{eq-Frobenius-2.1}
    \P \bigg\{(II.1) > C_3  \sqrt{\frac{( K^2r_1 +nK+ Lr_1 ) \log(n \vee L \vee T)}{t}} \bigg\} \leq (n \vee L \vee T)^{-c_3},
\end{align}
where $C_3, c_3 >0$ are absolute constants.
For the term $(II.2)$ in \eqref{eq-Frobenius-2.0}, we can derive for some absolute constant $C_4 >0$ that
\begin{align}\label{eq-Frobenius-2.2}
    (II.2) \leq & \big\| \E \{\widehat{\boldsymbol{\Theta}}^{t}\} - \boldsymbol{\Theta} \big\|_{\mathrm{F}} \sup_{\substack{\mathbf{V}\in\R^{n\times n \times L}\\ \|\mathbf{V}\|_{\mathrm{F}}\leq 1}}\big\|\mathbf{V} \times_1 P_{\widehat{U}^t_Z} \times_2 P_{\widehat{U}^t_Z} \times_3 P_{\widehat{U}^t_W} \big\|_{\mathrm{F}} \leq C_4 \frac{n\sqrt{L}}{t}, 
\end{align}
where the last inequality follows from \eqref{prop-sin-1.5}.
Combining \eqref{eq-Frobenius-2.0}, \eqref{eq-Frobenius-2.1}, \eqref{eq-Frobenius-2.2} yields that
\begin{align}\label{eq-Frobenius-2.3}
  \P\bigg\{ (II) >    C_3  \sqrt{\frac{( K^2r_1 +nK+ Lr_1 ) \log(n \vee L \vee T)}{t}}  + C_4 \frac{n\sqrt{L}}{t}   \bigg\}  \leq   (n \vee L \vee T)^{-c}.
\end{align}

\medskip
\noindent
\textbf{Step 3.}
Combining \eqref{eq-Frobenius-0}, \eqref{eq-Frobenius-1.6} and \eqref{eq-Frobenius-2.3}, we obtain for some absolute constants $C_5, c_5 >0$ that 
\begin{align}\label{eq-for-final-1}
       & \P \big\{ \|\widetilde{\boldsymbol{\Theta}}^t-\boldsymbol{\Theta} \|_{\mathrm{F}} >    C_5  \sigma^{-1} \bigg( \frac{  K\sqrt{(K\vee r)L} \log^2(n \vee L \vee T)}{t}  + \frac{K Ln}{t^2} \bigg)   \nonumber\\
  & \hspace{0.5cm}  +   C_5    \bigg( \sqrt{\frac{ (nK + Lr +K^2r)  \log(n \vee L \vee T)}{t}} + \frac{\sqrt{L}n}{t} \bigg)\big\} \geq  (n \vee L \vee T)^{c_5}.  
\end{align}
Similarly, we can derive that
\begin{align}\label{eq-for-final-2}
       & \P \big\{ \|\widetilde{\boldsymbol{\Delta}}^t-\boldsymbol{\Delta} \|_{\mathrm{F}} >    C_5  \sigma^{-1} \bigg( \frac{  K\sqrt{(K\vee r)L} \log^2(n \vee L \vee T)}{t}  + \frac{K Ln}{t^2} \bigg)   \nonumber\\
  & \hspace{0.5cm}  +   C_5    \bigg( \sqrt{\frac{ (nK + Lr +K^2r)  \log(n \vee L \vee T)}{t}} + \frac{\sqrt{L}n}{t} \bigg)\big\} \geq  (n \vee L \vee T)^{c_5}.  
\end{align}
Combining \eqref{eq-for-final-1} and \eqref{eq-for-final-2}, we complete the proof.

\end{proof}

\subsection{Proof of Proposition \ref{thm-minimax-lower}}\label{app-proof-thm-minimax-lower}

\begin{proof}
    
The proof proceeds in five main steps. Throughout, for any $(\boldsymbol{\Theta},\boldsymbol{\Delta}) \in \mathcal P$, let $\mathcal{P}_{(\boldsymbol{\Theta}, \boldsymbol{\Delta})}$ denote the path distribution of the adjacency tensor sequence $\{\mathbf{A}^t\}_{t \in [T] \cup \{0\}}$ generated by \Cref{def-sarmsb} with transition probabilities $(\boldsymbol{\Theta},\boldsymbol{\Delta})$.

\medskip
\noindent
\textbf{Step 1: Kullback--Leibler(KL) divergence control.} Fix $(\boldsymbol{\Theta},\boldsymbol{\Delta}),(\boldsymbol{\Theta}',\boldsymbol{\Delta}')\in\mathcal P$.
In this step, we compute the KL divergence between the path distributions induced by $(\boldsymbol{\Theta}, \boldsymbol{\Delta})$ and  $(\boldsymbol{\Theta}', \boldsymbol{\Delta}')$.

The chain rule for KL divergence yields that 
\begin{align}\label{proof-lower-step1-0}
& \mathrm{KL}\big(\mathcal{P}_{(\boldsymbol{\Theta}, \boldsymbol{\Delta})}\| \mathcal{P}_{(\boldsymbol{\Theta}', \boldsymbol{\Delta}')} \big) \nonumber\\
= & \mathrm{KL}\big( \pi_{(\boldsymbol{\Theta}, \boldsymbol{\Delta})}(\mathbf{A}^0) \big\| \pi_{(\boldsymbol{\Theta}', \boldsymbol{\Delta}')}(\mathbf{A}^0) \big)
  + \sum_{t=1}^{T} 
\Big\{ \mathrm{KL}\big(\pi_{(\boldsymbol{\Theta}, \boldsymbol{\Delta})}(\mathbf{A}^t\mid \mathbf{A}^{t-1})
               \big\| \pi_{(\boldsymbol{\Theta}', \boldsymbol{\Delta}')}(\mathbf{A}^t\mid \mathbf{A}^{t-1})\big) \Big\} 
               \nonumber\\
= & \sum_{1 \leq i \leq j \leq n, l \in [L]} \mathrm{KL}\Big( \pi_{ (\boldsymbol{\Theta}_{i, j, l}, \boldsymbol{\Delta}_{i, j, l})} \big(\mathbf{A}_{i, j, l}^0\big) \Big\| \pi_{(\boldsymbol{\Theta}_{i, j, l}', \boldsymbol{\Delta}_{i, j, l}')}\big(\mathbf{A}_{i, j, l}^0 \big) \Big)      \nonumber\\
&  \hspace{0.5cm} + \sum_{1 \leq i \leq j \leq n, l \in [L]}  \sum_{t=1}^{T} 
\Big\{ \mathrm{KL}\Big(\pi_{(\boldsymbol{\Theta}_{i, j, l}, \boldsymbol{\Delta}_{i, j, l})} \big(\mathbf{A}_{i, j, l}^t\mid \mathbf{A}_{i, j, l}^{t-1} \big)
               \Big\| \pi_{(\boldsymbol{\Theta}_{i, j, l}', \boldsymbol{\Delta}_{i, j, l}')} \big(\mathbf{A}_{i, j, l}^t\mid \mathbf{A}_{i, j, l}^{t-1} \big)\Big) \Big\} 
               \nonumber\\
= & (I.1) + (I.2),               
\end{align}
where
\begin{itemize}
    \item $\pi_{(\boldsymbol{\Theta},\boldsymbol{\Delta})}(\mathbf A^0)$ is the marginal distribution
    of $\mathbf A^0$ under the parameters $(\boldsymbol{\Theta},\boldsymbol{\Delta})$;
    \item $\pi_{(\boldsymbol{\Theta},\boldsymbol{\Delta})}(\mathbf A^t \mid \mathbf A^{t-1})$ is the conditional distribution of $\mathbf A^t$ given $\mathbf A^{t-1}$ under the parameters $(\boldsymbol{\Theta},\boldsymbol{\Delta})$;
    \item $\pi_{(\boldsymbol{\Theta}_{i,j,l},\boldsymbol{\Delta}_{i,j,l})}(\mathbf A^0_{i,j,l})$ is the marginal distribution
    of the edge variable $\mathbf A^0_{i,j,l}$ under the parameters $(\boldsymbol{\Theta}_{i,j,l},\boldsymbol{\Delta}_{i,j,l})$; and
    \item $\pi_{(\boldsymbol{\Theta}_{i,j,l},\boldsymbol{\Delta}_{i,j,l})}(\mathbf A^t_{i,j,l} \mid \mathbf A^{t-1}_{i,j,l})$
    is the conditional distribution of $\mathbf A^t_{i,j,l}$ given its previous state $\mathbf A^{t-1}_{i,j,l}$ under the parameters $(\boldsymbol{\Theta}_{i,j,l},\boldsymbol{\Delta}_{i,j,l})$.
\end{itemize}

\medskip
\noindent
\textbf{Step 1.1.} In this sub-step, we analyze the term $(I.1)$ in \eqref{proof-lower-step1-0}.

By Lemma 15 in \cite{cai2024transfer}, we have that 
\begin{align}\label{proof-lower-step1-1.1}
(I.1)& = \sum_{1 \leq i \leq j \leq n, l \in [L]}  \mathrm{KL}\Big( \pi_{ (\boldsymbol{\Theta}_{i, j, l}, \boldsymbol{\Delta}_{i, j, l})} \big(\mathbf{A}_{i, j, l}^0\big) \Big\| \pi_{(\boldsymbol{\Theta}_{i, j, l}', \boldsymbol{\Delta}_{i, j, l}')}\big(\mathbf{A}_{i, j, l}^0 \big) \Big)   \nonumber\\
= & \sum_{1 \leq i \leq j \leq n, l \in [L]}   \mathrm{KL}\bigg( \mbox{Bernoulli} \bigg(\frac{\boldsymbol{\Theta}_{i, j, l}}{ \boldsymbol{\Theta}_{i, j, l} + \boldsymbol{\Delta}_{i, j, l}}\bigg)\Big\|  \mbox{Bernoulli} \bigg(\frac{\boldsymbol{\Theta}_{i, j, l}'}{ \boldsymbol{\Theta}_{i, j, l}' + \boldsymbol{\Delta}_{i, j, l}'} \bigg) \bigg)   \nonumber\\
\leq &   \sum_{1 \leq i \leq j \leq n, l \in [L]}   \frac{( \boldsymbol{\Theta}_{i, j, l}' + \boldsymbol{\Delta}_{i, j, l}' )^2 } {\boldsymbol{\Theta}_{i, j, l}' \boldsymbol{\Delta}_{i, j, l}'}   
\bigg( \frac{\boldsymbol{\Theta}_{i, j, l}}{ \boldsymbol{\Theta}_{i, j, l} + \boldsymbol{\Delta}_{i, j, l}}  -   \frac{\boldsymbol{\Theta}_{i, j, l}'}{ \boldsymbol{\Theta}_{i, j, l}' + \boldsymbol{\Delta}_{i, j, l}'} \bigg)^2 \nonumber\\
\leq  &  \sum_{1 \leq i \leq j \leq n, l \in [L]}  \frac{1}{c_{\min}^2} \bigg( \frac{\boldsymbol{\Theta}_{i, j, l}}{ \boldsymbol{\Theta}_{i, j, l} + \boldsymbol{\Delta}_{i, j, l}}  -   \frac{\boldsymbol{\Theta}_{i, j, l}'}{ \boldsymbol{\Theta}_{i, j, l}' + \boldsymbol{\Delta}_{i, j, l}'} \bigg)^2,
\end{align}
where the last inequality follows from $\boldsymbol{\Theta}, \boldsymbol{\Delta}, \boldsymbol{\Theta}', \boldsymbol{\Delta}' \in [c_{\min}, 1-c_{\min}]^{n \times n \times L}$.

Let $g: [c_{\min}, 1 - c_{\min}]^2 \to [0,1]$ de defined by  
\[
   g (\theta, \delta) = \frac{\theta}{ \theta + \delta}.
\]
It follows that 
\begin{equation}\label{proof-lower-step1-1.2}
 \| \nabla g (\theta, \delta)\|_2^2 = \bigg\| \bigg(\frac{\delta}{(\theta+\delta)^2}, -\frac{\theta}{(\theta+\delta)^2} \bigg)\bigg\|_2^2 =  \frac{\delta^2}{(\theta+\delta)^4} + \frac{\theta^2}{(\theta+\delta)^4} \leq \frac{2}{c_{\min}^4}. 
\end{equation}
Fix $1 \leq i \leq j \leq n$ and $l \in [L]$. By the mean value theorem, there exists a point 
$(\widetilde{\boldsymbol{\Theta}}_{i, j, l}, \widetilde{\boldsymbol{\Delta}}_{i, j, l})$
on the line segment between 
$(\boldsymbol{\Theta}_{i, j, l}, \boldsymbol{\Delta}_{i, j, l})$ and 
$(\boldsymbol{\Theta}_{i, j, l}', \boldsymbol{\Delta}_{i, j, l}')$ such that 
\begin{align}\label{proof-lower-step1-1.3}
    \big \vert g (\boldsymbol{\Theta}_{i, j, l}, \boldsymbol{\Delta}_{i, j, l}) - g (\boldsymbol{\Theta}_{i, j, l}', \boldsymbol{\Delta}_{i, j, l}')    \big \vert = & \nabla g(\widetilde{\boldsymbol{\Theta}}_{i, j, l}, \widetilde{\boldsymbol{\Delta}}_{i, j, l})^{\top} \big( \boldsymbol{\Theta}_{i, j, l} - \boldsymbol{\Theta}_{i, j, l}', \boldsymbol{\Delta}_{i, j, l} - \boldsymbol{\Delta}_{i, j, l}'\big)  \nonumber\\
     \leq &  \| g(\widetilde{\boldsymbol{\Theta}}_{i, j, l}, \widetilde{\boldsymbol{\Delta}}_{i, j, l}) \|_2 
     \sqrt{ (\boldsymbol{\Theta}_{i, j, l} - \boldsymbol{\Theta}_{i, j, l}')^2 + (\boldsymbol{\Delta}_{i, j, l} - \boldsymbol{\Delta}_{i, j, l}')^2} \nonumber\\
     \leq &   \frac{\sqrt{2}}{c_{\min}^2}  \sqrt{ (\boldsymbol{\Theta}_{i, j, l} - \boldsymbol{\Theta}_{i, j, l}')^2 + (\boldsymbol{\Delta}_{i, j, l} - \boldsymbol{\Delta}_{i, j, l}')^2},
\end{align}
where the first inequality follows from Cauchy--Schwartz inequality and the last inequality follows from \eqref{proof-lower-step1-1.2}.
Combining \eqref{proof-lower-step1-1.1} and \eqref{proof-lower-step1-1.3} yields that 
\begin{align}\label{proof-lower-step1-1.4}
   (I.1) \leq  2c_{\min}^{-6}  \Big( \| \boldsymbol{\Theta} - \boldsymbol{\Theta}'\|_{\mathrm{F}}^2  + \big\| \boldsymbol{\Delta} - \boldsymbol{\Delta}'\big\|_{\mathrm{F}}^2 \Big).
\end{align}

\medskip
\noindent
\textbf{Step 1.2.} In this sub-step, we analyze the term $(I.2)$ in \eqref{proof-lower-step1-0}.

By  \eqref{eq-tran}, we obtain that 
\begin{align}\label{proof-lower-step1-2.1}
(I.2) = &\sum_{1 \leq i \leq j \leq n, l \in [L]}  \sum_{t=1}^{T}   \Big\{ \pi_{ (\boldsymbol{\Theta}_{i, j, l}, \boldsymbol{\Delta}_{i, j, l})} \big(\mathbf{A}_{i, j, l}^{t-1} = 1 \big)  \mathrm{KL}\Big( \mbox{Bernoulli} \big(1 -   \boldsymbol{\Delta}_{i, j, l} \big)\Big\|  \mbox{Bernoulli} \big(1 -   \boldsymbol{\Delta}_{i, j, l}' \big)\Big)  
\nonumber\\
& +  \pi_{ (\boldsymbol{\Theta}_{i, j, l}, \boldsymbol{\Delta}_{i, j, l})} \big(\mathbf{A}_{i, j, l}^{t-1} = 0 \big)  \mathrm{KL}\Big( \mbox{Bernoulli} \big(\boldsymbol{\Theta}_{i, j, l} \big)\Big\|  \mbox{Bernoulli} \big(  \boldsymbol{\Theta}_{i, j, l}' \big)\Big) \Big\}
\nonumber\\
\leq & \sum_{1 \leq i \leq j \leq n, l \in [L]}  \sum_{t=1}^{T}   \bigg\{  \frac{\boldsymbol{\Theta}_{i, j, l}}{ \boldsymbol{\Theta}_{i, j, l} + \boldsymbol{\Delta}_{i, j, l}} \frac{\big(\boldsymbol{\Delta}_{i, j, l} - \boldsymbol{\Delta}_{i, j, l}' \big)^2}{(1- \boldsymbol{\Delta}_{i, j, l}')\boldsymbol{\Delta}_{i, j, l}' } +  \frac{\boldsymbol{\Delta}_{i, j, l}}{ \boldsymbol{\Theta}_{i, j, l} + \boldsymbol{\Delta}_{i, j, l}} \frac{\big(\boldsymbol{\Theta}_{i, j, l} - \boldsymbol{\Theta}_{i, j, l}' \big)^2}{(1- \boldsymbol{\Theta}_{i, j, l}')\boldsymbol{\Theta}_{i, j, l}' } \bigg\}  \nonumber\\
\leq &  \frac{T}{c_{\min}^3} \Big( \| \boldsymbol{\Theta} - \boldsymbol{\Theta}'\|_{\mathrm{F}}^2  + \big\| \boldsymbol{\Delta}- \boldsymbol{\Delta}'\big\|_{\mathrm{F}}^2 \Big),
\end{align}
where the first inequality follows from Lemma 15 in \cite{cai2024transfer} and  the last inequality follows from $\boldsymbol{\Theta}, \boldsymbol{\Delta}, \boldsymbol{\Theta}', \boldsymbol{\Delta}' \in [c_{\min}, 1-c_{\min}]^{n \times n \times L}$.

\medskip
\noindent
\textbf{Step 1.3.} Combining \eqref{proof-lower-step1-0}, \eqref{proof-lower-step1-1.4} and  \eqref{proof-lower-step1-2.1},  we can derive that 
\begin{align}\label{proof-lower-step1-3}
& \mathrm{KL}\big(\mathcal{P}_{(\boldsymbol{\Theta}, \boldsymbol{\Delta})}\| \mathcal{P}_{(\boldsymbol{\Theta}', \boldsymbol{\Delta}')} \big)  \leq      \big(2c_{\min}^{-6} + Tc_{\min}^{-3}\big) \Big( \| \boldsymbol{\Theta} - \boldsymbol{\Theta}'\|_{\mathrm{F}}^2  + \big\| \boldsymbol{\Delta} - \boldsymbol{\Delta}'\big\|_{\mathrm{F}}^2 \Big).
\end{align}

\medskip
\noindent
\textbf{Step 2: Connectivity probability construction.}
In this step, we construct a lower bound by fixing a balanced community membership matrix $Z$ and varying the connectivity tensors $\mathbf{W}, \mathbf{M} \in [c_{\min}, 1-c_{\min}]^{K \times K \times L}$ with  $\max\{\mbox{rank}_3(\mathbf{W}),\mbox{rank}_3(\mathbf{M}) \} \leq r$.

By the Varshamov--Gilbert Lemma, there exists a collection of tensors $\{\mathbf{S}^{(m)}\}_{m=1}^{M} \subset \{0,1\}^{K \times K \times r}$ such that for some absolute constants $c_1, c_2 >0$, 
\begin{equation}\label{proof-lower-step2-1}
\sum_{ 1 \leq i \leq j \leq K, l \in [r]}\mathbbm{1}
\Big\{\big(\mathbf{S}^{(m)}\big)_{i, j, l} \neq \big(\mathbf{S}^{(m')}\big)_{i, j, l}\Big\} \geq c_1 rK^2, \quad \forall m \neq m' \in [M] 
\end{equation}
and
\begin{equation}\label{proof-lower-step2-2}
 M \geq 2^{c_2K^2r}.
\end{equation}

For each $m \in [M]$, define
\begin{equation}\label{proof-lower-step2-3}
\mathbf{H}^{(m)} = \mathbf{H}^{(0)} + \delta \mathbf{S}^{(m)} \in [0, 1]^{K \times K \times r}, \quad \mbox{where } \big(\mathbf{H}^{(0)} \big)_{i, j, l}= \frac{1}{2},  \quad \forall  1 \leq i \leq j \leq K, l \in [r].
\end{equation}

Next, partition $[L]$ into $r$ disjoint subsets $G_1, \ldots, G_r$ with
\[
 \big\vert \vert G_s\vert - L/r  \big\vert \leq 1, \quad \forall s \in [r].
\]
 Define $U \in \mathbb{R}^{L \times r}$ by
\[
    U_{l, s} =
    \begin{cases}
        1, & l \in G_s,\\
        0, & \text{otherwise.}
    \end{cases} \quad
    \forall l \in [L], s \in [r].
 \]
Then
\begin{equation}\label{proof-lower-step2-4}
U^{\top}U = \mathrm{diag} \big( \vert G_1\vert,  \ldots, \vert G_r\vert \big) \quad \mbox{and} \quad 
\lfloor L/r \rfloor I_r  \preceq U^{\top}U\preceq \lceil L/r \rceil I_r. 
\end{equation}

Now define
\begin{equation}\label{proof-lower-step2-5}
\mathbf{W}^{(m)}= \mathbf{H}^{(m)} \times_3 U, \quad 
\mathbf{M}^{(m)}= \mathbf{H}^{(m)} \times_3 U, 
\end{equation}
and
\begin{equation}\label{proof-lower-step2-6}
\boldsymbol{\Theta}^{(m)}=\mathbf{W}^{(m)} \times_1 Z \times_2 Z, \quad 
\boldsymbol{\Delta}^{(m)}=\mathbf{M}^{(m)} \times_1 Z \times_2 Z.
\end{equation}

Let $s_k = \sum_{i=1}^n Z_{i,k}$ be the size of community $k \in [K]$.  
We then have
\begin{align}
\|\boldsymbol{\Theta}^{(m)}-\boldsymbol{\Theta}^{(m')}\|_{\mathrm{F}}^2
= &   \sum_{i, j \in [n], l \in [L]} \Big(\boldsymbol{\Theta}^{(m)}_{i, j, l}- \boldsymbol{\Theta}^{(m')}_{i, j, l}  \Big)^2
=  \sum_{i, j \in [n], l \in [L]}  \Big(  Z_i^{\top} \big(\mathbf{W}_{:, :, l}^{(m)} - \mathbf{W}_{:, :, l}^{(m')}  \big)   Z_j\Big)^2
\nonumber\\
= & \sum_{a, b \in [K], l \in [L]}    s_a  s_b \big(\mathbf{W}_{a, b, l}^{(m)} - \mathbf{W}_{a, b, l}^{(m')}\big)^2
\nonumber\\
= &  \sum_{a, b \in [K], l \in [L]}    s_a  s_b \bigg\{\sum_{u=1}^r U_{l, u} \ \big(\mathbf{H}_{a, b, u}^{(m)}  -\mathbf{H}_{a, b, u}^{(m')} \big)  \bigg\}^2 \nonumber\\
=& \sum_{a, b \in [K]}   s_a  s_b \delta^2 \sum_{ l \in [L]}  \bigg\{  \sum_{u=1}^r U_{l, u}   \big( \mathbf{S}^{(m)}_{a, b, u} - \mathbf{S}^{(m')}_{a, b, u}  \big) \bigg\}^2 \nonumber
\end{align}
where  the second equality follows from \eqref{proof-lower-step2-6},
the fourth equality follows from \eqref{proof-lower-step2-5} and
the last equality follows from \eqref{proof-lower-step2-3}.

Since \eqref{proof-lower-step2-4} and $\max_{k \in [K]} s_k \leq C_{\sigma} \min_{k \in [K]} s_k$, for any $m\neq m' \in [M]$,  we have that
\begin{align}\label{proof-lower-step2-7}
    \|\boldsymbol{\Theta}^{(m)}-\boldsymbol{\Theta}^{(m')}\|_{\mathrm{F}}^2
    \leq  C_{\sigma}^2 \frac{n^2}{K^2} \lceil L/r \rceil \delta^2  \sum_{a, b \in [K]}  \big\|    \mathbf{S}^{(m)}_{a, b, :} - \mathbf{S}^{(m')}_{a, b, :} \big\|_2^2
    \leq   C_{\sigma}^2 n^2 L \delta^2,
\end{align}
and 
\begin{align}\label{proof-lower-step2-8}
    \|\boldsymbol{\Theta}^{(m)}-\boldsymbol{\Theta}^{(m')}\|_{\mathrm{F}}^2
    \geq  C_{\sigma}^{-2} \frac{n^2}{K^2} \lfloor L/r \rfloor  \delta^2  \sum_{a, b \in [K]}  \big\|   \mathbf{S}^{(m)}_{a, b, :} - \mathbf{S}^{(m')}_{a, b, :} \big\|^2
    \geq   C_{\sigma}'  n^2 L \delta^2,
\end{align}
where the last inequality follows from \eqref{proof-lower-step2-1} and $ C_{\sigma}' > 0$ is an absolute constant. 
An identical argument yields that 
\begin{align}\label{proof-lower-step2-9}
    C_{\sigma}' n^2 L \delta^2  \leq  \|\boldsymbol{\Delta}^{(m)}-\boldsymbol{\Delta}^{(m')}\|_{\mathrm{F}}^2
    \leq  C_{\sigma}^2 n^2 L \delta^2, \quad \forall m\neq m' \ \in [M]
\end{align}

For any $m \neq m' \in [M]$, combining  \eqref{proof-lower-step1-3}, \eqref{proof-lower-step2-7} and \eqref{proof-lower-step2-9},  the KL divergence between $\mathcal{P}_{(\boldsymbol{\Theta}^{(m)}, \boldsymbol{\Delta}^{(m)})}$ and $\mathcal{P}_{(\boldsymbol{\Theta}^{(m')}, \boldsymbol{\Delta}^{(m')})}$ is bounded by
\[
\mathrm{KL}\big(\mathcal{P}_{(\boldsymbol{\Theta}^{(m)}, \boldsymbol{\Delta}^{(m)})}\| \mathcal{P}_{(\boldsymbol{\Theta}^{(m')}, \boldsymbol{\Delta}^{(m')})} \big) 
\leq   2  \big(2c_{\min}^{-6} + Tc_{\min}^{-3}\big) C_{\sigma}^2 n^2 L \delta^2
\leq  C_1 T n^2 L \delta^2,
\]
where $C_1 =  2(2c_{\min}^{-6} +c_{\min}^{-3}) C_{\sigma}^2$. 

Since $\log(M) \geq  c_2 \log(2) rK^2$, choosing 
\[
\delta^2 = \frac{ c_2 \log(2) rK^2 }{ 8 C_1 T  Ln^2}, 
\]
ensures that
 \[
\frac{1}{M^2}\sum_{m, m' \in [M]} \mathrm{KL}\big(\mathcal{P}_{(\boldsymbol{\Theta}^{(m)}, \boldsymbol{\Delta}^{(0)})}\| \mathcal{P}_{(\boldsymbol{\Theta}^{(m')},\boldsymbol{\Delta}^{(0)})} \big)  \le \log(M)/8,
 \]
By Fano’s lemma, it follows that
\begin{align}\label{proof-lower-step2-10}
\inf_{(\widehat{\boldsymbol{\Theta}}, \widehat{\boldsymbol{\Delta}})}\sup_{(\boldsymbol{\Theta}, \boldsymbol{\Delta}) \in \mathcal{P}}
\mathbb{E}_{\boldsymbol{\Theta}, \boldsymbol{\Delta}}\Big\{\big\| \widehat{\boldsymbol{\Theta}} -\boldsymbol{\Theta}\big\|_{\mathrm{F}}^2 + \big\| \widehat{\boldsymbol{\Delta}} -\boldsymbol{\Delta}\big\|_{\mathrm{F}}^2 \Big\}
\geq  C_2 \frac{rK^2}{T},
\end{align}
for some constant $C_2 > 0$.

\medskip
\noindent
\textbf{Step 3: Node community construction.}
In this step, we construct a lower bound by varying the balanced community membership matrix $Z$, while fixing the connectivity tensors $\mathbf{W}, \mathbf{M} \in [c_{\min}, 1-c_{\min}]^{K \times K \times L}$ with $\mbox{rank}_3(\mathbf{W})= \mbox{rank}_3(\mathbf{M})= 1 \leq r$.

Fix all layers to share identical two-level connectivity structures
\[
\mathbf{W}_{a, a, l} = p_1, \quad \forall a \in [K],  l \in [L],
\quad \mathbf{W}_{a, b, l}= p_2, \quad \forall a \neq b \in [K],  l \in [L] 
\]
and
\[
\mathbf{M}_{a, a, l} = p_3, \quad \forall a \in [K],  l \in [L],
\quad \mathbf{M}_{a, b, l}= p_4, \quad \forall a \neq b \in [K],  l \in [L], 
\]
where $p_1,p_2,p_3,p_4 \in [c_{\min}, 1-c_{\min}]$ and $ p_1 \neq p_2$, $p_3\neq p_4$.
Since $\mathbf{W}$ and $\mathbf{M}$ do not vary across layers,  the mode-3 matricizations $\mathcal{M}_3(\mathbf{W})$ and $\mathcal{M}_3(\mathbf{M})$ both have rank $1 \le r$, 
satisfying the imposed rank constraint.

Let $\{Z^{(1)}, \ldots, Z^{(M)}  \}$ denote a collection of community membership matrices such that: 
\begin{itemize}
    \item  For the first $\lceil n/2 \rceil$ nodes, the community assignments are identical across all $\{Z^{(m)}\}_{m \in [M]}$, with each community containing either $\lfloor n/(2K) \rfloor$ or $\lceil n/(2K) \rceil$   nodes.
    \item For the remaining $\lfloor n/2\rfloor$ nodes, the community assignments are also balanced and for any $ m \neq m' \in [M]$, 
\begin{equation} \label{proof-lower-step3-1}
  d_H(Z^{(m)}, Z^{(m')}) \geq c_3 n/2, 
\end{equation}
where  $d_H(Z, Z') = \vert \{ i \in [n] : Z_i^{(m)} \neq Z^{(m')}_i \}\vert $  denotes the Hamming distance between community assignments and $c_3 >0 $ is an absolute constant. 
\end{itemize}

By \Cref{lemma-packing}, there exists an absolute constant $c_4 >0$ such that 
\[
M  \geq \exp \big\{2^{-1}c_4 n \log (K)\big\}. 
\] 

For each $m \in [M]$, define the corresponding connection probability tensors as
\[ 
\boldsymbol{\Theta}^{(m)} = \mathbf{W} \times_1 Z^{(m)} \times_2 Z^{(m)} \quad \mbox{and} \quad
\boldsymbol{\Delta}^{(m)} = \mathbf{M} \times_1  Z^{(m)} \times_2  Z^{(m)}.
\]

Fix $m \neq m' \in [M]$ and define
\[
  \mathcal{S} = \big\{ i \in [n] \colon Z_{i}^{(m)} \neq Z_{i}^{(m')} \big\}.
\]
For any $k \in [K]$, let
\[
\mathcal{S}^{(m)}_k =  \big\{ i \in \mathcal{S} \colon Z_{i, k}^{(m)} = 1 \big\}, \quad
\mathcal{S}^{(m')}_k = \big\{ i \in \mathcal{S} \colon Z_{i, k}^{(m')} = 1 \big\}, \quad
\mathcal{N}_{k} = \big\{ i \notin \mathcal{S} \colon Z_{i, k}^{(m)} = 1 \big\}.
\]
By the construction of  $\{Z^{(m)}\}_{m \in [M]}$ and \eqref{proof-lower-step3-1}, there exist absolute constants $c_4' \geq c_3' >0$ such that
\begin{equation} \label{proof-lower-step3-2-1}
   c_3 n/2   \leq \vert  \mathcal{S} \vert  \leq   \lfloor n / 2  \rfloor,  \quad \max\big\{ \vert  \mathcal{S}^{(m)}_k\vert, \vert \mathcal{S}^{(m')}_k\vert  \big\} \leq   c_4' \frac{n}{K},  \quad\forall k \in [K],
\end{equation}   
and
\begin{equation} \label{proof-lower-step3-2-2}   
\quad c_3' \frac{n}{2K}   \leq \vert \mathcal{N}_{k}  \vert<  c_4' \frac{n}{K}, \quad\forall k \in [K].
\end{equation}

Define the functions $a, b\colon [n] \to [k]$ by
\[
  a(i) = \sum_{k \in [K]} k Z_{i, k}^{(m)} \quad \mbox{and} \quad
  b(i) = \sum_{k \in [K]} k Z_{i, k}^{(m')}.
\]
Then we have that  
\begin{align}\label{proof-lower-step3-3} 
& \big\| \boldsymbol{\Theta}^{(m)} -  \boldsymbol{\Theta}^{(m')}  \big\|_{\mathrm{F}}^2 \nonumber\\
= & \sum_{l\in [L]} \sum_{i,j \in [n]}  \big(  \mathbf{W}_{a(i), a(j), l} - \mathbf{W}_{b(i), b(j), l} \big)^2  \nonumber\\
= & \sum_{l\in [L]} \sum_{i \in  \mathcal{S} } \sum_{j \notin  \mathcal{S} }  \big(  \mathbf{W}_{a(i), a(j), l} - \mathbf{W}_{b(i), b(j), l} \big)^2  +  \sum_{l=1} \sum_{i \notin  \mathcal{S} } \sum_{j \in  \mathcal{S} }  \big(  \mathbf{W}_{a(i), a(j), l} - \mathbf{W}_{b(i), b(j), l} \big)^2 
\nonumber\\ 
& + \sum_{l\in [L]} \sum_{i \in  \mathcal{S} } \sum_{j \in  \mathcal{S} }  \big(  \mathbf{W}_{a(i), a(j), l} - \mathbf{W}_{b(i), b(j), l} \big)^2  + \sum_{l=1} \sum_{i \notin  \mathcal{S} } \sum_{j \notin  \mathcal{S} }  \big(  \mathbf{W}_{a(i), a(j), l} - \mathbf{W}_{b(i), b(j), l} \big)^2  \nonumber\\
=  & \sum_{l\in [L]} \sum_{i \in  \mathcal{S} } \sum_{j \notin  \mathcal{S} }  \big(  \mathbf{W}_{a(i), a(j), l} - \mathbf{W}_{b(i), b(j), l} \big)^2  +  \sum_{l=1} \sum_{i \notin  \mathcal{S} } \sum_{j \in  \mathcal{S} }  \big(  \mathbf{W}_{a(i), a(j), l} - \mathbf{W}_{b(i), b(j), l} \big)^2 
\nonumber\\ 
& + \sum_{l=1} \sum_{i \in  \mathcal{S} } \sum_{j \in  \mathcal{S} }  \big(  \mathbf{W}_{a(i), a(j), l} - \mathbf{W}_{b(i), b(j), l} \big)^2  \nonumber\\
& = (II.1) + (II.2) + (II.3).
\end{align}

For the term $(II.1)$ in \eqref{proof-lower-step3-3}, we have that
\begin{align}\label{proof-lower-step3-4} 
    (II.1) = &  \sum_{l\in [L]} \sum_{i \in  \mathcal{S} }\sum_{k=1}^{K} \sum_{j \in  \mathcal{N}_k}  \big(  \mathbf{W}_{a(i), k, l} - \mathbf{W}_{b(i), k, l} \big)^2 \nonumber\\
    = & \sum_{l\in [L]}\sum_{i \in  \mathcal{S} } \Big\{ \vert \mathcal{N}_{a(i)} \vert \big(  \mathbf{W}_{a(i), a(i), l} - \mathbf{W}_{b(i),  a(i), l} \big)^2 +   \vert \mathcal{N}_{b(i)} \vert  \big(  \mathbf{W}_{a(i), b(i), l} - \mathbf{W}_{b(i),  b(i), l} \big)^2  \big\} \nonumber\\
    = &  \sum_{l\in [L]} \sum_{i \in  \mathcal{S} } \Big\{ \vert \big(
    \mathcal{N}_{a(i)} \vert  +   \vert \mathcal{N}_{b(i)} \vert  \big) ( p_1  - p_2 )^2  \big\}.
\end{align}
Combining \eqref{proof-lower-step3-2-1}, \eqref{proof-lower-step3-2-2} and \eqref{proof-lower-step3-4}, we can derive that 
\begin{align} \label{proof-lower-step3-5} 
(II.1)  \leq &  \sum_{l=1}^L \sum_{i \in  \mathcal{S} } \frac{2c_4'n}{K} ( p_1  - p_2 )^2  =  L  \vert \mathcal{S} \vert \frac{2c_4'n}{K} ( p_1  - p_2 )^2   \leq  \frac{ c_4' Ln^2 ( p_1  - p_2 )^2 }{K},
 \end{align}
 and 
\begin{align}\label{proof-lower-step3-6} 
(II.1)  \geq &  \sum_{l=1}^L \sum_{i \in  \mathcal{S} } \frac{c_3'n}{K} ( p_1  - p_2 )^2  =  L  \vert \mathcal{S} \vert \frac{c_3'n}{K} ( p_1  - p_2 )^2 
\geq    \frac{ c_3 c_3' Ln^2 ( p_1  - p_2 )^2 }{2K}.
 \end{align}
 Similarly, for the term $(II.2)$ in \eqref{proof-lower-step3-3}, we have that
\begin{align}\label{proof-lower-step3-7} 
  \frac{ c_3 c_3' Ln^2 ( p_1  - p_2 )^2 }{2K} \leq (II.2) \leq  \frac{ c_4' Ln^2 ( p_1  - p_2 )^2 }{K}
 \end{align}

For the term  $(II.3)$ in \eqref{proof-lower-step3-3}, we have that
\begin{align}\label{proof-lower-step3-8} 
   0 \leq (II.3)  \leq &    \sum_{l\in [L]} \sum_{i \in  \mathcal{S} } \sum_{j \in  \mathcal{S} }\Big\{ \mathbbm{1} \{ j\in  \mathcal{S}_{a(i)}^{(m)},  j \notin  \mathcal{S}_{b(i)}^{(m')} \big\}  + \mathbbm{1} \{ j\notin  \mathcal{S}_{a(i)}^{(m)},  j \in  \mathcal{S}_{b(i)}^{(m')} \big\}   \Big\}
  ( p_1  - p_2 )^2 \nonumber\\
  \leq &   \sum_{l\in [L]} \sum_{i \in  \mathcal{S} } \big( \big\vert \mathcal{S}_{a(i)}^{(m)} \big\vert+  \big\vert  \mathcal{S}_{b(i)}^{(m')} \big\vert  \big) ( p_1  - p_2 )^2  \nonumber\\
   \leq &   \sum_{l\in [L]} \sum_{i \in  \mathcal{S} }  \frac{2 c_4' n}{K} ( p_1  - p_2 )^2  \leq \frac{L c_4' n^2}{K} ( p_1  - p_2 )^2,
\end{align}
where the third and last inequalities follow from \eqref{proof-lower-step3-2-1}. 

Combining \eqref{proof-lower-step3-3},  \eqref{proof-lower-step3-5}, \eqref{proof-lower-step3-6}, \eqref{proof-lower-step3-7} and \eqref{proof-lower-step3-8}, we have that for any $ m \neq m' \in [M]$, 
\begin{equation}\label{proof-lower-step3-9}
   C_3' \frac{L  n^2}{K} ( p_1  - p_2 )^2 \leq   \big\| \boldsymbol{\Theta}^{(m)} -  \boldsymbol{\Theta}^{(m')}  \big\|_{\mathrm{F}}^2 \leq C_3 \frac{L  n^2}{K} ( p_1  - p_2 )^2,
\end{equation}
where $C_3 \geq C_3' >0$ are absolute constants.
Applying the same arguments, we have that for any $ m \neq m' \in [M]$, 
\begin{equation}\label{proof-lower-step3-10}
   C_3' \frac{L  n^2}{K} ( p_3  - p_4 )^2 \leq  \big\| \boldsymbol{\Delta}^{(m)} -  \boldsymbol{\Delta}^{(m')}  \big\|_{\mathrm{F}}^2 \leq C_3 \frac{L  n^2}{K} ( p_3  - p_4 )^2.
\end{equation}

Combining \eqref{proof-lower-step1-3}, \eqref{proof-lower-step3-9} and \eqref{proof-lower-step3-10},  the corresponding KL divergence between $\mathcal{P}_{(\boldsymbol{\Theta}^{(m)}, \boldsymbol{\Delta}^{(m)})}$ and $\mathcal{P}_{(\boldsymbol{\Theta}^{(m')}, \boldsymbol{\Delta}^{(m')})}$ is bounded by
\begin{align}
\mathrm{KL}\big(\mathcal{P}_{(\boldsymbol{\Theta}^{(m)}, \boldsymbol{\Delta}^{(m)}}\| \mathcal{P}_{(\boldsymbol{\Theta}^{(m')}, \boldsymbol{\Delta}^{(m')})} \big)  \leq      C_4   \frac{TLn^2}{K}\big\{(p_1 - p_2)^2 + (p_3 - p_4)^2 \big\}, \nonumber
\end{align}
where $C_4 >0$ is an absolute constant

Choosing 
\[
(p_1 - p_2)^2 = (p_3 - p_4)^2=  \frac{  c_4 n K \log (K)  }{ 16 C_4 T  Ln^2},
\]
yields
 \[
\frac{1}{M^2}\sum_{m, m' \in [M]} \mathrm{KL}\big(\mathcal{P}_{(\boldsymbol{\Theta}^{(m)}, \boldsymbol{\Delta}^{(m)})}\| \mathcal{P}_{(\boldsymbol{\Theta}^{(m')},\boldsymbol{\Delta}^{(m')})} \big)  \le \log(M)/8.
 \]
By Fano’s lemma, it follows that
\begin{align}\label{proof-lower-step3-11}
\inf_{(\widehat{\boldsymbol{\Theta}}, \widehat{\boldsymbol{\Delta}})}\sup_{(\boldsymbol{\Theta}, \boldsymbol{\Delta}) \in \mathcal{P}}
\mathbb{E}_{\boldsymbol{\Theta}, \boldsymbol{\Delta}}\Big\{\big\| \widehat{\boldsymbol{\Theta}} -\boldsymbol{\Theta}\big\|_{\mathrm{F}}^2 + \big\| \widehat{\boldsymbol{\Delta}} -\boldsymbol{\Delta}\big\|_{\mathrm{F}}^2 \Big\}
\geq  C_4' \frac{n \log(K)}{T}.
\end{align}
for some absolute constant $C_4' > 0$. 

\medskip
\noindent
\textbf{Step 4: Layer construction.}
In this step, we construct a lower bound by varying the continuous layer loadings while fixing a balanced community membership matrix $Z$.
Throughout we ensure that $\max\{\mathrm{rank}_3(\mathbf W),\mathrm{rank}_3(\mathbf M)\}\le r$ and $\mathbf{W}, \mathbf{M} \in [c_{\min},1-c_{\min}]^{K \times K \times L}$.

We choose $r$ orthonormal columns from the $K^2$-dimensional Euclidean space $\R^{K^2}$ and reshape them into matrices $\{S^{(s)}\}_{s=1}^r \subset \R^{K\times K}$. By construction, we have that 
\begin{equation}\label{proof-lower-step4-0p}
\langle S^{(s)},S^{(t)}\rangle = \sum_{a,b \in [K]} S^{(s)}_{a,b} S^{(t)}_{a,b} = \mathbbm{1} \{s=t\},  \quad \forall s, t \in[r], 
\quad \|S^{(s)}\|_{\infty}\leq 1, \quad \forall s \in[r].
\end{equation}

For any $s \in [r]$, let 
\[
A^{(s)} = Z S^{(s)} Z^{\top}
\]
Since 
\[
\max_{j \in [K]} \sum_{i=1}^n Z_{i, j} \leq C_{\sigma} \min_{j \in [K]}\sum_{i=1}^n Z_{i, j},
\]
we obtain that
\begin{equation}\label{proof-lower-step4-1p}
\langle A^{(s)}, A^{(t)}\rangle = 0, \quad \forall s \neq t \in[r],  
\quad 
\frac{n^2}{C_\sigma^{2} K^2} \leq \|A^{(s)}\|_{\mathrm F}^2 \leq\ \frac{C_\sigma^{2} n^2}{K^2}, \quad  \forall s \in[r] .
\end{equation}

By the Varshamov--Gilbert lemma, there exist absolute constants $c_5,c_6>0$ and a collection $\{ V^{(1)},\ldots,V^{(M)} \} \subset \{0, 1\}^{L\times r}$
such that
\begin{equation}\label{proof-lower-step4-4p}
\big\| V^{(m)} - V^{(m')}\big\|_{\mathrm F}^2 \geq c_5  L r, \quad  \forall m\neq m' \in [M], \quad  \mbox{and} \quad M \geq 2^{c_6 L r}. 
\end{equation}

Let $\alpha>0$ be specified later. For any $m \in [M]$, define
\[
\mathbf{W}^{(m)}_{:,:,l} =  \frac{1}{2}\mathbf{1}_K\mathbf{1}_K^{\top} + \alpha \sum_{s=1}^r V_{l,s}^{(m)} S^{(s)}
\quad  \mbox{and} \quad
\mathbf{M}^{(m)}_{:,:,l} = \frac{1}{2}\mathbf{1}_K\mathbf{1}_K^{\top} + \alpha \sum_{s=1}^r V_{l,s}^{(m)} S^{(s)}, \quad \forall l\in[L].
\]
It follow that  $\mathrm{rank}_3(\mathbf W^{(m)})\leq r$ and $\mathrm{rank}_3(\mathbf M^{(m)})\leq r$.
Moreover,  for each $a, b\in [K]$ and $l \in [L]$
\[
 \frac{1}{2} - \alpha r    \leq  \mathbf{W}^{(m)}_{a,b,l}\leq \frac{1}{2} + \alpha r
\quad \mbox{and} \quad
 \frac{1}{2} - \alpha r    \leq  \mathbf{M}^{(m)}_{a,b,l}  \leq \frac{1}{2} + \alpha r, 
\]
which implies that
\begin{equation}\label{proof-lower-step4-2}
\mathbf{W}^{(m)}, \mathbf{M}^{(m)} \in [c_{\min}, 1-c_{\min}]^{K \times K \times L} \quad \mbox{if} \quad \alpha r   \leq   \frac{1}{2} - c_{\min}.
\end{equation}
For any $m \in [M]$, we then define
\[
\boldsymbol{\Theta}^{(m)} = \mathbf W^{(m)} \times_1 Z \times_2 Z
\quad  \mbox{and} \quad
\boldsymbol{\Delta}^{(m)} = \mathbf M^{(m)} \times_1 Z \times_2 Z,
\]
For any $m \in [M]$ and $l\in[L]$, we have that 
\begin{equation}\label{proof-lower-step4-3p}
\boldsymbol{\Theta}^{(m)}_{:,:,l} 
= \frac{1}{2} \mathbf{1}_n\mathbf{1}_n^{\top} + \alpha \sum_{s=1}^r V_{l,s}^{(m)} A^{(s)},
\quad  \mbox{and} \quad
\boldsymbol{\Delta}^{(m)}_{:,:,l} 
= \frac{1}{2} \mathbf{1}_n\mathbf{1}_n^{\top}  + \alpha \sum_{s=1}^r V_{l,s}^{(m)} A^{(s)}.
\end{equation}

For any $m \neq m' \in [M] $, we can derive that 
\begin{align}\label{proof-lower-step4-5p}
\big\| \boldsymbol{\Theta}^{(m)} - \boldsymbol{\Theta}^{(m')} \big\|_{\mathrm F}^2
= &  \sum_{l=1}^L \bigg\| \alpha \sum_{s=1}^r \big( V^{(m)}_{l,s} - V^{(m')}_{l,s} \big) A^{(s)} \bigg\|_{\mathrm F}^2 \nonumber\\
= & \alpha^2 \sum_{l=1}^L \sum_{s=1}^r \big( V^{(m)}_{l,s} - V^{(m')}_{l,s} \big)^2  \big\|A^{(s)}\big\|_{\mathrm F}^2 
\end{align}
where the first equality follows from \eqref{proof-lower-step4-3p} and  the final equality follows from \eqref{proof-lower-step4-1p}.
Combining \eqref{proof-lower-step4-1p} and \eqref{proof-lower-step4-5p} yields that for any $m \neq m' \in [M] $,
\begin{align}\label{proof-lower-step4-6p-1}
\big\| \boldsymbol{\Theta}^{(m)} - \boldsymbol{\Theta}^{(m')} \big\|_{\mathrm F}^2  \geq \frac{ \alpha^2 n^2}{C_\sigma^{2} K^2}   \big \| V^{(m)} - V^{(m')} \big\|_{\mathrm F}^2 
\geq \frac{ c_5 \alpha^2 n^2 Lr }{C_\sigma^{2} K^2},
\end{align}
where the last inequality follows from \eqref{proof-lower-step4-4p}.
Similarly, for any $m \neq m' \in [M] $, we have that
\begin{align}\label{proof-lower-step4-6p-2}
\big\| \boldsymbol{\Theta}^{(m)} - \boldsymbol{\Theta}^{(m')} \big\|_{\mathrm F}^2  \leq \frac{ C_\sigma^{2} \alpha^2 n^2}{ K^2}   \big \| V^{(m)} - V^{(m')} \big\|_{\mathrm F}^2 
\leq \frac{ C_\sigma^{2} \alpha^2 n^2 Lr }{ K^2}. 
\end{align}

Applying the same arguments, for any $m \neq m' \in [M] $, we obtain that 
\begin{align}\label{proof-lower-step4-7p}
\frac{ c_5 \alpha^2 n^2 Lr }{C_\sigma^{2} K^2} \leq \big\| \boldsymbol{\Delta}^{(m)} - \boldsymbol{\Delta
}^{(m')} \big\|_{\mathrm F}^2 \leq \frac{ C_\sigma^{2} \alpha^2 n^2 Lr }{ K^2}. 
\end{align}

Combining \eqref{proof-lower-step1-3}, \eqref{proof-lower-step4-6p-2} and \eqref{proof-lower-step4-7p},  the corresponding KL divergence between $\mathcal{P}_{(\boldsymbol{\Theta}^{(m)}, \boldsymbol{\Delta}^{(m)})}$ and $\mathcal{P}_{(\boldsymbol{\Theta}^{(m')}, \boldsymbol{\Delta}^{(m')})}$ is bounded by 
\[
\mathrm{KL}\big(\mathcal P_{(\boldsymbol{\Theta}^{(m)},\boldsymbol{\Delta}^{(m)})}\big\| \mathcal P_{(\boldsymbol{\Theta}^{(m')},\boldsymbol{\Delta}^{(m')})}\big) \leq C_5 \frac{ \alpha^2 T n^2 Lr }{ K^2}.
\]
for some absolute constant $C_5 >0$.

Choosing 
\[
 \alpha^2 =  \frac{  c_6 \log(2) K^2  }{ 8 C_5 Tn^2   },
\]
and using \eqref{proof-lower-step4-2} together with the assumption $rK\leq c\sqrt{T} n$ for some sufficiently small constant $c >0$, for any $m \in [M]$, we have that  
\[
\mathbf{W}^{(m)},  \mathbf{M}^{(m)} \in [c_{\min}, 1-c_{\min}]^{K \times K \times L}.
\]
Moreover, 
 \[
\frac{1}{M^2}\sum_{m, m' \in [M]} \mathrm{KL}\big(\mathcal{P}_{(\boldsymbol{\Theta}^{(m)}, \boldsymbol{\Delta}^{(m)})}\| \mathcal{P}_{(\boldsymbol{\Theta}^{(m')},\boldsymbol{\Delta}^{(m')})} \big)  \leq \log(M)/8.
 \]
By Fano’s lemma, it follows that
\begin{align}\label{proof-lower-step4-8p}
\inf_{(\widehat{\boldsymbol{\Theta}}, \widehat{\boldsymbol{\Delta}})}\sup_{(\boldsymbol{\Theta}, \boldsymbol{\Delta}) \in \mathcal{P}}
\mathbb{E}_{\boldsymbol{\Theta}, \boldsymbol{\Delta}}\Big\{\big\| \widehat{\boldsymbol{\Theta}} -\boldsymbol{\Theta}\big\|_{\mathrm{F}}^2 + \big\| \widehat{\boldsymbol{\Delta}} -\boldsymbol{\Delta}\big\|_{\mathrm{F}}^2 \Big\}
\geq  C_6 \frac{Lr}{T}.
\end{align}
for some absolute constant $C_6 > 0$.

\medskip
\noindent
\textbf{Step 5.} Combining \eqref{proof-lower-step2-10},  \eqref{proof-lower-step3-11} and \eqref{proof-lower-step4-8p}, we finally obtain that  there exists an absolute constant $C' >0$ such that 
\[
\inf_{(\widehat{\boldsymbol{\Theta}}, \widehat{\boldsymbol{\Delta}})}\sup_{(\boldsymbol{\Theta}, \boldsymbol{\Delta}) \in \mathcal{P}}
\mathbb{E}_{\boldsymbol{\Theta}, \boldsymbol{\Delta}}\Big\{\big\| \widehat{\boldsymbol{\Theta}} -\boldsymbol{\Theta}\big\|_{\mathrm{F}}^2 + \big\| \widehat{\boldsymbol{\Delta}} -\boldsymbol{\Delta}\big\|_{\mathrm{F}}^2 \Big\}
\geq   C' \frac{ rK^2+  n\log(K)+Lr}{ T },
\]
which completes the proof. 

\end{proof}

\subsection{Proof of Proposition \ref{thm-comunity-recovery-stat}}\label{app-proof-thm-comunity-recovery-stat}

\begin{proof}[Proof of  \Cref{thm-comunity-recovery-stat}]

Let 
\[
O_{K}^t = \arginf_{O \in \mathbb{O}_{K \times K}}\| \widehat{U}^t_Z - U_Z O   \|. 
\]
Recall that $U_Z$ is defined \eqref{def-U-Z}. For any $i\in[n]$, let $\gamma_i^t$ and $\widehat{\gamma}_i^t$ denote the $i$th rows of
$U_Z O_{K}^t$ and $\widehat{U}_Z^t$, respectively.
Let $\{\tilde{u}^{(t,k)}\}_{k=1}^K$ denote the $K$ distinct row patterns of $\widetilde{U}^t$. For any node $i \in [n]$, denote its true community by $z(i) \in [K]$ and its assigned  community  at time $t \in [T]$ by $\hat{z}^t(i) \in [K]$.  For any $k \in [K]$, let $\mathcal{V}_k = \{i \in [n] \colon z(i) = k \}$.  

For any $i, j \in [n]$, we have that 
\[
\|\gamma_i^t - \gamma_j^t\|_2 = \|Z_i D_Z^{-1} O_{K}^t - Z_j D_Z^{-1} O_{K}^t\|_2 = \|Z_i D_Z^{-1} - Z_j D_Z^{-1}\|_2 =  
\begin{cases}
    0, & \mbox{if } Z_i = Z_j, \\
    \geq \sqrt{\frac{2}{s_{\max}}}, & \mbox{otherwise}.
\end{cases}
\]
Thus, there are exactly $K$ distinct rows in $U_Z O_{K}^t$, say
$\{\gamma_k^{t,\ast}\}_{k=1}^K$, and $U_Z O_{K}^t \in \mathcal{M}_{n,K}$. By \Cref{ass-stat-rank}$(ii)$, for any $k \neq l \in [K]$,  we have that
\begin{equation}\label{theorem-community-stat-1} 
\|\gamma_k^{t,\ast}  - \gamma_l^{t,\ast}\|_2 \geq \sqrt{\frac{2}{s_{\max}}} \geq \sqrt{\frac{2K}{C_{\sigma} n}}.
\end{equation}

Define the event
\begin{equation}\label{stat-U-diff-bound}
\mathcal{A} =\Big\{ \| \widehat{U}^t_Z - U_Z O_{K}^t \|_{\mathrm{F}} \leq    \sqrt{K}\mathcal{E}_t,  \forall t \in [T]\Big\},
\end{equation}
where $ \mathcal{E}_t$ is defined in \eqref{eq-sin-theta-stat}. By \Cref{prop-sin-theta-stat}, $\mathcal{A}$ holds with probability at least $1- (n \vee L \vee T)^{-c}$. From now on, we assume the event $\mathcal{A}$ holds.  

\medskip
\noindent
\textbf{Step 1: Misclassification error bound.} 
Define
\begin{equation}\label{def-J-stat}
\mathcal{J} = \bigg\{ i\in[n] \colon \|\widehat{\gamma}_i^t - \gamma_i^t\|_2 \leq  \frac{1}{9}\sqrt{ \frac{2K}{C_{\sigma} n}} \bigg\}.
\end{equation}
Then, from  \eqref{stat-U-diff-bound},
\[
\vert \mathcal{J}^c \vert  \frac{2K}{81C_{\sigma} n}   
\leq \sum_{i\in  \mathcal{J}^c} \|\widehat{\gamma}_i^t - \gamma_i^t\|_2^2
\leq \| \widehat{U}^t_Z - U_Z O_{K}^t \|_{\mathrm{F}}^2 \leq  K \mathcal{E}_t^2,
\]
Hence, it holds that 
\begin{equation}\label{bound-Jc-stat}
\vert \mathcal{J}^c \vert  \leq \frac{81 C_{\sigma}n  \mathcal{E}_t^2 }{2}.
\end{equation}

\medskip
\noindent
\textbf{Step 1.1: Existence and uniqueness of nearest cluster centers.}
In this sub-step, we aim to show that for each $k \in [K]$, there exists a unique   $k' \in [K]$ and   
\[
\big\| \tilde{u}^{(t,k')} -\gamma_{k}^\ast \big\|_2 \leq \frac{1}{3} \sqrt{\frac {2K}{C_{\sigma} n}}.
\]

We first prove the existence. Suppose instead that for some $k \in [K]$, 
\[
\big\| \tilde{u}^{(t,k')} - \gamma_k^{t,\ast} \big\|_2
> \frac{1}{3} \sqrt{\frac{2K}{C_{\sigma} n}}
\quad \mbox{for all } k' \in [K].
\]
In particular, for any $j \in \mathcal{V}_k \cap \mathcal{J}$, the center assigned to $j$ is $\tilde{u}^{(t,\hat{z}^t(j))}$, and therefore
\begin{align*}
\|\widehat{\gamma}_j^t - \tilde{u}^{(t,\hat{z}^t(j))}\|_2
\geq & \|\tilde{u}^{(t,\hat{z}^t(j))} - \gamma_k^{t,\ast}\|_2 - \|\widehat{\gamma}_j^t - \gamma_k^{t,\ast}\|_2  \nonumber\\
= & \|\tilde{u}^{(t,\hat{z}^t(j))} - \gamma_k^{t,\ast}\|_2 - \|\widehat{\gamma}_j^t - \gamma_j\|_2   \\
> & \frac{1}{3}\sqrt{\frac{2K}{C_{\sigma} n}} - \frac{1}{9}\sqrt{\frac{2K}{C_{\sigma} n}}
= \frac{2}{9}\sqrt{\frac{2K}{C_{\sigma} n}},
\end{align*}
where the first equality follows from $j \in \mathcal{V}_k$, and second inequality follows from $j \in \mathcal{J}$ and \eqref{def-J-stat}.

By \Cref{ass-stat-rank}$(ii)$,  each community $k \in [K]$ satisfies
$ \vert \mathcal{V}_k\vert \geq C_{\sigma}^{-1}  n/K$. By \eqref{bound-Jc-stat}, we obtain that 
\[
\vert  \mathcal{V}_k \cap \mathcal{J} \vert 
= \vert\mathcal{V}_k\vert - \vert\mathcal{V}_k \cap \mathcal{J}^c\vert \geq
  \frac{n}{ C_{\sigma}K} - \frac{81 C_{\sigma}n  \mathcal{E}_t^2 }{2}  \geq   \frac{n}{2C_{\sigma}K}. 
\]
where the last inequality follows from \eqref{eq-ass-singular-1}. 
Then,  by the definition of $\widetilde{U}^t$, we have that 
\begin{align}\label{theorem-community-stat-2}
\| \widehat{U}^t_Z - U_Z O_{K}^t \|_{\mathrm{F}}^2 
\geq & \| \widehat{U}^t_Z - \widetilde{U}^t  \|_{\mathrm{F}}^2   \geq 
\vert  \mathcal{V}_k \cap \mathcal{J} \vert    \cdot
\|\widehat{\gamma}_j^t - \tilde{u}^{(t,\hat{z}^t(j))}\|_2^2 
\nonumber\\
> &   \frac{n}{ 2 C_{\sigma}K}  
\frac{8K}{81 C_{\sigma}n  }
=  \frac{4}{81C_{\sigma}^2}. 
\end{align}
Combining \eqref{stat-U-diff-bound} and \eqref{theorem-community-stat-2}  yields that
\[
    \frac{4}{81C_{\sigma}^2} \leq \| \widehat{U}^t_Z - U_Z O_{K}^t \|_{\mathrm{F}}^2 \leq K\mathcal{E}_t^2   
\] 
This contradicts \eqref{eq-ass-singular-1}. 
Thus, we prove the existence.

We now prove the uniqueness.  By \eqref{theorem-community-stat-1}, any two distinct population centers $\gamma_{k}^\ast$ and $\gamma_{l}^\ast$ with $k \neq l$ are separated by at least $\sqrt{ 2K/(C_{\sigma} n)}$.  
Therefore, no single empirical center $\tilde{u}^{(t,k')}$ can lie within distance $\sqrt{ 2K/(C_{\sigma} n)}/3$ of two different population centers simultaneously.
This establishes the uniqueness.

\medskip
\noindent
\textbf{Step 1.2: Correct classification of nodes in $\mathcal{J}$.}
For each $k\in [K]$, we relabel $\tilde{u}^{(t,k)}$ according to its nearest population center $\gamma_{k'}^\ast$ established in \textbf{Step 1.1}. 

For any $j \in \mathcal{V}_k \cap J$, we have that 
\[
\|\widehat{\gamma}_j^t - \tilde{u}^{(t,k))} \|_2 \leq 
\|\widehat{\gamma}_j^t - \gamma_k^{t,\ast}\|_2
  + \|\gamma_k^{t,\ast} - \tilde{u}^{(t,k)}\|_2
\leq \frac{1}{9}\sqrt{\frac{2K}{C_{\sigma} n}} +  \frac{1}{3} \sqrt{\frac{2K}{C_{\sigma} n}} \leq 
\frac{4}{9}\sqrt{\frac{2K}{C_{\sigma} n}},
\]
where the second inequality follows from $j \in \mathcal{J}$ and \eqref{def-J-stat}. 
For any $l \neq k$, we have that 
\[
\| \widehat{\gamma}_j^t - \tilde{u}^{(t,l)} \|
\geq 
\|\gamma_k^{t,\ast} - \gamma_l^{t,\ast}\|
  - \|\gamma_l^\ast - \tilde{u}^{(t,l)} \|
\geq  \sqrt{\frac{2K}{C_{\sigma} n}}  - \frac{1}{3} \sqrt{\frac{2K}{C_{\sigma} n}}
\geq \frac{2}{3} \sqrt{\frac{2K}{C_{\sigma} n}}.
\]

Since
\[
\frac{4}{9} < \frac{2}{3},
\]
each $\widehat{\gamma}_j^t$ with $j \in \mathcal{V}_k \cap \mathcal{J}$ is closer to  $\tilde{u}^{(t,k)}$ than to any $\tilde{u}^{(t,l)}$ for $l\neq k$.  
Thus all nodes in $\mathcal{J}$ are correctly clustered, and any misclassification can only occur in 
$\mathcal{J}^c$.

Therefore, using \eqref{bound-Jc-stat}, 
\[
\mathcal{L}(\widehat{Z}^t, Z)
= n^{-1} \vert \mathcal{J}^c \vert \leq  n^{-1} \frac{81 C_{\sigma}n  \mathcal{E}_t^2 }{2}  \leq \frac{81 C_{\sigma} }{2} \mathcal{E}_t^2,
\]
which proves \eqref{thm-com-1}.

\medskip
\noindent
\textbf{Step 2. Exact recovery.} 

\medskip
\noindent
\textbf{Step 2.1.}
We first show that $Z_i \neq Z_j$ implies $\widehat{Z}_i^t \neq \widehat{Z}_j^t$.

Suppose, for contradiction, that there exist $i, j \in [n]$ such that $Z_i \neq Z_j$ but $\widehat{Z}_i^t = \widehat{Z}_j^t$, which implies that $\widetilde{U}^t_i = \widetilde{U}^t_j$.
By the definition of $\widetilde{U}^t$ and $U_Z O_K^t \in \mathcal{M}_{n ,K}$,
\begin{equation}\label{theorem-community-stat-3}
\|U_Z O_K^t - \widetilde{U}^t\|_F^2 
\leq \|\widehat{U}_Z^t - \widetilde{U}^t\|_F^2 + \|\widehat{U}_Z^t - U_Z O_K^t\|_F^2 \\
\leq 2 \|\widehat{U}_Z^t - U_Z O_K^t \|_F^2 \leq 2 K\mathcal{E}_t^2.
\end{equation}

On the other hand, we have that
\begin{equation}\label{theorem-community-stat-4}
\|U_Z O_K^t - \widetilde{U}^t\|_F^2 
\geq \|\gamma_i^t - \widetilde{U}_i^t\|_2^2 + \|\gamma_j^t- \widetilde{U}_j^t\|_2^2 
\geq \|\gamma_i^t - \gamma_j^t\|_2^2. 
\end{equation}
Combining \eqref{theorem-community-stat-1},  \eqref{theorem-community-stat-3} and  \eqref{theorem-community-stat-4} yields that
\[
\sqrt{\frac{2K}{C_{\sigma} n}}
\leq \|U_Z O_K - \widetilde{U}^t\|_F 
\leq \sqrt{2}\|\widehat{U}_Z^t - U_Z O_K\|_F  \leq     \sqrt{2K}\mathcal{E}_t, 
\]
contradicting \eqref{eq-ass-singular}.  
Hence $Z_i\neq Z_j$ implies $\widehat{Z}_i^t\neq \widehat{Z}_j^t$.

\medskip
\noindent
\textbf{Step 2.2.}
Next we show that $Z_i = Z_j$ implies $\widehat{Z}_i^t = \widehat{Z}_j^t$.  

Assume instead that there exist $ i \neq j \in [n]$ such that $Z_i = Z_j$ but $\widehat{Z}_i^t \neq \widehat{Z}_j^t$, which implies
$\widetilde{U}_i^t \neq \widetilde{U}_j^t$. 
From \textbf{Step 1}, $\widetilde{U}^t$ has exactly $K$ distinct rows, each corresponding to a true community.  
Thus, because  $\widetilde{U}_i^t \neq \widetilde{U}_j^t$ but $Z_i = Z_j$, there must exist some $k \neq i,j$ such that $Z_i = Z_j \neq Z_k$ and $\widetilde{U}_j^t = \widetilde{U}_k^t$.

Construct $\widetilde{U}^{t,*}$ by replacing the $j$th row of $\widetilde{U}^t$
with $\widetilde{U}_i^t$.  Then,
\begin{align*}
\|\widehat{U}_Z^t - \widetilde{U}^{t, *}\|_F^2 - \|\widehat{U}_Z^t - \widetilde{U}^t\|_F^2
&= \|\widehat{\gamma}_j^t - \widetilde{U}_i^t\|_2^2 - \|\widehat{\gamma}_j^t - \widetilde{U}_k^t\|_2^2 \\
&= \|\widehat{\gamma}_j^t - \gamma_j^t + \gamma_i^t - \widetilde{U}_i^t\|_2^2 
   - \|\widehat{\gamma}_j^t - \gamma_j^t + \gamma_i^t - \gamma_k^t + \gamma_k^t - \widetilde{U}_k^t\|_2^2 \\
&\leq \|\widehat{\gamma}_j^t - \gamma_j^t + \gamma_i^t - \widetilde{U}_i^t\|_2^2 +
\|\widehat{\gamma}_j^t - \gamma_j^t  +\gamma_k^t - \widetilde{U}_k^t\|_2^2 
- \| \gamma_i^t - \gamma_k^t \|_2^2 \\
&\leq 2\|\widehat{U}_Z^t - U_Z O_K^t\|_F^2 + 2\|U_Z O_K^t - \widetilde{U}^t\|_F^2 - \frac{2K}{C_{\sigma} n} \\
&\leq 4 K\mathcal{E}_t^2
   - \frac{2K}{C_{\sigma} n} \\
&< 0,
\end{align*}
a contradiction. Therefore $Z_i = Z_j$ implies $\widehat{Z}_i^t = \widehat{Z}_j^t$.

Combining \textbf{Steps 2.1} and \textbf{2.2}, we conclude that under \eqref{eq-ass-singular}, $\widehat{Z}^t$ exactly recovers the clustering structure encoded by $Z$, completing the proof of \eqref{thm-com-2}.

\end{proof}

\subsection{Proofs for Appendix~\ref{sec-app-add-stat}}\label{app-proof-app-stat}

This section provides proofs of the results stated in Appendix~\ref{sec-app-add-stat}. The proofs are organized as follows: Proposition~\ref{lemma-sub-Gaussian} is proved in Appendix~\ref{sec-app-1}, Lemma~\ref{lemma-singular-values-stat} in Appendix~\ref{sec-app-2}, \Cref{prop-sin-theta-stat} in Appendix~\ref{sec-app-3}, Lemma~\ref{lemma-packing} in Appendix~\ref{sec-app-4} and Lemma~\ref{lemma-binomial} in Appendix~\ref{sec-app-5}.

\subsubsection{Proof of Proposition \ref{lemma-sub-Gaussian}}
\label{sec-app-1}
\begin{proof}[Proof of Proposition  \ref{lemma-sub-Gaussian}]
By \Cref{lemma-mixing-stat},  for any $1 \leq i \leq j \leq n$ and $l \in [L]$, the sequence $\{\mathbf{A}_{i,j,l}^{t}\}_{t \geq 0}$ is strongly mixing. Since strong mixing is preserved under measurable transformations, any process constructed from finitely many such variables is also strongly mixing. In particular, the process
\[
\big\{\mathbf{A}_{i,j,l}^{t}(1-\mathbf{A}_{i,j,l}^{t-1})\big\}_{t\geq 1}
\]
is strongly mixing.
For any $t \in [T]$, define
\[
\mathbf{X}_{i, j, l}^{t} = t^{-1}\sum_{u=1}^{t} \mathbf{A}_{i,j,l}^{u-1}  \quad \mbox{and} \quad \mathbf{Y}_{i, j, l}^{t} =  t^{-1}\sum_{u=1}^{t}\mathbf{A}_{i,j,l}^{u}(1-\mathbf{A}_{i,j,l}^{u-1}).
\]
Let 
\[
\boldsymbol{\Pi}_{i, j, l} =  \E\{\mathbf{A}_{i, j, l}^0\}, \quad \forall  1 \leq i \leq j  \leq n, l \in [L].
\]
By stationarity (\Cref{prop-stat}),
\begin{align}
 \E\big\{\mathbf{X}_{i, j, l}^{t}\big\} =   t^{-1}\sum_{u=1}^{t} \E\big\{\mathbf{A}_{i,j,l}^{u-1} \big\} = \boldsymbol{\Pi}_{i, j, l},
 \nonumber
\end{align}
and
\begin{align}
 \E\big\{\mathbf{Y}_{i, j, l}^{t}\big\} = &  t^{-1}\sum_{u=1}^{t} \E\big\{\mathbf{A}_{i,j,l}^{u}(1-\mathbf{A}_{i,j,l}^{u-1}) \big\} \Big\} =  ( 1-\boldsymbol{\Pi}_{i, j, l}) \boldsymbol{\Theta}_{i, j, l}. 
 \nonumber
\end{align}

Applying the Bernstein inequality for $\alpha$-mixing sequences \citep[e.g.~Theorem 1 in][]{merlevede2009bernstein}, for any $\varepsilon >0$, we obtain  that
\begin{align}\label{eq-prop1-1}
\P\big\{ \big\vert  \mathbf{Y}_{i, j, l}^{t}  -  \E\big\{\mathbf{Y}_{i, j, l}^{t}\big\}  \big\vert \geq  \varepsilon \big\} \leq \exp\bigg\{- \frac{ c_{0} t\varepsilon ^{2}}
         {1 + \varepsilon \log (t)\log\log (t)}\bigg\},
\end{align}
and similarly
\begin{align}\label{eq-prop1-2}
\P\big\{  \big\vert  \mathbf{X}_{i, j, l}^{t}  -  \E\big\{\mathbf{X}_{i, j, l}^{t, k}\big\}  \big\vert \geq \varepsilon \big\} \leq \exp\bigg\{- \frac{ c_{0} t \varepsilon ^{2}}
         {1 +  \varepsilon  \log (t)\log\log (t)}\bigg\},
\end{align}
where $C_0, c_0 >0$ are absolute constants. 

Let 
\[
\mathcal{A} = \Big\{ \max\big\{ \big\vert  \mathbf{X}_{i, j, l}^{t, k}  -  \E\big\{\mathbf{X}_{i, j, l}^{t, k}\big\}  \big\vert, \big\vert  \mathbf{Y}_{i, j, l}^{t, k}  -  \E\big\{\mathbf{Y}_{i, j, l}^{t, k}\big\}  \big\vert  \big\} \leq  \varepsilon  \Big\}.
\]
Using a union bound argument with \eqref{eq-prop1-1} and \eqref{eq-prop1-2}, we can derive that
\begin{equation}\label{eq-prop1-3}
\P \{  \mathcal{A} \}  \geq 1 - 2\exp\bigg\{- \frac{ c_{0} \varepsilon^{2}}
         {1 + \varepsilon\log (t)\log\log (t)}\bigg\}.
\end{equation}
Note that 
\[
\widehat{\boldsymbol{\Theta}}_{i,j, l}^{t}    = \frac{t^{-1}\sum_{u=1}^t \mathbf{A}_{i, j, l}^u (1 - \mathbf{A}_{i, j, l}^{u-1} )}{t^{-1}\sum_{u=1}^t (1 - \mathbf{A}_{i,j, l}^{u-1} )} =   \frac{\mathbf{Y}_{i,j, l}^{t}}{ 1 - \mathbf{X}_{i,j, l}^{t}}.
\]
Under $\mathcal{A}$, it holds that
\begin{align}\label{eq-prop1-4}
   \frac{ \E\big\{\mathbf{Y}_{i, j, l}^{t}\big\} - \varepsilon } { 1 - \E\big\{\mathbf{X}_{i, j, l}^{t}\big\} + \varepsilon}
 \leq \widehat{\boldsymbol{\Theta}}_{i,j, l}^t \leq \frac{\E\big\{\mathbf{Y}_{i, j, l}^{t}\big\} + \varepsilon } { 1 - \E\big\{\mathbf{X}_{i, j, l}^{t}\big\}-\varepsilon}.
\end{align}
Next, still under the event $\mathcal{A}$, we obtain that 
\begin{align}\label{eq-prop1-5}
   \boldsymbol{\Theta}_{i,j, l}  -  \frac{ \E\big\{\mathbf{Y}_{i, j, l}^{t}\big\} - \varepsilon } { 1 - \E\big\{\mathbf{X}_{i, j, l}^{t, k}\big\} + \varepsilon} 
= & \boldsymbol{\Theta}_{i,j, l}  -  \frac{  (1- \boldsymbol{\Pi}_{i, j, l})\boldsymbol{\Theta}_{i, j, l} - \varepsilon}{   (1- \boldsymbol{\Pi}_{i, j, l})+\varepsilon}
=  \frac{  \varepsilon  (1+ \boldsymbol{\Theta}_{i,j,l} ) }
       {1-\boldsymbol{\Pi}_{i, j, l}  + \varepsilon}
\leq   \varepsilon \frac{  (1+c_{\min}) }{c_{\min}}.
\end{align}
Similarly, for any $\varepsilon \leq c_{\min}/2 $, 
\begin{align}\label{eq-prop1-6}
 \frac{\E\big\{\mathbf{Y}_{i, j, l}\big\} + \varepsilon } { 1 - \E\big\{\mathbf{X}_{i, j, l}\big\}-\varepsilon} -   \boldsymbol{\Theta}_{i,j, l}   
=  & \frac{  (1- \boldsymbol{\Pi}_{i, j, l})\boldsymbol{\Theta}_{i, j, l} +\varepsilon}{ 1-   \boldsymbol{\Pi}_{i, j, l}  - \varepsilon} - \boldsymbol{\Theta}_{i,j, l} 
=   \frac{  \varepsilon  (1+ \boldsymbol{\Theta}_{i,j,l} )}{1-\boldsymbol{\Pi}_{i, j, l}  - \varepsilon}\nonumber\\ 
\leq  &   \frac{  \varepsilon  (1+ \boldsymbol{\Theta}_{i,j,l} ) }{1-\boldsymbol{\Pi}_{i, j, l} - c_{\min}/2}
\leq    \varepsilon  \frac{ 2 (1+c_{\min})}{c_{\min}}.
\end{align}

Combining \eqref{eq-prop1-3}, \eqref{eq-prop1-4}, \eqref{eq-prop1-5} and \eqref{eq-prop1-6},  we conclude that 
\begin{equation}\label{eq-Theta-sub-begin}
\P \Big\{ \big\vert  \widehat{\boldsymbol{\Theta}}_{i,j, l}^t -  \boldsymbol{\Theta}_{i,j, l}  \big\vert  \geq  C_0 \varepsilon  \Big\} \leq  2 \exp\bigg\{- \frac{ c_{0} t\varepsilon^{2}}
         {1 + \varepsilon\log (t)\log\log (t)}\bigg\},
\end{equation}
where $C_0 >0$ is an absolute constant.
Similarly, we have that 
\[
\P \Big\{ \big\vert  \widehat{\boldsymbol{\Delta}}_{i,j, l}^t -  \boldsymbol{\Delta}_{i,j, l}  \big\vert  \geq  C_0 \varepsilon  \Big\} \leq  2 \exp\bigg\{- \frac{ c_{0} t\varepsilon^{2}}
         {1 + \varepsilon\log (t)\log\log (t)}\bigg\},
\]
Taking 
\[
  \varepsilon = \sqrt{ \frac{ \log(n \vee L \vee T)}{t}},
\]
and by a union bound argument, we have that 
\[
\P \Big\{ \big\vert  \widehat{\boldsymbol{\Theta}}_{i,j, l}^t -  \boldsymbol{\Theta}_{i,j, l}  \big\vert +  \big\vert  \widehat{\boldsymbol{\Delta}}_{i,j, l}^t -  \boldsymbol{\Delta}_{i,j, l}  \big\vert  \geq  2C_0  \sqrt{ \frac{ \log(n \vee L \vee T)}{t}}  \Big\} \leq  4 \exp\bigg\{- \frac{ c_{0}    \log(n \vee L \vee T) }
         {1 +  \sqrt{ \frac{ \log(n \vee L \vee T)}{t}}\log\log (t)}\bigg\}
\]
Since $t \{\log(t) \log\log(t)\}^{-2} \gtrsim \log(n \vee L \vee T )$,  we obtain that
\[
\P \Big\{ \big\vert  \widehat{\boldsymbol{\Theta}}_{i,j, l}^t -  \boldsymbol{\Theta}_{i,j, l}  \big\vert +  \big\vert  \widehat{\boldsymbol{\Delta}}_{i,j, l}^t -  \boldsymbol{\Delta}_{i,j, l}  \big\vert  \geq  C_1  \sqrt{ \frac{ \log(n \vee L \vee T)}{t}}  \Big\} \leq    (n \vee L \vee T)^{-c_1}
\]
where $C_1 > 0$ and $c_1 > 3$. By a union bound argument, there exists an absolute constant $c_2 >0$ such that 
\[
\P \Big\{ \big\vert  \widehat{\boldsymbol{\Theta}}_{i,j, l}^t -  \boldsymbol{\Theta}_{i,j, l}  \big\vert +  \big\vert  \widehat{\boldsymbol{\Delta}}_{i,j, l}^t -  \boldsymbol{\Delta}_{i,j, l}  \big\vert >   C_1  \sqrt{ \frac{ \log(n \vee L \vee T)}{t}},   \forall 1 \leq i \leq j \leq n, l\in [L] \bigg\}  \leq    (n \vee L \vee T)^{-c_2}.
\]

Let 
\[
u_0=\frac{C_0}{\log(t) \log\log (t)}.
\]
For any $q \geq 1$, by integration by parts, we obtain that 
\begin{align}\label{eq-Theta-sub-1}
 & \E \big\{ \big\vert  \widehat{\boldsymbol{\Theta}}_{i,j, l}^t -  \boldsymbol{\Theta}_{i,j, l}  \big\vert^q \big\}  \nonumber\\
 = & \int_{0}^{\infty} q u^{q-1}\P \big\{\big\vert  \widehat{\boldsymbol{\Theta}}_{i,j, l}^t -  \boldsymbol{\Theta}_{i,j, l}  \big\vert \geq u\big\}du \nonumber\\
 = & \int_{0}^{u_0} q u^{q-1} \P\big\{ \big\vert  \widehat{\boldsymbol{\Theta}}_{i,j, l}^t -  \boldsymbol{\Theta}_{i,j, l}  \big\vert \geq u\big \}du + \int_{u_0}^{\infty} q u^{q-1} \P \big\{\big\vert  \widehat{\boldsymbol{\Theta}}_{i,j, l}^t -  \boldsymbol{\Theta}_{i,j, l}  \big\vert \geq u \big\}du \nonumber\\
 \leq & 2\int_{0}^{u_0} q u^{q-1}\exp \Big\{- \frac{ c_0 t}{2C_0^2}u^2\Big\} du+  2\int_{u_0}^{\infty} q u^{q-1}\exp\Big\{-\frac{c_0 t}{2C_0 \log(t) \log\log(t)}u\Big\}du
 \nonumber\\
 =& (I.1) + (I.2),
\end{align}
where the first inequality follows from \eqref{eq-Theta-sub-begin}. 

For the term $(I.1)$ in \eqref{eq-Theta-sub-1}, let $\kappa =\sqrt{ c_0 t/(2C_0^2)}$ and substitute $s = \kappa u$. Then
\begin{align}\label{eq-Theta-sub-1.1}
(I.1)= & 2q \kappa^{-q}\int_{0}^{\kappa u_0}s^{q-1}e^{-s^2}ds \leq 2q \kappa^{-q}\int_{0}^{\infty}s^{q-1}e^{-s^2}ds = q \kappa^{-q} \Gamma(q/2)  \nonumber\\
= & q \bigg(\frac{2C_0^2}{c_0 t} \bigg)^{q/2} \Gamma(q/2) \leq \bigg(C_2\sqrt{\frac{q}{t}}\bigg)^{q}, 
\end{align}
for some absolute constant $C_2 >0$. 
For the term $(I.2)$ in \eqref{eq-Theta-sub-1}, let  $\lambda = c_0t/\{ 2C_0 \log(t) \log\log(t)\}$ and substitute  $s =\lambda u$. Then 
\begin{align}\label{eq-Theta-sub-1.2}
(I.2) = & 2q \lambda^{-q}\int_{\lambda u_0}^{\infty}s^{q-1}e^{-s} ds = 2q \lambda^{-q}\int_{0}^{\infty}s^{q-1}e^{-s} ds =2q\lambda^{-q}\Gamma(q) \nonumber\\
= &  2q  \bigg(\frac{2C_0 \log(t) \log\log(t)}{c_0t}\bigg)^{q}
\Gamma(q) \leq \bigg(C_3\frac{q \log(t) \log\log(t)}{C_0t}\bigg)^{q},
\end{align}
for some absolute constant $C_3 >0$. 
Combining \eqref{eq-Theta-sub-1}, \eqref{eq-Theta-sub-1.1} and \eqref{eq-Theta-sub-1.1} yields that 
\begin{align}
\big[\mathbb E \big\{ \vert \boldsymbol{\Theta}_{i, k, l} - \widehat{\boldsymbol{\Theta}}_{i, k, l} \vert^q \big\}   \big]^{1/q}
\leq  C_4 \bigg(\sqrt{\frac{q}{t}}+\frac{q\log(t) \log\log(t)}{C_0 t}\bigg), \nonumber
\end{align}
for some absolute constant $C_4 >0$.
Consequently,
\begin{align}
\sup_{q \geq 1}\frac{\big[\E\big\{ \vert \boldsymbol{\Theta}_{i, k, l} - \widehat{\boldsymbol{\Theta}}_{i, k, l} \vert^q \big\}\big]^{1/q}}{q}
 \leq  C_4 \bigg(\sqrt{\frac{1}{t}}+\frac{\log(t) \log\log(t)}{C_0 t}\bigg) \leq C_5 t^{-1/2},
\end{align}
where $C_5>0$ is an absolute constant and the last inequality uses the fact $C't \geq  \{\log(t) \log\log(t)\}^{2} $ for any $t\geq 1$ and some absolute constant $C' >0$. By Proposition 2.6.1 in \cite{vershynin2018high}, it follows that $  \widehat{\boldsymbol{\Theta}}_{i,j, l}^t -  \boldsymbol{\Theta}_{i,j, l}$ is $C_6 t^{-1/2}$-sub-Gaussian distributed for some absolute constant $C_6>0$. An analogous argument applies to $\widehat{\boldsymbol{\Delta}}_{i,j,l}^t - \boldsymbol{\Delta}_{i,j,l}$.  
These complete the proof.
\end{proof}

\subsubsection{Proof of Lemma \ref{lemma-singular-values-stat}}\label{sec-app-2}
\begin{proof}[Proof of \Cref{lemma-singular-values-stat}]

Each row of $Z$ contains exactly one $1$ and each column $k \in[K]$ contains $s_k$ ones, so
\[
Z^\top Z = \mathrm{diag}(s_1,\dots,s_K).
\]
Therefore $\mathrm{rank}(Z) = K$, $\sigma_1(Z) = \sqrt{s_{\max}}$ and 
$\sigma_K(Z) = \sqrt{s_{\min}}$.

By the definition and basic properties of tensor matricisation \citep[e.g. Lemma 4 in ][]{zhang2018tensor}, we see that 
\[
    \mathcal{M}_1(\mathbf{\Omega}) = Z \mathcal{M}_1 (\mathbf{Q})(Z \otimes I_{L})^{\top}.
\]
We aim to show that $\mathrm{rank}(\mathcal{M}_1(\mathbf{\Omega})) = \mathrm{rank}(\mathcal{M}_1(\mathbf{Q}))$.  According to Lemma S3 in the Supplement of \cite{wang2025multilayer}, it suffices to verify that $Z \otimes I_{L} \in \mathbb{R}^{nL \times KL}$ has full column rank, i.e.~$\mathrm{rank}(Z \otimes I_{L}) = KL$. 
Using the property of Kronecker products, we obtain $\mathrm{rank}(Z \otimes I_{L}) = \mathrm{rank}(Z) \mathrm{rank}( I_{L}) = LK$.

Next, we have that
\[
    \|\mathcal{M}_1(\mathbf{\Omega})\| \leq \|Z\| \|\mathcal{M}_1(\mathbf{Q})\| \|Z \otimes I_L\| = \sigma_1^2(Z) \| \mathcal{M}_1(\mathbf{Q}) \| \leq s_{\max} \|\mathbf{Q}\|.
\]
It follows from Lemma S4 in the Supplement of \cite{wang2025multilayer}  that 
\[
    \sigma_{\min} (\mathcal{M}_1(\mathbf{\Omega})) \geq \sigma_{\min} (Z) \sigma_{\min}(\mathcal{M}_1(\mathbf{Q})) \sigma_{\min} (Z \otimes I_L) = \sigma_K^2(Z)  \sigma_{K} (\mathcal{M}_1(\mathbf{Q})) \geq s_{\min} \sigma_{\min} (\mathbf{Q}).
\]
Applying the same arguments, we obtain for $s \in [3]$ that 
\[
\mathrm{rank} \big( \mathcal{M}_s(\mathbf{\Omega} \big)
= \mathrm{rank}(\mathcal{M}_s(\mathbf{Q}))
\]
and
\[ s_{\min} \sigma_{\min} (\mathbf{Q}) \leq  \sigma_{\min}\big(\mathcal{M}_s(\boldsymbol{\Omega})\big)  \leq \big\| \mathcal{M}_s(\boldsymbol{\Omega}) \big\| \leq   s_{\max} \| \mathbf{Q}\|.
\]  
We complete the proof.

\end{proof}

\subsubsection{Proof of Proposition \ref{prop-sin-theta-stat}}\label{sec-app-3}

\begin{proof}[Proof of Proposition \ref{prop-sin-theta-stat}]
By Lemma 1 in \cite{cai2017rate-optimal}, it holds that
\begin{align}\label{prop-sin-0}
\inf_{O \in \mathbb{O}_{K \times K}}\| \widehat{U}^t_Z - U_Z O   \| \leq \sqrt{2} \| \sin\Theta(\widehat{U}^t_Z,  U_Z)   \|. 
\end{align}

For any $t \in [T]$ and $k \in [t]$, define
\begin{equation}\label{eq-def-Omega-stat}
\widehat{\boldsymbol{\Omega}}^{t} = \widehat{\boldsymbol{\Theta}}^{t} + \widehat{\boldsymbol{\Delta}}^{t} \quad \mbox{and} \quad 
\boldsymbol{\Omega} = \boldsymbol{\Theta} + \boldsymbol{\Delta}.
\end{equation}
Let $\mathbf{E}^{\tilde{t}} = \mathbf{\widehat{\Omega}}^{\tilde{t}} - \mathbf{\Omega}$, where $\tilde{t} = 2^{\lfloor \log_2 (t) \rfloor}$. Note that $\tilde{t} \leq t < 2\tilde{t}$.

By Theorem 3 in \cite{zhang2022heteroskedastic} and \Cref{lemma-singular-values-stat},   we obtain that 
\begin{align}\label{prop-sin-1}
 &  \big\| \sin\Theta \big(\widehat{U}_Z^{t},  U_Z \big)   \big\|  \nonumber\\
\leq    &  C_0 \frac{\big\| \mathcal{M}_1(\mathbf{\widehat{\Omega}}^{\tilde{t}})\mathcal{M}_1(\mathbf{\widehat{\Omega}}^{\tilde{t}})^{\top} - \mathcal{M}_1(\mathbf{\Omega})\mathcal{M}_1(\mathbf{\Omega})^{\top}  -  \mathrm{diag} \big(\mathcal{M}_1(\mathbf{\widehat{\Omega}}^{\tilde{t}})\mathcal{M}_1(\mathbf{\widehat{\Omega}}^{\tilde{t}})^{\top} - \mathcal{M}_1(\mathbf{\Omega})\mathcal{M}_1(\mathbf{\Omega})^{\top} \big) \big\| }{\sigma_K^2\big(\mathcal{M}_1(\mathbf{\Omega})\big)} \nonumber\\ 
\leq & C_0 \frac{\big\| \mathcal{M}_1(\mathbf{\widehat{\Omega}}^{\tilde{t}})\mathcal{M}_1(\mathbf{\widehat{\Omega}}^{\tilde{t}})^{\top} - \mathcal{M}_1(\mathbf{\Omega})\mathcal{M}_1(\mathbf{\Omega})^{\top}  -  \mathrm{diag} \big(\mathcal{M}_1(\mathbf{\widehat{\Omega}}^{\tilde{t}})\mathcal{M}_1(\mathbf{\widehat{\Omega}}^{\tilde{t}})^{\top} - \mathcal{M}_1(\mathbf{\Omega})\mathcal{M}_1(\mathbf{\Omega})^{\top} \big) \big\| }{ s_{\min}^2 \sigma_{K}^2\big( \mathcal{M}_1(\mathbf{Q}) \big) },
\end{align}
where $\mathbf{Q}$ is defined in \Cref{ass-stat-rank} and the last inequality follows from \cref{lemma-singular-values-stat}.

Moreover,
\[ \mathcal{M}_1(\mathbf{\widehat{\Omega}}^{\tilde{t}})\mathcal{M}_1(\mathbf{\widehat{\Omega}}^{\tilde{t}})^{\top} - \mathcal{M}_1(\mathbf{\Omega})\mathcal{M}_1(\mathbf{\Omega})^{\top} =  \mathcal{M}_1( \mathbf{E}^{\tilde{t}})  \mathcal{M}_1( \mathbf{E}^{\tilde{t}})^{\top} + \mathcal{M}_1( \mathbf{E}^{\tilde{t}}) \mathcal{M}_1(\mathbf{\Omega})^{\top} + \mathcal{M}_1(\mathbf{\Omega}) \mathcal{M}_1( \mathbf{E}^{\tilde{t}})^{\top}.
\]
Hence,
\begin{align}\label{prop-sin-2}
   &  \big\| \mathcal{M}_1(\mathbf{\widehat{\Omega}}^{\tilde{t}})\mathcal{M}_1(\mathbf{\widehat{\Omega}}^{\tilde{t}})^{\top} - \mathcal{M}_1(\mathbf{\Omega})\mathcal{M}_1(\mathbf{\Omega})^{\top}  -  \mathrm{diag} \big(\mathcal{M}_1(\mathbf{\widehat{\Omega}}^{\tilde{t}})\mathcal{M}_1(\mathbf{\widehat{\Omega}}^{\tilde{t}})^{\top} - \mathcal{M}_1(\mathbf{\Omega})\mathcal{M}_1(\mathbf{\Omega})^{\top} \big) \big\| \nonumber\\
   \leq  & \big\| \mathcal{M}_1( \mathbf{E}^{\tilde{t}})  \mathcal{M}_1( \mathbf{E}^{\tilde{t}})^{\top}  - \mathrm{diag} \big( \mathcal{M}_1( \mathbf{E}^{\tilde{t}})  \mathcal{M}_1( \mathbf{E}^{\tilde{t}})^{\top} ) \big) \big\| + 2\big\|  \mathcal{M}_1( \mathbf{E}^{\tilde{t}}) \mathcal{M}_1(\mathbf{\Omega})^{\top} -  \mathrm{diag} \big( \mathcal{M}_1( \mathbf{E}^{\tilde{t}}) \mathcal{M}_1(\mathbf{\Omega})^{\top} \big) \big\| \nonumber\\
   \leq & \big\| \mathcal{M}_1( \mathbf{E}^{\tilde{t}})  \mathcal{M}_1( \mathbf{E}^{\tilde{t}})^{\top}  - \mathrm{diag} \big( \mathcal{M}_1( \mathbf{E}^{\tilde{t}})  \mathcal{M}_1( \mathbf{E}^{\tilde{t}})^{\top} ) \big) \big\| + 4\big\|  \mathcal{M}_1( \mathbf{E}^{\tilde{t}}) \mathcal{M}_1(\mathbf{\Omega})^{\top}  \big\|  \nonumber\\
   = & (I) + (II), 
\end{align}
where the last inequality follows from Lemma 3 in the Supplement of \cite{zhang2022heteroskedastic}.

\medskip
\noindent
\textbf{Step 1.} In this step, we aim to prove that \begin{align}\label{prop-sin-1.0}
   \P \bigg\{ (I) >  C'\bigg( \frac{ n\sqrt{L} \log^2(n \vee L \vee T)}{ t}  + \frac{n^2 L}{t^2}  \bigg) \bigg\}  \leq ( n \vee L \vee t )^{-c'}, 
\end{align}
where $C', c' > 0$ are absolute constants.

Note that $\widehat{\boldsymbol{\Omega}}^{\tilde{t}} = \widehat{\boldsymbol{\Theta}}^{\tilde{t}} + \widehat{\boldsymbol{\Delta}}^{\tilde{t}}$ and $
\boldsymbol{\Omega} = \boldsymbol{\Theta} + \boldsymbol{\Delta}$. By the triangle inequality, we obtain that 
\begin{align}\label{prop-sin-1.1}
(I) \leq & \bigg\| \{\mathcal{M}_1( \widehat{\boldsymbol{\Theta}}^{\tilde{t}}) - \mathcal{M}_1( \boldsymbol{\Theta})\} \{\mathcal{M}_1( \widehat{\boldsymbol{\Theta}}^{\tilde{t}}) - \mathcal{M}_1( \boldsymbol{\Theta})\}^{\top} \nonumber\\
& \hspace{0.5cm}-  \mathrm{diag} \big( ( \mathcal{M}_1( \widehat{\boldsymbol{\Theta}}^{\tilde{t}}) - \mathcal{M}_1( \boldsymbol{\Theta}) )(\mathcal{M}_1( \widehat{\boldsymbol{\Theta}}^{\tilde{t}}) - \mathcal{M}_1( \boldsymbol{\Theta}))^{\top}\big)\bigg\| \nonumber\\
& \hspace{0.5cm} + \bigg\| \{\mathcal{M}_1( \widehat{\boldsymbol{\Delta}}^{\tilde{t}}) - \mathcal{M}_1( \boldsymbol{\Delta})\} \{\mathcal{M}_1( \widehat{\boldsymbol{\Delta}}^{\tilde{t}}) - \mathcal{M}_1( \boldsymbol{\Delta})\}^{\top} \nonumber\\
& \hspace{0.5cm}-  \mathrm{diag} \big( ( \mathcal{M}_1( \widehat{\boldsymbol{\Delta}}^{\tilde{t}}) - \mathcal{M}_1( \boldsymbol{\Delta}) )(\mathcal{M}_1( \widehat{\boldsymbol{\Delta}}^{\tilde{t}}) - \mathcal{M}_1( \boldsymbol{\Delta}))^{\top}\big)\bigg\| \nonumber\\
& \hspace{0.5cm} + \bigg\| \{\mathcal{M}_1( \widehat{\boldsymbol{\Theta}}^{\tilde{t}}) - \mathcal{M}_1( \boldsymbol{\Theta})\} \{\mathcal{M}_1( \widehat{\boldsymbol{\Delta}}^{\tilde{t}}) - \mathcal{M}_1( \boldsymbol{\Delta})\}^{\top} \nonumber\\
& \hspace{0.5cm}-  \mathrm{diag} \big( ( \mathcal{M}_1( \widehat{\boldsymbol{\Theta}}^{\tilde{t}}) - \mathcal{M}_1( \boldsymbol{\Theta}) )(\mathcal{M}_1( \widehat{\boldsymbol{\Delta}}^{\tilde{t}}) - \mathcal{M}_1( \boldsymbol{\Delta}))^{\top}\big)\bigg\| \nonumber\\
& \hspace{0.5cm} + \bigg\| \{\mathcal{M}_1( \widehat{\boldsymbol{\Delta}}^{\tilde{t}}) - \mathcal{M}_1( \boldsymbol{\Delta})\} \{\mathcal{M}_1( \widehat{\boldsymbol{\Theta}}^{\tilde{t}}) - \mathcal{M}_1( \boldsymbol{\Theta})\}^{\top} \nonumber\\
& \hspace{0.5cm}-  \mathrm{diag} \big( ( \mathcal{M}_1( \widehat{\boldsymbol{\Delta}}^{\tilde{t}}) - \mathcal{M}_1( \boldsymbol{\Delta}) )(\mathcal{M}_1( \widehat{\boldsymbol{\Theta}}^{\tilde{t}}) - \mathcal{M}_1( \boldsymbol{\Theta}))^{\top}\big)\bigg\| \nonumber\\
= & (I.1) + (I.2) + (I.3) + (I.4).
\end{align}

We now focus on the term $(I.1)$ in \eqref{prop-sin-1.1}.  Let 
\[
X^{\tilde{t}} = \mathcal{M}_1( \widehat{\boldsymbol{\Theta}}^{\tilde{t}}) - \mathcal{M}_1( \boldsymbol{\Theta}) \in \R^{n \times nL}.
\]
By \Cref{lemma-sub-Gaussian}, each entry of $X^{\tilde{t}}$  is $C^* t^{-1/2}$-sub-Gaussian distributed for some absolute constant $C^* >0$.

Then we have that
\begin{align}\label{prop-sin-1.2}
(I.1) =  & \bigg\| \sum_{j=1}^{nL} \Big\{(X^{\tilde{t}})^{j} ((X^{\tilde{t}})^{j})^{\top} -  \mathrm{diag}\big((X^{\tilde{t}})^{j} ((X^{\tilde{t}})^{j})^{\top}  \big) \Big\} \bigg\|  \nonumber\\
= &\bigg\| \sum_{j=1}^{nL} \Big\{ (X^{\tilde{t}})^{j} ((X^{\tilde{t}})^{j})^{\top} -  \E\big\{(X^{\tilde{t}})^{j} ((X^{\tilde{t}})^{j})^{\top}  \big\} - \mathrm{diag}\Big((X^{\tilde{t}})^{j} ((X^{\tilde{t}})^{j})^{\top} -  \E\big\{(X^{\tilde{t}})^{j} ((X^{\tilde{t}})^{j})^{\top}  \big\}\Big) \nonumber\\
& \hspace{0.5cm} +  E\big\{(X^{\tilde{t}})^{j} ((X^{\tilde{t}})^{j})^{\top}  \big\} - \mathrm{diag}\Big(\E\big\{(X^{\tilde{t}})^{j} ((X^{\tilde{t}})^{j})^{\top} \big\}\Big) \Big\}\bigg\| \nonumber\\
\leq  &  \bigg\|  \sum_{j=1}^{nL}\Big\{ (X^{\tilde{t}})^{j} ((X^{\tilde{t}})^{j})^{\top} -  \E\big\{(X^{\tilde{t}})^{j} ((X^{\tilde{t}})^{j})^{\top}  \big\} - \mathrm{diag}\Big((X^{\tilde{t}})^{j} ((X^{\tilde{t}})^{j})^{\top} -  \E\big\{(X^{\tilde{t}})^{j} ((X^{\tilde{t}})^{j})^{\top}  \big\}\Big) \Big\} \bigg\|\nonumber\\
& \hspace{0.5cm} +  \bigg\| \sum_{j=1}^{nL}  \Big\{ \E\big\{(X^{\tilde{t}})^{j} ((X^{\tilde{t}})^{j})^{\top}  \big\} - \mathrm{diag}\Big(\E\big\{(X^{\tilde{t}})^{j} ((X^{\tilde{t}})^{j})^{\top}  \big\}\Big) \Big\} \bigg\| \nonumber\\
\leq  & 2 \bigg\| \sum_{j=1}^{nL} \Big\{ (X^{\tilde{t}})^{j} ((X^{\tilde{t}})^{j})^{\top} -  \E\big\{(X^{\tilde{t}})^{j} ((X^{\tilde{t}})^{j})^{\top} \big\} \Big\} \bigg\| + \bigg\|\sum_{j=1}^{nL} \Big\{ \E\big\{(X^{\tilde{t}})^{j} ((X^{\tilde{t}})^{j})^{\top}  \big\}\nonumber\\
& \hspace{0.5cm} - \mathrm{diag}\Big(\E\big\{(X^{\tilde{t}})^{j} ((X^{\tilde{t}})^{j})^{\top}  \big\}\Big) \Big\} \bigg\| \nonumber\\
= & (I.1.1) + (I.1.2),
\end{align}
where the last inequality follows from  Lemma 3 in the Supplement of \cite{zhang2022heteroskedastic}.

For the term $(I.1.1)$ in \eqref{prop-sin-1.2}, the matrix Bernstein inequality \citep[e.g.~Theorem 5.4.1 in][]{vershynin2018high} implies that
\begin{align}\label{prop-sin-1.3}
\P\bigg\{(I. 1.1)  > 
 C_1 \max\bigg\{\frac{n \sqrt{L \log(n \vee L \vee T)}}{t}, \frac{ n\log^2(n \vee L \vee T)}{t}  \bigg\} \leq  (n \vee L \vee t)^{-c_1},
\end{align}
where $C_1, c_1 > 0$ are absolute constants.
For the term $(I. 1.2)$ in \eqref{prop-sin-1.2}, we have that 
\begin{align}\label{prop-sin-1.4}
(I.1.2) \leq&  \sum_{j=1}^{nL} \Big\| \E\big\{ (X^{\tilde{t}})^{j} \{(X^{\tilde{t}})^{j}\}^{\top} - \mathrm{diag}\Big(\E\big\{(X^{\tilde{t}})^{j} ((X^{\tilde{t}})^{j})^{\top}  \big\}\Big)\Big\|_{\mathrm{F}}   \nonumber\\
= & \sum_{j=1}^{nL}  \sqrt{\sum_{i, k =1}^{n} \Big\{\E\{ X^{\tilde{t}}_{i, j} X^{\tilde{t}}_{k, j} \}\Big\}^2 - \sum_{i =1}^{n} \Big\{\E\{ X^{\tilde{t}}_{i, j} X^{\tilde{t}}_{i, j} \}\Big\}^2 }.
\end{align}
By Lemma 18  in \cite{jiang2023autoregressive},  there exists an absolute constant $C_2 > 0$ such that  
\begin{equation}\label{prop-sin-1.5}
 \max\big\{ \vert \E \{ \widehat{\boldsymbol{\Theta}}_{i,j, l}^{t} \} - \boldsymbol{\Theta}_{i,j, l} \vert,  \vert \E \{ \widehat{\boldsymbol{\Delta}}_{i,j, l}^{t} \} - \boldsymbol{\Delta}_{i,j, l} \vert \big\}  \leq C_2  t^{-1}, \forall i, j \in [n], l \in [L], t \in [T].
 \end{equation}
Combining  \eqref{prop-sin-1.4} and \eqref{prop-sin-1.5} yields that 
\begin{align}\label{prop-sin-1.6}
(I.1.2) \leq \frac{C_2^2 n^2L}{t^2}. 
\end{align}
Putting together \eqref{prop-sin-1.2}, \eqref{prop-sin-1.3} and \eqref{prop-sin-1.6}, we obtain that 
\begin{align}
    \P \bigg\{ (I.1) > C_1 \frac{n \sqrt{L} \log^2(n \vee L \vee T)}{t}  + C_2^2\frac{n^2 L}{t^2}   \bigg\}  \leq  ( n \vee L \vee T )^{-c_{1}}. \nonumber 
\end{align}
Similar arguments apply to $(I.2)$, $(I.3)$ and $(I.4)$ in \eqref{prop-sin-1.2}.
Therefore, by a union bound argument, we derive that 
\begin{align}
   \P \bigg\{ (I) > 4C_1 \frac{n \sqrt{ L} \log^2(n \vee L \vee T)}{t}  + 4C_2^2\frac{n^2L}{t^2} \bigg\}     \leq  4( n \vee L \vee T )^{-c_{1}},
\end{align}
which proves \eqref{prop-sin-1.0}.

\medskip
\noindent
\textbf{Step 2}. In this step, we aim to prove that 
\begin{align}\label{prop-sin-2.0}
   \P \bigg\{ (II) >  C'' \big\| \mathcal{M}_1(\mathbf{\Omega}) \big\|   \bigg( \sqrt{\frac{ n \log(n \vee L \vee T)}{ t}} +\frac{\sqrt{n^2L}}{t} \bigg) \bigg\} \leq ( n \vee L \vee t )^{-c''} 
\end{align}
where $C'', c'' > 0$ are absolute constants.

Note that 
\begin{align}\label{prop-sin-2.1}
    (II) =  & 4\big\| \big\{ \mathcal{M}_1( \widehat{\boldsymbol{\Theta}}^{\tilde{t}}) - \mathcal{M}_1( \boldsymbol{\Theta} ) + \mathcal{M}_1( \widehat{\boldsymbol{\Delta}}^{\tilde{t}}) - \mathcal{M}_1( \boldsymbol{\Delta}) 
    \big\}\mathcal{M}_1(\mathbf{\Omega})^{\top}  \big\| \nonumber\\
    \leq &  4\Big\| \Big\{ \mathcal{M}_1( \widehat{\boldsymbol{\Theta}}^{\tilde{t}}) - \mathcal{M}_1( \boldsymbol{\Theta}) 
    -\E \big\{\mathcal{M}_1( \widehat{\boldsymbol{\Theta}}^{\tilde{t}}) - \mathcal{M}_1( \boldsymbol{\Theta})\big\} \Big\}
    \mathcal{M}_1(\mathbf{\Omega})^{\top}  \Big\| \nonumber\\
    & \hspace{0.5cm}
+ 4\Big\| \Big\{ \mathcal{M}_1( \widehat{\boldsymbol{\Delta}}^{\tilde{t}}) - \mathcal{M}_1( \boldsymbol{\Delta}) 
    -\E \big\{\mathcal{M}_1( \widehat{\boldsymbol{\Delta}}^{ \tilde{t}}) - \mathcal{M}_1( \boldsymbol{\Delta})\big\} \Big\}
    \mathcal{M}_1(\mathbf{\Omega})^{\top}  \Big\| \nonumber\\
    & \hspace{0.5cm} + 4\Big\| \E \big\{\mathcal{M}_1( \widehat{\boldsymbol{\Theta}}^{\tilde{t}}) - \mathcal{M}_1( \boldsymbol{\Theta})\big\}
    \mathcal{M}_1(\mathbf{\Omega})^{\top}  \Big\| \nonumber\\
    & \hspace{0.5cm} + 4\Big\|  \E \big\{\mathcal{M}_1( \widehat{\boldsymbol{\Delta}}^{\tilde{t}}) - \mathcal{M}_1( \boldsymbol{\Delta})\big\}
    \mathcal{M}_1(\mathbf{\Omega})^{\top}  \Big\| \nonumber\\
    = & (II.1) + (II.2) + (II.3) + (II.4).
    \end{align}
By Lemma 2 in the Supplement of \cite{zhang2022heteroskedastic} and \Cref{lemma-sub-Gaussian}, we have that 
\begin{align}\label{prop-sin-2.2}
       \P \bigg\{ (II.1) + (II.2) >  C_3 \big\| \mathcal{M}_1(\mathbf{\Omega}) \big\|   \sqrt{\frac{ n \log(n \vee L \vee T)}{ t}}\bigg\} \leq ( n \vee L \vee t )^{-c_3},
\end{align} 
where $C_3, c_3 > 0$ are absolute constants. We also have that
\begin{align}\label{prop-sin-2.3}
(II.3) + (II.4) \leq & 4 \Big[\Big\| \E \big\{\mathcal{M}_1( \widehat{\boldsymbol{\Theta}}^{\tilde{t}}) - \mathcal{M}_1( \boldsymbol{\Theta})\big\} \Big\| 
 + \Big\| \E \big\{\mathcal{M}_1( \widehat{\boldsymbol{\Delta}}^{\tilde{t}}) - \mathcal{M}_1( \boldsymbol{\Delta})\big\} \Big\| \Big]
 \big\|    \mathcal{M}_1(\mathbf{\Omega})^{\top}  \big\|  \nonumber\\
 \leq &  8C_2\frac{n\sqrt{L}}{t}\big\|    \mathcal{M}_1(\mathbf{\Omega})^{\top}  \big\|  
\end{align}
where the last inequality follows from  \eqref{prop-sin-1.5}.
Combining \eqref{prop-sin-2.1},  \eqref{prop-sin-2.2} and  \eqref{prop-sin-2.3} yields  for some absolute constant $C_4 >0$ that
\[
   \P \bigg\{ (II) >  C_4  \big\| \mathcal{M}_1(\mathbf{\Omega}) \big\|  \bigg\{ \sqrt{\frac{ n \log(n \vee L \vee T)}{ t}} +\frac{\sqrt{n^2L}}{t}\bigg\} \leq ( n \vee L \vee t )^{-c_3} 
\]
which proves \eqref{prop-sin-2.0}.

\medskip
\noindent
\textbf{Step 3.}
Combining \eqref{prop-sin-0}, \eqref{prop-sin-1}, \eqref{prop-sin-2}, \eqref{prop-sin-1.0} and \eqref{prop-sin-2.0} yields that
\begin{align}
\P\bigg[ & \inf_{O \in \mathbb{O}_{K \times K}}\| \widehat{U}_Z^t - U_Z O   \| >    C_4 \bigg\{ \frac{ n\sqrt{L} \log^2(n \vee L \vee T)}{ts_{\min}^2 \sigma_{K}^2\big( \mathcal{M}_1(\mathbf{Q}) \big)}  + \frac{n^2 L}{t^2s_{\min}^2 \sigma_{K}^2\big( \mathcal{M}_1(\mathbf{Q}) \big)}  \nonumber\\ 
& \hspace{0.5cm} +  \frac{ s_{\max}\sigma_1(\mathcal{M}_1 (\mathbf{Q}))  }{ s_{\min}^2 \sigma_K^2(\mathcal{M}_1 (\mathbf{Q})) } \bigg(\sqrt{\frac{ n \log(n \vee L \vee T)}{ t}}+ \frac{n \sqrt{L}}{t} \bigg) \bigg\} \leq 2 ( n \vee L \vee t )^{-c_4} , \nonumber 
\end{align}
where $C_4, c_4 > 0$ are absolute constants. Under \Cref{ass-stat-rank}, we finally obtain that
\begin{align}
\P\bigg[ & \inf_{O \in \mathbb{O}_{K \times K}}\| \widehat{U}_Z^t - U_Z O   \| >    C_5 \sigma_Q^{-2}\bigg( \frac{ K^2\sqrt{L} \log^2(n \vee L \vee T)}{tn}  + \frac{K^2 L}{t^2} \bigg)  \nonumber\\ 
& \hspace{0.5cm} +C_5  \sigma_Q^{-1} \bigg(K\sqrt{\frac{  \log(n \vee L \vee T)}{ tn}}+ \frac{ K\sqrt{L}}{t} \bigg) \bigg\} \leq 2 ( n \vee L \vee t )^{-c_4} , \nonumber 
\end{align}
where $C_5 >0$ is an absolute constant. We complete the proof. 
\end{proof}

\subsubsection{Proof of Lemma \ref{lemma-packing}}\label{sec-app-4}
\begin{proof}[Proof of Lemma \ref{lemma-packing}]
Observe that 
\[
 \vert \mathcal {Z} \vert = \frac{n!}{ \big( (q+1)! \big)^r (q!)^{k-r} }.
\]
By the bounds of \cite{robbins1955remark}, for any $m \in \Z^{+}$, 
\[
\sqrt{2\pi} m^{m+1/2} \exp\{ -m +1/(12m+1)\} <   m! <  \sqrt{2\pi} m^{m+1/2} \exp\{ -m +1/(12m)\}.
\]
Applying these to $n!$, $(q+1)!$ and $q!$, we obtain that
\begin{equation}\label{eq-packing-1}
\log  \big( \vert \mathcal {Z} \vert   \big) \geq  n \log(k) - C_1 k \log(n),
\end{equation}
for some absolute constant $C_1 >0$.

Fix any labeling $Z' \in \mathcal Z$.  
For $0 \leq s \leq n$, the number of labelings that differ from $Z'$ in exactly $s$ positions is at most
\[
\binom{n}{s} (k-1)^s,
\]
since we choose the $s$ changed positions and assign to each one of the $k-1$ other labels.
Hence, the number of labelings within Hamming distance at most $t$ of $Z'$ satisfies
\[
V_k(n,t) = \sum_{s=0}^{t} \binom{n}{s} (k-1)^s.
\]

Let $0 < \delta < 1/2$ and set $t = \lfloor \delta n \rfloor$.  
By \Cref{lemma-binomial}, we can derive that 
\begin{equation}\label{eq-packing-2}
\sum_{s=0}^{\lfloor \delta n \rfloor}\binom{n}{s} \le \exp \big\{n H(\delta)\big\},   \quad
\mbox{where } H(\delta) = -\delta \log(\delta) - (1-\delta) \log(1-\delta).
\end{equation}
Thus, we have that 
\begin{align}\label{eq-packing-3}
   V_k(n, \lfloor \delta n \rfloor) \leq  &  \sum_{s=0}^{\lfloor \delta n \rfloor} \binom{n}{s} (k-1)^{ \delta n }  \leq (k-1)^{ \delta n } \exp \big\{n H(\delta)\big\} \nonumber\\
   = &  \exp\Big\{ n \big\{H(\delta) + \delta \log(k-1)\big\} \Big\},
\end{align}
where the second inequality follows from \eqref{eq-packing-2}.

We now construct $\mathcal{Z}^*$ via the usual greedy packing procedure
\begin{itemize}
  \item Initialize $\mathcal{R} = \mathcal{Z}$ and $\mathcal{Z}^* = \emptyset$.
  \item While $\mathcal{R} \neq \emptyset$: pick any $Z \in \mathcal{R}$, add $Z$ to $\mathcal{Z}^*$ and remove from $\mathcal{R}$ all labelings within Hamming distance at most  $\lfloor \delta n \rfloor$ of $Z$.
\end{itemize}
Then for any two distinct $Z, Z' \in \mathcal{Z}^*$, we have  $   d_H(Z, Z') > \lfloor \delta n \rfloor$.
Since each step removes at most $V_k(n, \lfloor \delta n \rfloor)$ labelings, we have that 
\begin{equation}\label{eq-packing-4}
\vert \mathcal{Z}^*\vert   \geq \frac{|\mathcal Z|}{V_k(n, \lfloor \delta n \rfloor)}. 
\end{equation}

Combining \eqref{eq-packing-1}, \eqref{eq-packing-3} and \eqref{eq-packing-4} yields that
\begin{equation}\label{eq-packing-5}
\log \big( \vert \mathcal{Z}^*\vert  \big)\geq n \log(k) - C_1k \log (n) - n \big\{H(\delta) + \delta \log(k-1)\big\} .
\end{equation}
Choose $\delta = 0.1$,  we have $H(0.1) \leq  0.3251$ and for any  $k \geq 2$,
\begin{equation}\label{eq-packing-6}
H(\delta) + \delta \log(k-1) = H(0.1) + 0.1 \log(k-1) \leq 0.3251 + 0.1 \log (k) \leq 2^{-1} \log(k).
\end{equation}
Combining \eqref{eq-packing-5} and \eqref{eq-packing-6}, there exists absolute constant $C_2 > 0$ such that
\[
\log \big( \vert \mathcal{Z}^*\vert  \big) \geq  n \log(k) - C_1k \log (n)  -  2^{-1} n  \log (k) = 2^{-1}n \log(k) - C_1k \log (n)
\geq C_2 n \log(k),
\]
which completes the proof. 
\end{proof}

\subsubsection{Proof of Lemma \ref{lemma-binomial}}\label{sec-app-5}
\begin{proof}[Proof of Lemma \ref{lemma-binomial}]
Let $ X \sim \mathrm{Bin}(n, 1/2) $. Then
\begin{equation}\label{eq-binomial-1}
    \mathbb{P}(X = s) = 2^{-n}\binom{n}{s}
\quad \mbox{and} \quad
\sum_{s=0}^{\lfloor \delta n \rfloor}\binom{n}{s}
= 2^n  \mathbb{P}\{X \leq \lfloor \delta n \rfloor\}
\leq 2^n  \mathbb{P}\{X \leq \delta n\}.
\end{equation}
For any $t < 0$, by Markov's inequality, we have that
\begin{equation}\label{eq-binomial-2}
\mathbb{P} \{X \leq \delta n\}
= \mathbb{P} \Big\{e^{tX} \geq e^{t \delta n}\Big\}
\leq e^{-t \delta n}  \mathbb{E} \big[\exp\{tX\}\big].
\end{equation}

Since $ X = \sum_{i=1}^n Y_i $ with $ Y_i \overset{\mbox{i.i.d.}}{\sim} \mathrm{Bernoulli}(1/2)$, we can derive that 
\begin{equation}\label{eq-binomial-3}
 \mathbb{E} \big[\exp\{tX\}\big]
= \prod_{i=1}^n \mathbb{E}\big[\exp\{tY_i\}\big]
= \Big[ \mathbb{E}\big[\exp\{tY_1\} \big] \Big]^n
= \big\{ (1 + e^t)/2 \big\}^{n}.
\end{equation}
Combining \eqref{eq-binomial-2} and \eqref{eq-binomial-3} yields that 
\begin{equation}\label{eq-binomial-4}
\mathbb{P} \{ X \leq \delta n\}
\leq \exp\Big[ n \big\{ -t \delta + \log \big((1 + e^t)/2 \big) \big\} \Big].
\end{equation}

Define
\[
\phi(t) = -t\delta + \log \big((1 + e^t)/2 \big), \quad \forall t \in \R.
\]
Then, we have that 
\[
\phi'(t) = -\delta + \frac{e^t}{1 + e^t}, \quad 
\phi''(t) = \frac{e^t}{(1 + e^t)^2} > 0, \quad \forall t \in \R.
\]
Hence, $\phi$ is convex and its minimizer satisfies $\phi'(t) = 0$, giving  
\[
t = \log \bigg(\frac{\delta}{1-\delta}\bigg) < 0,
\]
where the last inequality follows from $0 < \delta < 1/2$.

Substitute this $t$ into $\phi(t)$ gives
\begin{align}\label{eq-binomial-5}
\phi(t) \geq &  \phi\bigg( \log \bigg(\frac{\delta}{1-\delta}\bigg)\bigg) 
= -\delta \log \bigg( \frac{\delta}{1-\delta} \bigg)
  + \log \bigg( 2^{-1}\bigg(1 + \frac{\delta}{1-\delta}\bigg) \bigg) \nonumber\\
=&  -\delta \log \delta - (1-\delta)\log(1-\delta) - \log 2
= H(\delta) - \log 2.
\end{align}

Combining \eqref{eq-binomial-1}, \eqref{eq-binomial-4} and \eqref{eq-binomial-5}, we finally obtain that 
\[
\sum_{s=0}^{\lfloor \delta n \rfloor}\binom{n}{s} 
\leq 2^n \exp\Big\{n \big\{H(\delta) - \log (2) \big\}\Big\} = \exp\big\{nH(\delta) \big\},
\]
which completes the proof. 
\end{proof}

\section{Technical details for results in Section \ref{sec-nonstat}}\label{app-proof-sec-nonstat}

All auxiliary results for the non-stationary setting are collected in Appendix~\ref{app-non-stat-add}, with their proofs provided in Appendix~\ref{app-non-stat-add-proof}. The proofs of the main results in Section~\ref{sec-nonstat}, namely \Cref{thm-tensor-Frobenius} and \Cref{thm-comunity-recovery-non}, are given in Appendix~\ref{app-sec-non-proof}.

\subsection{Additional results}\label{app-non-stat-add}

We collect several auxiliary results for the non-stationary AR(1)-MSBM defined in \Cref{def-armsb}. These results characterize temporal dependence, bias of local estimators and spectral properties of window-averaged tensors, which are key ingredients for the analysis in Section~\ref{sec-nonstat}.

We first establish a strong mixing property for edgewise processes, extending \Cref{lemma-mixing-stat} to the non-stationary setting.

\begin{lemma}\label{lemma-mixing-nonstat}
Let the process $\{\mathbf{A}^t \}_{t \geq 0} \subset\{0, 1\}^{n \times n \times L}$ be defined in \Cref{def-armsb}. For any $ 1 \leq i \leq j \leq n$ and $l \in [L]$, the process $\{\mathbf{A}^t_{i, j, l} \}_{t \geq 0} \subset\{0, 1\}^{n \times n \times L}$ is  strongly mixing. 
\end{lemma}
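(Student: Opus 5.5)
The plan is to show that for a fixed edge $(i,j,l)$, the process $X_t=\mathbf{A}^t_{i,j,l}$ is a time-inhomogeneous two-state Markov chain whose one-step transition matrices contract uniformly in total variation. The $\alpha$-mixing coefficient is then bounded by a geometric sequence.

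First, by \Cref{def-armsb} the innovations $\mathbf{E}^t_{i,j,l}$ are independent over $t$ and independent of $\mathbf{A}^0$. Hence $\{X_t\}$ is Markov with transition matrix
\[
P_t=\begin{pmatrix}1-\boldsymbol{\Theta}^t_{i,j,l} & \boldsymbol{\Theta}^t_{i,j,l}\\ \boldsymbol{\Delta}^t_{i,j,l} & 1-\boldsymbol{\Delta}^t_{i,j,l}\end{pmatrix}.
\]
The two rows of $P_t$ differ in total variation by exactly $|1-\boldsymbol{\Theta}^t_{i,j,l}-\boldsymbol{\Delta}^t_{i,j,l}|$. Since $\boldsymbol{\Theta}^t_{i,j,l},\boldsymbol{\Delta}^t_{i,j,l}\in[c_{\min},1-c_{\min}]$ and their sum is at most $1$ (needed for the three probabilities to be valid), the Dobrushin coefficient satisfies
\[
\delta(P_t)=1-\boldsymbol{\Theta}^t_{i,j,l}-\boldsymbol{\Delta}^t_{i,j,l}\le 1-2c_{\min}=:\rho<1,
\]
uniformly in $t$. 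Submultiplicativity then gives $\delta(P_{s+1}\cdots P_{s+k})\le\rho^k$.

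Second, fix $t$, $k$, $\mathcal{A}\in\mathcal{F}_0^t$ and $\mathcal{B}\in\mathcal{F}_{t+k}^\infty$. By the Markov property, $\P(\mathcal{B}\mid\mathcal{F}_0^t)=h(X_t)$, where $h(x)=\sum_y \P(X_{t+k}=y\mid X_t=x)\,g(y)$ and $g(y)=\P(\mathcal{B}\mid X_{t+k}=y)\in[0,1]$. Therefore
\[
|\P(\mathcal{A}\cap\mathcal{B})-\P(\mathcal{A})\P(\mathcal{B})|=|\E[\mathbbm{1}_{\mathcal{A}}(h(X_t)-\E h(X_t))]|\le \sup_{x,x'}|h(x)-h(x')|.
\]
The right-hand side is at most the total variation distance between the rows of $P_{t+1}\cdots P_{t+k}$, which is bounded by $\rho^k$. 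Taking suprema gives $\alpha(k)\le(1-2c_{\min})^k\to0$, which proves strong mixing with geometric rate.

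The only delicate step is the Markov reduction when $\mathcal{B}$ depends on the infinite future. I will handle it by noting that conditionally on $X_{t+k}$, the future $\sigma$-field is independent of $\mathcal{F}_0^{t+k-1}$, since it is generated by $X_{t+k}$ and the later innovations. This justifies defining $g$ as above. The uniform bound on $\delta(P_t)$ uses only the entry constraint $c_{\min}$, so non-stationarity causes no further difficulty. The initial law of $\mathbf{A}^0$ plays no role.
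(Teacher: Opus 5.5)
Your argument is correct, but it takes a different route from the paper's.

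\textbf{The paper's route.} The paper does not use transition matrices, the Markov property or contraction coefficients. It works directly with the edge-refresh representation. It introduces the no-refresh event $\mathcal{R}^{t+1,t+k}_{i,j,l}=\bigcap_{s=t+1}^{t+k}\{\mathbf{E}^s_{i,j,l}=0\}$. On the complement of this event the edge has been reset at least once in $(t,t+k]$. Hence $\mathbbm{1}_{\mathcal{B}}\mathbbm{1}_{(\mathcal{R}^{t+1,t+k}_{i,j,l})^c}$ is a function of $\mathbf{E}^{t+1}_{i,j,l},\mathbf{E}^{t+2}_{i,j,l},\ldots$ alone, and so is independent of $\mathcal{F}^{0,t}_{i,j,l}$. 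This gives $\P(\mathcal{A}\cap\mathcal{B}\cap\mathcal{R}^c)=\P(\mathcal{A})\P(\mathcal{B}\cap\mathcal{R}^c)$, and therefore
\[
|\P(\mathcal{A}\cap\mathcal{B})-\P(\mathcal{A})\P(\mathcal{B})|=|\P(\mathcal{A}\cap\mathcal{B}\cap\mathcal{R})-\P(\mathcal{A})\P(\mathcal{B}\cap\mathcal{R})|\le\P(\mathcal{R}^{t+1,t+k}_{i,j,l}).
\]

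\textbf{Why the two proofs agree.} They are the coupling and analytic versions of the same fact. Your Dobrushin coefficient $\delta(P_s)=1-\boldsymbol{\Theta}^s_{i,j,l}-\boldsymbol{\Delta}^s_{i,j,l}$ is exactly $\P\{\mathbf{E}^s_{i,j,l}=0\}$. The refresh mechanism writes $P_s$ as a mixture of the identity, with weight $1-\boldsymbol{\Theta}^s_{i,j,l}-\boldsymbol{\Delta}^s_{i,j,l}$, and a kernel with identical rows $(\boldsymbol{\Delta}^s_{i,j,l},\boldsymbol{\Theta}^s_{i,j,l})/(\boldsymbol{\Theta}^s_{i,j,l}+\boldsymbol{\Delta}^s_{i,j,l})$. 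Both routes therefore give the same bound, $\alpha(k)\le\sup_t\prod_{s=t+1}^{t+k}(1-\boldsymbol{\Theta}^s_{i,j,l}-\boldsymbol{\Delta}^s_{i,j,l})\le(1-2c_{\min})^k$.

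\textbf{What each buys.} The paper's argument is more elementary. It needs only the independence of the innovations and one measurability observation, with no submultiplicativity lemma. It also reuses the no-update event that drives the covariance bound in \Cref{lem-temporal-cov}. Your argument needs the Markov reduction for events in the infinite future, which you justified correctly by writing $\mathbbm{1}_{\mathcal{B}}=\Phi(X_{t+k},\mathbf{E}^{t+k+1}_{i,j,l},\ldots)$. In exchange, it applies verbatim to any time-inhomogeneous finite-state Markov chain with uniformly contracting kernels, without needing an explicit regeneration representation.
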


We next quantify the temporal dependence of the edgewise processes. The following lemma shows that the covariance between edge variables is non-negative and decays geometrically with the time lag.

\begin{lemma}\label{lem-temporal-cov}
    Let the process $\{\mathbf{A}^t \}_{t \geq 0} \subset\{0, 1\}^{n \times n \times L}$ be defined in \Cref{def-armsb}. For any  $1\leq i\leq j\leq n$, $l \in [L]$ and  $ r, s \in [T]$, it holds that 
    \[
    0 \leq \mathrm{Cov}(\mathbf{A}_{i,j,l}^{r},\mathbf{A}_{i,j,l}^{s})
\leq  \frac{1}{4} (1-2c_{\min})^{\vert s-r \vert },
    \]
\end{lemma}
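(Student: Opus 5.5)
The plan is to exploit the fact that, for fixed $(i,j,l)$, the edge process $X_t = \mathbf{A}^t_{i,j,l}$ is a two-state, time-inhomogeneous Markov chain. Its conditional mean is affine in the previous state. Write $\theta_u = \boldsymbol{\Theta}^u_{i,j,l}$, $\delta_u = \boldsymbol{\Delta}^u_{i,j,l}$ and $\lambda_u = 1-\theta_u-\delta_u$. From \Cref{def-armsb}, conditionally on $X_{u-1}$ the new state equals $1$ with probability $\theta_u$ if $X_{u-1}=0$, and with probability $1-\delta_u$ if $X_{u-1}=1$. Hence
\[
\E\{X_u \mid X_{u-1}, X_{u-2},\ldots,X_0\} = \theta_u + \lambda_u X_{u-1}.
\]

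By symmetry of the covariance we may assume $r \le s$. The case $r=s$ is immediate, since $0 \le \mathrm{Var}(X_r) \le 1/4$ for any Bernoulli variable. For $r<s$, I will iterate the display above using the tower property (the Markov property lets each conditioning step reduce to the previous state). An induction on $s-r$ then gives
\[
\E\{X_s \mid X_r\} = a_{r,s} + \Big(\prod_{u=r+1}^{s}\lambda_u\Big) X_r ,
\]
where $a_{r,s}$ is a deterministic constant. Consequently
\[
\mathrm{Cov}(X_r,X_s) = \mathrm{Cov}\big(X_r, \E\{X_s\mid X_r\}\big) = \Big(\prod_{u=r+1}^{s}\lambda_u\Big)\mathrm{Var}(X_r).
\]

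It remains to bound each $\lambda_u$. Because $\mathbf{W}^u,\mathbf{M}^u \in [c_{\min},1-c_{\min}]^{K\times K\times L}$, we have $\theta_u,\delta_u \ge c_{\min}$, so $\lambda_u \le 1-2c_{\min}$. Because $\theta_u$ and $\delta_u$ are the probabilities of two disjoint outcomes of $\mathbf{E}^u_{i,j,l}$, we have $\theta_u+\delta_u\le 1$, so $\lambda_u\ge 0$. Therefore the product lies in $[0,(1-2c_{\min})^{s-r}]$. Combining this with $0\le \mathrm{Var}(X_r)\le 1/4$ yields both inequalities of \Cref{lem-temporal-cov}.

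The argument is essentially routine; the only point needing care is the justification of $\lambda_u \ge 0$, which gives the nonnegativity of the covariance. It uses only that the trinomial law of $\mathbf{E}^u_{i,j,l}$ in \Cref{def-armsb} is a genuine probability distribution. No assumption on $\mathbf{A}^0$ beyond its being $\{0,1\}$-valued is needed, since $\mathrm{Var}(X_r)\le 1/4$ always holds.
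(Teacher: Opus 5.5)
Your proof is correct. It reaches the same exact identity as the paper, $\mathrm{Cov}(\mathbf{A}^r_{i,j,l},\mathbf{A}^s_{i,j,l})=\prod_{u=r+1}^{s}(1-\boldsymbol{\Theta}^u_{i,j,l}-\boldsymbol{\Delta}^u_{i,j,l})\,\mathrm{Var}(\mathbf{A}^r_{i,j,l})$, but by a different route.

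\textbf{The paper's route.} The paper argues pathwise. With $N^{r,s}=\prod_{u=r+1}^{s}\mathbbm{1}\{\mathbf{E}^u_{i,j,l}=0\}$ it writes $\mathbf{A}^s_{i,j,l}=N^{r,s}\mathbf{A}^r_{i,j,l}+(1-N^{r,s})\widetilde{B}$, where $\widetilde{B}$ records the sign of a refresh in $(r,s]$. Since $(N^{r,s},\widetilde{B})$ is a function of $\{\mathbf{E}^u_{i,j,l}\}_{u=r+1}^{s}$ alone, independence from $\mathbf{A}^r_{i,j,l}$ collapses the covariance to $\E\{N^{r,s}\}\,\mathrm{Var}(\mathbf{A}^r_{i,j,l})$.

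\textbf{What the paper's route buys.} Nonnegativity comes for free, because the product of your $\lambda_u$ appears as the probability of the no-refresh event (the same event that drives \Cref{lemma-mixing-nonstat}). It also makes transparent that the edge at time $s$ is either a copy of its time-$r$ value or a draw independent of it.

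\textbf{What it costs.} It requires pathwise bookkeeping. The refresh that determines $\mathbf{A}^s_{i,j,l}$ is the \emph{last} one in $(r,s]$, not the first as the paper's definition of $\tau$ states. For example, $\mathbf{E}^{r+1}_{i,j,l}=1$ and $\mathbf{E}^{r+2}_{i,j,l}=-1$ give $\mathbf{A}^{r+2}_{i,j,l}=0$ but $\widetilde{B}=1$. The covariance computation only uses that $\widetilde{B}$ is measurable with respect to the innovations, so it survives once $\min$ is replaced by $\max$.

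\textbf{What your route buys.} You need only the one-step affine conditional mean and the tower property, so this bookkeeping never arises. The argument also applies verbatim to any chain whose conditional mean is affine in the previous state. The price is that nonnegativity requires the separate observation $\lambda_u\ge 0$, which you justify correctly from $\theta_u+\delta_u\le 1$.

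\textbf{A small correction.} Your closing remark is slightly overstated. Both proofs use one assumption on $\mathbf{A}^0$ beyond its being $\{0,1\}$-valued: its independence from the innovations $\{\mathbf{E}^u_{i,j,l}\}_{u\ge 1}$. This is implicit in \Cref{def-armsb}, and it is what validates $\E\{X_u\mid X_{u-1},\ldots,X_0\}=\theta_u+\lambda_u X_{u-1}$.
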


We now analyze the bias of the local look-back estimators for the transition probabilities.

\begin{proposition}\label{prop-bias}
Let the process $\{\mathbf{A}^t \}_{t \geq 0} \subset\{0, 1\}^{n \times n \times L}$ be defined in \Cref{def-armsb} and let $G$ denote the number of $V$-quasi-stationary segments in $\{(\boldsymbol{\Theta}^t, \boldsymbol{\Delta}^t)\}_{t=1}^{T}$ as defined in \Cref{def-seg}. Suppose \Cref{ass-bias}$(i)$-$(iii)$ hold.  
Then for any $t \in [T]$ and $k \in [t]$,   there exists an absolute constant $C>0$ such that
\begin{align}\label{eq-bias-1}
\Big\vert\mathbb{E}\big\{\widehat{\boldsymbol{\Theta}}_{i,j,l}^{t,k} \big\} - \boldsymbol{\Theta}_{i,j, l}^{t, k} \Big\vert \leq   & \big\vert  \mathcal{E}^{t,k}_{i, j, l} \big\vert  + \frac{C}{k}, \quad \forall 1 \leq i \leq j \leq n, l \in [L], 
\end{align}
and
\begin{align}\label{eq-bias-2}
\Big\vert \mathbb{E} \big\{ \widehat{\boldsymbol{\Delta}}_{i,j, l}^{t, k} \big\} - \boldsymbol{\Delta}_{i,j, l}^{t, k} \Big\vert  \leq &  \big\vert \widetilde{\mathcal{E}}^{t,k}_{i, j, l} \big\vert  + \frac{C}{k}, \quad \forall 1 \leq i \leq j \leq n, l \in [L].
\end{align}
Here $\boldsymbol{\Theta}^{t, k}$ and $\boldsymbol{\Delta}^{t, k}$ are defined in \eqref{def-Theta-t-k}.  If $k=1$, then
\[
\mathcal{E}^{t,k}_{i, j, l} =  \widetilde{\mathcal{E}}^{t,k}_{i, j, l} = 0;
\]
and if $t -k  \geq T_g +1$ for some $g \in [G]$, then
\[
\big\vert  \mathcal{E}^{t,k}_{i, j, l} \big\vert \leq   \frac{1}{2c_{\min}^2}  \max_{s, u \in [t]\backslash[t-k]} \vert \boldsymbol{\Theta}^{s}_{i, j, l} -  \boldsymbol{\Theta}^{u}_{i, j, l} \vert
\sum_{u' = T_g +2}^{t}  \big( \big\vert \boldsymbol{\Theta}^{u'}_{i,j,l} -   \boldsymbol{\Theta}^{u'-1}_{i,j,l}  \big\vert  +  \big\vert  \boldsymbol{\Delta}^{u'}_{i,j,l} - \boldsymbol{\Delta}^{u'-1}_{i,j,l}  \big\vert\big),  
\]
and 
\[
 \big\vert   \widetilde{\mathcal{E}}^{t,k}_{i, j, l}\big\vert 
\leq   \frac{1}{2c_{\min}^2}  \max_{s, u \in [t]\backslash[t-k]} \vert \boldsymbol{\Delta}^{s}_{i, j, l} -  \boldsymbol{\Delta}^{u}_{i, j, l} \vert
\sum_{u' = T_g +2}^{t}  \big( \big\vert \boldsymbol{\Theta}^{u'}_{i,j,l} -   \boldsymbol{\Theta}^{u'-1}_{i,j,l}  \big\vert  +  \big\vert  \boldsymbol{\Delta}^{u'}_{i,j,l} - \boldsymbol{\Delta}^{u'-1}_{i,j,l}  \big\vert \big). 
\]

\end{proposition}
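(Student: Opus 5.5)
The plan is to write the windowed MLE as a ratio of window averages, split its bias into a nonlinear (Jensen-type) part of order $1/k$ and a deterministic part coming from drift within the window, and control the drift part through a contraction recursion for the marginals $\boldsymbol{\Pi}^u$. I treat $\widehat{\boldsymbol{\Theta}}^{t,k}$ in detail; $\widehat{\boldsymbol{\Delta}}^{t,k}$ is symmetric. Fix $(i,j,l)$, drop the subscripts, and set
\[
Y=k^{-1}\sum_{u=t-k+1}^{t}\mathbf{A}^u(1-\mathbf{A}^{u-1}),\qquad X=k^{-1}\sum_{u=t-k+1}^{t}\mathbf{A}^{u-1},
\]
so that $\widehat{\boldsymbol{\Theta}}^{t,k}=Y/(1-X)$. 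Conditioning on $\mathbf{A}^{u-1}$ gives $\mathbb{E}Y=k^{-1}\sum_u \boldsymbol{\Theta}^u(1-\boldsymbol{\Pi}^{u-1})$ and $1-\mathbb{E}X=k^{-1}\sum_u(1-\boldsymbol{\Pi}^{u-1})$. I then decompose
\[
\mathbb{E}\widehat{\boldsymbol{\Theta}}^{t,k}-\boldsymbol{\Theta}^{t,k}
=\Big\{\mathbb{E}\tfrac{Y}{1-X}-\tfrac{\mathbb{E}Y}{1-\mathbb{E}X}\Big\}
+\Big\{\tfrac{\mathbb{E}Y}{1-\mathbb{E}X}-\boldsymbol{\Theta}^{t,k}\Big\},
\]
and define $\mathcal{E}^{t,k}$ to be the second bracket. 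The first bracket will give the $C/k$ term.

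\textbf{The nonlinear term.} I Taylor-expand $g(y,x)=y/(1-x)$ to second order around $(\mathbb{E}Y,\mathbb{E}X)$. The denominator satisfies $1-\mathbb{E}X\geq c_{\min}$, because $\boldsymbol{\Pi}^u\le 1-c_{\min}$ follows from the recursion below. The linear term has zero mean. The quadratic term is bounded by $C\{\mathrm{Var}(X)+|\mathrm{Cov}(X,Y)|\}$. By \Cref{lem-temporal-cov} the covariances decay geometrically, and the same Markov argument extends this to the products $\mathbf{A}^u(1-\mathbf{A}^{u-1})$. Hence both variances are $O(k^{-2}\sum_{r,s}(1-2c_{\min})^{|r-s|})=O(1/k)$. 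The remainder needs a lower bound on $1-X$. On the event $\{|X-\mathbb{E}X|\le c_{\min}/2\}$ the remainder is cubic and contributes $O(k^{-3/2})$. On the complement I use the Bernstein inequality for $\alpha$-mixing sequences, as in the proof of \Cref{lemma-sub-Gaussian}, now justified by \Cref{lemma-mixing-nonstat}. Its probability is then exponentially small in $k$, and this beats the crude bound $|\widehat{\boldsymbol{\Theta}}|\le 1$ once $k$ exceeds a constant; for smaller $k$ the $C/k$ bound is trivial. I expect this truncation step to be the main technical obstacle, mainly in getting a $k$-uniform $C/k$ bound rather than one with logarithmic losses. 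If the mixing Bernstein bound carries unwanted log factors, I would fall back on the geometric covariance decay and a fourth-moment Chebyshev bound.

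\textbf{The drift term.} Let $\bar\Pi=k^{-1}\sum_u\boldsymbol{\Pi}^{u-1}$. Since $\boldsymbol{\Theta}^{t,k}$ is the window average of the $\boldsymbol{\Theta}^u$, algebra gives
\[
\mathcal{E}^{t,k}=\frac{-\sum_u(\boldsymbol{\Theta}^u-\boldsymbol{\Theta}^{t,k})(\boldsymbol{\Pi}^{u-1}-\bar\Pi)}{\sum_u(1-\boldsymbol{\Pi}^{u-1})}.
\]
For $k=1$ this vanishes, as required. In general I bound $|\boldsymbol{\Theta}^u-\boldsymbol{\Theta}^{t,k}|$ by $\max_{s,u}|\boldsymbol{\Theta}^s-\boldsymbol{\Theta}^u|$ and the denominator from below by $c_{\min}k$. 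It then remains to bound $\max_u|\boldsymbol{\Pi}^{u-1}-\bar\Pi|$.

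I use the recursion $\boldsymbol{\Pi}^u=\boldsymbol{\Theta}^u+(1-\boldsymbol{\Theta}^u-\boldsymbol{\Delta}^u)\boldsymbol{\Pi}^{u-1}$. Writing $d_u=\boldsymbol{\Pi}^u-\boldsymbol{\Pi}^{u-1}$, it gives
\[
|d_u|\le(1-2c_{\min})|d_{u-1}|+|\boldsymbol{\Theta}^u-\boldsymbol{\Theta}^{u-1}|+|\boldsymbol{\Delta}^u-\boldsymbol{\Delta}^{u-1}|.
\]
I start this recursion at the segment start $T_g+1$. The window lies after $T_g$, since $t-k\ge T_g+1$. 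The first increment $d_{T_g+2}$ is controlled by \Cref{ass-bias}$(iii)$. Summing the geometric recursion then gives
\[
|\boldsymbol{\Pi}^s-\boldsymbol{\Pi}^u|\le c_{\min}^{-1}C_\pi\sum_{u'=T_g+2}^{t}\big(|\boldsymbol{\Theta}^{u'}-\boldsymbol{\Theta}^{u'-1}|+|\boldsymbol{\Delta}^{u'}-\boldsymbol{\Delta}^{u'-1}|\big)
\]
for $s,u$ in the window. Combined with the denominator bound, this yields the stated bound on $|\mathcal{E}^{t,k}|$ with constant of order $1/c_{\min}^2$; the factor $1/2$ should come from $\sum_{m\ge0}(1-2c_{\min})^m=1/(2c_{\min})$. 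The bound on $\widetilde{\mathcal{E}}^{t,k}$ follows identically with the roles of $\boldsymbol{\Theta}$ and $\boldsymbol{\Delta}$, and of $1-\boldsymbol{\Pi}$ and $\boldsymbol{\Pi}$, exchanged.
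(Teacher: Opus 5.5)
Your proposal is correct and follows essentially the same route as the paper. Your drift term $\mathbb{E}Y/(1-\mathbb{E}X)-\boldsymbol{\Theta}^{t,k}$ is exactly the paper's term (I) minus $\boldsymbol{\Theta}^{t,k}$, and it is controlled by the same contraction recursion for $\boldsymbol{\Pi}^u$ anchored at the segment start by \Cref{ass-bias}$(iii)$; the paper's resulting constant is also $C_{\pi}/(2c_{\min}^2)$. Your nonlinear part corresponds to the paper's terms (II), (III.1) and (III.2), handled by geometric covariance decay, Cauchy--Schwarz and truncation.

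One simplification to the truncation step, which you flagged as the main obstacle: on the good event you only need the remainder to be bounded by a constant times $(X-\mathbb{E}X)^2$, so the $k^{-3/2}$ third-moment claim is unnecessary. On the complement, everything is bounded and Chebyshev with your $O(1/k)$ variance bound already gives probability $O(1/k)$, so you do not need the mixing Bernstein inequality.
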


We now establish spectral properties of the window-averaged transition probability tensors in the non-stationary regime. The next lemma shows that their Tucker ranks and singular values behave analogously to the stationary case described in \Cref{lemma-singular-values-stat}.

\begin{lemma}\label{lemma-singular-values-non-stat}
For any $t \in [T]$ and any look-back window size $k \in [t]$, let $\mathbf{Q}^{t, k} \in \R^{K\times K \times L}$ be the tensor defined in \Cref{ass-bias}$(iv)$. Assume that $\mathrm{rank}\big( \mathcal{M}_1 (\mathbf{Q}^{t, k}) \big) = K$ and denote $\bar{r}^{t, k} = \mathrm{rank}\big( \mathcal{M}_3 (\mathbf{Q}^{t, k}) \big)$. Define
\[
\boldsymbol{\Omega}^{t, k} =  \frac{1}{k}\sum_{u=t-k+1}^{t} \ (\boldsymbol{\Theta}^{u} + \boldsymbol{\Delta}^{u} ). 
\]
Then, for any $t \in [T]$ and $k \in [t]$, we have that 
\[
\mathrm{rank} \big(\mathcal{M}_1(\mathbf{\Omega}^{t, k})\big)= \mathrm{rank} \big(\mathcal{M}_2(\mathbf{\Omega}^{t, k})\big)= K,
\quad \mathrm{rank} \big(\mathcal{M}_3(\mathbf{\Omega}^{t, k})\big)= \bar{r}^{t, k}, 
\]
and
\[ 
s_{\min} \sigma_{\min} (\mathbf{Q}^{t, k}) \leq  \sigma_{\min}\big(\mathcal{M}_s(\boldsymbol{\Omega}^{t, k})\big)  \leq \big\| \mathcal{M}_s(\boldsymbol{\Omega}^{t, k}) \big\| \leq   s_{\max} \| \mathbf{Q}^{t, k}\|, \quad \forall s \in [3],
\] 
where $s_{\min} = \min_{j \in [K]} s_j$ and $s_{\max} = \max_{j \in [K]} s_j$ with for any $j \in [K]$, $s_j = \sum_{i=1}^{n} Z_{i, j}$. 
\end{lemma}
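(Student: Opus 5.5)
The plan is to mirror the proof of \Cref{lemma-singular-values-stat}, replacing the stationary objects by their window averages. The first step is to note that, since the community membership matrix $Z$ is time-invariant in \Cref{def-armsb}, averaging is linear and commutes with the multilinear products. For every $l \in [L]$ this gives
\[
\boldsymbol{\Omega}^{t,k}_{:,:,l} = \frac{1}{k}\sum_{u=t-k+1}^{t} Z\big(\mathbf{W}^u_{:,:,l} + \mathbf{M}^u_{:,:,l}\big)Z^{\top} = Z\,\mathbf{Q}^{t,k}_{:,:,l}\,Z^{\top}.
\]
Here $\mathbf{Q}^{t,k}$ is the window-averaged tensor from \Cref{ass-bias}$(iv)$; I read its definition as averaging $\mathbf{W}^u+\mathbf{M}^u$ over $u$. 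Equivalently, $\boldsymbol{\Omega}^{t,k} = \mathbf{Q}^{t,k}\times_1 Z \times_2 Z$. This is the only new ingredient, and after it the problem is identical to the stationary one with $\mathbf{Q}$ replaced by $\mathbf{Q}^{t,k}$.

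Next I apply the matricization identities, as in Lemma 4 of \cite{zhang2018tensor}. These give $\mathcal{M}_1(\boldsymbol{\Omega}^{t,k}) = Z\,\mathcal{M}_1(\mathbf{Q}^{t,k})\,(Z\otimes I_L)^{\top}$, the analogous formula for mode $2$, and $\mathcal{M}_3(\boldsymbol{\Omega}^{t,k}) = \mathcal{M}_3(\mathbf{Q}^{t,k})\,(Z\otimes Z)^{\top}$. Because $Z^{\top}Z = \mathrm{diag}(s_1,\dots,s_K)$, the matrix $Z$ has full column rank $K$ with $\sigma_1(Z)=\sqrt{s_{\max}}$ and $\sigma_K(Z)=\sqrt{s_{\min}}$. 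Both $Z\otimes I_L$ and $Z\otimes Z$ are then full column rank, with extreme singular values $\sqrt{s_{\max}},\sqrt{s_{\min}}$ and $s_{\max},s_{\min}$ respectively. Multiplying on the left by a full-column-rank matrix, or on the right by the transpose of one, preserves rank (Lemma S3 of \cite{wang2025multilayer}). Using $\mathrm{rank}(\mathcal{M}_1(\mathbf{Q}^{t,k}))=K$ and the symmetry of each slice, this yields $\mathrm{rank}_1=\mathrm{rank}_2=K$ and $\mathrm{rank}_3=\bar r^{t,k}$.

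For the singular values I combine two bounds. Submultiplicativity gives $\|\mathcal{M}_s(\boldsymbol{\Omega}^{t,k})\| \le \sigma_1(Z)^2\|\mathcal{M}_s(\mathbf{Q}^{t,k})\|\le s_{\max}\|\mathbf{Q}^{t,k}\|$. The lower bound on the smallest nonzero singular value of a product with full-rank outer factors (Lemma S4 of \cite{wang2025multilayer}) gives $\sigma_{\min}(\mathcal{M}_s(\boldsymbol{\Omega}^{t,k}))\ge s_{\min}\,\sigma_{\min}(\mathcal{M}_s(\mathbf{Q}^{t,k}))\ge s_{\min}\sigma_{\min}(\mathbf{Q}^{t,k})$. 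Mode $3$ is the only place needing care, since there the scaling comes from the single factor $Z\otimes Z$, whose singular values lie in $[s_{\min},s_{\max}]$, so the same constants appear.

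I expect no real obstacle. The one point to check is that the window average of the connectivity tensors actually factors through the fixed $Z$ (first step) and that the mode-$3$ scaling matches. Everything else is a verbatim repetition of \Cref{lemma-singular-values-stat}.
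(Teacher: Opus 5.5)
Your proposal is correct and follows the paper's proof essentially verbatim: you write $\boldsymbol{\Omega}^{t,k}=\mathbf{Q}^{t,k}\times_1 Z\times_2 Z$, apply the matricization identity with the full-column-rank factors $Z$ and $Z\otimes I_L$, and invoke Lemmas S3 and S4 of \cite{wang2025multilayer} for the rank and singular-value bounds. Your proof is more explicit than the paper's in three places, each matching the intended meaning: mode $3$ via $Z\otimes Z$, mode $2$ via symmetry of the slices (which the paper compresses into ``applying the same arguments''), and reading $\mathbf{Q}^{t,k}$ as the window average of $\mathbf{W}^u+\mathbf{M}^u$.
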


Finally, we establish an error bound for the subspace estimator obtained via H-PCA (\textbf{Stage II} of \Cref{alg:fast-msbm}).

\begin{proposition}\label{prop-sin-theta}
Let the process $\{\mathbf{A}^t \}_{t \geq 0} \subset\{0, 1\}^{n \times n \times L}$ be defined in \Cref{def-armsb}  and let $G$ be the number of $V$-quasi-stationary segments in $\{(\boldsymbol{\Theta}^t, \boldsymbol{\Delta}^t)\}_{t=1}^{T}$ as defined in \Cref{def-seg}. Suppose that \Cref{ass-bias} holds.
Define
\begin{equation}\label{def-tau}
\tau(t) =  C_{\tau}\max \Big\{t^{-1} \log(n \vee L \vee T) \log(T) \log\log(T), \big(V(t)\big)^2 \Big\}, \quad \forall t \in [T],
\end{equation}
where $ C_{\tau} >0 $ is a sufficiently large constant.  For any $g \in [G]$ and any $t \in [T_g]\backslash[T_{g-1}]$, under \Cref{alg:fast-msbm}, with probability at least $1- (n \vee L \vee T)^{-c}$,
\[
\inf_{O \in \mathbb{O}_{K \times K}}\| \widehat{U}^t_Z - U_Z O   \| \leq   \widetilde{\mathcal{E}}_t,
\]
where
\begin{align}\label{eq-sin-theta}
\widetilde{\mathcal{E}}_t =  C  \sigma_Q^{-2}  \bigg( \frac{K^2\sqrt{L}}{n}  \epsilon_{t}^2 + K^2L\epsilon_{t}^4 \bigg)   
+  C\sigma_Q^{-1}  \bigg( \frac{ K}{\sqrt{n}}   \epsilon_{t}  +     K  \sqrt{L} \epsilon_{t}^2  \bigg), 
\end{align}
with
\begin{equation}\label{def-epsilon-t}
 \epsilon_{t} = \max \bigg\{\sqrt{\frac{ \log(n \vee L \vee T)\log(T) \log\log(T)}{t-T_{g-1}}}, V(t-T_{g-1})\bigg\}, \quad \sigma_Q =\min_{t \in [T], s \in [t]}\sigma_{\min}(\mathbf{Q}^{t, s}),
\end{equation}
and absolute constants $C, c >0$. 
\end{proposition}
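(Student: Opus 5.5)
The plan is to mirror the proof of \Cref{prop-sin-theta-stat}, with two new ingredients. First, an oracle inequality for the adaptively selected window $\hat k_t$. Second, entrywise concentration and bias bounds for the windowed MLEs $(\widehat{\boldsymbol{\Theta}}^{t,\hat k_t},\widehat{\boldsymbol{\Delta}}^{t,\hat k_t})$ around the window averages $(\boldsymbol{\Theta}^{t,\hat k_t},\boldsymbol{\Delta}^{t,\hat k_t})$.

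\textbf{Step 1: window-level concentration.} For every fixed $k\in\mathcal G^{(t)}$ I will establish an analogue of \Cref{lemma-sub-Gaussian}. I apply the Bernstein inequality for $\alpha$-mixing sequences, using \Cref{lemma-mixing-nonstat} and \Cref{lem-temporal-cov}, to the windowed averages $k^{-1}\sum_{u=t-k+1}^t \mathbf{A}^{u-1}$ and $k^{-1}\sum_{u} \mathbf{A}^{u}(1-\mathbf{A}^{u-1})$, and then pass through the ratio map as in \eqref{eq-prop1-4}--\eqref{eq-prop1-6}. The outcome is that $\widehat{\boldsymbol{\Theta}}^{t,k}-\E\widehat{\boldsymbol{\Theta}}^{t,k}$ is $Ck^{-1/2}$-sub-Gaussian entrywise, with sup-norm deviation at most $\sqrt{\log(n\vee L\vee T)\log T\log\log T/k}$. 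A union bound over $|\mathcal G^{(t)}|=O(\log t)$ windows and all $t\le T$ makes these bounds hold simultaneously on one event. For the bias I use \Cref{prop-bias}. When the window stays inside the current segment, $t-k\ge T_{g-1}$, the drift terms $\mathcal E^{t,k}$ are at most $V(t-T_{g-1})$ times a sum of increments. Local dominant monotonicity together with \Cref{ass-bias}(ii)--(iii) controls that sum by $O(V(t-T_{g-1}))$, so the bias is $O(k^{-1}+V^2)$. I also need $\|\boldsymbol{\Theta}^{t,k}-\boldsymbol{\Theta}^t\|_\infty\le V(t-T_{g-1})$, which follows from \Cref{def-seg}.

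\textbf{Step 2: oracle property of $\hat k_t$ (main obstacle).} Let $k^\ast$ be the largest grid point with $k^\ast\le t-T_{g-1}$. By the structure of the dynamic grid \eqref{def-dynamic-grid}, $k^\ast\ge (t-T_{g-1})/4$. I then need two facts.
\begin{enumerate}[(a)]
\item \emph{No premature stop.} On the Step 1 event, every pair $k_{(u)}\le k_{(q)}\le k^\ast$ passes the test. To see this, expand $f^{t,k_{(u)}}$ around its minimizer. Because entries lie in $[c_{\min},1-c_{\min}]$, the loss difference is quadratic, of order $(\widehat{\boldsymbol{\Theta}}^{t,k_{(q)}}-\widehat{\boldsymbol{\Theta}}^{t,k_{(u)}})^2$. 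This is at most $C\{k_{(u)}^{-1}\log(\cdot)+V^2\}\le\tau(k_{(u)})$ once $C_\tau$ is large.
\item \emph{Accepted windows beyond $k^\ast$ are harmless.} Any accepted $\hat k_t>k^\ast$ satisfies the test against $u$ with $k_{(u)}=k^\ast$. The curvature of the loss then gives $\|\widehat{\boldsymbol{\Theta}}^{t,\hat k_t}-\widehat{\boldsymbol{\Theta}}^{t,k^\ast}\|_\infty\lesssim\sqrt{\tau(k^\ast)}\asymp\epsilon_t$.
\end{enumerate}
Together with Step 1, this yields the sup-norm bound $\|\widehat{\boldsymbol{\Theta}}^{t,\hat k_t}-\boldsymbol{\Theta}^t\|_\infty\lesssim\epsilon_t$.

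The delicate part is what H-PCA actually needs: a sub-Gaussian noise part of size $\epsilon_t$ plus a deterministic bias of size $\epsilon_t^2$, not merely a sup-norm bound. The first thing I would try is to compare with the oracle window $k^\ast$. I would write $\widehat{\boldsymbol{\Omega}}^{t,\hat k_t}-\boldsymbol{\Omega}^{t}$ as the sum of three pieces:
\begin{enumerate}[(i)]
\item the oracle error $\widehat{\boldsymbol{\Omega}}^{t,k^\ast}-\boldsymbol{\Omega}^t$, which has the stationary-type decomposition with $t$ replaced by $k^\ast$;
\item the selection difference, bounded entrywise by $\epsilon_t^2$ through the log-likelihood gap, since the loss gap itself is at most $\tau\asymp\epsilon_t^2$;
\item drift, of order $\epsilon_t^2$ by the squared-variation bias bound.
\end{enumerate}
The random selection is handled by a union bound over the $O(\log T)$ candidate windows, which is absorbed into the $\log T\log\log T$ factor in $\epsilon_t$.

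\textbf{Step 3: H-PCA perturbation.} Follow Steps 1--3 of the proof of \Cref{prop-sin-theta-stat}, with $\mathbf{E}=\widehat{\boldsymbol{\Omega}}^{t,\hat k_t}-\boldsymbol{\Omega}^{t,\hat k_t}$. By \Cref{lemma-singular-values-non-stat}, $\sigma_K(\mathcal M_1(\boldsymbol{\Omega}^{t,\hat k_t}))\ge s_{\min}\sigma_Q$. The diagonal-deleted Gram term $(I)$ then becomes $n\sqrt L\,\epsilon_t^2+n^2L\,\epsilon_t^4$, via matrix Bernstein for the centered part and the squared bias for the rest. The cross term $(II)$ becomes $\|\mathcal M_1(\boldsymbol{\Omega})\|(\sqrt n\,\epsilon_t+n\sqrt L\,\epsilon_t^2)$. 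Dividing by $s_{\min}^2\sigma_Q^2$, and using \Cref{ass-bias}(iv) with $s_{\min}\asymp n/K$, gives exactly \eqref{eq-sin-theta}.

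Two points remain. First, $U_Z$ is the same for every window-averaged tensor, because $Z$ is fixed and \Cref{ass-bias}(iv) holds, so no subspace drift enters. Second, the right-multiplication by $O$ is handled by Lemma 1 of \cite{cai2017rate-optimal}.
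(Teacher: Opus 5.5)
\paragraph{Where the proposal breaks.} Your Steps 1, 2(a)--(b) and 3 follow the same route as the paper: uniform concentration as in \Cref{main-theorem-1}, then $\hat k_t\ge\bar k_t$, then the H-PCA argument of \Cref{prop-sin-theta-stat} with $\mathbf{E}=\widehat{\boldsymbol{\Omega}}^{t,\hat k_t}-\boldsymbol{\Omega}^{t,\hat k_t}$. Taking a union bound over grid windows to handle the randomness of $\hat k_t$ is sound. The gap is in how you obtain an $\epsilon_t^2$ bias for the \emph{selected} window, and your oracle decomposition does not deliver it.

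\begin{itemize}
\item Claim (ii) is false. A loss gap of size $\tau\asymp\epsilon_t^2$ gives, through curvature, only a parameter gap of size $\sqrt{\tau}\asymp\epsilon_t$, which is exactly what you wrote in (b). Even under exact stationarity, $\widehat{\boldsymbol{\Theta}}^{t,k}-\widehat{\boldsymbol{\Theta}}^{t,k^\ast}$ for a fixed $k>k^\ast$ of comparable size fluctuates at order $(k^\ast)^{-1/2}$.
\item Centering at $\boldsymbol{\Omega}^t$ leaves the first-order drift $\boldsymbol{\Omega}^{t,k^\ast}-\boldsymbol{\Omega}^t$. By your own Step 1 this is $O(V)$, not $O(\epsilon_t^2)$.
\end{itemize}

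A data-dependent, entrywise $O(\epsilon_t)$ perturbation without centred, independent structure enters the cross term $(II)$ of \eqref{prop-sin-2} through $\|\mathcal{M}_1(\cdot)\|\lesssim n\sqrt{L}\,\epsilon_t$. It yields a contribution $\sigma_Q^{-1}K\sqrt{L}\,\epsilon_t$, far worse than \eqref{eq-sin-theta}. The comparison with $k^\ast$ therefore has to be dropped in favour of centering at the window average, as in your Step 3. That centering is legitimate because all window averages share $U_Z$.

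\paragraph{What Step 3 then still needs.} You must show $\|\E\widehat{\boldsymbol{\Omega}}^{t,k}-\boldsymbol{\Omega}^{t,k}\|_\infty\lesssim\epsilon_t^2$ for every window that can be selected. Your proposal proves this only for $t-k\ge T_{g-1}$. Nothing prevents $\hat k_t>t-T_{g-1}$, since a modest change at $T_{g-1}$ passes the test. For such windows, \Cref{prop-bias} gives $O(\epsilon_t^2)$ only if two things hold:
\begin{itemize}
\item the population parameters vary by $O(\epsilon_t)$ across the whole window;
\item the total variation of $(\boldsymbol{\Theta}^u,\boldsymbol{\Delta}^u)$ from the preceding boundary up to $t$ is $O(\epsilon_t)$.
\end{itemize}
Estimator-level closeness as in (b) gives neither.

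This is the content of \Cref{prop-bias-final}, which argues on the concentration event as follows.
\begin{itemize}
\item $\widehat{\boldsymbol{\Theta}}^{t,\hat k_t}$ concentrates around its $\boldsymbol{\Pi}$-weighted window average, and it is accepted against $\bar k_t$. Together with strong convexity of the population losses, this forces the weighted window average of $\boldsymbol{\Theta}^s-\boldsymbol{\Theta}^{T_{g-1}+1}$ to be $O(\epsilon_t)$.
\item \emph{Local} dominant monotonicity of the segment-starting values then rules out cancellation. With \Cref{ass-bias}$(ii)$ and the grid structure controlling how the window straddles earlier segments, each boundary jump inside the window is $O(\epsilon_t)$.
\item Dominant monotonicity within segments bounds the total variation.
\end{itemize}
This is where the monotonicity conditions of \Cref{def-seg} are genuinely needed. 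In your Step 1 you invoke local dominant monotonicity only inside a segment, where it is ordinary dominant monotonicity that does the work.

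With this claim in place, $\hat k_t$ lies, on the concentration event, in a deterministic set of windows. Each such window has bias $O(\epsilon_t^2)$ and noise parameter $\lesssim\bar k_t^{-1/2}\lesssim\epsilon_t$. Your union bound together with Step 3 then finishes the proof.
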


When $K$ are treated as a constant and $\sigma_Q \asymp \sqrt{L}$, the bound in \eqref{eq-sin-theta} simplifies to
\begin{equation}\label{eq-E_t-sim}
             \frac{ \epsilon_t }{\sqrt{nL}}+ \epsilon_t^2 + \epsilon_t^4.
\end{equation}
Thus, a smaller $\epsilon_t$ leads to a sharper bound.
However, $\epsilon_t$ itself reflects a trade-off between the segment length $(t-T_{g-1})$ and the drift envelope $V(\cdot)$:
a fast-decaying $V(\cdot)$ creates more and shorter segments and increases the stochastic term $(t-T_{g-1})^{-1/2}$, whereas a slow-decaying $V(\cdot)$ yields longer segments and reduces stochastic variability, but at the cost of a larger drift term $V(t-T_{g-1})$. For example, if
\[
    V(t) = t^{-1/2},  \quad  \forall t>0,
\]
then up to logarithmic factors,
\[
    \widetilde{\mathcal{E}}_t  \lesssim  \frac{1}{\sqrt{(t-T_{g-1})nL}}+ \frac{1}{t-T_{g-1}}
\]
which matches the stationary-rate bound in \eqref{eq-stat-rate},  when restricted to each quasi-stationary segment. 
This shows that \Cref{alg:fast-msbm} automatically adapts its look-back window: it contracts the window in the presence of substantial cumulative drift or abrupt changes, while expanding it to the largest statistically reliable size when the process remains quasi-stationary.

\subsection{Proofs for Section \ref{sec-nonstat}}\label{app-sec-non-proof}

To establish the theoretical results for \Cref{alg:fast-msbm}, we first analyze the windowed estimators of the transition probabilities, $\widehat{\boldsymbol{\Theta}}^{t, \hat{k}_t}$ and $\widehat{\boldsymbol{\Delta}}^{t, \hat{k}_t}$. 
This analysis relies on two key ingredients: (i) a concentration inequality (\Cref{main-theorem-1}, with proof provided in Appendix~\ref{app-proof-main-theorem-1}), and (ii) a bias control result (\Cref{prop-bias-final}, with proof provided in Appendix~\ref{sec-pro-bias}).

\begin{theorem}\label{main-theorem-1}
Let the process $\{\mathbf{A}^t \}_{t \geq 0} \subset\{0, 1\}^{n \times n \times L}$ be defined as in \Cref{def-armsb}. Let $G$ be the number of $V$-quasi-stationary segments in $\{(\boldsymbol{\Theta}^t, \boldsymbol{\Delta}^t)\}_{t=1}^{T}$ as defined in \Cref{def-seg}. Suppose that \Cref{ass-bias}$(i)$ holds. 
For \Cref{alg:fast-msbm}, define
\[
\tau(t) =  C_{\tau}\max \Big\{t^{-1} \log(n \vee L \vee T) \log(T) \log\log(T), \big(V(t)\big)^2 \Big\}, \quad \forall t \in [T]
\]
where $C_{\tau} > 0$ is a sufficiently large constant. 
Then 
\begin{align}
 \P \bigg\{  & \big \vert \widehat{\boldsymbol{\Theta}}^{t,\hat{k}_t} _{i, j, l}  -  \boldsymbol{\Theta}^{t}_{i, j, l}  \big \vert^2  + \big \vert \widehat{\boldsymbol{\Delta}}^{t,\hat{k}_t}_{i, j, l}  -  \boldsymbol{\Delta}^{t} _{i, j, l} \big \vert^2    
 \leq  C \max \bigg\{ \frac{ \log(n \vee L \vee T)  \log(T) \log\log(T)}{ t - T_{g-1}},   \big(V(t-T_{g-1})\big)^2 \bigg\},  \nonumber\\
 & \hspace{0.5cm} \forall
  1 \leq i\leq j \leq n, l\in [L], \mbox{ for any } t \in [T_{g}]/ [T_{g-1}]  \mbox{ with } g \in [G] \bigg\} \geq 1- (n \vee L \vee T)^{-c},
\end{align}
for some absolute constants $C, c> 0$.

\end{theorem}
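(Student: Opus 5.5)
The plan is an oracle-window argument in the spirit of \cite{huang2023stability}, carried out entrywise on a single high-probability event. Fix $g\in[G]$ and $t\in[T_g]\backslash[T_{g-1}]$. Let $k^*_t$ be the largest element of $\mathcal{G}^{(t)}$ with $k^*_t\le t-T_{g-1}$. By the construction \eqref{def-dynamic-grid} consecutive grid points differ by at most a factor of $2$, so $k^*_t\ge (t-T_{g-1})/2$. By \Cref{ass-bias}$(i)$ this gives
\[
\tau(k^*_t)\lesssim \max\Big\{\tfrac{\log(n\vee L\vee T)\log(T)\log\log(T)}{t-T_{g-1}},\ \big(V(t-T_{g-1})\big)^2\Big\}.
\]
The proof then has two halves: (a) the selected window satisfies $\hat k_t\ge k^*_t$; (b) whenever $\hat k_t\ge k^*_t$, the estimator $\widehat{\boldsymbol{\Theta}}^{t,\hat k_t}$ is within $\sqrt{\tau(k^*_t)}$ of $\boldsymbol{\Theta}^t$ in every entry, and likewise for $\widehat{\boldsymbol{\Delta}}^{t,\hat k_t}$.

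\textbf{Step 1: the good event.} For every $t$, every $k\in\mathcal{G}^{(t)}$ with $k\le t-T_{g-1}$, and every entry $(i,j,l)$, I want
\[
\big|\widehat{\boldsymbol{\Theta}}^{t,k}_{i,j,l}-\boldsymbol{\Theta}^t_{i,j,l}\big|+\big|\widehat{\boldsymbol{\Delta}}^{t,k}_{i,j,l}-\boldsymbol{\Delta}^t_{i,j,l}\big|\le C\Big(\sqrt{\tfrac{\log(n\vee L\vee T)\log T\log\log T}{k}}+V(k)\Big).
\]
I also want the empirical fractions of $0$-states and $1$-states in each such window to be at least $c_{\min}/2$ once $k$ exceeds a polylogarithmic threshold. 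The argument has three ingredients:
\begin{itemize}
\item Write the MLEs as ratios $\mathbf{Y}/(1-\mathbf{X})$ exactly as in the proof of \Cref{lemma-sub-Gaussian}.
\item Obtain concentration of $\mathbf{X}$ and $\mathbf{Y}$ from \Cref{lemma-mixing-nonstat} and the Bernstein inequality for $\alpha$-mixing sequences. The extra $\log T\log\log T$ factor absorbs the mixing cost and the union bound over the $O(\log T)$ grid points, the $T$ time points and the $n^2L$ entries.
\item Control the bias by \Cref{prop-bias}: it is $C/k$ plus a drift product. Since the window lies inside a single segment, that product is at most $V(k)^2$ up to constants, using dominant monotonicity within the segment.
\end{itemize}
Finally, move from the window average $\boldsymbol{\Theta}^{t,k}$ to $\boldsymbol{\Theta}^t$ at the cost of $V(t-T_{g-1})\le V(k)$, because $V$ is non-increasing.

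\textbf{Step 2: the loss identity.} Entrywise, for any candidate pair $(a,b)$,
\[
f^{t,k_{(u)}}_{i,j,l}(a,b)-f^{t,k_{(u)}}_{i,j,l}\big(\widehat{\boldsymbol{\Theta}}^{t,k_{(u)}}_{i,j,l},\widehat{\boldsymbol{\Delta}}^{t,k_{(u)}}_{i,j,l}\big)=\hat p_0\,\mathrm{KL}\big(\widehat{\boldsymbol{\Theta}}^{t,k_{(u)}}_{i,j,l}\,\|\,a\big)+\hat p_1\,\mathrm{KL}\big(\widehat{\boldsymbol{\Delta}}^{t,k_{(u)}}_{i,j,l}\,\|\,b\big),
\]
where $\hat p_0$ and $\hat p_1$ are the empirical fractions of $0$- and $1$-states in window $k_{(u)}$. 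On the good event all the estimators lie in $[c_{\min}/2,1-c_{\min}/2]$. Hence the Bernoulli KL is comparable, above and below, to the squared difference of its arguments.

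\emph{Part (a).} Take $q$ with $k_{(q)}\le k^*_t$ and any $u\le q$. The KL upper bound and Step 1 give a loss difference of order $(\text{err}_{k_{(u)}}+\text{err}_{k_{(q)}})^2$, which is at most $C\{\log(\cdot)/k_{(u)}+V(k_{(u)})^2\}$. This is below $\tau(k_{(u)})$ once $C_\tau$ is large enough, so no test fails before $k^*_t$ and therefore $\hat k_t\ge k^*_t$.

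\emph{Part (b).} Let $u^*$ be the index of $k^*_t$. Because $\hat k_t$ was accepted, its test with this $u=u^*$ passed. Together with the lower bounds $\hat p_0,\hat p_1\ge c_{\min}/2$ and the KL lower bound, this gives
\[
\big|\widehat{\boldsymbol{\Theta}}^{t,\hat k_t}-\widehat{\boldsymbol{\Theta}}^{t,k^*_t}\big|^2_{i,j,l}\lesssim\tau(k^*_t),
\]
and the same for $\widehat{\boldsymbol{\Delta}}$. The triangle inequality with Step 1 at $k^*_t$ then yields the claimed bound.

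\textbf{Main obstacle.} The hardest part is the lower bound on the state fractions $\hat p_0,\hat p_1$ in short windows. For very small $k$, for instance $k=1$, a window may contain no $0$-states, or no $1$-states, so the MLE is undefined and the KL lower bound fails. My first attempt would be to treat windows below a threshold $k_0\asymp\log(n\vee L\vee T)\log T\log\log T$ separately. For such $k$ we have $\tau(k)\gtrsim 1$, so their tests are vacuous. For $k\ge k_0$, the fractions concentrate around the window means of $\boldsymbol{\Pi}$, which lie in $[c_{\min},1-c_{\min}]$ by the initial condition in \Cref{def-armsb} and \Cref{ass-bias}$(iii)$. If $t-T_{g-1}<k_0$ the statement is trivial, since the claimed bound is already at least a constant.
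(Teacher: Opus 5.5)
Your proposal is correct. It keeps the paper's skeleton: the oracle window $\bar k_t$ (your $k^*_t$), first showing $\hat k_t\ge \bar k_t$, then transferring the test passed at $u=\bar k_t$. The engine is different, though.

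The paper never uses the closed form of the MLE. Its argument runs in four moves:
\begin{itemize}
\item it proves that $f^{t,k}_{i,j,l}$ and the population loss $F^{t,k}_{i,j,l}$ are $(\log 2,\psi(k))$-close on $[c_{\min},1-c_{\min}]^2$, via uniform gradient concentration (an $\varepsilon$-net plus the mixing Bernstein inequality) together with strong convexity and smoothness of $F^{t,k}_{i,j,l}$;
\item it shows $F^t_{i,j,l}$ and $F^{t'}_{i,j,l}$ are close, with slack proportional to the squared parameter differences;
\item it chains these relations using the closeness calculus of \cite{huang2023stability};
\item it concludes by strong convexity of $F^t_{i,j,l}$, whose minimiser is exactly $(\boldsymbol{\Theta}^t_{i,j,l},\boldsymbol{\Delta}^t_{i,j,l})$.
\end{itemize}
You instead use the exact identity that the excess empirical loss equals $\hat p_0\,\mathrm{KL}+\hat p_1\,\mathrm{KL}$. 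Each test is then essentially a squared distance between two windowed MLEs, and entrywise concentration of the ratio estimators replaces the population-loss chain.

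What your route buys:
\begin{itemize}
\item It is shorter and makes the role of $V$ transparent: first power in the estimator error, squared in $\tau$.
\item It matches the algorithm exactly. The test compares against the unconstrained MLE on $[0,1]$, whereas the paper works with infima over $[c_{\min},1-c_{\min}]^2$ and tacitly assumes the MLE lies there.
\item Your part (b) needs only Pinsker's inequality, so $\widehat{\boldsymbol{\Theta}}^{t,\hat k_t}$ never has to be localised.
\end{itemize}
The paper's route, in exchange, is a black-box template that does not need closed-form minimisers. It draws its curvature from $\boldsymbol{\Pi}^s\in[c_{\min},1-c_{\min}]$ at the population level, so it never has to control empirical state fractions.

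\textbf{Step 1 and the bias proposition.} \Cref{prop-bias} does not belong in Step 1. It controls $\E\{\widehat{\boldsymbol{\Theta}}^{t,k}_{i,j,l}\}$, which is not the centring your Bernstein step produces, and it needs \Cref{ass-bias}$(ii)$--$(iii)$, which this theorem does not assume. Concentration of numerator and denominator centres $\widehat{\boldsymbol{\Theta}}^{t,k}_{i,j,l}$ at $\sum_s\boldsymbol{\Theta}^s_{i,j,l}(1-\boldsymbol{\Pi}^{s-1}_{i,j,l})/\sum_s(1-\boldsymbol{\Pi}^{s-1}_{i,j,l})$. This is a convex combination of the $\boldsymbol{\Theta}^s_{i,j,l}$ in the window, so by \Cref{def-seg} its distance to $\boldsymbol{\Theta}^t_{i,j,l}$ is at most $V(T_g-T_{g-1})\le V(k)$, with no monotonicity needed. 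Likewise $\boldsymbol{\Pi}^s_{i,j,l}\in[c_{\min},1-c_{\min}]$ follows from the recursion $\boldsymbol{\Pi}^s=(1-\boldsymbol{\Delta}^s)\boldsymbol{\Pi}^{s-1}+\boldsymbol{\Theta}^s(1-\boldsymbol{\Pi}^{s-1})$, not from $(iii)$.

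\textbf{Short windows.} $\tau(k)\gtrsim 1$ alone does not make short-window tests vacuous, because the excess loss is not a priori bounded when $\widehat{\boldsymbol{\Theta}}^{t,k_{(q)}}$ is near $0$ or $1$. Nestedness of the windows fixes this. Write $n_0^{(u)}$ for the number of $0$-states in window $k_{(u)}$. Then $\widehat{\boldsymbol{\Theta}}^{t,k_{(q)}}_{i,j,l}\ge \widehat{\boldsymbol{\Theta}}^{t,k_{(u)}}_{i,j,l}\,n_0^{(u)}/n_0^{(q)}$, the same holds for $1-\widehat{\boldsymbol{\Theta}}$, and analogously for $\widehat{\boldsymbol{\Delta}}$ with $1$-states. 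This bounds the excess loss by $2\log(k_{(q)}/k_{(u)})+1$. For large $C_\tau$, that bound is below $\tau(k_{(u)})$ whenever $k_{(q)}$ is at most a constant multiple of $\log(n\vee L\vee T)\log T\log\log T$. Beyond that scale, concentration alone puts $\widehat{\boldsymbol{\Theta}}^{t,k_{(q)}}_{i,j,l}$ in $[c_{\min}/2,1-c_{\min}/2]$, whatever the size of $V$, and your chi-square bound on the KL applies.
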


\begin{proposition}\label{prop-bias-final}
Let the process $\{\mathbf{A}^t \}_{t \geq 0} \subset\{0, 1\}^{n \times n \times L}$ be defined in \Cref{def-armsb}, and let $G$ be the number of $V$-quasi-stationary segments in $\{(\boldsymbol{\Theta}^t, \boldsymbol{\Delta}^t)\}_{t=1}^{T}$ as defined in \Cref{def-seg}. Suppose that \Cref{ass-bias} holds. For \Cref{alg:fast-msbm}, let 
\[
\tau(t) =  C_{\tau}\max \Big\{t^{-1} \log(n \vee L \vee T) \log(T) \log\log(T), \big(V(t)\big)^2 \Big\}, \quad \forall t \in [T],
\]
where $C_{\tau} > 0$ is a sufficiently large constant. Then 
\begin{align}
& \P \bigg[ \max \Big\{ \Big\vert\mathbb{E}\big\{\widehat{\boldsymbol{\Theta}}_{i,j,l}^{t,\hat{k}_t} \big\} - \boldsymbol{\Theta}_{i,j, l}^{t, \hat{k}_t} \Big\vert, \Big\vert \mathbb{E} \big\{ \widehat{\boldsymbol{\Delta}}_{i,j, l}^{t, \hat{k}_t} \big\} - \boldsymbol{\Delta}_{i,j, l}^{t, \hat{k}_t} \Big\vert  \Big\}   >   C \max \bigg\{ \frac{ \log(n \vee L \vee T)  \log(T) \log\log(T)}{ t - T_{g-1}}, 
\nonumber \\
& \hspace{0.5cm} \big(V(t-T_{g-1})\big)^2 \bigg\},   1 \leq i\leq j \leq n, l\in [L],
\mbox{ for any } t \in [T_{g}]/ [T_{g-1}]  \mbox{ with } g \in [G]
\bigg] 
\leq  (n\vee L \vee T)^{-c}, \nonumber
\end{align}
for some absolute constants $C, c> 0$. Here, for any $t \in [T]$ and $k \in [t]$, we define
\begin{equation}\label{def-Theta-t-k}
    \boldsymbol{\Theta}^{t, k}  = \frac{1}{k} \sum_{u=t-k+1}^t    \boldsymbol{\Theta}^{u} \quad \mbox{and} \quad 
     \boldsymbol{\Delta}^{t, k}  = \frac{1}{k} \sum_{u=t-k+1}^t    \boldsymbol{\Delta}^{u}.
\end{equation}
\end{proposition}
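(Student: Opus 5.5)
The plan is to reduce the random-window bias to the deterministic bias bound of \Cref{prop-bias}, evaluated at whatever window $\hat{k}_t$ the stability rule selects. There are two cases: either $\hat{k}_t$ stays inside the current quasi-stationary segment, or it reaches back across the boundary $T_{g-1}$.

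First, fix $t \in [T_g]\backslash[T_{g-1}]$ and write $\epsilon_t^2$ for the target rate $\max\{\log(n\vee L\vee T)\log(T)\log\log(T)/(t-T_{g-1}),\, V(t-T_{g-1})^2\}$. I will work on the high-probability event of \Cref{main-theorem-1}, together with the Bernstein-type concentration (via \Cref{lemma-mixing-nonstat} and the mixing Bernstein inequality used in the proof of \Cref{lemma-sub-Gaussian}) of every windowed MLE $\widehat{\boldsymbol{\Theta}}^{t,k}$, $k\in\mathcal G^{(t)}$, around its mean. A union bound over the $O(\log t)$ grid points, $t\in[T]$, and all edges keeps the probability at least $1-(n\vee L\vee T)^{-c}$. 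Because the grid is deterministic, it suffices to show that every $k\in\mathcal G^{(t)}$ that can be selected on this event satisfies the bias bound at rate $\epsilon_t^2$.

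\textbf{Case (a): $k\le t-T_{g-1}$.} The whole window lies inside segment $g$. I apply \Cref{prop-bias}, where the remainder $C/k$ and the drift term $|\mathcal E^{t,k}_{i,j,l}|$ must both be controlled.
\begin{itemize}
\item The drift term is at most a product of the within-window oscillation, which is $\le V(t-T_{g-1})$ by \Cref{def-seg}, and the cumulative variation $\sum_{u'}(|\Delta\boldsymbol{\Theta}|+|\Delta\boldsymbol{\Delta}|)$ from $T_{g-1}+2$ to $t$.
\item Dominant monotonicity within the segment bounds that cumulative variation by $(1+C_\alpha)/(1-C_\alpha)$ times the net change, hence by a constant times $V(t-T_{g-1})$. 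The product is therefore $O(V^2)$.
\item If the window starts before $T_{g-1}$ but still in segment $g$, the boundary term is handled with \Cref{ass-bias}(iii).
\end{itemize}
The remaining issue is the $C/k$ term. Small windows could have $1/k \gg \epsilon_t^2$, so I must show that the rule does not stop early. On the concentration event, when all windows up to $k_{(q)}\le t-T_{g-1}$ lie in a quasi-stationary stretch, the loss difference $f^{t,k_{(u)}}(\widehat{\boldsymbol{\Theta}}^{t,k_{(q)}},\cdot)-f^{t,k_{(u)}}(\widehat{\boldsymbol{\Theta}}^{t,k_{(u)}},\cdot)$ is, by a second-order Taylor expansion around the minimiser (curvature bounded since entries lie in $[c_{\min},1-c_{\min}]$), of order $|\widehat{\boldsymbol{\Theta}}^{t,k_{(q)}}-\widehat{\boldsymbol{\Theta}}^{t,k_{(u)}}|^2$. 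This is at most $\tau(k_{(u)})$ when $C_\tau$ is large. Consequently $\hat k_t$ is at least the largest grid point $\le t-T_{g-1}$, which is $\gtrsim t-T_{g-1}$ by the geometric structure of $\mathcal G^{(t)}$, and so $C/\hat k_t\lesssim \epsilon_t^2$. The log–log Lipschitz property \Cref{ass-bias}(i) is what lets me replace $V(\hat k_t)$ by $V(t-T_{g-1})$ up to constants.

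\textbf{Case (b): $\hat k_t> t-T_{g-1}$.} I expect this to be the main obstacle. Acceptance of $\hat k_t$ means the loss comparison with the in-segment window $k_{(u)}\asymp t-T_{g-1}$ passed with tolerance $\tau(k_{(u)})\asymp\epsilon_t^2$. Using the local strong convexity of $f$ in reverse, this forces $|\widehat{\boldsymbol{\Theta}}^{t,\hat k_t}-\widehat{\boldsymbol{\Theta}}^{t,k_{(u)}}|\lesssim\epsilon_t$. On its own this only gives first-order control. To obtain the squared rate, I would combine it with \Cref{prop-bias} applied across segments: the drift term there is again a product of the window oscillation and the cumulative variation since the last segment start. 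Local dominant monotonicity of the segment-starting values (\Cref{def-seg}) and $G=O(1)$ (\Cref{ass-bias}(ii)) make both factors bounded by the accepted deviation, which is $\lesssim\epsilon_t$. The product is then $\lesssim\epsilon_t^2$.

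Finally, I combine the two cases, take a union bound over $t$ and edges, and treat $\boldsymbol{\Delta}$ symmetrically using $\widetilde{\mathcal E}$.
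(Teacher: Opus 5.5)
Your skeleton matches the paper's. On the concentration event you show $\hat k_t\ge\bar k_t=\max\{k\in\mathcal G^{(t)}:k\le t-T_{g-1}\}\gtrsim t-T_{g-1}$, so the $C/\hat k_t$ term of \Cref{prop-bias} is $O((t-T_{g-1})^{-1})$. You then show the window oscillation $\max_{s,u\in[t]\backslash[t-\hat k_t]}|\boldsymbol{\Theta}^s_{i,j,l}-\boldsymbol{\Theta}^u_{i,j,l}|$ is $O(\epsilon_t)$ and multiply it by the cumulative variation. Your Case~(a) and the ``does not stop early'' step are fine. The Taylor argument replaces the closeness machinery of Step~3.1 in the proof of \Cref{main-theorem-1}. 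For small $k_{(u)}$ the empirical minimiser can sit at $0$ or $1$, so there rely on $\tau(k_{(u)})$ being large rather than on bounded curvature.

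The gap is Case~(b), which is where the real work lies. Acceptance against $\bar k_t$ bounds $|\widehat{\boldsymbol{\Theta}}^{t,\hat k_t}_{i,j,l}-\widehat{\boldsymbol{\Theta}}^{t,\bar k_t}_{i,j,l}|$. After concentration, this only says that a \emph{weighted average} of the $\boldsymbol{\Theta}^s_{i,j,l}$ over the long window is close to the current level. It does not bound the largest jump inside the window, which both factors of the drift term in \Cref{prop-bias} require. The paper supplies this link in Steps~1--3 of its proof.
\begin{itemize}
\item It identifies the target of $\widehat{\boldsymbol{\Theta}}^{t,\hat k_t}$ as the minimiser $\widetilde{\boldsymbol{\Theta}}^{t,\hat k_t}$ of $F^{t,\hat k_t}$, which is the $(1-\boldsymbol{\Pi}^{s-1})$-weighted average.
\item It proves $|\widetilde{\boldsymbol{\Theta}}^{t,\hat k_t}_{i,j,l}-\boldsymbol{\Theta}^{T_{g-1}+1}_{i,j,l}|\lesssim\epsilon_t$.
\item It splits this difference into segment-start differences plus within-segment drift. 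The drift is $\lesssim V(t-T_{g-1})$ by $T_{\min}=\Theta(T)$ and \Cref{ass-bias}(i).
\item It lower-bounds the segment-start part by a constant times the jumps, using local dominant monotonicity to exclude cancellation.
\end{itemize}

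That last lower bound also needs each earlier segment to occupy a constant fraction of the window, and your sketch does not address this. Write $\rho=(\hat k_t-(t-T_{g-1}))/\hat k_t$ and $\eta=|\boldsymbol{\Theta}^{T_{g-1}+1}_{i,j,l}-\boldsymbol{\Theta}^{T_{g-1}}_{i,j,l}|$. Acceptance only gives $\rho\eta\lesssim\epsilon_t$. For small $\rho$ the oscillation can therefore be of order one, and the product bound of \Cref{prop-bias}, which charges $\eta$ in both factors, then gives nothing.

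The paper closes this by asserting that in the cross-boundary case $\hat k_t\ge C_{\mathrm{low}}(t-T_{g-1})$ with $C_{\mathrm{low}}>1$. You need such a bound or a separate small-$\rho$ argument, and you should not take the grid claim on trust. The candidates of $\mathcal G^{(t)}$ in $[2^j,2^{j+1})$ are the two $k$ with $t-k\equiv 1\pmod{2^{j-1}}$. So for suitable $T_{g-1}$ and $t$ there is a candidate only slightly larger than $t-T_{g-1}$.

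In that regime, for piecewise-constant parameters, the drift term is of order $\rho\,\eta\,|\boldsymbol{\Pi}^{T_{g-1}+1}_{i,j,l}-\boldsymbol{\Pi}^{T_{g-1}}_{i,j,l}|$. Acceptance controls this only at order $\epsilon_t$. \Cref{ass-bias}(iii) does not force $\boldsymbol{\Pi}$ to be continuous across $T_{g-1}$. It holds trivially when $\boldsymbol{\Theta}+\boldsymbol{\Delta}=1$ on the new segment, since then $\boldsymbol{\Pi}^{T_{g-1}+1}=\boldsymbol{\Theta}^{T_{g-1}+1}$ whatever $\boldsymbol{\Pi}^{T_{g-1}}$ is. A complete Case~(b) therefore needs an ingredient beyond local dominant monotonicity and $G=O(1)$.
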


Based on \Cref{main-theorem-1}, \Cref{prop-bias-final} and the spectral properties of the window-averaged transition probability tensors established in \Cref{lemma-singular-values-non-stat}, the proofs of \Cref{thm-tensor-Frobenius},  \Cref{thm-comunity-recovery-non} and \Cref{prop-sin-theta}
follow the same structure as their stationary counterparts
(\Cref{thm-tensor-Frobenius-stat}, \Cref{thm-comunity-recovery-stat} and \Cref{prop-sin-theta-stat}).
We therefore omit the details.

\subsubsection{Proof of Theorem \ref{main-theorem-1}}
\label{app-proof-main-theorem-1}

The following notion of functional similarity was introduced by \cite{huang2023stability} and we copy it here for completeness.

\begin{definition}[Closeness]
Let $f, g \colon \mathcal{X} \to \mathbb{R}$ be lower bounded functions and $\varepsilon, \delta \geq 0$. We say functions $f$ and $g$ are $(\varepsilon, \delta)$-close, if for all $x \in \mathcal{X}$, 
\[
g(x) - \inf_{x' \in \mathcal{X}} g(x') \leq e^\varepsilon \Big( f(x) - \inf_{x' \in \mathcal{X}}  f(x')  + \delta \Big),
\]
and
\[
f(x) - \inf_{x' \in \mathcal{X}}  f(x')  \leq e^\varepsilon \Big( g(x) - \inf_{x' \in \mathcal{X}} g(x')  + \delta \Big).
\]
\end{definition}

We now introduce the tensor-valued population loss  function
\[
F^{t} \colon [0, 1]^{n \times n \times L} \times [0, 1]^{n \times n \times L} \to  \R^{n \times n \times L}
\]
which maps a candidate pair $(\boldsymbol{\Theta}', \boldsymbol{\Delta}')$ to a tensor whose $(i, j, l)$-th entry is
\begin{align}
 F^t_{i, j, l}(\boldsymbol{\Theta}'_{i, j, l}, \boldsymbol{\Delta}'_{i, j, l} ) =  &  - \big[ \big\{\boldsymbol{\Theta}^{t}_{i, j, l} \log(\boldsymbol{\Theta}'_{i, j, l})  + (1- \boldsymbol{\Theta}_{i, j, l}^{t}) \log (1- \boldsymbol{\Theta}'_{i, j, l}) \big\} ( 1 - \boldsymbol{\Pi}_{i,j,l}^{t-1} ) \nonumber \\
& \hspace{0.5cm} + \big\{ \boldsymbol{\Delta}^{t}_{i, j, l} \log(\boldsymbol{\Delta}'_{i, j, l})  + (1- \boldsymbol{\Delta}_{i, j, l}^{t}) \log (1-\boldsymbol{\Delta}'_{i, j, l}) \big\}  \boldsymbol{\Pi}_{i,j,l}^{t-1}   \big], \nonumber
\end{align}
where for any $t \in [T]$, $\boldsymbol{\Pi}^{t} = \mathbb{E}\{\mathbf{A}^{t}\}$.
For any $t \in [T]$ and any window length $k \in [t]$, we define the averaged population losses as
\[
F^{t,k} = \frac{1}{k} \sum_{u=t-k+1}^{t} F^u.
\]

\begin{proof}[Proof of Theorem \ref{main-theorem-1}]

For any $t \in [T]$ and $k \in [t]$, define
\[
\boldsymbol{\Pi}^{t} = \mathbb{E}\{\mathbf{A}^{t}\} \quad \mbox{and} \quad
\boldsymbol{\Pi}^{t,k} = \frac{1}{k} \sum_{s=t-k+1}^t \boldsymbol{\Pi}^{s-1}.
\]

Note that for any $t \in [T]$, $1 \leq i \leq j \leq n$ and $l \in [L]$, since $\boldsymbol{\Theta}^t_{i, j, l}, \boldsymbol{\Delta}^t_{i, j, l} \in [c_{\min}, 1-c_{\min}]$, we can derive the following bounds 
\begin{align*}
\boldsymbol{\Pi}_{i,j,l}^{t}  = & \P \big\{ \mathbf{A}^t_{i,j, l} =1 \mid \mathbf{A}^{t-1}_{i,j, l}  =1\big\} \P \big\{ \mathbf{A}^{t-1}_{i,j, l}  =1\big\} +  \P \big\{ \mathbf{A}^t_{i,j, l} =1 \mid \mathbf{A}^{t-1}_{i,j, l}  =0 \big\} \P \big\{ \mathbf{A}^{t-1}_{i,j, l}  =0 \big\} \nonumber\\
= & \big( 1- \boldsymbol{\Delta}^t_{i, j, l} \big) \boldsymbol{\Pi}_{i,j,l}^{t-1}   + \boldsymbol{\Theta}^t_{i, j, l}  \big( 1 -  \boldsymbol{\Pi}_{i,j,l}^{t-1}  \big) 
\geq  c_{\min} \boldsymbol{\Pi}_{i,j,l}^{t-1} +c_{\min}  \big( 1 -  \boldsymbol{\Pi}_{i,j,l}^{t-1}\big)  = c_{\min}. 
\end{align*}
Similarly,
\begin{align*}
\boldsymbol{\Pi}_{i,j,l}^{t} = 
& \big( 1- \boldsymbol{\Delta}^t_{i, j, l} \big) \boldsymbol{\Pi}_{i,j,l}^{t-1} + \boldsymbol{\Theta}^t_{i, j, l}  \big( 1 - \boldsymbol{\Pi}_{i,j,l}^{t-1} \big) \nonumber\\
\leq & (1-c_{\min}) \boldsymbol{\Pi}_{i,j,l}^{t-1} + (1-c_{\min})  \big( 1 -  \boldsymbol{\Pi}_{i,j,l}^{t-1} \big)  \leq 1-c_{\min}. 
\end{align*}
Thus, by \Cref{def-armsb},  we can conclude that 
\begin{equation}\label{eq-pi-upper-lower}
c_{\min} \leq     \boldsymbol{\Pi}_{i,j,l}^{t} \leq 1-c_{\min}, \quad \forall t \in [T] \cup\{0\}.
\end{equation}

\medskip
\noindent \textbf{Step 1.}
In this step, we will show that  with high probability for any $1 \leq i \leq j \leq n$, $l \in [L]$, $t \in [T]$ and $k \in [t]$, 
\begin{equation}\label{eq-step1-goal}
f^{t,k}_{i, j, l}  \mbox{ and } F^{t,k}_{i, j, l}  \mbox{ are } \big( \log(2), \psi(k) \big)\mbox{-close}.
\end{equation}
where $ \psi(k) = C_{\psi}\log(n \vee L \vee T) \log(T) \log\log(T) /k $ with a constant $C_{\psi} > 0$.

\medskip
\noindent \textbf{Step 1.1.} In this sub-step,  we will derive a concentration inequality for $\nabla f^{t,k}_{i, j, l} (\boldsymbol{\Theta}_{i, j, l}, \boldsymbol{\Delta}_{i, j, l}) \in \R^{2}$ around $\nabla F^{t,k}_{i, j, l} (\boldsymbol{\Theta}_{i, j, l}, \boldsymbol{\Delta}_{i, j, l}) \in \R^{2}$ for any $t \in [T]$, $k \in [t]$ and $(\boldsymbol{\Theta}_{i, j, l}, \boldsymbol{\Delta}_{i, j, l})  \in \mathcal{Q}'$, where $\mathcal{Q}' = \{ (\theta, \delta)^{\top} \in [c_{\min}, 1-c_{\min}]^{2} \}$.

By \Cref{lemma-mixing-nonstat},  for any $1 \leq i \leq j \leq n$ and $l \in [L]$, the sequence $\{\mathbf{A}_{i,j,l}^{t}\}_{t \geq 0}$ is strongly mixing.
Since the mixing property is hereditary, any process defined as a measurable function of finitely many $\mathbf{A}_{i,j,l}^{t}$’s is also strongly mixing. In particular, the processes
\[
\big\{\mathbf{A}_{i,j,l}^{t}(1-\mathbf{A}_{i,j,l}^{t-1})\big\}_{t\geq 1},\quad \big\{(1-\mathbf{A}_{i,j,l}^{t})\mathbf{A}_{i,j,l}^{t-1}\big\}_{t\geq 1} \quad \mbox{and} \quad \big\{(1-\mathbf{A}_{i,j,l}^{t})(1-\mathbf{A}_{i,j,l}^{t-1}) \big\}_{t\geq 1}
\]
are also strongly mixing.
For any  $1 \leq i \leq j \leq n$,  $l \in [L]$, $t \in [T]$ and $k \in [t]$, let
\[
 \mathbf{Y}_{i, j, l}^{t, k} = \Big \vert k^{-1}\sum_{u=t-(k-1)}^{t} \Big\{ \mathbf{A}_{i,j,l}^{u}(1-\mathbf{A}_{i,j,l}^{u-1})  - \E\big\{\mathbf{A}_{i,j,l}^{u}(1-\mathbf{A}_{i,j,l}^{u-1}) \big\} \Big\} \Big \vert, 
\]
\[
 \widetilde{\mathbf{Y}}_{i, j, l}^{t, k} = \Big \vert k^{-1}\sum_{u=t-(k-1)}^{t} \Big\{  (1-\mathbf{A}_{i,j,l}^{u})(1-\mathbf{A}_{i,j,l}^{u-1})  - \E\big\{(1-\mathbf{A}_{i,j,l}^{u})(1-\mathbf{A}_{i,j,l}^{u-1}) \big\} \Big\} \Big \vert, 
\]
\[
 \mathbf{X}_{i, j, l}^{t, k}  = \Big \vert  k^{-1}\sum_{u=t-(k-1)}^{t}\Big\{(1-\mathbf{A}_{i,j,l}^{u})\mathbf{A}_{i,j,l}^{u-1}  - \E\big\{(1-\mathbf{A}_{i,j,l}^{u})\mathbf{A}_{i,j,l}^{u-1} \big\} \Big\} \Big \vert, 
\]
and 
\[
 \widetilde{\mathbf{X}}_{i, j, l}^{t, k} = \Big \vert  k^{-1}\sum_{u=t-(k-1)}^{t}\Big\{\mathbf{A}_{i,j,l}^{u}\mathbf{A}_{i,j,l}^{u-1}  - \E\big\{\mathbf{A}_{i,j,l}^{u}\mathbf{A}_{i,j,l}^{u-1} \big\}\Big\} \Big\vert. 
\]
By applying the Bernstein inequality for $\alpha$-mixing sequences \citep[e.g.~Theorem 1 in][]{merlevede2009bernstein}, we obtain that  
\[
\P\big\{  \mathbf{Y}_{i, j, l}^{t, k} \geq  C_0 x \big\} \leq \exp\bigg\{- \frac{ c_{0} kx^{2}}
         {1 + x\log (k)\log\log (k)}\bigg\}.
\]
where $C_0 >0$ is sufficiently large constant and $c_0 > 10$ is an absolute constant. 
Setting
\[
x =\max\bigg\{\sqrt{ \frac{ \log(n \vee L\vee  T)}{k}}, \frac{\log(n \vee L \vee T) \log (k)\log\log (k)}{k} \bigg\},
\]
yields
\[
\P\bigg\{  \mathbf{Y}_{i, j, l}^{t, k} \geq  C_0\max\bigg\{\sqrt{ \frac{ \log(n \vee L \vee T)}{k}}, \frac{\log(n \vee L \vee T) \log (k)\log\log (k)}{k} \bigg\} \leq (n\vee L \vee T)^{-c_0/2}.
\] 
Similarly, we can conclude that
\begin{align}\label{def-event-A-prob}
 & \P \bigg\{  \max \Big\{ \mathbf{Y}_{i, j, l}^{t, k}, \widetilde{\mathbf{Y}}_{i, j, l}^{t, k}, \mathbf{X}_{i, j, l}^{t, k}, \widetilde{\mathbf{X}}_{i, j, l}^{t, k}  \Big\}  \leq C_0\max\bigg( \sqrt{ \frac{ \log(n \vee L \vee T)}{k}}, \nonumber\\
& \hspace{5cm} \frac{\log(n \vee L \vee T) \log (k)\log\log (k)}{k}\bigg) \bigg\} \geq 1- 4 (n\vee L \vee T)^{-c_0/2}.
\end{align}

For any $1  \leq i \leq j \leq n$,  $l \in [L]$, $t\in [T]$, $k \in [t]$ and $\theta, \delta \in [c, 1-c]$, define 
\[
  \tilde{f}^{t, k}_{i, j, l} (\theta) =  - k^{-1}\sum_{u=t-(k-1)}^{t} \bigg\{ \frac{  \mathbf{A}^u_{i, j, l} (1-\mathbf{A}^{u-1}_{i, j, l}) }{\theta}  -  \frac{(1-\mathbf{A}^u_{i, j, l}) (1-\mathbf{A}^{u-1}_{i, j, l}) }{1-\theta}\bigg\},
\]
\[
\bar{f}^{t, k}_{i, j, l} (\delta)= - k^{-1} \sum_{u=t-(k-1)}^{t} 
       \bigg\{  \frac{ (1- \mathbf{A}^u_{i, j, l}) \mathbf{A}^{u-1}_{i, j, l}}{\delta}  - \frac{\mathbf{A}^u_{i, j, l} \mathbf{A}^{u-1}_{i, j, l}}{1-\delta} \bigg\},
\]
\[
\tilde{F}^{t, k}_{i, j, l} (\theta) =     - k^{-1}\sum_{u=t-(k-1)}^{t}    \bigg\{   \frac{  \E\{\mathbf{A}^u_{i, j, l}  (1-\mathbf{A}^{u-1}_{i, j, l})\} }{\theta}  -  \frac{ \E\{(1-\mathbf{A}^u_{i, j, l}) (1-\mathbf{A}^{u-1}_{i, j, l}) \}}{1-\theta}   \bigg\}, 
\]
\[
\bar{F}^{t, k}_{i, j, l} (\delta)  =     - k^{-1} \sum_{u=t-(k-1)}^{t}   \bigg\{   \frac{  \E\{(1- \mathbf{A}^u_{i, j, l}) \mathbf{A}^{u-1}_{i, j, l}\}}{\delta}  - \frac{\E\{\mathbf{A}^u_{i, j, l} \mathbf{A}^{u-1}_{i, j, l}\}}{1-\delta}   \bigg\}.
\]
Then we have
\[
\nabla_{\boldsymbol{\Theta}_{i, j, l}} f^{t, k}_{i, j, l} (\boldsymbol{\Theta}_{i, j, l} , \boldsymbol{\Delta}_{i, j, l} ) 
= \tilde{f}^{t, k}_{i, j, l} \big( \boldsymbol{\Theta}_{i, j, l} \big),  \quad
\nabla_{\boldsymbol{\Delta}_{i, j, l}} f^{t, k}_{i, j, l} (\boldsymbol{\Theta}_{i, j, l} , \boldsymbol{\Delta}_{i, j, l} )=   
   \bar{f}^{t, k}_{i, j, l}  \big( \boldsymbol{\Delta}_{i, j, l} \big),
\]
\[
\nabla_{\boldsymbol{\Theta}_{i, j, l}} F^{t, k}_{i, j, l} (\boldsymbol{\Theta}_{i, j, l} , \boldsymbol{\Delta}_{i, j, l} ) =  \tilde{F}^{t, k}_{i, j, l} \big( \boldsymbol{\Theta}_{i, j, l} \big), \quad 
    \nabla_{\boldsymbol{\Delta}_{i, j, l}}    F^{t, k}_{i, j, l}  (\boldsymbol{\Theta}_{i, j, l} , \boldsymbol{\Delta}_{i, j, l} )  =\bar{F}^{t, k}_{i, j, l} \big( \boldsymbol{\Delta}_{i, j, l} \big).
\]
Therefore,
\begin{align}
  \| \nabla f^{t, k}_{i, j, l}(\boldsymbol{\Theta}_{i, j, l}, \boldsymbol{\Delta}_{i, j, l}) - \nabla F^{t, k}_{i, j, l}(\boldsymbol{\Theta}_{i, j, l}, \boldsymbol{\Delta}_{i, j, l})\|_2^2
= & \Big\{ \tilde{f}^{t, k}_{i, j, l} \big( \boldsymbol{\Theta}_{i, j, l} \big) - \tilde{F}^{t, k}_{i, j, l} \big( \boldsymbol{\Theta}_{i, j, l} \big)  \Big\}^2 \nonumber\\
& \hspace{1cm} +  \Big\{\bar{f}^{t, k}_{i, j, l}  \big( \boldsymbol{\Delta}_{i, j, l} \big) - \bar{F}^{t, k}_{i, j, l} \big( \boldsymbol{\Delta}_{i, j, l} \big) \Big\}^2. 
\nonumber
\end{align}
 Then
\begin{align}\label{eq-main-step-1}
& \sup_{(\boldsymbol{\Theta}_{i, j ,l }, \boldsymbol{\Delta}_{i, j, l})^{\top} \in \mathcal{Q}'} \big\| \nabla f^{t, k}_{i, j, l}(\boldsymbol{\Theta}_{i, j, l}, \boldsymbol{\Delta}_{i, j, l}) -  \nabla F^{t, k}_{i, j, l}(\boldsymbol{\Theta}_{i, j, l}, \boldsymbol{\Delta}_{i, j, l})\big\|_2^2   \nonumber\\
 = &    \sup_{(\boldsymbol{\Theta}_{i, j ,l }, \boldsymbol{\Delta}_{i, j, l})^{\top} \in \mathcal{Q}'}  \Big[ \Big\{ \tilde{f}^{t, k}_{i, j, l} \big( \boldsymbol{\Theta}_{i, j, l} \big)  
- \tilde{F}^{t, k}_{i, j, l} \big( \boldsymbol{\Theta}_{i, j, l} \big)  \Big\}^2 +  \Big\{\bar{f}^{t, k}_{i, j, l}  \big( \boldsymbol{\Delta}_{i, j, l} \big) - \bar{F}^{t, k}_{i, j, l} \big( \boldsymbol{\Delta}_{i, j, l} \big) \Big\}^2 \Big].
\end{align}
Now we focus on the term
\[
\sup_{(\boldsymbol{\Theta}_{i, j ,l }, \boldsymbol{\Delta}_{i, j, l})^{\top} \in \mathcal{Q}'}  \Big[ \Big\{ \tilde{f}^{t, k}_{i, j, l} \big( \boldsymbol{\Theta}_{i, j, l} \big) - 
 \tilde{F}^{t, k}_{i, j, l} \big( \boldsymbol{\Theta}_{i, j, l} \big)  \Big\}^2 +  \Big\{\bar{f}^{t, k}_{i, j, l}  \big( \boldsymbol{\Delta}_{i, j, l} \big) - \bar{F}^{t, k}_{i, j, l} \big( \boldsymbol{\Delta}_{i, j, l} \big) \Big\}^2 \Big].
\]
Let  $\mathcal{N}$ be an $\varepsilon$-net of $\mathcal{Q}'$, then we have that
\begin{equation}\label{eq-varepsilon-net}
|\mathcal{N}| \le (1 + 2\varepsilon^{-1})^{2},
\end{equation}
and for every $ \big(\boldsymbol{\Theta}_{i, j, l}, \boldsymbol{\Delta}_{i, j, l} \big)^{\top} \in \mathcal{Q}'$, there exists $\big(\pi(\boldsymbol{\Theta}_{i, j, l}), \pi({\boldsymbol{\Delta}}_{i, j, l}) \big)^{\top}  \in \mathcal{N}$ such that
\[
\big\| \big(\boldsymbol{\Theta}_{i, j, l}, \boldsymbol{\Delta}_{i, j, l} \big)^{\top}  -  \big(\pi(\boldsymbol{\Theta})_{i, j, l},, \pi(\boldsymbol{\Delta}_{i, j, l}) \big)^{\top} \big\|_{2} \leq \varepsilon.
\]
Since
\begin{align*}
&   \Big\{ \tilde{f}^{t, k}_{i, j, l} \big( \boldsymbol{\Theta}_{i, j, l} \big) - 
 \tilde{F}^{t, k}_{i, j, l} \big( \boldsymbol{\Theta}_{i, j, l} \big)  \Big\}^2 +  \Big\{\bar{f}^{t, k}_{i, j, l}  \big( \boldsymbol{\Delta}_{i, j, l} \big) - \bar{F}^{t, k}_{i, j, l} \big( \boldsymbol{\Delta}_{i, j, l} \big) \Big\}^2   \nonumber\\
  \leq & \Big\{ \tilde{f}^{t, k}_{i, j, l} \big( \pi(\boldsymbol{\Theta}_{i, j, l}) \big) - 
 \tilde{F}^{t, k}_{i, j, l} \big( \pi( \boldsymbol{\Theta}_{i, j, l}  ) \big)  \Big\}^2 +  \Big\{\bar{f}^{t, k}_{i, j, l}  \big( \pi(\boldsymbol{\Delta}_{i, j, l}) \big) - \bar{F}^{t, k}_{i, j, l} \big( \pi(\boldsymbol{\Delta}_{i, j, l} )\big)\Big\}^2 \nonumber \\
& \hspace{0.5cm} +
\Big\{ \tilde{f}^{t, k}_{i, j, l} \big( \pi(\boldsymbol{\Theta}_{i, j, l} )\big) - 
 \tilde{f}^{t, k}_{i, j, l} \big( \boldsymbol{\Theta}_{i, j, l} \big)  \Big\}^2 +  \Big\{\bar{f}^{t, k}_{i, j, l}  \big( \pi(\boldsymbol{\Delta}_{i, j, l}) \big) - \bar{f}^{t, k}_{i, j, l} \big( \boldsymbol{\Delta}_{i, j, l} \big)\Big\}^2 \nonumber \\
& \hspace{0.5cm} + \Big\{ \tilde{F}^{t, k}_{i, j, l} \big( \pi( \boldsymbol{\Theta}_{i, j, l} ) \big) - 
 \tilde{F}^{t, k}_{i, j, l} \big( \boldsymbol{\Theta}_{i, j, l} \big)  \Big\}^2 +  \Big\{\bar{F}^{t, k}_{i, j, l}  \big( \pi( \boldsymbol{\Delta}_{i, j, l})  \big) - \bar{F}^{t, k}_{i, j, l} \big(\boldsymbol{\Delta}_{i, j, l} \big)\Big\}^2,
\end{align*}
it follows that
\begin{align}\label{eq-step1-main_2}
&\sup_{(\boldsymbol{\Theta}_{i, j, l}, \boldsymbol{\Delta}_{i, j, l})^{\top} \in \mathcal{Q}'}
\Big\{ \tilde{f}^{t, k}_{i, j, l} \big( \boldsymbol{\Theta}_{i, j, l} \big) - 
 \tilde{F}^{t, k}_{i, j, l} \big( \boldsymbol{\Theta}_{i, j, l} \big)  \Big\}^2 +  \Big\{\bar{f}^{t, k}_{i, j, l}  \big( \boldsymbol{\Delta}_{i, j, l} \big) - \bar{F}^{t, k}_{i, j, l} \big( \boldsymbol{\Delta}_{i, j, l} \big) \Big\}^2   \nonumber\\
 \leq &
\sup_{(\boldsymbol{\Theta}_{i, j, l}, \boldsymbol{\Delta}_{i, j, l})^{\top} \in \mathcal{N}} \Big[
\Big\{ \tilde{f}^{t, k}_{i, j, l} \big( \boldsymbol{\Theta}_{i, j, l} \big) - 
 \tilde{F}^{t, k}_{i, j, l} \big( \boldsymbol{\Theta}_{i, j, l} \big)  \Big\}^2 +  \Big\{\bar{f}^{t, k}_{i, j, l}  \big( \boldsymbol{\Delta}_{i, j, l} \big) - \bar{F}^{t, k}_{i, j, l} \big( \boldsymbol{\Delta}_{i, j, l} \big) \Big\}^2 \Big] \nonumber \\
&  \hspace{0.5cm} +
\sup_{(\boldsymbol{\Theta}_{i, j, l}, \boldsymbol{\Delta}_{i, j, l})^{\top} \in \mathcal{Q}'}\Big[
\Big\{ \tilde{f}^{t, k}_{i, j, l} \big( \pi(\boldsymbol{\Theta}_{i, j, l} )\big) - 
 \tilde{f}^{t, k}_{i, j, l} \big( \boldsymbol{\Theta}_{i, j, l} \big)  \Big\}^2 +  \Big\{\bar{f}^{t, k}_{i, j, l}  \big( \pi(\boldsymbol{\Delta}_{i, j, l}) \big) - \bar{f}^{t, k}_{i, j, l} \big( \boldsymbol{\Delta}_{i, j, l} \big)\Big\}^2 \Big] \nonumber\\
&  \hspace{0.5cm} 
+ \sup_{(\boldsymbol{\Theta}_{i, j, l}, \boldsymbol{\Delta}_{i, j, l})^{\top} \in \mathcal{Q}'} \Big[ \Big\{ \tilde{F}^{t, k}_{i, j, l} \big( \pi( \boldsymbol{\Theta}_{i, j, l} ) \big) - 
 \tilde{F}^{t, k}_{i, j, l} \big( \boldsymbol{\Theta}_{i, j, l} \big)  \Big\}^2 +  \Big\{\bar{F}^{t, k}_{i, j, l}  \big( \pi( \boldsymbol{\Delta}_{i, j, l})  \big) - \bar{F}^{t, k}_{i, j, l} \big(\boldsymbol{\Delta}_{i, j, l} \big)\Big\}^2 \Big] \nonumber\\
 = & (I.1) + (I.2) +(I.3).
\end{align}
For term $(I.1)$, by the fact that $(a+b)^2 \leq 2a^2 + 2b^2$, we obtain that 
\begin{align}
(I.1) \leq &\sup_{(\boldsymbol{\Theta}_{i, j, l}, \boldsymbol{\Delta}_{i, j, l})^{\top} \in \mathcal{Q}'}  \Big\{ 2  \boldsymbol{\Theta}_{i, j, l}^{-2}  \big(\mathbf{Y}_{i, j, l}^{t, k}\big)^2 + 2 (1- \boldsymbol{\Theta}_{i, j, l})^{-2} \big(\widetilde{\mathbf{Y}}_{i, j, l}^{t, k}\big)^2 + 2 \boldsymbol{\Delta}_{i, j, l}^{-2} \big(\mathbf{X}_{i, j, l}^{t, k}\big)^2  \nonumber\\
& \hspace{0.5cm} + 2 (1- \boldsymbol{\Delta}_{i, j, l})^{-2}\big(\widetilde{\mathbf{X}}_{i, j, l}^{t, k}  \big)^{2} \Big\}  \nonumber\\
\leq & \sup_{(\boldsymbol{\Theta}_{i, j, l}, \boldsymbol{\Delta}_{i, j, l})^{\top} \in \mathcal{Q}'} \Big[ 2 c_{\min}^{-2} \big\{ \big(\mathbf{Y}_{i, j, l}^{t, k}\big)^2 + \big(\widetilde{\mathbf{Y}}_{i, j, l}^{t, k}\big)^2 +  \big(\mathbf{X}_{i, j, l}^{t, k}\big)^2 + \big(\widetilde{\mathbf{X}}_{i, j, l}^{t, k}  \big)^{2}  \big\} \Big].  \nonumber
\end{align}
By \eqref{def-event-A-prob}, \eqref{eq-varepsilon-net} and a union bound argument, it holds that
\begin{align}\label{eq-step1-sum-1}
 &  \P \bigg\{ 
  (I.1) \geq 8C_0^2c_{\min}^{-2} \max\bigg(  \frac{ \log(n \vee L \vee T)}{k}, 
   \frac{\log^2(n \vee L \vee T) \log^2 (k) \{\log\log (k)\}^2}{k^2}\bigg) \bigg\} \nonumber\\
   \leq &  4 (1+2\epsilon^{-1})^{2}  (n\vee L\vee T)^{-c_0/2}.
\end{align}

For term $(I.2)$, we have that 
\begin{align}\label{eq-step1-sum-2}
    (I.2)= &\sup_{(\boldsymbol{\Theta}_{i, j, l}, \boldsymbol{\Delta}_{i, j, l})^{\top} \in \mathcal{Q}'}  \bigg[
\big(\boldsymbol{\Theta}_{i, j, l} - \pi(\boldsymbol{\Theta}_{i, j, l}) \big) k^{-1}\sum_{u=t-(k-1)}^{t} 
(1 - \mathbf{A}^{u-1}_{i, j, l}) 
\cdot \bigg\{
\frac{1 - \mathbf{A}^{u}_{i, j, l}}{(1-\boldsymbol{\Theta}_{i, j, l})(1-\pi(\boldsymbol{\Theta}_{i, j, l}))}
\nonumber\\
& \hspace{0.5cm}
-\frac{\mathbf{A}^{u}_{i, j, l}}{\boldsymbol{\Theta}_{i, j, l} \pi(\boldsymbol{\Theta}_{i, j, l})}\bigg\} \bigg]^2  + \sup_{(\boldsymbol{\Theta}_{i, j, l}, \boldsymbol{\Delta}_{i, j, l})^{\top} \in \mathcal{Q}'} \bigg[ \big(\boldsymbol{\Delta}_{i, j, l} - \pi({\boldsymbol{\Delta}}_{i, j, l})\big)  k^{-1}\sum_{u=t-(k-1)}^{t}  \mathbf{A}^{u-1}_{i, j, l}
\nonumber\\
& \hspace{0.5cm}
\cdot \bigg\{
\frac{\mathbf{A}^{u}_{i, j, l}}{(1-\boldsymbol{\Delta}_{i, j, l})(1-\pi({\boldsymbol{\Delta}}_{i, j, l}))} + \frac{1 - \mathbf{A}^{u}_{i, j, l}}{\boldsymbol{\Delta}_{i, j, l}\pi({\boldsymbol{\Delta}}_{i, j, l})}  \bigg\}\bigg]^2 \nonumber\\
\leq &  c_{\min}^{-4}\sup_{(\boldsymbol{\Theta}_{i, j, l}, \boldsymbol{\Delta}_{i, j, l})^{\top} \in \mathcal{Q}'} \big\{ \big(\boldsymbol{\Theta}_{i, j, l} - \pi(\boldsymbol{\Theta}_{i, j, l}) \big)^2 + (\boldsymbol{\Delta}_{i, j, l} - \pi({\boldsymbol{\Delta}}_{i, j, l}))^2 \big\}
\leq  c_{\min}^{-4} \epsilon^2.
\end{align}

For term $(I.3)$, we have that 
\begin{align}\label{eq-step1-sum-3}
    (I.3) = & \sup_{(\boldsymbol{\Theta}_{i, j, l}, \boldsymbol{\Delta}_{i, j, l})^{\top} \in \mathcal{Q}'} \bigg[
\big(\boldsymbol{\Theta}_{i, j, l} - \pi(\boldsymbol{\Theta}_{i, j, l}) \big) k^{-1}\sum_{u=t-(k-1)}^{t} 
\cdot \bigg\{
\frac{ \E \big\{ (1 - \mathbf{A}^{u-1}_{i, j, l}) (1 - \mathbf{A}^{u}_{i, j, l} \big\}}{(1-\boldsymbol{\Theta}_{i, j, l})(1-\pi(\boldsymbol{\Theta}_{i, j, l}))}
\nonumber\\
& \hspace{0.5cm}
- \frac{\E \big\{ (1 - \mathbf{A}^{u-1}_{i, j, l}) \mathbf{A}^{u}_{i, j, l} \big\}}{\boldsymbol{\Theta}_{i, j, l} \pi(\boldsymbol{\Theta}_{i, j, l})}\bigg\} \bigg]^2  + \bigg[ \big(\boldsymbol{\Delta}_{i, j, l} - \pi({\boldsymbol{\Delta}}_{i, j, l})\big)  k^{-1}\sum_{u=t-(k-1)}^{t}  
\nonumber\\
& \hspace{0.5cm}
\cdot \bigg\{
\frac{\E \big\{ \mathbf{A}^{u-1}_{i, j, l} \mathbf{A}^{u}_{i, j, l}  \big\} }{(1-\boldsymbol{\Delta}_{i, j, l})(1-\pi({\boldsymbol{\Delta}}_{i, j, l}))} - \frac{ \E \big\{ \mathbf{A}^{u-1}_{i, j, l}(1 - \mathbf{A}^{u}_{i, j, l})}{\boldsymbol{\Delta}_{i, j, l}\pi({\boldsymbol{\Delta}}_{i, j, l})}  \bigg\}\bigg]^2 \nonumber\\
\leq & c_{\min}^{-4}\bigg[
\big(\boldsymbol{\Theta}_{i, j, l} - \pi(\boldsymbol{\Theta}_{i, j, l}) \big) k^{-1}\sum_{u=t-(k-1)}^{t} 
\cdot \Big\{
(1 - \boldsymbol{\Theta}^u_{i, j, l})  
(1-\boldsymbol{\Pi}_{i,j,l}^{u-1}) 
\nonumber\\
& \hspace{0.5cm}
+  \boldsymbol{\Theta}_{i, j, l}^u 
(1-\boldsymbol{\Pi}_{i,j,l}^{u-1})  \Big\}  \bigg]^2  + c^{-4} \bigg[ \big(\boldsymbol{\Delta}_{i, j, l} - \pi({\boldsymbol{\Delta}}_{i, j, l})\big)  k^{-1}\sum_{u=t-(k-1)}^{t}  
\nonumber\\
& \hspace{0.5cm}
\cdot \Big\{
(1 - \boldsymbol{\Delta}_{i, j, l}^u)  
\boldsymbol{\Pi}_{i,j,l}^{u-1}
+  \boldsymbol{\Delta}_{i, j, l}^u  
\boldsymbol{\Pi}_{i,j,l}^{u-1}  \Big\} \bigg]^2
\nonumber\\ 
\leq &  c_{\min}^{-4} \big\{ \big(\boldsymbol{\Theta}_{i, j, l} - \pi(\boldsymbol{\Theta}_{i, j, l}) \big)^2 + (\boldsymbol{\Delta}_{i, j, l} - \pi({\boldsymbol{\Delta}}_{i, j, l}))^2 \big\}
\leq  c_{\min}^{-4} \epsilon^2.
\end{align}
where the second inequality follows from \eqref{eq-pi-upper-lower}. 
Combining \eqref{eq-step1-main_2}, \eqref{eq-step1-sum-1}, \eqref{eq-step1-sum-2} and \eqref{eq-step1-sum-3}, we can derive that 
\begin{align*}
& \P \bigg\{ \sup_{(\boldsymbol{\Theta}_{i, j, l}, \boldsymbol{\Delta}_{i, j, l})^{\top} \in \mathcal{Q}'}
\Big\{ \tilde{f}^{t, k}_{i, j, l} \big( \boldsymbol{\Theta}_{i, j, l} \big) - 
 \tilde{F}^{t, k}_{i, j, l} \big( \boldsymbol{\Theta}_{i, j, l} \big)  \Big\}^2 +  \Big\{\bar{f}^{t, k}_{i, j, l}  \big( \boldsymbol{\Delta}_{i, j, l} \big) - \bar{F}^{t, k}_{i, j, l} \big( \boldsymbol{\Delta}_{i, j, l} \big) \Big\}^2  
 \nonumber\\
& \hspace{0.5cm} \geq 8C_0^2c_{\min}^{-2} \max\bigg(  \frac{ \log(n \vee L \vee T)}{k}, 
   \frac{\log^2(n \vee L \vee T) \log^2 (k) \{\log\log(k)\}^2}{k^2}\bigg)   + 2c_{\min}^{-4}\varepsilon^2   \bigg\}   \nonumber\\
\leq &  4 (1+ 2\epsilon^{-1})^{2}  (n\vee L \vee T)^{-c_0/2}.
\end{align*}
Choosing  $\varepsilon =  k^{-1/2}$,  we have that 
\begin{align*}
   &  \P \bigg\{ 
   \sup_{(\boldsymbol{\Theta}_{i, j, l}, \boldsymbol{\Delta}_{i, j, l})^{\top} \in \mathcal{Q}'}
\Big\{ \tilde{f}^{t, k}_{i, j, l} \big( \boldsymbol{\Theta}_{i, j, l} \big) - 
 \tilde{F}^{t, k}_{i, j, l} \big( \boldsymbol{\Theta}_{i, j, l} \big)  \Big\}^2 +  \Big\{\bar{f}^{t, k}_{i, j, l}  \big( \boldsymbol{\Delta}_{i, j, l} \big) - \bar{F}^{t, k}_{i, j, l} \big( \boldsymbol{\Delta}_{i, j, l} \big) \Big\}^2  
 \nonumber\\
& \hspace{0.5cm}   \geq   C_1 \max\bigg(  \frac{ \log(n \vee L \vee T)}{k}, 
   \frac{\log^2(n \vee L \vee T) \log^2 (k)\log\log ^2(k)}{k^2}\bigg)   \bigg\}     \leq  36 (n \vee L \vee T)^{-c_0/2},
\end{align*}
where $C_1 = 8C_0c^{-2} + 2c_{\min}^{-4}$.  
Define the event
\begin{align}\label{eq-step1-main-event}
\mathcal{B} = &
\bigg\{
   \sup_{(\boldsymbol{\Theta}_{i, j, l}, \boldsymbol{\Delta}_{i, j, l}) \in \mathcal{Q}'} \| \nabla f^{t,k}_{i, j, l} (\boldsymbol{\Theta}_{i, j, l}, \boldsymbol{\Delta}_{i, j, l}) -  \nabla F^{t,k}_{i, j, l}(\boldsymbol{\Theta}_{i, j, l}, \boldsymbol{\Delta}_{i, j, l})\|_2 
  \leq \sqrt{C_1} \max\bigg( \sqrt{ \frac{ \log(n \vee L \vee T)}{k}}, \nonumber\\
&\hspace{0.5cm} \frac{\log(n \vee L \vee T) \log (k)\log\log (k)}{k}\bigg), \quad
  \forall  1 \leq i\leq j \leq n, l \in [L],  t \in [T],  k \in [t]
\bigg\},
\end{align}
By  \eqref{eq-main-step-1} and a union bound argument, we have that 
\begin{align}\label{eq-step1-main-event-prop}
\mathbb{P}(\mathcal{B})
\geq 1 -36\sum_{t=1}^{T}\sum_{k=1}^{t} \sum_{j=1}^n\sum_{i=1}^{j}
\sum_{l=1}^L
(n \vee L\vee T)^{-c_0/2} = 1- (n \vee L\vee T)^{-c_1}, 
\end{align}
where $c_1 > 0$ is a constant. In the following steps, we assume the event $\mathcal{B}$ holds.

\medskip
\noindent
\textbf{Step 1.2.}  In this sub-step, we will prove \eqref{eq-step1-goal} under the the event $\mathcal{B}$ defined in \eqref{eq-step1-main-event}.

For any $ 1 \leq i\leq j \leq n $, $l \in [L]$, $t \in [T]$, $k \in [t]$ and $ (\boldsymbol{\Theta}_{i, j, l}, \boldsymbol{\Delta}_{i, j, l}) \in \mathcal{Q}'$, let the Hessian matrix of $F^{t, k}_{i, j, l}(\boldsymbol{\Theta}_{i, j, l}, \boldsymbol{\Delta}_{i, j, l}) $ be denoted by $H^{t, k}_{i, j, l} (\boldsymbol{\Theta}_{i, j, l}, \boldsymbol{\Delta}_{i, j, l}) \in \R^{2 \times 2 }$. 
It is straightforward to verify that $H^{t, k}_{i, j, l} (\boldsymbol{\Theta}_{i, j, l}, \boldsymbol{\Delta}_{i, j, l})$ is diagonal matrix, and thus its minimum eigenvalue is given by
\begin{align}
 & \lambda_{\min}(H^{t, k}_{i, j, l} (\boldsymbol{\Theta}_{i, j, l}, \boldsymbol{\Delta}_{i, j, l}))     \nonumber\\
=  & \min \bigg\{  
\frac{k^{-1}\sum_{u=t-(k-1)}^{t} \E\{\mathbf{A}^u_{i, j, l}  (1-\mathbf{A}^{u-1}_{i, j, l})\}}{\boldsymbol{\Theta}_{i, j, l}^2} 
+ \frac{ k^{-1}\sum_{u=t-(k-1)}^{t}\E\{(1-\mathbf{A}^u_{i, j, l}) (1-\mathbf{A}^{u-1}_{i, j, l}) \}}{\big(1-\boldsymbol{\Theta}_{i, j, l}\big)^2} , \nonumber\\
 & \hspace{0.5cm}    \frac{ k^{-1} \sum_{u=t-(k-1)}^{t} \E\{(1- \mathbf{A}^u_{i, j, l}) \mathbf{A}^{u-1}_{i, j, l}\}}{ \boldsymbol{\Delta}_{i, j, l}^2}  + \frac{ k^{-1}\sum_{u=t-(k-1)}^{t}\E\{\mathbf{A}^u_{i, j, l} \mathbf{A}^{u-1}_{i, j, l}\}}{(1-\boldsymbol{\Delta}_{i, j, l})^2}  \bigg\}  \nonumber\\
= &  \min \bigg\{ 
 \frac{k^{-1}\sum_{u=t-(k-1)}^{t} \boldsymbol{\Theta}^u_{i, j, l} (1 - \boldsymbol{\Pi}_{i,j,l}^{u-1}) }{\boldsymbol{\Theta}_{i, j, l}^2} + \frac{ k^{-1}\sum_{u=t-(k-1)}^{t} (1-\boldsymbol{\Theta}^u_{i, j, l}) (1-\boldsymbol{\Pi}_{i,j,l}^{u-1})}{(1-\boldsymbol{\Theta}_{i, j, l})^2},   \nonumber\\
 & \hspace{0.5cm}    \frac{ k^{-1} \sum_{u=t-(k-1)}^{t} \boldsymbol{\Delta}_{i, j, l}^{u} \boldsymbol{\Pi}_{i,j,l}^{u-1} }{\boldsymbol{\Delta}_{i, j, l}^2}  + \frac{ k^{-1}\sum_{u=t-(k-1)}^{t}(1- \boldsymbol{\Delta}_{i, j, l}^{u} ) \boldsymbol{\Pi}_{i,j,l}^{u-1} }{(1-\boldsymbol{\Delta}_{i, j, l})^2}  \bigg\} \nonumber\\
\geq  & 2 c_{\min}(1-c_{\min})^{-2} \min    \bigg\{ k^{-1}\sum_{u=t-(k-1)}^{t} (1-\boldsymbol{\Pi}_{i,j,l}^{u-1} ),  k^{-1}\sum_{u=t-(k-1)}^{t}  \boldsymbol{\Pi}_{i,j,l}^{u-1}  \bigg\}\nonumber \\
\geq & 2c_{\min}^2(1-c_{\min})^{-2},
\end{align}
where the first inequality follows from  for any $t \in [T]$,  $(\boldsymbol{\Theta}^t_{i, j, l}, \boldsymbol{\Delta}^t_{i, j, l}) \in \mathcal{Q}'$ and the last inequality follows from \eqref{eq-pi-upper-lower}. Similarly, the largest eigenvalue of $H^{t, k} _{i, j, l}(\boldsymbol{\Theta}_{i, j, l}, \boldsymbol{\Delta}_{i, j, l})$ is
\begin{align}
\lambda_{\max}(H^{t, k}_{i, j, l} (\boldsymbol{\Theta}_{i, j, l}, \boldsymbol{\Delta}_{i, j, l}))   \leq   2 c_{\min}^{-2}(1-c_{\min})^2. \nonumber
\end{align}
Thus,  $ F^{t, k}_{i, j, l}$ is $2 c_{\min}^2(1-c_{\min})^{-2}$-strongly convex and $2c_{\min}^{-2}(1-c_{\min})^2$-smooth  on $ \mathcal{Q}'$.
By Part 3 of Lemma 5.1 in \cite{huang2023stability}  and the event $\mathcal{B}$ defined in \eqref{eq-step1-main-event}, it holds that
under the event $\mathcal{B}$, for any  $ 1 \leq i \leq  j \leq n$, $l \in [L]$, $t \in [T]$ and $k \in [t]$, 
\[ 
f^{t,k}_{i, j, l}  \mbox{ and } F^{t,k}_{i, j, l} \mbox{ are } \big( \log(2), \psi(k) \big)\mbox{-close},
\]
where $ \psi(k) = C_{\psi}\log(n \vee L \vee T) \log(T)\log\log(T) /k $, with a constant $C_{\psi} > 0$.

\medskip
\noindent\textbf{Step 2.} In this step, we will show that for any $ 1 \leq i \leq  j \leq n$, $l \in [L]$ and $t, t' \in [T]$,  $F^{t}_{i, j, l}$ and $F^{t'}_{i, j, l}$  are 
 \[
\Big( \log\{4(1-c_{\min})^4/c_{\min}^4\},  c_{\min}^2(1-c_{\min})^{-2} \big( \vert \boldsymbol{\Theta}^t_{i, j, l}  - \boldsymbol{\Theta}^{t'}_{i, j, l} \vert^2 + \vert \boldsymbol{\Delta}^t_{i, j, l} - \boldsymbol{\Delta}^{t'}_{i, j, l} \vert^2 \big)\Big)\text{-close}.
\]

By \textbf{Step 1.2}, for any $t \in [T]$, $F^{t, k}_{i, j, l}$ is $2 c_{\min}^2(1-c_{\min})^{-2}$-strongly convex and $2(1-c_{\min})^2c_{\min}^{-2}$-smooth. Since for any $t \in [T]$, $F^t_{i, j, l} = F^{t, 1}_{i, j, l}$,  it follows that  $ F^t_{i, j, l}$ is $2 c_{\min}^2(1-c_{\min})^{-2}$-strongly convex and $2c_{\min}^{-2}(1-c_{\min})^2$-smooth  on $ \mathcal{Q}'$.
By Part 4 in Lemma 5.1 in \cite{huang2023stability}, for any $t, t' \in [T]$,  $F_{t}$ and $F_{t'}$  are \[
\Big( \log\{4(1-c_{\min})^4/c_{\min}^4\},  c_{\min}^2(1-c_{\min})^{-2} \big( \vert \boldsymbol{\Theta}^t_{i, j, l}  - \boldsymbol{\Theta}^{t'}_{i, j, l} \vert^2 + \vert \boldsymbol{\Delta}^t_{i, j, l} - \boldsymbol{\Delta}^{t'}_{i, j, l} \vert^2 \big)\Big)\text{-close},
\]
in $\mathcal{Q}'$.

\medskip
\noindent
\textbf{Step 3.} In this step, we show that under the event $\mathcal{B}$, for any $ 1 \leq i \leq  j \leq n$, $l \in [L]$ and $t \in [T_{g}]/ [T_{g-1}]$ with $g \in [G]$, 
\begin{align}
   \big\vert \widehat{\boldsymbol{\Theta}}^{t,\hat{k}_t}_{i, j, l}- \boldsymbol{\Theta}^{t}_{i, j, l} \big) \big\vert^2 + \big\vert \widehat{\boldsymbol{\Delta}}^{t,\hat{k}_t}_{i, j, l}  - \boldsymbol{\Delta}^{t}_{i, j, l}\big \vert^2  
  \leq  C_2\max\bigg\{ \frac{ \log(n \vee L \vee T) \log(T) \log\log(T)}{ t - T_{g-1}},  \big\{V( t - T_{g-1}) \big\}^2  \bigg\}, \nonumber
\end{align}
for some constant $C_2 >0$.

By \textbf{Step 1}, we have that for any $k \in [t]$, under the event $\mathcal{B}$, 
\begin{equation}\label{eq-f-F-close}
f^{t,k}_{i, j, l}  \mbox{ and } F^{t,k}_{i, j, l} \mbox{ are } \big( \log(2), \psi(k) \big)\mbox{-close},
\end{equation}
where $ \psi(k) = C_{\psi} \log(n \vee L \vee T) \log(T)\log\log(T) /k $ with a constant $C_{\psi} > 0$. Let  $\bar{k}_t = \max\{k \in \mathcal{G}^{(t)}\colon k \leq  t - {T_{g -1}}\}$. 
By \textbf{Step 2} and \Cref{def-seg}, we have that 
\[
 F^{t - (\bar{k}_t-1)}_{i, j, l}, \ldots, F^{t}_{i, j, l} \mbox{ are } \Big( \log\{4c_{\min}^{-4}(1-c_{\min})^4\},  \phi(T_{g} - T_{g-1}) \Big)\mbox{-close to } F_{t}
\]
where $ \phi(T_{g} - T_{g-1}) =C_{\phi}\big( V(T_{g} - T_{g-1})\big)^2$ with $C_{\phi} = 6 c_{\min}^2(1-c_{\min})^{-2}  $. Then by Part 7 of Lemma 5.2 in \cite{huang2023stability}, for any $u \in [\bar{k}_t]$ 
\[
F^{t,u}_{i, j, l}  \mbox{ and }  F^{t}_{i, j, l} \mbox{ are } \Big( \log\{4c_{\min}^{-4}(1-c_{\min})^4\},  C_{\phi'}\phi(T_{g} - T_{g-1}) \Big)\mbox{-close},
\]
where $C_{\phi'} = 4c_{\min}^{-4}(1-c_{\min})^4+1$. 

Let $\hat{k}_t$ be defined in \Cref{alg:fast-msbm} with 
\[
\tau(u) =  C_{\tau}\max \Big\{u^{-1} \log(n \vee L \vee T) \log(T) \log\log(T), \big(V(u)\big)^2 \Big\},  \quad \forall u \in [t],
\]
where $C_{\tau}>  0$ is a sufficiently large constant.

\medskip
\noindent\textbf{Step 3.1} In this sub-step, we will first prove that $ \hat{k}_t \geq \bar{k}_t $. To this end, it suffices to show that 
\[
\max_{1 \leq i\leq j \leq n, l \in [L]} \big\{ f^{t,u}_{i, j, l}\big(\widehat{\boldsymbol{\Theta}}^{t,\bar{k}_t}_{i, j, l}, \widehat{\boldsymbol{\Delta}}^{t,\bar{k}_t}_{i, j, l} \big) - \inf_{(\boldsymbol{\Theta}_{i, j, l}, \boldsymbol{\Delta}_{i, j, l}) \in \mathcal{Q}'} f^{t,u}_{i, j, l}(\boldsymbol{\Theta}_{i, j, l}, \boldsymbol{\Delta}_{i, j, l}) \big\} \leq \tau(u),   \forall u \in \{k \in \mathcal{G}^{(t)}: k \leq \bar{k}_t\}. 
\]
For any $1 \leq i \leq j \leq n$ and $l 
\in [L]$, we have that
\[
f^{t,\bar{k}_t}_{i, j, l}\big(\widehat{\boldsymbol{\Theta}}^{t,\bar{k}_t}_{i, j, l}, \widehat{\boldsymbol{\Delta}}^{t,\bar{k}_t}_{i, j, l} \big)- \inf_{(\boldsymbol{\Theta}_{i, j, l}, \boldsymbol{\Delta}_{i, j, l}) \in \mathcal{Q}'} f^{t,\bar{k}_t}_{i, j, l}(\boldsymbol{\Theta}_{i, j, l}, \boldsymbol{\Delta}_{i, j, l}) = 0.
\]
Since $f^{t,\bar{k}_t}_{i, j, l}$ and $F^{t,\bar{k}_t}_{i, j, l}$ are $(\log(2), \psi(\bar{k}_t))$-close, it holds that 
\begin{align*}
& F^{t,\bar{k}_t}_{i, j, l}\big(\widehat{\boldsymbol{\Theta}}^{t,\bar{k}_t}_{i, j, l}, \widehat{\boldsymbol{\Delta}}^{t,\bar{k}_t}_{i, j, l} \big) - \inf_{(\boldsymbol{\Theta}_{i, j, l}, \boldsymbol{\Delta}_{i, j, l}) \in \mathcal{Q}'} F^{t,\bar{k}_t}_{i, j, l}(\boldsymbol{\Theta}_{i, j, l}, \boldsymbol{\Delta}_{i, j, l})  \nonumber\\
\leq &  2 \big\{ f^{t,\bar{k}_t}_{i, j, l}\big(\widehat{\boldsymbol{\Theta}}^{t,\bar{k}_t}_{i, j, l}, \widehat{\boldsymbol{\Delta}}^{t,\bar{k}_t}_{i, j, l} \big) - \inf_{(\boldsymbol{\Theta}_{i, j, l}, \boldsymbol{\Delta}_{i, j, l}) \in \mathcal{Q}'} 
 f^{t,\bar{k}_t}_{i, j, l} (\boldsymbol{\Theta}_{i, j, l}, \boldsymbol{\Delta}_{i, j, l}) + \psi(\bar{k}_t)  \big\} \\
 = & 2 \psi(\bar{k}_t).
\end{align*}
Since  $F^{t,\bar{k}_t}_{i, j, l}$ and  $F^{t}_{i, j, l}$ are $ \big(\log\{4(1-c_{\min})^4c_{\min}^{-4}\}, C_{\phi'}\phi(T_{g} - T_{g-1} \big) $-close, it holds that 
\begin{align}
   &  F^{t}_{i, j, l}\big(\widehat{\boldsymbol{\Theta}}^{t,\bar{k}_t}_{i, j, l}, \widehat{\boldsymbol{\Delta}}^{t,\bar{k}_t}_{i, j, l} \big) - \inf_{(\boldsymbol{\Theta}_{i, j, l}, \boldsymbol{\Delta}_{i, j, l}) \in \mathcal{Q}'} F^{t}_{i, j, l}(\boldsymbol{\Theta}_{i, j, l}, \boldsymbol{\Delta}_{i, j, l}) 
   \nonumber\\
   \leq &4  c_{\min}^{-4}(1-c_{\min})^4 \big\{ F^{t,\bar{k}_t}_{i, j, l}\big(\widehat{\boldsymbol{\Theta}}_{i, j, l}^{t,\bar{k}_t}, \widehat{\boldsymbol{\Delta}}_{i, j, l}^{t,\bar{k}_t} \big)  - \inf_{(\boldsymbol{\Theta}_{i, j, l}, \boldsymbol{\Delta}_{i, j, l}) \in \mathcal{Q}'}  F^{t,\bar{k}_t}_{i, j, l}(\boldsymbol{\Theta}_{i, j, l}, \boldsymbol{\Delta}_{i, j, l}) +C_{\phi'}\phi(T_{g} - T_{g-1})  \big\} \nonumber\\
    \leq & 4   c_{\min}^{-4}(1-c_{\min})^4  \big\{2 \psi(\bar{k}_t) + C_{\phi'} \phi(T_{g} - T_{g-1} ) \big\}. \nonumber
\end{align}
For all $u \in \{k \in \mathcal{G}^{(t)}: k \leq \bar{k}_t\}$, since $F^{t,u}_{i, j, l}$ and  $F^{t}_{i, j, l}$ are  $(\log\{4c_{\min}^{-4}(1-c_{\min})^4\}, C_{\phi'}\phi(T_{g} - T_{g-1} )$-close, it holds that 
\begin{align}
& F^{t,u}_{i, j, l}\big(\widehat{\boldsymbol{\Theta}}^{t,\bar{k}_t}_{i, j, l}, \widehat{\boldsymbol{\Delta}}^{t,\bar{k}_t}_{i, j, l} \big)  - \inf_{(\boldsymbol{\Theta}_{i, j, l}, \boldsymbol{\Delta}_{i, j, l}) \in \mathcal{Q}'}  F^{t,u}_{i, j, l}(\boldsymbol{\Theta}_{i, j, l}, \boldsymbol{\Delta}_{i, j, l}) \nonumber\\
 \leq & 4 c_{\min}^{-4}(1-c_{\min})^4 \big\{  F^{t}_{i, j, l}\big(\widehat{\boldsymbol{\Theta}}^{t,\bar{k}_t}_{i, j, l}, \widehat{\boldsymbol{\Delta}}^{t,\bar{k}_t} \big)  - \inf_{(\boldsymbol{\Theta}_{i, j, l}, \boldsymbol{\Delta}_{i, j, l}) \in \mathcal{Q}}  F^{t}_{i, j, l}(\boldsymbol{\Theta}_{i, j, l}, \boldsymbol{\Delta}_{i, j, l})  +  C_{\phi'}\phi(T_{g} - T_{g-1} ) \big\} \nonumber\\
 \leq  & 4 c_{\min}^{-4}(1-c_{\min})^4 \Big[  4c_{\min}^{-4}(1-c_{\min})^4  \big\{2 \psi(\bar{k}_t) + C_{\phi'} \phi(T_{g} - T_{g-1} ) \big\} + C_{\phi'} \phi(T_{g} - T_{g-1} )  \Big] \nonumber\\
 \leq & C_3 \psi(\bar{k}_t)   + C_4 \phi(T_{g} - T_{g-1} ) \nonumber
\end{align}
where $C_3 = 32 c_{\min}^{-8}(1-c_{\min})^8$  and $C_4 = 4c_{\min}^{-4}(1-c_{\min})^4 \{ 4 c_{\min}^{-4}(1-c_{\min})^4 +1 \}  C_{\phi'}$. 
Since $f^{t, u}_{i, j, l}$ and $F^{t, u}_{i, j, l}$ are $(\log(2), \psi(u))$-close, it holds that 
\begin{align}
&f^{t,u}_{i, j, l}\big(\widehat{\boldsymbol{\Theta}}^{t,\bar{k}_t}_{i, j, l}, \widehat{\boldsymbol{\Delta}}^{t,\bar{k}_t}_{i, j, l} \big)- \inf_{(\boldsymbol{\Theta}_{i, j, l}, \boldsymbol{\Delta}_{i, j, l}) \in \mathcal{Q}'}  f^{t,u}_{i, j, l}(\boldsymbol{\Theta}_{i, j, l}, \boldsymbol{\Delta}_{i, j, l})  \nonumber\\
\leq & 2 \big\{ F^{t,u}_{i, j, l}(\widehat{\boldsymbol{\Theta}}^{t,\bar{k}}_{i, j, l}, \widehat{\boldsymbol{\Delta}}^{t,\bar{k}}_{i, j, l} \big) - \inf_{(\boldsymbol{\Theta}_{i, j, l}, \boldsymbol{\Delta}_{i, j, l}) \in \mathcal{Q}'}  F^{t,u}_{i, j, l}(\boldsymbol{\Theta}_{i, j, l}, \boldsymbol{\Delta}_{i, j, l}) 
+ \psi(u) \big\} \nonumber\\
\leq & 2 C_3 \psi(\bar{k}_t)   + 2 C_4 \phi(T_{g} - T_{g-1} ) +  \psi(u) \leq C_5 \psi(u) + 2 C_4 \phi(T_{g} - T_{g-1} ),\nonumber
\end{align}
where $C_5 = 2C_3 +1$. 
Since $C_{\tau}$ is a sufficiently large constant and $V$ is a non-increasing function,  the last inequality above implies for all $1\leq i \leq j \leq n$, $l \in [L]$ and $u \in \{k \in \mathcal{G}^{(t)}: k \leq \bar{k}_t\}$,
\[
f^{t,u}_{i, j, l}\big(\widehat{\boldsymbol{\Theta}}^{t,\bar{k}_t}_{i, j, l}, \widehat{\boldsymbol{\Delta}}^{t,\bar{k}_t}_{i, j, l} \big)  - \inf_{(\boldsymbol{\Theta}_{i, j, l}, \boldsymbol{\Delta}_{i, j, l})  \in \mathcal{Q}'} f^{t,u}_{i, j, l}(\boldsymbol{\Theta}_{i, j, l}, \boldsymbol{\Delta}_{i, j, l})  \leq \tau(u),
\]
which leads to 
\[
\max_{ 1 \leq i \leq j \leq n, l \in [L]} \big\{f^{t,u}_{i, j, l}\big(\widehat{\boldsymbol{\Theta}}^{t,\bar{k}_t}_{i, j, l}, \widehat{\boldsymbol{\Delta}}^{t,\bar{k}_t}_{i, j, l} \big)  - \inf_{(\boldsymbol{\Theta}_{i, j, l}, \boldsymbol{\Delta}_{i, j, l})  \in \mathcal{Q}'} f^{t,u}_{i, j, l}(\boldsymbol{\Theta}_{i, j, l}, \boldsymbol{\Delta})_{i, j, l} \big\} \leq \tau(u).
\]
This shows that $ \hat{k}_t \geq \bar{k}_t$.

\medskip
\noindent\textbf{Step 3.2.} Since $ \hat{k}_t \geq \bar{k}_t $,  by the definition of $ \hat{k}_t $,
\begin{align}
&\max_{ 1 \leq i \leq j \leq n, l \in [L]} f^{t,\bar{k}_t}_{i, j, l}\big(\widehat{\boldsymbol{\Theta}}^{t,\hat{k}_t }_{i, j, l}, \widehat{\boldsymbol{\Delta}}^{t,\hat{k}_t }_{i, j, l} \big) - \inf_{(\boldsymbol{\Theta}_{i, j, l}, \boldsymbol{\Delta}_{i, j, l})\in \mathcal{Q}'} f^{t,\bar{k}_t}_{i, j, l}(\boldsymbol{\Theta}_{i, j, l}, \boldsymbol{\Delta}_{i, j, l}) 
\leq   \tau(\bar{k}_t). \nonumber
\end{align}
Since for any $ 1 \leq i\leq j \leq n$ and $l \in [L]$, $f^{t, \bar{k}_t}_{i, j, l}$ and $F^{t,\bar{k}_t}_{i, j, l}$ are $(\log(2), \psi(\bar{k}_t))$-close, it holds that 
\begin{align}
&F^{t,\bar{k}_t}_{i, j, l}\big(\widehat{\boldsymbol{\Theta}}^{t,\hat{k}_t}_{i, j, l}, \widehat{\boldsymbol{\Delta}}^{t,\hat{k}_t}_{i, j, l} \big) - \inf_{(\boldsymbol{\Theta}_{i, j, l}, \boldsymbol{\Delta}_{i, j, l})\in \mathcal{Q}'} F^{t,\bar{k}_t}_{i, j, l}(\boldsymbol{\Theta}_{i, j, l}, \boldsymbol{\Delta}_{i, j, l})  \nonumber\\
\leq & 2 \big[ f^{t,\bar{k}_t}_{i, j, l}\big(\widehat{\boldsymbol{\Theta}}^{t,\hat{k}_t}_{i, j, l}, \widehat{\boldsymbol{\Delta}}^{t,\hat{k}_t}_{i, j, l} \big)- \inf_{(\boldsymbol{\Theta}_{i, j, l}, \boldsymbol{\Delta}_{i, j, l})  \in \mathcal{Q}'} f^{t,\bar{k}_t}_{i, j, l}(\boldsymbol{\Theta}_{i, j, l}, \boldsymbol{\Delta}_{i, j, l})  + \psi(\bar{k}_t) \big] \nonumber\\
\leq & 
2\big[ \tau(\bar{k}_t)  + \psi(\bar{k}_t)\big] \leq 4 \tau(\bar{k}_t).  \nonumber
\end{align}
Since  $F^{t,\bar{k}_t}_{i, j, l}$ and  $F^{t}_{i, j, l}$ are $( \log\{4(1-c_{\min})^4/c_{\min}^4\}, C_{\phi'}\phi(T_{g} - T_{g-1} ) $-close, it holds that 
\begin{align}
&  F^{t}_{i, j, l}\big(\widehat{\boldsymbol{\Theta}}^{t,\hat{k}_t}_{i, j, l}, \widehat{\boldsymbol{\Delta}}^{t,\hat{k}_t}_{i, j, l} \big) - \inf_{(\boldsymbol{\Theta}_{i, j, l}, \boldsymbol{\Delta}_{i, j, l})  \in \mathcal{Q}'} F^{t}_{i, j, l}(\boldsymbol{\Theta}_{i, j, l}, \boldsymbol{\Delta}_{i, j, l}) 
\nonumber\\
 \leq & 4 c_{\min}^{-4}(1-c_{\min})^4  \big\{F^{t,\bar{k}_t}_{i, j, l}\big(\widehat{\boldsymbol{\Theta}}^{t,\hat{k}_t}_{i, j, l}, \widehat{\boldsymbol{\Delta}}^{t,\hat{k}_t}_{i, j, l} \big) - \inf_{(\boldsymbol{\Theta}_{i, j, l}, \boldsymbol{\Delta}_{i, j, l})  \in \mathcal{Q}'} F^{t,\bar{k}_t}_{i, j, l}(\boldsymbol{\Theta}_{i, j, l}, \boldsymbol{\Delta}_{i, j, l})   
 \nonumber\\
 & \hspace{0.5cm} + C_{\phi'}\phi(T_{g} - T_{g-1} ) \big\} 
 \nonumber\\
 \leq & 4 c_{\min}^{-4}(1-c_{\min})^4 \{ 4 \tau(\bar{k}_t) + C_{\phi'}\phi(T_{g} - T_{g-1} )\}  \leq 20 c_{\min}^{-4}(1-c_{\min})^4  \tau(\bar{k}_t). \nonumber
\end{align}
Since $\bar{k}_t \geq \max\{t - 2^{\lceil \log_2 (T_{g-1}) \rceil}, 1\}$, we have that  
\begin{align}
& F^{t}_{i, j, l}\big(\widehat{\boldsymbol{\Theta}}^{t,\hat{k}_t}_{i, j, l}, \widehat{\boldsymbol{\Delta}}^{t,\hat{k}_t}_{i, j, l} \big) - \inf_{(\boldsymbol{\Theta}_{i, j, l}, \boldsymbol{\Delta}_{i, j, l}) \in \mathcal{Q}'} F^{t}_{i, j, l}(\boldsymbol{\Theta}_{i, j, l}, \boldsymbol{\Delta}_{i, j, l}) \nonumber\\
\leq &  20 C_{\tau} \frac{(1-c_{\min})^4}{c_{\min}^{4}}  \max\bigg\{ \frac{\log(n \vee L \vee T) \log(T) \log\log(T)}{ \max\{t - 2^{\lceil \log_2 (T_{g-1}) \rceil}, 1\}}, \big[V \big( \max\{t - 2^{\lceil \log_2 (T_{g-1}) \rceil}, 1\} \big) \big]^2 \bigg\}, \nonumber \\
\leq &  20 C_{\tau}'  \frac{(1-c_{\min})^4}{c_{\min}^{4}}  \max\bigg\{ \frac{\log(n \vee L \vee T) \log(T) \log\log(T)}{ t - T_{g-1}}, \big\{V( t - T_{g-1}) \big\}^2 \bigg\}, \nonumber
\end{align} 
where $C_{\tau}' >0$ is an absolute constant and the last inequality follows from  \Cref{ass-bias}$(i)$.  Then, it holds that
\begin{align}
&  F^{t}_{i, j, l} \big(\widehat{\boldsymbol{\Theta}}^{t,\hat{k}_t}_{i, j, l}, \widehat{\boldsymbol{\Delta}}^{t,\hat{k}_t}_{i, j, l} \big)-   F^{t}_{i, j, l}\big( \boldsymbol{\Theta}^{t}_{i, j, l}, \boldsymbol{\Delta}^{t}_{i, j, l} \big)   \nonumber\\
\leq &  20 C_{\tau}'  \frac{(1-c_{\min})^4}{c_{\min}^{4}} \max \bigg\{ \frac{ \log(n \vee L \vee T)\log(T) \log\log(T)}{ t - T_{g-1}},  \big\{V( t - T_{g-1}) \big\}^2  \bigg\}. \nonumber
\end{align}
Note that since  $F^t_{i, j, l}$ is $2c_{\min}^2(1-c_{\min})^{-2}$-strongly convex, then 
\begin{align}
& F^t_{i, j, l} \big(\widehat{\boldsymbol{\Theta}}^{t,\hat{k}_t}_{i, j, l}, \widehat{\boldsymbol{\Delta}}^{t,\hat{k}_t}_{i, j, l} \big) - F^t_{i, j, l} \big( \boldsymbol{\Theta}^{t}_{i, j, l}, \boldsymbol{\Delta}^{t}_{i, j, l} \big)   \nonumber\\
\geq  & \big(\widehat{\boldsymbol{\Theta}}^{t,\hat{k}_t}_{i, j, l}- \boldsymbol{\Theta}^{t}_{i, j, l} \big) \nabla_{\boldsymbol{\Theta}_{i, j, l}} F^t_{i, j, l} \big( \boldsymbol{\Theta}^{t}_{i, j, l}, \boldsymbol{\Delta}^{t}_{i, j, l} \big)  +\big( \widehat{\boldsymbol{\Delta}}^{t,\hat{k}_t}_{i, j, l}  - \boldsymbol{\Delta}^{t}_{i, j, l}\big) \nabla_{\boldsymbol{\Delta}_{i, j, l}} F^t_{i, j, l} \big( \boldsymbol{\Theta}^{t}_{i, j, l}, \boldsymbol{\Delta}^{t}_{i, j, l} \big)     \nonumber\\
& \hspace{0.5cm} +  \frac{2c_{\min}^2}{(1-c_{\min})^2} \Big\{ \big\vert \widehat{\boldsymbol{\Theta}}^{t,\hat{k}_t}_{i, j, l}- \boldsymbol{\Theta}^{t}_{i, j, l} \big) \big\vert^2 + \big\vert \widehat{\boldsymbol{\Delta}}^{t,\hat{k}_t}_{i, j, l}  - \boldsymbol{\Delta}^{t}_{i, j, l}\big \vert^2  \Big\}
\nonumber \\
= & \frac{2c_{\min}^2}{(1-c_{\min})^2}\Big\{ \big\vert \widehat{\boldsymbol{\Theta}}^{t,\hat{k}_t}_{i, j, l}- \boldsymbol{\Theta}^{t}_{i, j, l} \big) \big\vert^2 + \big\vert \widehat{\boldsymbol{\Delta}}^{t,\hat{k}_t}_{i, j, l}  - \boldsymbol{\Delta}^{t}_{i, j, l}\big \vert^2  \Big\}. \nonumber
\end{align}
Thus, we finally have that under the event $\mathcal{B}$, 
\begin{align}
  &  \big\vert \widehat{\boldsymbol{\Theta}}^{t,\hat{k}_t}_{i, j, l}- \boldsymbol{\Theta}^{t}_{i, j, l} \big) \big\vert^2 + \big\vert \widehat{\boldsymbol{\Delta}}^{t,\hat{k}_t}_{i, j, l}  - \boldsymbol{\Delta}^{t}_{i, j, l}\big \vert^2  \nonumber\\
  \leq  & \frac{10C_{\tau}'(1-c_{\min})^6}{c_{\min}^6}  \max\bigg\{ \frac{ \log(n \vee L \vee T) \log(T) \log\log(T)}{ t - T_{g-1}},  \big\{V( t - T_{g-1}) \big\}^2  \bigg\}, \nonumber
\end{align}
which completes the proof.

\end{proof}

\subsubsection{Proof of Proposition \ref{prop-bias-final}}\label{sec-pro-bias}
\begin{proof}[Proof of \Cref{prop-bias-final}]

Without loss of generality, assume $t \in [T_g] \backslash [T_{g-1}]$. Define
\begin{equation}\label{def-bar-k-t}
\bar{k}_t = \max\{k \in \mathcal{G}^{(t)}\colon k \leq  t - {T_{g -1}}\}.
\end{equation}
By \textbf{Step 3.1} in the proof of \Cref{main-theorem-1}, for any $t \in [T]$, under the event $\mathcal{B}$ defined in \eqref{eq-step1-main-event},  we have $\hat{k}_t \geq  \bar{k}_t$.

We now claim that under the event $\mathcal{B}$, for any $ s,u \in [t] \backslash[t-\hat{k}_t]$, 
\begin{align}\label{eq-bias-calim}
 \vert \boldsymbol{\Theta}^{s}_{i, j, l} -  \boldsymbol{\Theta}^{u}_{i, j, l} \vert  \leq C' \max\bigg\{  \sqrt{\frac{ \log(n \vee L \vee T)\log(T) \log\log(T)}{ t - T_{g-1}}},  V( t - T_{g-1} )   \bigg\}
\end{align}
and
\begin{align}\label{eq-bias-calim-delta}
 \vert \boldsymbol{\Delta}^{s}_{i, j, l} -  \boldsymbol{\Delta}^{u}_{i, j, l} \vert  \leq C'  \max\bigg\{  \sqrt{\frac{ \log(n \vee L \vee T)\log(T) \log\log(T)}{ t - T_{g-1}}},  V( t - T_{g-1} )   \bigg\},
\end{align}
where $C' > 0$ is an absolute constant. 
From now on, we assume the event $\mathcal{B}$ holds. 

This proof consists of four steps. In \textbf{Steps 1-3}, we will prove the claim above. In \textbf{Step 4}, by applying \Cref{prop-bias} and the claim above, we will conclude the proof.

\medskip
\noindent
\textbf{Step 1.}  
 Let $\mathcal{Q}' = [c_{\min}, 1-c_{\min}]^{2}$. By \Cref{alg:fast-msbm}, we can derive that 
\[
\max_{ 1 \leq i \leq j \leq n, l \in [L]} \big\{ f^{t,\hat{k}_t } _{i, j, l}\big(\widehat{\boldsymbol{\Theta}}^{t,\hat{k}_t}_{i, j, l}, \widehat{\boldsymbol{\Delta}}^{t,\hat{k}_t}_{i, j, l} \big) - \inf_{(\boldsymbol{\Theta}_{i, j, l}, \boldsymbol{\Delta}_{i, j, l})\in \mathcal{Q}'} f^{t,\hat{k}_t}_{i, j, l}(\boldsymbol{\Theta}_{i, j, l}, \boldsymbol{\Delta}_{i, j, l})  \big\} \leq \tau(\hat{k}_t), 
\]

By \textbf{Step 1} in the proof of \Cref{main-theorem-1},  for any  $ 1 \leq i \leq j \leq n$, $l \in [L]$  and $k \in [t]$, we have that 
\begin{equation}\label{eq-f-F-close-rep}
f^{t,k}_{i, j, l}  \mbox{ and } F^{t,k}_{i, j, l} \mbox{ are } \big( \log(2), \psi(k) \big)\mbox{-close},
\end{equation}
where $ \psi(k) = C_{\psi} \log(n \vee L \vee T) \log(T)\log\log(T) /k $ with a constant $C_{\psi} > 0$. Then  it holds that
\begin{align}\label{eq-bias-end-1}
F^{t,\hat{k}_t }_{i, j, l}\big(\widehat{\boldsymbol{\Theta}}^{t,\hat{k}_t}_{i, j, l}, \widehat{\boldsymbol{\Delta}}^{t,\hat{k}_t}_{i, j, l} \big) - \inf_{(\boldsymbol{\Theta}_{i, j, l}, \boldsymbol{\Delta}_{i, j, l})\in \mathcal{Q}'} F^{t,\hat{k}_t }_{i, j, l}(\boldsymbol{\Theta}_{i, j, l}, \boldsymbol{\Delta}_{i, j, l}) \leq \tau(\hat{k}_t) + \psi(\hat{k}_t).  
\end{align}
Let
\[
\big(\widetilde{\boldsymbol{\Theta}}^{t, \hat{k}_t }_{i, j, l}, \boldsymbol{\widetilde{\Delta}}^{t, \hat{k}_t }_{i, j, l} \big) = \arginf_{(\boldsymbol{\Theta}_{i, j, l}, \boldsymbol{\Delta}_{i, j, l})\in \mathcal{Q}'} F^{t,\hat{k}_t }_{i, j, l}(\boldsymbol{\Theta}_{i, j, l}, \boldsymbol{\Delta}_{i, j, l}).  
\]
Since  $F^{t,\hat{k}_t }_{i, j, l}$ is $2c_{\min}^2(1-c_{\min})^{-2}$-strongly convex, we obtain that 
\begin{align}\label{eq-bias-end-2}
& F^{t,\hat{k}_t }_{i, j, l} \big(\widehat{\boldsymbol{\Theta}}^{t,\hat{k}_t}_{i, j, l}, \widehat{\boldsymbol{\Delta}}^{t,\hat{k}_t}_{i, j, l} \big) -F^{t,\hat{k}_t }_{i, j, l} \big(\widetilde{\boldsymbol{\Theta}}^{t, \hat{k}_t }_{i, j, l}, \boldsymbol{\widetilde{\Delta}}^{t, \hat{k}_t }_{i, j, l} \big)   \nonumber\\
\geq  & \big(\widehat{\boldsymbol{\Theta}}^{t,\hat{k}_t}_{i, j, l}- \widetilde{\boldsymbol{\Theta}}^{t, \hat{k}_t}_{i, j, l} \big) \cdot \nabla_{\boldsymbol{\Theta}_{i, j, l}} F^{t, \hat{k}_t}_{i, j, l} \big( \widetilde{\boldsymbol{\Theta}}^{t, \hat{k}_t}_{i, j, l}, 
\widetilde{\boldsymbol{\Delta}}^{t, \hat{k}_t}_{i, j, l} \big)  +\big( \widehat{\boldsymbol{\Delta}}^{t,\hat{k}_t}_{i, j, l}  - \widetilde{\boldsymbol{\Delta}}^{t, \hat{k}_t}_{i, j, l}\big)  \nonumber\\
& \hspace{0.5cm} \cdot \nabla_{\boldsymbol{\Delta}_{i, j, l}} F^{t, \hat{k}_t}_{i, j, l} \big( \widetilde{\boldsymbol{\Theta}}^{t, \hat{k}_t}_{i, j, l}, \widetilde{\boldsymbol{\Delta}}^{t, \hat{k}_t}_{i, j, l} \big)      +  \frac{2c_{\min}^2}{(1-c_{\min})^2} \Big\{ \big\vert \widehat{\boldsymbol{\Theta}}^{t,\hat{k}_t}_{i, j, l}- \widetilde{\boldsymbol{\Theta}}^{t, \hat{k}_t}_{i, j, l} \big\vert^2 + \big\vert \widehat{\boldsymbol{\Delta}}^{t,\hat{k}_t}_{i, j, l}  - \widetilde{\boldsymbol{\Delta}}^{t, \hat{k}_t}_{i, j, l}\big \vert^2  \Big\}
\nonumber \\
= & \frac{2c_{\min}^2}{(1-c_{\min})^2}\Big\{ \big\vert \widehat{\boldsymbol{\Theta}}^{t,\hat{k}_t}_{i, j, l}- \widetilde{\boldsymbol{\Theta}}^{t, \hat{k}_t}_{i, j, l} \big) \big\vert^2 + \big\vert \widehat{\boldsymbol{\Delta}}^{t,\hat{k}_t}_{i, j, l}  - \widetilde{\boldsymbol{\Delta}}^{t, \hat{k}_t}_{i, j, l}\big \vert^2  \Big\}. 
\end{align}
Combining \eqref{eq-bias-end-1} and \eqref{eq-bias-end-2}, we have that 
\begin{align}\label{eq-bias-final-1}
& \big\vert \widehat{\boldsymbol{\Theta}}^{t,\hat{k}_t}_{i, j, l}-
\widetilde{\boldsymbol{\Theta}}^{t,\hat{k}_t}_{i, j, l}
\big\vert^2 + \big\vert 
\widehat{\boldsymbol{\Delta}}^{t,\hat{k}_t}_{i, j, l}  - \widetilde{\boldsymbol{\Delta}}^{t,\hat{k}_t}_{i, j, l}   \big\vert^2  \nonumber\\
\leq &  \frac{(1-c_{\min})^2}{ 2c_{\min}^2} \big\{ \tau(\hat{k}_t) + \psi(\hat{k}_t) \big\} \nonumber\\
\leq &  C_1  \max \bigg\{ \frac{ \log(n \vee L \vee T) \log(T) \log\log(T)}{ \hat{k}_t}, \big\{V(\hat{k}_t)\big\}^2 \bigg\}\nonumber\\
\leq &   C_1 \max \bigg\{ \frac{ \log(n \vee L \vee T) \log(T) \log\log(T)}{ \bar{k}_t}, \big\{V(\bar{k}_t)\big\}^2 \bigg\}  \nonumber\\ 
\leq &   C_1  \max \bigg\{  \frac{ \log(n \vee L \vee T) \log(T) \log\log(T)}{ \max\{t - 2^{\lceil \log_2 (T_{g-1}) \rceil}, 1\}}, \big[V\big(\max\{t - 2^{\lceil \log_2 (T_{g-1}) \rceil}, 1\} \big)\big]^2 \bigg\}  
\nonumber\\
\leq &   C_2  \max \bigg\{ \frac{ \log(n \vee L \vee T) \log(T) \log\log(T)}{ t -T_{g-1}}, \big\{V(t -T_{g-1})\big\}^2  \bigg\} 
\end{align}
where $C_1, C_2 >0$ are absolute constants, the third inequality follows from $\hat{k}_t \geq \bar{k}_t$ and $V$ is non-increasing, the forth inequality follows from $\bar{k}_t \geq \max\{t - 2^{\lceil \log_2 (T_{g-1}) \rceil}, 1\}$ and $V$ is non-increasing, and the final inequality follows from \Cref{ass-bias}$(i)$.

\medskip
\noindent\textbf{Step 2.} Since $ \hat{k}_t \geq \bar{k}_t $,  by the definition of $ \hat{k}_t $ in \Cref{alg:fast-msbm},
\[ 
\max_{ 1 \leq i \leq j \leq n, l \in [L]} \big\{ f^{t,\bar{k}_t}_{i, j, l} \big(\widehat{\boldsymbol{\Theta}}^{t,\hat{k}_t}_{i, j, l}, \widehat{\boldsymbol{\Delta}}^{t,\hat{k}_t}_{i, j, l} \big) - \inf_{(\boldsymbol{\Theta}_{i, j, l}, \boldsymbol{\Delta}_{i, j, l})\in \mathcal{Q}'} f^{t,\bar{k}_t}_{i, j, l}(\boldsymbol{\Theta}_{i, j, l}, \boldsymbol{\Delta}_{i, j, l}) \big\} \leq \tau(\bar{k}_t).
\]
Since for any $ 1 \leq i\leq j \leq n$ and $l \in [L]$, $f^{t,\bar{k}_t}_{i, j, l}$ and $F^{t,\bar{k}_t}_{i, j, l}$ are $(\log(2), \psi(\bar{k}_t))$-close, it holds that 
\begin{align}
&F^{t,\bar{k}_t}_{i, j, l}\big(\widehat{\boldsymbol{\Theta}}^{t,\hat{k}_t}_{i, j, l}, \widehat{\boldsymbol{\Delta}}^{t,\hat{k}_t}_{i, j, l} \big) - \inf_{(\boldsymbol{\Theta}_{i, j, l}, \boldsymbol{\Delta}_{i, j, l})\in \mathcal{Q}'} F^{t,\bar{k}_t}_{i, j, l}(\boldsymbol{\Theta}_{i, j, l}, \boldsymbol{\Delta}_{i, j, l})  \nonumber\\
\leq & 2 \big\{ f^{t,\bar{k}_t}_{i, j, l}\big(\widehat{\boldsymbol{\Theta}}^{t,\hat{k}_t}_{i, j, l}, \widehat{\boldsymbol{\Delta}}^{t,\hat{k}_t}_{i, j, l} \big)- \inf_{(\boldsymbol{\Theta}_{i, j, l}, \boldsymbol{\Delta}_{i, j, l})  \in \mathcal{Q}'} f^{t,\bar{k}_t}_{i, j, l}(\boldsymbol{\Theta}_{i, j, l}, \boldsymbol{\Delta}_{i, j, l})  + \psi(\bar{k}_t) \big\} \leq 
2\big\{ \tau(\bar{k}_t)  + \psi(\bar{k}_t)\big\} \leq 4 \tau(\bar{k}_t).  \nonumber
\end{align}
Since for any $ u \in [t] \backslash[t-T_g]$, $F^{t,\bar{k}_t}_{i, j, l}$ and  $F^{u}_{i, j, l}$ are $( \log\{4(1-c_{\min})^4/c_{\min}^4\}, C_{\phi'}\phi(T_{g} - T_{g-1} ) $-close, we can derive that 
\begin{align}
&  F^{u}_{i, j, l}\big(\widehat{\boldsymbol{\Theta}}^{t,\hat{k}_t}_{i, j, l}, \widehat{\boldsymbol{\Delta}}^{t,\hat{k}_t}_{i, j, l} \big) - \inf_{(\boldsymbol{\Theta}_{i, j, l}, \boldsymbol{\Delta}_{i, j, l})  \in \mathcal{Q}'} F^{u}_{i, j, l}(\boldsymbol{\Theta}_{i, j, l}, \boldsymbol{\Delta}_{i, j, l}) 
\nonumber\\
 \leq & 4 c_{\min}^{-4}(1-c_{\min})^4  \big\{F^{t,\bar{k}_t}_{i, j, l}\big(\widehat{\boldsymbol{\Theta}}^{t,\hat{k}_t}_{i, j, l}, \widehat{\boldsymbol{\Delta}}^{t,\hat{k}_t}_{i, j, l} \big) - \inf_{(\boldsymbol{\Theta}_{i, j, l}, \boldsymbol{\Delta}_{i, j, l})  \in \mathcal{Q}'} F^{t,\bar{k}_t}_{i, j, l}(\boldsymbol{\Theta}_{i, j, l}, \boldsymbol{\Delta}_{i, j, l})   + C_{\phi'}\phi(T_{g} - T_{g-1} ) \big\} 
 \nonumber\\
 \leq & 4 c_{\min}^{-4}(1-c_{\min})^4 \{ 4 \tau(\bar{k}_t) + C_{\phi'}\phi(T_{g} - T_{g-1} )\}  \leq 20 c_{\min}^{-4}(1-c_{\min})^4  \tau(\bar{k}_t), \nonumber
\end{align}
where the last inequality follows from $\bar{k}_t \leq t - T_{g-1} \leq T_{g} -  T_{g-1}$ and $V$ is non-increasing. 
Since $\bar{k}_t \geq \max\{t - 2^{\lceil \log_2 (T_{g-1}) \rceil}, 1\}$ and $V$ is non-increasing, for any $ u \in [t] \backslash[t-T_g]$, we have that  
\begin{align}
& F^{u}_{i, j, l}\big(\widehat{\boldsymbol{\Theta}}^{t,\hat{k}_t}_{i, j, l}, \widehat{\boldsymbol{\Delta}}^{t,\hat{k}_t}_{i, j, l} \big) - \inf_{(\boldsymbol{\Theta}_{i, j, l}, \boldsymbol{\Delta}_{i, j, l}) \in \mathcal{Q}'} F^{u}_{i, j, l}(\boldsymbol{\Theta}_{i, j, l}, \boldsymbol{\Delta}_{i, j, l}) \nonumber\\
\leq &  20 C_{\tau}  \frac{(1-c_{\min})^4}{c_{\min}^{4}}  \max\bigg\{ \frac{\log(n \vee L \vee T) \log(T) \log\log(T)}{ \max\{t - 2^{\lceil \log_2 (T_{g-1}) \rceil}, 1\}}, \big[ V\big( \max\{t - 2^{\lceil \log_2 (T_{g-1}) \rceil}, 1\} \big) \big]^2  \bigg\} \nonumber\\
\leq &  20 C_{\tau}'  \frac{(1-c_{\min})^4}{c_{\min}^{4}}  \max\bigg\{ \frac{\log(n \vee L \vee T) \log(T) \log\log(T)}{ t - T_{g-1}}, 
 \big\{ V( t - T_{g-1} ) \big\}^2  \bigg\} \nonumber
\end{align} 
where $C_{\tau}' >0$ is an absolute constant and the last inequality follows from \Cref{ass-bias}$(i)$.  Thus, for any $ u \in [t] \backslash[t-T_g]$, it holds that
\begin{align}
 & F^{u}_{i, j, l} \big(\widehat{\boldsymbol{\Theta}}^{t,\hat{k}_t}_{i, j, l}, \widehat{\boldsymbol{\Delta}}^{t,\hat{k}_t}_{i, j, l} \big)-   F^{u}_{i, j, l}\big( \boldsymbol{\Theta}^{u}_{i, j, l}, \boldsymbol{\Delta}^{u}_{i, j, l} \big)   \nonumber\\
 \leq&  20 C_{\tau}'  \frac{(1-c_{\min})^4}{c_{\min}^{4}} \max\bigg\{  \frac{ \log(n \vee L \vee T)\log(T) \log\log(T)}{ t - T_{g-1}}, \big\{ V( t - T_{g-1} ) \big\}^2  \bigg\}. \nonumber
\end{align}
Since  $F^u_{i, j, l}$ is $2c_{\min}^2(1-c_{\min})^{-2}$-strongly convex, then 
\begin{align}
& F^u_{i, j, l} \big(\widehat{\boldsymbol{\Theta}}^{t,\hat{k}_t}_{i, j, l}, \widehat{\boldsymbol{\Delta}}^{t,\hat{k}_t}_{i, j, l} \big) - F^u_{i, j, l} \big( \boldsymbol{\Theta}^{u}_{i, j, l}, \boldsymbol{\Delta}^{u}_{i, j, l} \big)   \nonumber\\
\geq  & \big(\widehat{\boldsymbol{\Theta}}^{t,\hat{k}_t}_{i, j, l}- \boldsymbol{\Theta}^{u}_{i, j, l} \big) \nabla_{\boldsymbol{\Theta}_{i, j, l}} F^u_{i, j, l} \big( \boldsymbol{\Theta}^{u}_{i, j, l}, \boldsymbol{\Delta}^{u}_{i, j, l} \big)  +\big( \widehat{\boldsymbol{\Delta}}^{t,\hat{k}_t}_{i, j, l}  - \boldsymbol{\Delta}^{u}_{i, j, l}\big) \nabla_{\boldsymbol{\Delta}_{i, j, l}} F^u_{i, j, l} \big( \boldsymbol{\Theta}^{u}_{i, j, l}, \boldsymbol{\Delta}^{u}_{i, j, l} \big)     \nonumber\\
& \hspace{0.5cm} +  \frac{2c_{\min}^2}{(1-c_{\min})^2} \Big\{ \big\vert \widehat{\boldsymbol{\Theta}}^{t,\hat{k}_t}_{i, j, l}- \boldsymbol{\Theta}^{u}_{i, j, l} \big) \big\vert^2 + \big\vert \widehat{\boldsymbol{\Delta}}^{t,\hat{k}_t}_{i, j, l}  - \boldsymbol{\Delta}^{u}_{i, j, l}\big \vert^2  \Big\}
\nonumber \\
= & \frac{2c_{\min}^2}{(1-c_{\min})^2}\Big\{ \big\vert \widehat{\boldsymbol{\Theta}}^{t,\hat{k}_t}_{i, j, l}- \boldsymbol{\Theta}^{u}_{i, j, l} \big) \big\vert^2 + \big\vert \widehat{\boldsymbol{\Delta}}^{t,\hat{k}_t}_{i, j, l}  - \boldsymbol{\Delta}^{u}_{i, j, l}\big \vert^2  \Big\}. \nonumber
\end{align}
Therefore,  for any $ u \in [t] \backslash[t-T_g]$, we have that  
\begin{align}\label{eq-bias-final-2}
   &  \big\vert \widehat{\boldsymbol{\Theta}}^{t,\hat{k}_t}_{i, j, l}- \boldsymbol{\Theta}^{u}_{i, j, l}  \big\vert^2 + \big\vert \widehat{\boldsymbol{\Delta}}^{t,\hat{k}_t}_{i, j, l}  - \boldsymbol{\Delta}^{u}_{i, j, l}\big \vert^2  \nonumber\\
  \leq & 10 C_{\tau}'  \frac{(1-c_{\min})^6}{c_{\min}^{6}} \max\bigg\{  \frac{ \log(n \vee L \vee T)\log(T) \log\log(T)}{ t - T_{g-1}}, \big\{ V( t - T_{g-1} ) \big\}^2  \bigg\}.
\end{align}

\medskip
\noindent\textbf{Step 3.} 
By \eqref{eq-bias-final-1} and \eqref{eq-bias-final-2},  under the event $\mathcal{B}$, for any $ u \in [t] \backslash[t-T_g]$, we have that 
\[
   \big\vert  \widehat{\boldsymbol{\Theta}}^{t,\hat{k}_t}_{i, j, l}  - \widetilde{\boldsymbol{\Theta}}^{t,\hat{k}_t}_{i, j, l}  \big\vert^2  +  \big\vert \widehat{\boldsymbol{\Theta}}^{t,\hat{k}_t}_{i, j, l}- \boldsymbol{\Theta}^{u}_{i, j, l}  \big\vert^2 
  \leq  C_3    \max\bigg\{  \frac{ \log(n \vee L \vee T)\log(T) \log\log(T)}{ t - T_{g-1}}, \big\{ V( t - T_{g-1} ) \big\}^2  \bigg\},
\]
and 
\[
   \big\vert  \widehat{\boldsymbol{\Delta}}^{t,\hat{k}_t}_{i, j, l}  - \widetilde{\boldsymbol{\Delta}}^{t,\hat{k}_t}_{i, j, l}  \big\vert^2  +  \big\vert \widehat{\boldsymbol{\Delta}}^{t,\hat{k}_t}_{i, j, l}- \boldsymbol{\Delta}^{u}_{i, j, l}\big\vert^2 
  \leq  C_3    \max\bigg\{  \frac{ \log(n \vee L \vee T)\log(T) \log\log(T)}{ t - T_{g-1}}, \big\{ V( t - T_{g-1} ) \big\}^2  \bigg\},
\]
where $C_3 > 0$ is an absolute constant.
By $(a+b)^2 \leq 2a^2 + 2b^2$, for any $ u \in [t] \backslash[t-T_g]$, we obtain that 
\begin{equation}\label{eq-bias-cases-all-1}
   \big\vert \widetilde{\boldsymbol{\Theta}}^{t,\hat{k}_t}_{i, j, l}  - \boldsymbol{\Theta}^{u}_{i, j, l} \big\vert^2 \leq 2 C_3    \max\bigg\{  \frac{ \log(n \vee L \vee T)\log(T) \log\log(T)}{ t - T_{g-1}}, \big\{ V( t - T_{g-1} ) \big\}^2  \bigg\},
\end{equation}
and
\[
    \big\vert \widetilde{\boldsymbol{\Delta}}^{t,\hat{k}_t}_{i, j, l}  - \boldsymbol{\Delta}^{u}_{i, j, l}\big \vert^2 
  \leq  2C_3   \max\bigg\{  \frac{ \log(n \vee L \vee T)\log(T) \log\log(T)}{ t - T_{g-1}}, \big\{ V( t - T_{g-1} ) \big\}^2  \bigg\}.
\]

Note that we can derive that for any $u \in [t]\backslash [T_{g-1}]$,
\begin{align}
\vert\widetilde{\boldsymbol{\Theta}}^{t,\hat{k}_t}_{i, j, l} -  \boldsymbol{\Theta}^{u}_{i, j, l} \vert 
= & \bigg\vert \frac{\sum_{s=t-\hat{k}_t+1}^{t} \big(1- \boldsymbol{\Pi}^{s-1}_{i, j, l}\big)\boldsymbol{\Theta}^{s}_{i, j, l} }{\sum_{s=t-\hat{k}_t+1}^{t} \ \big(1-\boldsymbol{\Pi}^{s-1}_{i, j, l}\big)}  - \boldsymbol{\Theta}^u_{i, j, l} \bigg\vert\nonumber\\
= & \bigg\vert \frac{\sum_{s=t-\hat{k}_t+1}^{t} \big(1-\boldsymbol{\Pi}^{s-1}_{i, j, l}\big)\big(\boldsymbol{\Theta}^{s}_{i, j, l} -  \boldsymbol{\Theta}^{u}_{i, j, l} \big)}{\sum_{s=t-\hat{k}_t+1}^{t}  \big(1-\boldsymbol{\Pi}^{s-1}_{i, j, l}\big)} \bigg\vert.  \nonumber
\end{align}
Choose $g' \in [G]$ such that  $ T_{g'-1} +1 \leq t-\hat{k}_t+1 \leq T_{g'}$. Then we obtain that
\begin{align}
\vert\widetilde{\boldsymbol{\Theta}}^{t,\hat{k}_t}_{i, j, l} -  \boldsymbol{\Theta}^{u}_{i, j, l} \vert 
= & \Bigg\vert\sum_{g''=g'-1}^{g-1}\frac{ \sum_{s=(T_{g''}+1) \vee (t-\hat{k}_t+1)}^{T_{g''+1} \wedge t } \big(1-\boldsymbol{\Pi}^{s-1}_{i, j, l}\big)\big(\boldsymbol{\Theta}^{s}_{i, j, l} -  \boldsymbol{\Theta}^{u}_{i, j, l} \big)}{\sum_{s=t-\hat{k}_t+1}^{t}  \big(1-\boldsymbol{\Pi}^{s-1}_{i, j, l}\big)} \Bigg\vert. 
\end{align}
Choosing $ u = T_{g-1} + 1$, we have that 
\begin{align}\label{eq-bias-cases-all-2-new}
&
\vert\widetilde{\boldsymbol{\Theta}}^{t,\hat{k}_t}_{i, j, l} -  \boldsymbol{\Theta}^{T_{g-1} + 1}_{i, j, l} \vert  \nonumber\\
 = & \bigg\{\sum_{s=t-\hat{k}_t+1}^{t}  \big(1-\boldsymbol{\Pi}^{s-1}_{i, j, l} \big) \bigg\}^{-1} \bigg\vert \sum_{g''=g'-1}^{g-1}\sum_{s=(T_{g''}+1) \vee (t-\hat{k}_t+1)}^{T_{g''+1} \wedge t } \big(1-\boldsymbol{\Pi}^{s-1}_{i, j, l}\big)  
 \nonumber \\
 & \hspace{0.5cm} \big(\boldsymbol{\Theta}^{_{T_{g''}+1}}_{i, j, l} - \boldsymbol{\Theta}^{ T_{g-1} + 1}_{i, j, l}  + \boldsymbol{\Theta}^{s}_{i, j, l} -  \boldsymbol{\Theta}^{T_{g''}+1}_{i, j, l} \big) \bigg\vert \nonumber\\
  \geq &\frac{ \Big\vert \sum_{g''=g'-1}^{g-1}\sum_{s=(T_{g''}+1) \vee (t-\hat{k}_t+1)}^{T_{g''+1} \wedge t } \big(1-\boldsymbol{\Pi}^{s-1}_{i, j, l}\big) \big(\boldsymbol{\Theta}^{_{T_{g''}+1}}_{i, j, l} - \boldsymbol{\Theta}^{ T_{g-1} + 1}_{i, j, l} \big)\Big\vert }{\sum_{s=t-\hat{k}_t+1}^{t}  \big(1-\boldsymbol{\Pi}^{s-1}_{i, j, l} \big)}   \nonumber \\
 & \hspace{0.5cm} - \frac{\Big\vert \sum_{g''=g'-1}^{g-1}\sum_{s=(T_{g''}+1) \vee (t-\hat{k}_t+1)}^{T_{g''+1} \wedge t }   \big(1-\boldsymbol{\Pi}^{s-1}_{i, j, l}\big) \big( \boldsymbol{\Theta}^{s}_{i, j, l} -  \boldsymbol{\Theta}^{T_{g''}+1}_{i, j, l} \big)\big) \Big\vert}{\sum_{s=t-\hat{k}_t+1}^{t}  \big(1-\boldsymbol{\Pi}^{s-1}_{i, j, l} \big)} \nonumber\\
 = & (I) - (II) 
 \end{align}

\medskip
\noindent
\textbf{Step 3.1.}
In this sub-step,  we consider the term $(II)$ in \eqref{eq-bias-cases-all-2-new}. 
By \eqref{eq-pi-upper-lower} and \Cref{def-seg}, we have that 
\begin{align}\label{eq-bias-cases-all-ii-new}
 (II) \leq &   \frac{(1-c_{\min})}{\hat{k}_t c_{\min}} \sum_{g''=g'-1}^{g-1} \big\{ (T_{g''+1}\wedge t) -T_{g''} \vee (t-\hat{k}_t) \big\}V(T_{g''+1} - T_{g''})
     \nonumber\\ 
     \leq & C_4  V(T_{g}-T_{g-1}) \leq C_4 V(t-T_{g-1}),
\end{align}
where  $C_4 > 0$ is an absolute constant, the second inequality follows from \Cref{ass-bias}$(i)$-$(ii)$, and the final inequality follows from $t - T_{g-1} \leq T_{g} - T_{g-1}$ and $V$ is non-increasing. Combining \eqref{eq-bias-cases-all-1}, \eqref{eq-bias-cases-all-2-new} and \eqref{eq-bias-cases-all-ii-new}, we obtain that 
\begin{equation}\label{eq-bias-cases-all-3-new}
   (I) \leq  (\sqrt{2C_3}+C_4)  \max\bigg\{  \sqrt{\frac{ \log(n \vee L \vee T)\log(T) \log\log(T)}{ t - T_{g-1}}},  V( t - T_{g-1} )   \bigg\}.
\end{equation}

\medskip
\noindent
\textbf{Step 3.2.}
In this sub-step, we analyze term $(I)$ in \eqref{eq-bias-cases-all-2-new}, considering it separately under three cases.

\medskip 
\noindent
\textbf{Case 1:} $\hat{k}_t  \leq   t - T_{g-1} $.
By \Cref{def-seg}, we directly have that for any $s, v \in [t] \backslash [t-\hat{k}_t]$, 
\begin{align}\label{eq-bias-cases-1-new}
  \max \big\{  \vert \boldsymbol{\Theta}^{s}_{i, j, l} -  \boldsymbol{\Theta}^{v}_{i, j, l}  \vert, \vert \boldsymbol{\Delta}^{s}_{i, j, l} -  \boldsymbol{\Delta}^{v}_{i, j, l} \big\}  \vert
    \leq   V(T_g - T_{g-1}) \leq  V(t - T_{g-1}),
\end{align}
where the last inequality follows from $t -T_{g-1} \leq T_{g} - T_{g-1}$ and $V$ is non-increasing.

\medskip 
\noindent
\textbf{Case 2:} $   t - T_{g-1} <  \hat{k}_t \leq   t- T_{g-2} $.  By $\hat{k}_t \geq \bar{k}_t$,  \eqref{def-bar-k-t} and the dynamic grid map $\mathcal {G}^{(t)}$ defined in \eqref{def-dynamic-grid}, we have that $\hat{k}_t \geq C_{\mathrm{low}} (t-T_{g-1})$ for some constant $C_{\mathrm{low}} > 1$. 
Thus, by \eqref{eq-pi-upper-lower},  for any $ s  \in [t] \backslash[t-\hat{k}_t]$,
\begin{align}\label{eq-bias-cases-2-sub-neq}
 (I) \geq   & \frac{c_{\min}}{\hat{k}_t(1-c_{\min})}  \Big[ T_{g-1} -  (t-\hat{k}_t) \} \Big]   \big \vert \boldsymbol{\Theta}^{T_{g-2}+1}_{i, j, l} -  
  \boldsymbol{\Theta}^{T_{g-1} + 1}_{i, j, l} \big \vert  
  \nonumber\\
  \geq & \frac{c_{\min}}{1-c_{\min}}  (1- C_{\mathrm{low}}^{-1})   \big \vert \boldsymbol{\Theta}^{ T_{g-2}+1}_{i, j, l} -   \boldsymbol{\Theta}^{T_{g-1} + 1}_{i, j, l} \big \vert.
\end{align}
Combining \eqref{eq-bias-cases-all-3-new} and \eqref{eq-bias-cases-2-sub-neq}, we obtain that
\begin{align}\label{eq-bias-cases-2-sub2-new}
    \big \vert \boldsymbol{\Theta}^{T_{g-2}+1}_{i, j, l} -   \boldsymbol{\Theta}^{T_{g-1} + 1}_{i, j, l} \big \vert \leq  C_5    \max\bigg\{  \sqrt{\frac{ \log(n \vee L \vee T)\log(T) \log\log(T)}{ t - T_{g-1}}},  V( t - T_{g-1} )   \bigg\},
\end{align}
for some absolute constant $C_5 >0$.
By  \Cref{def-seg} and the triangle inequality, for any $ s,u \in [t] \backslash[t-\hat{k}_t]$, it holds that 
\begin{align}\label{eq-bias-cases-2-new}
   \vert \boldsymbol{\Theta}^{s}_{i, j, l} -  \boldsymbol{\Theta}^{u}_{i, j, l} \vert  \leq C_6   \max\bigg\{  \sqrt{\frac{ \log(n \vee L \vee T)\log(T) \log\log(T)}{ t - T_{g-1}}},  V( t - T_{g-1} )   \bigg\},
\end{align}
where $C_6 > 0$ is an absolute constant

\medskip 
\noindent
\textbf{Case 3:}  $ \hat{k}_t  > t - T_{g-2}$.  By  \Cref{ass-bias}$(ii)$ and  the dynamic grid map $\mathcal {G}^{(t)}$ defined in \eqref{def-dynamic-grid}, we have that 
$c' (T_{g} - T_{g-1}) \leq \hat{k}_t \leq c'' (T_{g} - T_{g-1}) $ for some constants $1 < c' \leq c''$.  Let 
\[
p_{i, j, l} = \sum_{u = g''}^{g-2} (\boldsymbol{\Theta}^{T_{g''+1}+1}_{i, j, l} -\boldsymbol{\Theta}^{T_{g''}+1}_{i, j, l})_+, 
\quad 
n_{i, j, l} = \sum_{u = g''}^{g-2} (\boldsymbol{\Theta}^{T_{g''+1}+1}_{i, j, l} -\boldsymbol{\Theta}^{T_{g''}+1}_{i, j, l})_-.
\]
Based on \Cref{def-local-dom}, we may without loss of generality assume that $\{\boldsymbol{\Theta}_{i, j, l}^{ T_{{g-1}+1}}\}_{g \in [G]}$ is locally dominantly increasing. Thus, for any $ s  \in [t] \backslash[t-\hat{k}_t]$, by \Cref{ass-bias}$(i)$, we have that 
\begin{align}\label{eq-bias-cases-3-sub-new}
  (I) \geq &  \frac{c_{\min}}{\hat{k}_t(1-c_{\min})} 
   C_{7} \hat{k}_t  \big( c_{\min}  p_{i, j, l} - (1-c_{\min})n_{i, j, l} \big)\nonumber\\ 
  \geq &   C_{7}  \frac{c_{\min}}{(1-c_{\min})} 
   \big\{ c_{\min} -C_{\alpha} (1-c_{\min} \big)\big\} p_{i, j, l}, 
\end{align}
where $C_{7} >0$ is an absolute constant. 
Combining  \eqref{eq-bias-cases-all-3-new} and \eqref{eq-bias-cases-3-sub-new},   we obtain that
\begin{align}
     \max_{g'' \in \{g'-1, \ldots g-2\} }  \big \vert \boldsymbol{\Theta}^{T_{g''+1}+1}_{i, j, l} -   \boldsymbol{\Theta}^{T_{g''} + 1}_{i, j, l} \big \vert \leq  C_8\bigg\{  \sqrt{\frac{ \log(n \vee L \vee T)\log(T) \log\log(T)}{ t - T_{g-1}}},  V( t - T_{g-1} )   \bigg\}. \nonumber
\end{align}
for some absolute constant $C_8 > 0$.
By \Cref{def-seg}, \Cref{ass-bias}$(i)$-$(ii)$ and the triangle inequality, for any $ s,u \in [t] \backslash[t-\hat{k}_t]$, it holds that 
\begin{align}\label{eq-bias-cases-3-new}
   \vert \boldsymbol{\Theta}^{s}_{i, j, l} -  \boldsymbol{\Theta}^{u}_{i, j, l} \vert  \leq C_{9} \max\bigg\{  \sqrt{\frac{ \log(n \vee L \vee T)\log(T) \log\log(T)}{ t - T_{g-1}}},  V( t - T_{g-1} )   \bigg\},
\end{align}
where $C_{9} > 0$ is an absolute constant

Combining \eqref{eq-bias-cases-1-new}, \eqref{eq-bias-cases-2-new} and \eqref{eq-bias-cases-3-new}, we establish the claim stated in \eqref{eq-bias-calim}. Similarly, we can prove~\eqref{eq-bias-calim-delta}.

\medskip
\noindent
\textbf{Step 4.} Applying \Cref{prop-bias}, we obtain that
\begin{align}\label{eq-bias-1-start}
\Big\vert\mathbb{E}\big\{\widehat{\boldsymbol{\Theta}}_{i,j,l}^{t,\hat{k}_t} \big\} - \boldsymbol{\Theta}_{i,j, l}^{t, \hat{k}_t} \Big\vert \leq   &  \mathcal{E}^{t,\hat{k}_t}_{i, j, l} + \frac{C_{10}}{\hat{k}_t}  
\leq   \mathcal{E}^{t,\hat{k}_t}_{i, j, l} + \frac{C_{10}}{\max\{t-2^{\lceil \log_2 (T_{g-1}) \rceil}, 1\}} 
\leq   \mathcal{E}^{t,\hat{k}_t}_{i, j, l} + \frac{C^{*}}{ t-T_{g-1} } ,
\end{align}
where $C_{10}, C^*  >0$ are absolute constants and the second inequality follows from $\hat{k}_t \geq \bar{k}_t  \geq \max\{t-2^{\lceil \log_2 (T_{g-1}) \rceil}, 1\}$. 
Note that
\begin{align}\label{eq-bias-1-detail-1}
  \mathcal{E}^{t,\hat{k}_t}_{i, j, l}
\leq  & \frac{1}{2c_{\min}^2}  \max_{s, u \in [t]\backslash[t-\hat{k}_t]} \vert \boldsymbol{\Theta}^{s}_{i, j, l} -  \boldsymbol{\Theta}^{u}_{i, j, l} \vert
\sum_{u' = T_{g'-1} +2}^{t}  \big( \big\vert \boldsymbol{\Theta}^{u'}_{i,j,l} -   \boldsymbol{\Theta}^{u'-1}_{i,j,l}  \big\vert  +  \big\vert  \boldsymbol{\Delta}^{u'}_{i,j,l} - \boldsymbol{\Delta}^{u'-1}_{i,j,l}  \big\vert\big) \nonumber\\
\leq  &  \frac{ C'}{2c_{\min}^2}   \max\bigg\{  \sqrt{\frac{ \log(n \vee L \vee T)\log(T) \log\log(T)}{ t - T_{g-1}}},  V( t - T_{g-1} )   \bigg\} \bigg(
\sum_{u' = T_{g'-1} +2}^{t}  \big\vert \boldsymbol{\Theta}^{ u'}_{i,j,l} -   \boldsymbol{\Theta}^{ u'-1}_{i,j,l}  \big\vert
\nonumber\\
& \hspace{0.5cm} + \sum_{u' = T_{g'-1} +2}^{t}
\big\vert \boldsymbol{\Delta}^{u'}_{i,j,l} -   \boldsymbol{\Delta}^{u' - 1}_{i,j,l}  \big\vert  \bigg) \nonumber\\
=  &  \frac{ C'}{2c_{\min}^2}  \max\bigg\{  \sqrt{\frac{ \log(n \vee L \vee T)\log(T) \log\log(T)}{ t - T_{g-1}}},  V( t - T_{g-1} )   \bigg\}   (IV) ,
\end{align}
where the first inequality follows from \eqref{eq-bias-calim}.

For term $(IV)$ in \eqref{eq-bias-1-detail-1},
for any $g \in \{g'-1, \ldots, g-1\}$, let 
\[
p_{i, j, l}^{g''} = \sum_{u = T_{g''}+1}^{T_{g''+1}-1} (\boldsymbol{\Theta}^{t+1}_{i, j, l} -\boldsymbol{\Theta}^{t}_{i, j, l})_+, 
\quad 
n_{i, j, l}^{g''} = \sum_{u = T_{g''}+1}^{T_{g''+1}-1}    (\boldsymbol{\Theta}^{t+1}_{i, j, l} -\boldsymbol{\Theta}^{t}_{i, j, l})_-, \quad 
\]
and 
\[
\tilde{p}_{i, j, l}^{g''} = \sum_{u = T_{g''}+1}^{T_{g''+1}-1} (\boldsymbol{\Theta}^{t+1}_{i, j, l} -\boldsymbol{\Theta}^{t}_{i, j, l})_+, 
\quad 
\tilde{n}_{i, j, l}^{g''} = \sum_{u = T_{g''}+1}^{T_{g''+1}-1}    (\boldsymbol{\Theta}^{t+1}_{i, j, l} -\boldsymbol{\Theta}^{t}_{i, j, l})_-. \quad 
\]
Based on \Cref{def-local-dom}, for any $g \in \{g'-1, \ldots, g-1\}$, we may without loss of generality assume that $\{\boldsymbol{\Theta}_{i, j, l}^{t}\}_{t \in [T_{g''}] \backslash [T_{g''-1}]}$ is dominantly increasing. Then we can derive that 
\begin{align}\label{eq-bias-1-detail-2}
(IV) \leq &  \sum_{g'' = g'-1}^{g-1} \big( p_{i, j, l}^{g''}+ n_{i, j, l}^{g''} + \tilde{p}_{i, j, l}^{g''}+ \tilde{n}_{i, j, l}^{g''} \big) +  \sum_{g'' = g'}^{g-1}\Big( \big\vert \boldsymbol{\Theta}^{ T_{g''}+1}_{i,j,l} -   \boldsymbol{\Theta}^{ T_{g''}}_{i,j,l}  \big\vert +  \big\vert \boldsymbol{\Delta}^{ T_{g''}+1}_{i,j,l} -   \boldsymbol{\Delta}^{ T_{g''}}_{i,j,l}  \big\vert \Big)
\nonumber\\
\leq &  \frac{1+C_{\alpha}}{1+C_{\alpha}}  \sum_{g'' = g'-1}^{g-1} \big(  p_{i, j, l}^{g''} - n_{i, j, l}^{g''} + \tilde{p}_{i, j, l}^{g''}+ \tilde{n}_{i, j, l}^{g''}\big)
+ \sum_{g'' = g'}^{g-1}\Big( \big\vert \boldsymbol{\Theta}^{ T_{g''}+1}_{i,j,l} -   \boldsymbol{\Theta}^{ T_{g''}}_{i,j,l}  \big\vert +  \big\vert \boldsymbol{\Delta}^{ T_{g''}+1}_{i,j,l} -   \boldsymbol{\Delta}^{ T_{g''}}_{i,j,l}  \big\vert \Big) \nonumber\\
 \leq &  \frac{1+C_{\alpha}}{1+C_{\alpha}}  \sum_{g'' = g'-1}^{g-1} \Big( \big\vert \boldsymbol{\Theta}^{ T_{g''+1}}_{i,j,l} -   \boldsymbol{\Theta}^{ T_{g''}+1}_{i,j,l}  \big\vert +  \big\vert \boldsymbol{\Delta}^{ T_{g''+1}}_{i,j,l} -   \boldsymbol{\Delta}^{ T_{g''}+1}_{i,j,l}  \big\vert 
\Big) + \sum_{g'' = g'}^{g-1}\Big( \big\vert \boldsymbol{\Theta}^{ T_{g''}+1}_{i,j,l} -   \boldsymbol{\Theta}^{ T_{g''}}_{i,j,l}  \big\vert 
\nonumber\\
& \hspace{0.5cm} +  \big\vert \boldsymbol{\Delta}^{ T_{g''}+1}_{i,j,l} -   \boldsymbol{\Delta}^{ T_{g''}}_{i,j,l}  \big\vert \Big) 
\nonumber\\
\leq & C''  \max\bigg\{  \sqrt{\frac{ \log(n \vee L \vee T)\log(T) \log\log(T)}{ t - T_{g-1}}},  V( t - T_{g-1} )   \bigg\}
\end{align}
where  $C'' >0$ is an absolute constant, the second inequality follows from  \Cref{def-local-dom}, and the final inequality follows from \Cref{def-seg}, \eqref{eq-bias-calim}, \eqref{eq-bias-calim-delta} and \Cref{ass-bias}$(ii)$. 

Combining \eqref{eq-bias-1-detail-1} and \eqref{eq-bias-1-detail-2}, we can conclude that under the event $\mathcal{B}$, 
\begin{align}
\Big\vert\mathbb{E} \big\{\widehat{\boldsymbol{\Theta}}_{i,j,l}^{t,\hat{k}_t} \big\} - \boldsymbol{\Theta}_{i,j, l}^{t, \hat{k}_t} \Big\vert \leq  &   \bigg(\frac{C'C''}{2c_{\min}^2} + C^*\bigg) \max\bigg\{ \frac{ \log(n \vee L \vee T)\log(T) \log\log(T)}{ t - T_{g-1}},  \big\{ V( t - T_{g-1} )  \big\}^2 \bigg\} \nonumber
\end{align}
Similarly, we can prove that  under the event $\mathcal{B}$, 
\begin{align}
\Big\vert\mathbb{E} \big\{\widehat{\boldsymbol{\Delta}}_{i,j,l}^{t,\hat{k}_t} \big\} - \boldsymbol{\Delta}_{i,j, l}^{t, \hat{k}_t} \Big\vert \leq  &   \bigg(\frac{C'C''}{2c_{\min}^2} + C^*\bigg) \max\bigg\{ \frac{ \log(n \vee L \vee T)\log(T) \log\log(T)}{ t - T_{g-1}},  \big\{ V( t - T_{g-1} )  \big\}^2 \bigg\}. \nonumber
\end{align}
Finally, by \eqref{eq-step1-main-event-prop}, we conclude the proof

\end{proof}

\subsection{Proofs for Appendix~\ref{app-non-stat-add}}\label{app-non-stat-add-proof}

This section provides the proofs of the results stated in Appendix~\ref{app-non-stat-add}. The proofs are organized as follows: Lemma~\ref{lemma-mixing-nonstat} is proved in Appendix~\ref{sec-app-1-non}, Lemma~\ref{lem-temporal-cov} in Appendix~\ref{sec-app-2-non}, Proposition~\ref{prop-bias} in Appendix~\ref{sec-app-3-non} and Lemma~\ref{lemma-singular-values-non-stat} in Appendix~\ref{sec-app-4-non}.

\subsubsection{Proof of Lemma \ref{lemma-mixing-nonstat}}\label{sec-app-1-non}

\begin{proof}[Proof of \Cref{lemma-mixing-nonstat}]
Fix $ 1 \leq i \leq j \leq n$ and $l \in [L]$. For any $a, b \in \mathbb{N}$ with $ a\leq b$, define the event
\[
\mathcal{R}^{a,b}_{i, j, l} = \bigcap_{t=a}^b \{\mathbf{E}_{i, j, l}^t = 0\}.
\]
Since $\{E_{i, j, l}^t\}_{t \geq 0}$ are independent, we have that
\[
\P ( \mathcal{R}^{a,b}_{i, j, l} ) = \prod_{t=a}^b (1 - \boldsymbol{\Theta}_{i, j, l}^t - \boldsymbol{\Delta}_{i, j, l}^t ).
\]
For any $t, k \in \mathbb{Z}^+$,  let $\mathcal{F}^{0, t}_{i, j, l} = \sigma(\mathbf{A}^0_{i, j, l}, \mathbf{A}^1_{i, j, l}, \cdots, \mathbf{A}^t_{i, j, l})$ and $\mathcal{F}^{t+k, \infty}_{i, j, l}= \sigma( \mathbf{A}^{t+k}_{i, j, l}, \mathbf{A}^{t+k+1}_{i, j, l}, \cdots)$. For any  $\mathcal{A} \in \mathcal{F}^{0,t}_{i, j, l}$ and $\mathcal{B} \in \mathcal{F}^{t+k, \infty}_{i, j, l}$, we have that
\begin{align*}
\mathbb{P}(\mathcal{A} \cap \mathcal{B}) &= \mathbb{P}(\mathcal{A} \cap \mathcal{B} \cap \mathcal{R}^{t+1,t+k}_{i, j, l}) + \mathbb{P} \big(\mathcal{A} \cap \mathcal{B} \cap (\mathcal{R}^{t+1,t+k}_{i, j, l})^c \big) \\
&= \mathbb{P}(\mathcal{A} \cap \mathcal{B} \cap \mathcal{R}^{t+1,t+k}_{i, j, l}) + \mathbb{P}\big(\mathcal{A} \cap \mathcal{B} \mid (\mathcal{R}^{t+1,t+k}_{i, j, l})^c \big)\mathbb{P} \big( (\mathcal{R}^{t+1,t+k}_{i, j, l})^c \big) \\
&= \mathbb{P}(\mathcal{A} \cap \mathcal{B} \cap \mathcal{R}^{t+1,t+k}_{i, j, l}) + \mathbb{P}\big(\mathcal{A} \mid (\mathcal{R}^{t+1,t+k}_{i, j, l})^c\big)\mathbb{P}\big( \mathcal{B} \mid (\mathcal{R}^{t+1,t+k}_{i, j, l})^c\big)\mathbb{P}\big((\mathcal{R}^{t+1,t+k}_{i, j, l})^c\big) \\
&= \mathbb{P}(\mathcal{A} \cap \mathcal{B} \cap \mathcal{R}^{t+1,t+k}_{i, j, l}) + \mathbb{P}\big(\mathcal{A})\mathbb{P} \big(\mathcal{B} \cap (\mathcal{R}^{t+1,t+k}_{i, j, l})^c\big),
\end{align*}
where the third equality follows from the conditional independence of $\mathcal{A}$ and $\mathcal{B}$ given $(\mathcal{R}^{t+1,t+k}_{i, j, l})^c$,  and the last equality follows from the independence between $\mathcal{A}$ and $\mathcal{R}^{t+1,t+k}_{i, j, l}$. Thus, we can derive that 
\begin{align*}
|\mathbb{P}(\mathcal{A} \cap \mathcal{B}) - \mathbb{P}(\mathcal{A})\mathbb{P}(\mathcal{B})| &= |\mathbb{P}(\mathcal{A} \cap \mathcal{B} \cap \mathcal{R}^{t+1,t+k}_{i, j, l}) - \mathbb{P}(\mathcal{A})\mathbb{P}(\mathcal{B} \cap \mathcal{R}^{t+1,t+k}_{i, j, l})| \\
&\leq \mathbb{P}(\mathcal{R}^{t+1,t+k}_{i, j, l}) = \prod_{j=t+1}^{t+k} (1 - \boldsymbol{\Theta}_{i, j, l}^t -\boldsymbol{\Delta}_{i, j, l}^t ) \leq (1-2c_{\min})^k,
\end{align*}
where the first inequality follows from the fact that  $\mathbb{P}(\mathcal{A} \cap \mathcal{B} \cap \mathcal{R}^{t+1,t+k}_{i, j, l})$ and $\mathbb{P}(\mathcal{A})\mathbb{P}(\mathcal{B} \cap \mathcal{R}^{t+1,t+k}_{i, j, l})$ both lie in $\big[0, \mathbb{P}( \mathcal{R}^{t+1,t+k}_{i, j, l})\big]$.

Therefore, for any $k\in \Z^+$,
\[
\alpha(k) = \sup_{t \in \mathbb{Z}^+} \sup_{\mathcal{A} \in \mathcal{F}^{0,t}_{i, j, l}, \mathcal{B} \in \mathcal{F}^{t+k, \infty}_{i, j, l} }|\mathbb{P}(\mathcal{A} \cap \mathcal{B}) - \mathbb{P}(\mathcal{A})\mathbb{P}(\mathcal{B})| \leq (1-2c_{\min})^k.
\]
Since $\alpha(k) \to  0$ as $k \to \infty$, by \Cref{def-mixing}, the process $\{ \mathbf{A}_{i,j,l}^t \}_{t \geq 0}$ is strongly mixing, which completes the proof.

\end{proof}

\subsubsection{Proof of Lemma \ref{lem-temporal-cov}}\label{sec-app-2-non}
\begin{proof}[Proof of \Cref{lem-temporal-cov}]
Fix $1 \leq i \leq j \leq n$ and $l \in [L]$. Without loss of generality, assume $1 \leq r \leq s$.
For $s > r$, define the indicator of no update between $r$ and $s$
\[
N^{r,s} = \prod_{t=r+1}^{s}\mathbbm{1}\{\mathbf{E}^{t}_{i,j,l}=0\}.
\]
Note that
\begin{equation}\label{eq-cov-EN}
\mathbb{E}\{N^{r,s}\}
=\prod_{t=r+1}^{s}(1-\boldsymbol{\Theta}^{t}_{i,j,l}-\boldsymbol{\Delta}^{t}_{i,j,l}).
\end{equation}

Next, define the first update time
\[
\tau =\min\big\{t\in[s]\backslash [r] \colon \mathbf{E}_{i, j, l}^t\in\{\pm1\} \big\},
\]
with the convention that $\tau = \infty$ if the set is empty. Then define
\[
\widetilde{B}
=\begin{cases}
\mathbbm{1}\{\mathbf{E}_{i, j, l}^\tau=1\}, & \tau<\infty,\\
0, & \tau=\infty.
\end{cases}
\]
By recursively applying \Cref{def-armsb}, we obtain that
\[
\mathbf{A}_{i,j,l}^{s}=N^{r,s}\mathbf{A}_{i,j,l}^{r}+(1-N^{r,s})\widetilde{B}.
\]
Since $N^{r,s}$ and $\widetilde{B}$ depend only on $\{\mathbf{E}_{i,j,l}^t\}_{t=r+1}^s$ and are thus independent of $\mathbf{A}_{i,j,l}^r$, we 
compute the covariance
\begin{align}\label{eq-cov-calc}
\mathrm{Cov} \big(\mathbf{A}_{i,j,l}^{r},\mathbf{A}_{i,j,l}^{s} \big)
= & \mathbb{E}\big\{ \mathbf{A}_{i,j,l}^{r}\mathbf{A}_{i,j,l}^{s}\big\}-\mathbb{E}\big\{\mathbf{A}_{i,j,l}^{r}\} \mathbb{E}\big\{\mathbf{A}_{i,j,l}^{s}\big\}  \nonumber\\
= &  \mathbb{E}\big\{ N^{r,s} (\mathbf{A}_{i,j,l}^{r})^2 + (1-N^{r,s})\widetilde{B} \mathbf{A}_{i,j,l}^{r} \big\}-\mathbb{E}\{\mathbf{A}_{i,j,l}^{r}\} \mathbb{E}\big\{ N^{r,s}\mathbf{A}_{i,j,l}^{r}+(1-N^{r,s})\widetilde{B} \big\}  \nonumber\\
= &  \mathbb{E} \{ N^{r,s} \} \mathbb{E}\big\{(\mathbf{A}_{i,j,l}^{r})^2\big\} + 
 \mathbb{E} \big\{(1-N^{r,s})\widetilde{B} \big\}\mathbb{E} \big\{ \mathbf{A}_{i,j,l}^{r} \big\} -  \big[ \mathbb{E}\{\mathbf{A}_{i,j,l}^{r}\} \big]^2 \mathbb{E}\{ N^{r,s}\} \nonumber\\
&  \hspace{0.5cm} - \mathbb{E} \big\{ \mathbf{A}_{i,j,l}^{r} \big\} \mathbb{E} \big\{(1-N^{r,s})\widetilde{B} \big\}  \nonumber\\
= &  \mathbb{E} \{ N^{r,s} \} \mathrm{Var}(\mathbf{A}_{i,j,l}^{r}),
\end{align}
Since $\mathbf{A}_{i,j,l}^{r}$ is a Bernoulli random variable, we have that 
\begin{equation}\label{eq-cov-var}
\mathrm{Var}(\mathbf{A}_{i,j,l}^{r})\leq\frac{1}{4}.
\end{equation}
If $r = s$, the bound is immediate
\[
0 \leq \mathrm{Cov}(\mathbf{A}_{i,j,l}^{r},\mathbf{A}_{i,j,l}^{s})
\leq \frac{1}{4}  =  \frac{1}{4} (1-2c_{\min})^{s-r};
\]
and if $r < s$, combining \eqref{eq-cov-EN}, \eqref{eq-cov-calc} and \eqref{eq-cov-var},  we have that 
\[
0 \leq \mathrm{Cov}(\mathbf{A}_{i,j,l}^{r},\mathbf{A}_{i,j,l}^{s})
\leq \frac{1}{4}\prod_{t=r+1}^{s}(1-\boldsymbol{\Theta}^{t}_{i,j,l}-\boldsymbol{\Delta}^{t}_{i,j,l})
\leq  \frac{1}{4} (1-2c_{\min})^{s-r}.
\]
These conclude the proof.
\end{proof}

\subsubsection{Proof of Proposition \ref{prop-bias}}
\label{sec-app-3-non}
\begin{proof}[Proof of \Cref{prop-bias}]
Fix $(i,j,l)$ with $1 \leq i \leq j \leq n$ and $l \in [L]$, and consider a look-back window of length $k$ ending at time $t$, i.e.~observations $\{\mathbf{A}^{s}\}_{s=t-k+1}^{t}$. Recall that the estimators are defined as
\[
\widehat{\boldsymbol{\Theta}}_{i,j,l}^{t,k} = \frac{ k^{-1}\sum_{s=t-k+1}^{t} \mathbf{A}_{i,j,l}^{s} (1 - \mathbf{A}_{i,j,l}^{s-1})}{ 1- k^{-1}\sum_{s=t-k+1}^{t}  \mathbf{A}_{i,j,l}^{s-1}}, \quad \widehat{\boldsymbol{\Delta}}_{i,j,l}^{t,k} = \frac{k^{-1}\sum_{s=t-k+1}^{t} ( 1 - \mathbf{A}_{i,j,l}^{s}) \mathbf{A}_{i,j,l}^{s-1}}{k^{-1} \sum_{s=t-k+1}^{t} \mathbf{A}_{i,j,l}^{s-1}}.
\]
For any $t \in [T]$ and $s \in [t] \cup \{0\}$, write
\[
\boldsymbol{\Pi}_{i,j,l}^{s} = \mathbb{E}\{\mathbf{A}_{i,j,l}^{s}\} \quad \mbox{and} \quad
\boldsymbol{\Pi}_{i,j,l}^{t,k} = \frac{1}{k} \sum_{s=t-k+1}^t \boldsymbol{\Pi}_{i,j,l}^{s-1}.
\]

Define the event
\begin{align}\label{eq-bias-event}
\mathcal{I}_{t,k} = \bigg\{ \bigg\vert  \frac{1}{k} \sum_{s=t-k+1}^t \mathbf{A}_{i,j,l}^{s-1} - \boldsymbol{\Pi}_{i,j,l}^{t,k} \bigg\vert  \le \frac{1 - \boldsymbol{\Pi}_{i,j,l}^{t,k}}{2} \bigg\}.
\end{align}
Expanding $(1- k^{-1}\sum_{s=t-k+1}^{t}  \mathbf{A}_{i,j,l}^{s-1})^{-1}$ around $(1 -\boldsymbol{\Pi}_{i,j,l}^{t,k})^{-1}$ yields
\begin{align}\label{eq-bias-all}
& \mathbb{E} \big\{\widehat{\boldsymbol{\Theta}}_{i,j,l}^{t,k}  \big\} \nonumber\\
= & \frac{1}{k} \mathbb{E}\bigg\{  \sum_{s=t-k+1}^{t} \frac{\mathbf{A}_{i,j,l}^{s} (1 - \mathbf{A}_{i,j,l}^{s-1}) }{ 1- k^{-1}\sum_{s=t-k+1}^{t}  \mathbf{A}_{i,j,l}^{s-1}} \bigg\} \nonumber\\
= & \frac{1}{k} \mathbb{E} \bigg[\sum_{s=t-k+1}^{t} \mathbf{A}_{i,j,l}^{s} (1 - \mathbf{A}_{i,j,l}^{s-1}) 
\bigg\{
\frac{1}{1 -\boldsymbol{\Pi}_{i,j,l}^{t,k}} 
+ \frac{k^{-1} \sum_{s=t-k+1}^{t} \mathbf{A}_{i,j,l}^{s-1} - \boldsymbol{\Pi}_{i,j,l}^{t,k}}{(1 - \boldsymbol{\Pi}_{i,j,l}^{t,k} )^2} \nonumber\\
& \hspace{0.5cm} + \sum_{u=2}^{\infty} \frac{\big(k^{-1} \sum_{s=t-k+1}^{t} \mathbf{A}_{i,j,l}^{s-1} - \boldsymbol{\Pi}_{i,j,l}^{t,k}\big)^u}{\big(1 - \boldsymbol{\Pi}_{i,j,l}^{t,k}\big)^{u+1}}
\bigg\}  \bigg] \nonumber\\
= & \frac{1}{k} \mathbb{E} \bigg\{\sum_{s=t-k+1}^{t} \frac{\mathbf{A}_{i,j,l}^{s} (1 - \mathbf{A}_{i,j,l}^{s-1})}{1 -\boldsymbol{\Pi}_{i,j,l}^{t,k}} \bigg\} 
+  \frac{1}{k} \mathbb{E} \bigg\{ \sum_{s=t-k+1}^{t} \mathbf{A}_{i,j,l}^{s} (1 - \mathbf{A}_{i,j,l}^{s-1}) \frac{k^{-1} \sum_{s=t-k+1}^{t} \mathbf{A}_{i,j,l}^{s-1} - \boldsymbol{\Pi}_{i,j,l}^{t,k}}{(1 - \boldsymbol{\Pi}_{i,j,l}^{t,k} )^2} \bigg\} \nonumber\\
& \hspace{0.5cm} +  \frac{1}{k} \mathbb{E} \bigg\{\sum_{s=t-k+1}^{t} \mathbf{A}_{i,j,l}^{s} (1 - \mathbf{A}_{i,j,l}^{s-1}) \sum_{u=2}^{\infty} \frac{\big(k^{-1} \sum_{s=t-k+1}^{t} \mathbf{A}_{i,j,l}^{s-1} - \boldsymbol{\Pi}_{i,j,l}^{t,k}\big)^u}{\big(1 - \boldsymbol{\Pi}_{i,j,l}^{t,k}\big)^{u+1}}
\bigg\}  \nonumber\\
= & \frac{1}{k} \mathbb{E} \bigg\{\sum_{s=t-k+1}^{t} \frac{\mathbf{A}_{i,j,l}^{s} (1 - \mathbf{A}_{i,j,l}^{s-1})}{1 -\boldsymbol{\Pi}_{i,j,l}^{t,k}} \bigg\} 
+  \frac{1}{k} \mathbb{E} \bigg\{ \sum_{s=t-k+1}^{t} \mathbf{A}_{i,j,l}^{s} (1 - \mathbf{A}_{i,j,l}^{s-1}) \frac{k^{-1} \sum_{s=t-k+1}^{t} \mathbf{A}_{i,j,l}^{s-1} - \boldsymbol{\Pi}_{i,j,l}^{t,k}}{(1 - \boldsymbol{\Pi}_{i,j,l}^{t,k} )^2} \bigg\} \nonumber\\
& \hspace{0.5cm} 
+  \frac{1}{k}\mathbb{E}  \bigg[ \bigg\{\sum_{s=t-k+1}^{t} \mathbf{A}_{i,j,l}^{s} (1 - \mathbf{A}_{i,j,l}^{s-1}) \sum_{u=2}^{\infty} \frac{\big(k^{-1} \sum_{s=t-k+1}^{t} \mathbf{A}_{i,j,l}^{s-1} - \boldsymbol{\Pi}_{i,j,l}^{t,k}\big)^u}{\big(1 - \boldsymbol{\Pi}_{i,j,l}^{t,k}\big)^{u+1}} \bigg\}  \mathbbm{1}_{\mathcal{I}_{t,k}}
 \bigg]
 \nonumber\\
& \hspace{0.5cm} +   \frac{1}{k} \mathbb{E}  \bigg[ \bigg\{\sum_{s=t-k+1}^{t} \mathbf{A}_{i,j,l}^{s} (1 - \mathbf{A}_{i,j,l}^{s-1}) \sum_{u=2}^{\infty} \frac{\big(k^{-1} \sum_{s=t-k+1}^{t} \mathbf{A}_{i,j,l}^{s-1} - \boldsymbol{\Pi}_{i,j,l}^{t,k}\big)^u}{\big(1 - \boldsymbol{\Pi}_{i,j,l}^{t,k}\big)^{u+1}}
\bigg\}  \mathbbm{1}_{\mathcal{I}_{t,k}^c}\bigg] \nonumber\\
= & (I)  + (II)  +  (III.1) + (III.2).
\end{align}
In the following steps, we address the above terms separately.

\medskip
\noindent
\textbf{Step 1.}
In this step, we analyze term $(I)$ in \eqref{eq-bias-all}. First, note that
\begin{align}\label{eq-bias-I-beg}
  (I) = &  \frac{1}{k}  \sum_{s=t-k+1}^{t} \boldsymbol{\Theta}^{s}_{i, j, l}(1-\boldsymbol{\Pi}_{i,j,l}^{s-1})
\frac{1}{1 -\boldsymbol{\Pi}_{i,j,l}^{t,k}}  \nonumber\\
 =  & \boldsymbol{\Theta}^{t, k}_{i, j, l} + \frac{1}{k^2} \sum_{s, u=t-k+1}^{t} (\boldsymbol{\Theta}^{s}_{i, j, l} - \boldsymbol{\Theta}^{u}_{i, j, l}) (1-\boldsymbol{\Pi}_{i,j,l}^{s-1})
\frac{1}{1 -\boldsymbol{\Pi}_{i,j,l}^{t,k}}  \nonumber\\
= & \boldsymbol{\Theta}_{i,j, l}^{t, k} +  \mathcal{E}^{t,k}_{i, j, l}
\end{align}
If $k =1$, then we can directly have that 
$\mathcal{E}^{t,k}_{i, j, l} = 0$.
For any $v \in [t] \backslash [t-k]$, we have that 
\begin{equation}\label{eq-bias-I-spec}
    \mathcal{E}^{t,k}_{i, j, l} =   \frac{1}{k^2} \sum_{s, u=t-k+1}^{t} (\boldsymbol{\Theta}^{s}_{i, j, l} - \boldsymbol{\Theta}^{u}_{i, j, l}) (\boldsymbol{\Pi}_{i,j,l}^{v-1}-\boldsymbol{\Pi}_{i,j,l}^{s-1})
\frac{1}{1 -\boldsymbol{\Pi}_{i,j,l}^{t,k}}.
\end{equation}

For any $\pi, \theta, \delta \in [c_{\min}, 1-c_{\min}]$, define
\[
F_{\pi}(\theta,\delta) = 
\theta+ (1-\theta-\delta)\pi.
\]
Then for any $\pi \in [c_{\min}, 1-c_{\min}]$, we have that 
\[
\frac{\partial F_{\pi}}{\partial \theta}=1-\pi,
\quad \mbox{and} \quad
\frac{\partial F_{\pi}}{\partial \delta}=-\pi.
\]
Consequently, it holds that
\[
\|\nabla F_{\pi}\|_{\infty} \leq  1-c_{\min} \leq C_{\pi},
\]
where the last inequality follows from $C_{\pi} \geq 1$.
Then we have 
\[
\boldsymbol{\Pi}^{u}_{i,j,l}=F_{\boldsymbol{\Pi}^{u-1}_{i,j,l}}\big(\boldsymbol{\Theta}^{u}_{i,j,l}, \boldsymbol{\Delta}^{u}_{i,j,l}\big)
\quad \mbox{and} \quad 
\boldsymbol{\boldsymbol{\Pi}}^{u-1}_{i,j,l}=F_{\boldsymbol{\Pi}^{u-2}_{i,j,l}}\big(\boldsymbol{\Theta}^{u-1}_{i,j,l},\boldsymbol{\Delta}^{u-1}_{i,j,l}\big).
\]

We claim that under \Cref{ass-bias}$(iii)$, for any $u \geq T_g+2$, 
\begin{equation}\label{eq-pi-theta-delta}
\bigl|\boldsymbol{\Pi}^{u}_{i,j,l}-\boldsymbol{\Pi}^{u-1}_{i,j,l}\big| \leq C_{\pi}\sum_{v=T_g+2}^u (1-2c_{\min})^{u- v} \big( \big\vert \boldsymbol{\Theta}^{v}_{i,j,l} -   \boldsymbol{\Theta}^{v-1}_{i,j,l}  \big\vert  +  \big\vert  \boldsymbol{\Delta}^{v}_{i,j,l} - \boldsymbol{\Delta}^{v-1}_{i,j,l}  \big\vert  \big).
\end{equation}
When $u = T_g+2$, inequality \eqref{eq-pi-theta-delta} follows directly from \Cref{ass-bias}$(iii)$.
Assume now that for some  $u \geq T_g+3$,
\[
\bigl|\boldsymbol{\Pi}^{u-1}_{i,j,l}-\boldsymbol{\Pi}^{u-2}_{i,j,l}\big| \leq C_{\pi} \sum_{v=T_g+2}^{u-1}(1-2c_{\min})^{u- v - 1} \big( \big\vert \boldsymbol{\Theta}^{v}_{i,j,l} -   \boldsymbol{\Theta}^{v-1}_{i,j,l}  \big\vert  +  \big\vert  \boldsymbol{\Delta}^{v}_{i,j,l} - \boldsymbol{\Delta}^{v-1}_{i,j,l}  \big\vert  \big).
\]
We then have that 
\begin{align}
& \big|\boldsymbol{\Pi}^{u}_{i,j,l}-\boldsymbol{\Pi}^{u-1}_{i,j,l}\big| \nonumber\\
= & \big|F_{\boldsymbol{\Pi}^{u-1}_{i,j,l}}\big(\boldsymbol{\Theta}^{u}_{i,j,l}, \boldsymbol{\Delta}^{u}_{i,j,l}\big)
      -F_{\boldsymbol{\Pi}^{u-2}_{i,j,l}}\big(\boldsymbol{\Theta}^{u-1}_{i,j,l},\boldsymbol{\Delta}^{u-1}_{i,j,l}\big)\big| \nonumber\\
\leq& \big|F_{\boldsymbol{\Pi}^{u-1}_{i,j,l}}\big(\boldsymbol{\Theta}^{u}_{i,j,l}, \boldsymbol{\Delta}^{u}_{i,j,l}\big)
      -F_{\boldsymbol{\Pi}^{u-1}_{i,j,l}}\big(\boldsymbol{\Theta}^{u-1}_{i,j,l},\boldsymbol{\Delta}^{u-1}_{i,j,l}\big)\big|  + \big|F_{\boldsymbol{\Pi}^{u-1}_{i,j,l}}\big(\boldsymbol{\Theta}^{u-1}_{i,j,l}, \boldsymbol{\Delta}^{u-1}_{i,j,l}\big)
      -F_{\boldsymbol{\Pi}^{u-2}_{i,j,l}}\big(\boldsymbol{\Theta}^{u-1}_{i,j,l},\boldsymbol{\Delta}^{u-1}_{i,j,l}\big)\big| \nonumber\\      
\leq &    \big\|\nabla  F_{\boldsymbol{\Pi}^{u-1}_{i,j,l}} \big\|_{\infty} \Big( \big\vert \boldsymbol{\Theta}^{u}_{i,j,l} -   \boldsymbol{\Theta}^{u-1}_{i,j,l}  \big\vert  +  \big\vert  \boldsymbol{\Delta}^{u}_{i,j,l} - \boldsymbol{\Delta}^{u-1}_{i,j,l}  \big\vert  \Big)
+ \Big \vert \big( 1-\boldsymbol{\Theta}^{u-1}_{i,j,l} - \boldsymbol{\Delta}^{u-1}_{i,j,l}\big)  \big( \boldsymbol{\Pi}^{u-1}_{i,j,l}-\boldsymbol{\Pi}^{u-2}_{i,j,l} \big)  \Big \vert \nonumber\\
\leq &C_{\pi} \big\{ \big\vert \boldsymbol{\Theta}^{u}_{i,j,l} -   \boldsymbol{\Theta}^{u-1}_{i,j,l}  \big\vert  +  \big\vert  \boldsymbol{\Delta}^{u}_{i,j,l} - \boldsymbol{\Delta}^{u-1}_{i,j,l}  \big\vert  \big\} + C_{\pi} \sum_{v=T_g+2}^{u-1} (1-2c_{\min})^{u- v } \big( \big\vert \boldsymbol{\Theta}^{v}_{i,j,l} -   \boldsymbol{\Theta}^{v-1}_{i,j,l}  \big\vert  +  \big\vert  \boldsymbol{\Delta}^{v}_{i,j,l} - \boldsymbol{\Delta}^{v-1}_{i,j,l}  \big\vert  \big) \nonumber\\
\leq & C_{\pi} \sum_{v=T_g+2}^u (1-2c_{\min})^{u- v} \big( \big\vert \boldsymbol{\Theta}^{v}_{i,j,l} -   \boldsymbol{\Theta}^{v-1}_{i,j,l}  \big\vert  +  \big\vert  \boldsymbol{\Delta}^{v}_{i,j,l} - \boldsymbol{\Delta}^{v-1}_{i,j,l}  \big\vert  \big), \nonumber
\end{align}
which proves the claim stated in \eqref{eq-pi-theta-delta}. 

Then for any $v \neq s$ and $v, s \geq T_g +2$, we have that 
\begin{align}\label{eq-pi-s-v}
    \bigl|\boldsymbol{\Pi}^{v-1}_{i,j,l}-\boldsymbol{\Pi}^{s-1}_{i,j,l}\big| 
    \leq &  
    \sum_{v' = v \wedge s }^{v \vee s -1 }\bigl|\boldsymbol{\Pi}^{v'}_{i,j,l}-\boldsymbol{\Pi}^{v'-1}_{i,j,l}\big|  \nonumber\\
    \leq & \sum_{v' = v \wedge s}^{v \vee s -1 } C_{\pi} \sum_{u' =T_g+2}^{v'} (1-2c_{\min})^{v'- u'}  \big( \big\vert \boldsymbol{\Theta}^{u'}_{i,j,l} -   \boldsymbol{\Theta}^{u'-1}_{i,j,l}  \big\vert  +  \big\vert  \boldsymbol{\Delta}^{u'}_{i,j,l} - \boldsymbol{\Delta}^{u'-1}_{i,j,l}  \big\vert  \big)   \nonumber\\
    \leq & \frac{C_{\pi}}{2c_{\min}}\sum_{v' = T_g+2}^{v \vee s -1 }  \big( \big\vert \boldsymbol{\Theta}^{v'}_{i,j,l} -   \boldsymbol{\Theta}^{v'-1}_{i,j,l}  \big\vert  +  \big\vert  \boldsymbol{\Delta}^{v'}_{i,j,l} - \boldsymbol{\Delta}^{v'-1}_{i,j,l}  \big\vert  \big),
\end{align}
where the last inequality follows from summing the geometric series $\sum_{a=0}^{\infty} (1 - 2c_{\min})^a \leq 1 / (2c_{\min})$ for $0 < c_{\min} < 1/2$.

Combining  \eqref{eq-bias-I-spec} and \eqref{eq-pi-s-v},  we can derive that  if $t -k  \geq T_g +1$ for some $g \in [G]$, then
\begin{align}\label{eq-bias-I.1}
    \big\vert   \mathcal{E}^{t,k}_{i, j, l}  \big\vert 
    \leq & \frac{1}{1 -\boldsymbol{\Pi}_{i,j,l}^{t,k}} \frac{C_{\pi}}{2c_{\min}} \max_{s, u \in [t]\backslash[t-k]} \vert \boldsymbol{\Theta}^{s}_{i, j, l} -  \boldsymbol{\Theta}^{u}_{i, j, l}  \vert  \sum_{v' = T_g+2}^{t-1}  \big( \big\vert \boldsymbol{\Theta}^{v'}_{i,j,l} -   \boldsymbol{\Theta}^{v'-1}_{i,j,l}  \big\vert  +  \big\vert  \boldsymbol{\Delta}^{v'}_{i,j,l} - \boldsymbol{\Delta}^{v'-1}_{i,j,l}  \big\vert  \big) \nonumber\\
    \leq &  \frac{C_{\pi}}{2c_{\min}^2} \max_{s, u \in [t]\backslash[t-k]} \vert \boldsymbol{\Theta}^{s}_{i, j, l} -  \boldsymbol{\Theta}^{u}_{i, j, l}  \vert  \sum_{v' = T_g+2}^{t-1}  \big( \big\vert \boldsymbol{\Theta}^{v'}_{i,j,l} -   \boldsymbol{\Theta}^{v'-1}_{i,j,l}  \big\vert  +  \big\vert  \boldsymbol{\Delta}^{v'}_{i,j,l} - \boldsymbol{\Delta}^{v'-1}_{i,j,l}  \big\vert  \big).
\end{align}
where the last inequality follows from \eqref{eq-pi-upper-lower}. 

Combining  \eqref{eq-bias-I-beg}  and \eqref{eq-bias-I.1}, we can conclude that
\begin{align}\label{eq-bias-I}
  (I) = &    \boldsymbol{\Theta}_{i,j, l}^{t, k} + \mathcal{E}^{t,k}_{i, j, l}  ,  
\end{align}
where if $k =1$, 
\begin{equation}\label{eq-bias-I-pro-1}
   \mathcal{E}^{t,k}_{i, j, l} = 0,
\end{equation}
and if $t -k  \geq T_g +1$ for some $g \in [G]$, then
\begin{align}\label{eq-bias-I-pro-2}
\vert \mathcal{E}^{t,k}_{i, j, l}\vert  \leq \frac{C_{\pi}}{2c_{\min}^2} \max_{s, u \in [t]\backslash[t-k]} \vert \boldsymbol{\Theta}^{s}_{i, j, l} -  \boldsymbol{\Theta}^{u}_{i, j, l} \vert \sum_{v' = T_g+2}^{t-1}  \big( \big\vert \boldsymbol{\Theta}^{v'}_{i,j,l} -   \boldsymbol{\Theta}^{v'-1}_{i,j,l}  \big\vert  +  \big\vert  \boldsymbol{\Delta}^{v'}_{i,j,l} - \boldsymbol{\Delta}^{v'-1}_{i,j,l}  \big\vert  \big). 
\end{align}

\medskip
\noindent
\textbf{Step 2.}  In this step,  we focus on term $(III.1)$ in \eqref{eq-bias-all}. By Taylor series with Lagrange remainder, there exist a random scalar  $r_{i,j, l}^{t,k}$ between $k^{-1} \sum_{s=t-k+1}^{t} \mathbf{A}_{i,j,l}^{s-1}$ and $\boldsymbol{\Pi}_{i,j,l}^{t,k}$ such that
\begin{align}\label{eq-bias-III-lower}
(III.1)  =  \frac{1}{k} \mathbb{E}  \bigg[ \sum_{s=t-k+1}^{t} \mathbf{A}_{i,j,l}^{s} (1 - \mathbf{A}_{i,j,l}^{s-1})
\frac{\big(k^{-1} \sum_{s=t-k+1}^{t} \mathbf{A}_{i,j,l}^{s-1} - \boldsymbol{\Pi}_{i,j,l}^{t,k}\big)^2}{\big(1 - r_{i,j, l}^{t,k}\big)^{3}}
 \mathbbm{1}_{\mathcal{I}_{t,k}}\bigg]\geq  0.
\end{align}
On the other hand, due to the event $\mathcal{I}_{t, k}$ defined in \eqref{eq-bias-event}, we can derive that
\begin{align*}
\sum_{u=2}^{\infty} \frac{\big| k^{-1} \sum_{s=t-k+1}^{t} \mathbf{A}_{i,j,l}^{s-1} - \boldsymbol{\Pi}_{i,j,l}^{t,k} \big|^u}{\big(1 - \boldsymbol{\Pi}_{i,j,l}^{t,k}\big)^{u+1}}  \mathbbm{1}_{\mathcal{I}_{t, k}}
&\leq 
\bigg( k^{-1} \sum_{s=t-k+1}^{t} \mathbf{A}_{i,j,l}^{s-1} - \boldsymbol{\Pi}_{i,j,l}^{t,k} \bigg)^2
\sum_{u=0}^{\infty} \frac{1}{(1 - \boldsymbol{\Pi}_{i,j,l}^{t,k})^{3} 2^u} \\
&= \bigg(  k^{-1} \sum_{s=t-k+1}^{t} \mathbf{A}_{i,j,l}^{s-1} - \boldsymbol{\Pi}_{i,j,l}^{t,k} \bigg)^2
\cdot \frac{2}{(1 - \boldsymbol{\Pi}_{i,j,l}^{t,k})^3}.
\end{align*}
We obtain  that 
\begin{align}\label{eq-bias-III-upper}
(III.1)  \leq \frac{1}{k} \mathrm{Var}\bigg( \frac{1}{\sqrt{k}} \sum_{s=t-k+1}^t \mathbf{A}_{i,j,l}^{s-1} \bigg) \cdot \frac{2}{(1 - \boldsymbol{\Pi}_{i,j,l}^{t,k})^3}.
\end{align}
Then we have that
\begin{align}\label{eq-bias-III-var}
\mathrm{Var}\bigg( \frac{1}{\sqrt{k}} \sum_{s = t - k + 1}^{t} \mathbf{A}_{i,j,l}^{s - 1} \bigg) 
= & \frac{1}{k} \sum_{r, s = t - k + 1}^{t} \mathrm{Cov}\big(\mathbf{A}_{i,j,l}^{r - 1}, \mathbf{A}_{i,j,l}^{s - 1} \big) 
\leq  \frac{1}{k} \sum_{r, s = t - k + 1}^{t}  \frac{1}{4} (1 - 2c_{\min})^{\vert r-s \vert}  \nonumber\\
= & \frac{1}{4k} \sum_{h=-(k-1)}^{k-1} (k - |h|) (1 - 2c_{\min})^{|h|}
=  \frac{1}{4k}\bigg\{ k + 2 \sum_{h=1}^{k-1} (k - h) (1 - 2c_{\min})^h \bigg\} \nonumber\\
= & \frac{1}{4k} \bigg[ k + \frac{(1 - 2c_{\min}) \{ 1 - (k-1) (1 - 2c_{\min})^{k} + k(1 - 2c_{\min})^{k+1} \}}{(2c_{\min})^2} \bigg] \nonumber\\
\leq &  \frac{1}{4} \bigg\{ 1 + \frac{2(1 - 2c_{\min})}{k(2c_{\min})^2} \bigg\},
\end{align}
where the first inequality follows from \Cref{lem-temporal-cov}.
Combining \eqref{eq-bias-III-lower}, \eqref{eq-bias-III-upper} and \eqref{eq-bias-III-var}, we can derive that 
\begin{align}\label{eq-bias-III}
0  \leq (III.1)  \leq    
\frac{1}{4k} \bigg\{ 1 + \frac{2(1 - 2c_{\min})}{k(2c_{\min})^2} \bigg\}  \frac{2}{c_{\min}^3} \leq k^{-1} C_0, 
\end{align}
for some absolute constant $C_0 >0$.

\medskip
\noindent
\textbf{Step 3.} In this step, we consider term $(II)$ in \eqref{eq-bias-all}. 
By Cauchy--Schwarz inequality, it holds that
\begin{align}
\vert  (II) \vert = & \bigg\vert \frac{1}{\big(1 - \boldsymbol{\Pi}_{i,j,l}^{t,k}\big)^2} \mathrm{Cov}\bigg( \frac{1}{k} \sum_{s} \mathbf{A}_{i,j,l}^{s}(1 - \mathbf{A}_{i,j,l}^{s-1}),  \frac{1}{k} \sum_{s=t-k+1}^{t} \mathbf{A}_{i,j,l}^{s-1} \bigg)\bigg\vert \nonumber\\
\leq & \frac{1}{\big(1 - \boldsymbol{\Pi}_{i,j,l}^{t,k}\big)^2} \sqrt{\mathrm{Var}\bigg( \frac{1}{k} \sum \mathbf{A}_{i,j,l}^{s}(1 - \mathbf{A}_{i,j,l}^{s-1}) \bigg)} \cdot \sqrt{\mathrm{Var} \bigg(\frac{1}{k} \sum_{s=t-k+1}^{t} \mathbf{A}_{i,j,l}^{s-1}\bigg)}. \nonumber
\end{align}
By \eqref{eq-bias-III-var} in \textbf{Step 2}, we obtain that 
\[
\mathrm{Var} \bigg(\frac{1}{k} \sum_{s=t-k+1}^{t} \mathbf{A}_{i,j,l}^{s-1}\bigg)\leq \frac{C_1'}{k}, 
\]
for some absolute constant $C_1' >0$.
Similarly, we obtain that 
\[
\mathrm{Var}\bigg( \frac{1}{k} \sum \mathbf{A}_{i,j,l}^{s}(1 - \mathbf{A}_{i,j,l}^{s-1}) \bigg) \le \frac{C_2'}{k}.
\] 
Hence, by \eqref{eq-pi-upper-lower}, it holds that
\begin{equation}\label{eq-bias-II}
    \vert (II)  \vert \leq   \frac{1}{c_{\min}^2}  \cdot \sqrt{\frac{C_1' C_2'}{k^2}} \leq  k^{-1}C_1, 
\end{equation}
for some absolute constant $C_1 >0$.

\medskip
\noindent
\textbf{Step 4.}  Analogous to the expansion in \eqref{eq-bias-all}, we obtain that 
\begin{align}
(III.2)  =   & \mathbb{E} \bigg[  \bigg\{ \widehat{\boldsymbol{\Theta}}_{i,j,l}^{t,k}  - \frac{1}{k} \sum_{s=t-k+1}^{t} \frac{\mathbf{A}_{i,j,l}^{s} (1 - \mathbf{A}_{i,j,l}^{s-1})}{1 -\boldsymbol{\Pi}_{i,j,l}^{t,k}} \nonumber\\
& \hspace{0.5cm} 
-  \frac{1}{k} \sum_{s=t-k+1}^{t} \mathbf{A}_{i,j,l}^{s} (1 - \mathbf{A}_{i,j,l}^{s-1}) \frac{k^{-1} \sum_{s=t-k+1}^{t} \mathbf{A}_{i,j,l}^{s-1} - \boldsymbol{\Pi}_{i,j,l}^{t,k}}{(1 - \boldsymbol{\Pi}_{i,j,l}^{t,k} )^2}  \bigg\} \mathbbm{1}_{\mathcal{I}_{t,k}^c}  \bigg].  
 \nonumber\\
\end{align}
Note that 
$\widehat{\boldsymbol{\Theta}}_{i,j,l}^{t,k} \in [0, 1]$. 
By \eqref{eq-pi-upper-lower}, we have that 
\[
\frac{1}{k} \sum_{s=t-k+1}^{t} \frac{\mathbf{A}_{i,j,l}^{s} (1 - \mathbf{A}_{i,j,l}^{s-1})}{1 -\boldsymbol{\Pi}_{i,j,l}^{t,k}}  \in [0, c_{\min}^{-1}]
\]
and 
\[
\frac{1}{k} \sum_{s=t-k+1}^{t} \mathbf{A}_{i,j,l}^{s} (1 - \mathbf{A}_{i,j,l}^{s-1}) \frac{k^{-1} \sum_{s=t-k+1}^{t} \mathbf{A}_{i,j,l}^{s-1} - \boldsymbol{\Pi}_{i,j,l}^{t,k}}{(1 - \boldsymbol{\Pi}_{i,j,l}^{t,k} )^2} \in [-c_{\min}^{-2}, c_{\min}^{-2}]. 
\]
Thus, 
\[
 \vert (III.2)  \vert 
\leq \big({c_{\min}^{-2}} +c_{\min}^{-1}\big) \P\{\mathcal{I}_{t,k}^c\}. 
\]
If $ 1 \leq k  \leq 3$, since $\frac{1}{k} \sum_{s=t-k+1}^t \mathbf{A}_{i,j,l}^{s-1} \in [0, 1]$, we have that 
\[
\mathbb{P} \{\mathcal{I}_{t,k}^c\} \leq   2 \exp \bigg\{ -\frac{ \big(1- \boldsymbol{\Pi}_{i,j,l}^{t,k}\big)^2 }{2} \bigg\}
\leq  2 \exp \big\{ - c_{\min}^2/2 \big\} \leq C_2 k^{-1}
\]
for some absolute constant $C_2 >0$. 

If $ k  \geq 4$, applying the Bernstein inequality for $\alpha$-mixing sequences \citep[e.g.~Theorem 1 in][]{merlevede2009bernstein} together with \eqref{eq-pi-upper-lower} yields 
\[
\mathbb{P} \{\mathcal{I}_{t,k}^c\} \leq  \exp\bigg\{- \frac{ c_{0} k }{\log (k)\log\log (k)} \bigg\} \leq \frac{C_3}{k}
\]
for some absolute constant $C_3 >0$. Hence,  it holds that 
\begin{align}\label{eq-bias-event-prob}
\vert (III.2) \vert \leq & \frac{C_4}{k},
\end{align}
for some absolute constant $C_4 >0$

 \medskip
\noindent
\textbf{Step 5.} Combining \eqref{eq-bias-all}, \eqref{eq-bias-I}, \eqref{eq-bias-I-pro-1}, \eqref{eq-bias-I-pro-2}, \eqref{eq-bias-III}, \eqref{eq-bias-II} and \eqref{eq-bias-event-prob}, we have that
\begin{align}
\Big\vert\mathbb{E}\big\{\widehat{\boldsymbol{\Theta}}_{i,j,l}^{t,k} \big\} - \boldsymbol{\Theta}_{i,j, l}^{t, k} \Big\vert \leq &  \vert \mathcal{E}_{i, j, l}^{t, k}\vert   + \frac{C_0+C_1+C_4}{k}, \nonumber
\end{align}
 where, if $k =1$, 
\[
   \mathcal{E}^{t,k}_{i, j, l} = 0;
\]
and if $t -k  \geq T_g +1$ for some $g \in [G]$, then
\begin{align}
\vert \mathcal{E}^{t,k}_{i, j, l}\vert  \leq \frac{C_{\pi}}{2c_{\min}^2} \max_{s, u \in [t]\backslash[t-k]} \vert \boldsymbol{\Theta}^{s}_{i, j, l} -  \boldsymbol{\Theta}^{u}_{i, j, l} \vert \sum_{v' = T_g+2}^{t-1}  \big( \big\vert \boldsymbol{\Theta}^{v'}_{i,j,l} -   \boldsymbol{\Theta}^{v'-1}_{i,j,l}  \big\vert  +  \big\vert  \boldsymbol{\Delta}^{v'}_{i,j,l} - \boldsymbol{\Delta}^{v'-1}_{i,j,l}  \big\vert  \big). \nonumber
\end{align}
This completes the proof of \eqref{eq-bias-1}.
Similarly, we can prove  \eqref{eq-bias-2}.

\end{proof}

\subsubsection{Proof of Lemma \ref{lemma-singular-values-non-stat}}\label{sec-app-4-non}

\begin{proof}[Proof of \Cref{lemma-singular-values-non-stat}]

Each row of $Z$ contains exactly one $1$ and each column $k \in[K]$ contains $s_k$ ones, so
\[
Z^\top Z = \mathrm{diag}(s_1,\dots,s_K).
\]
Therefore $\mathrm{rank}(Z) = K$, $\sigma_1(Z) = \sqrt{s_{\max}}$ and 
$\sigma_K(Z) = \sqrt{s_{\min}}$.

For any $t \in [T]$ and $k \in [t]$, it holds that 
\[
  \mathbf{\Omega}^{t, k} = \mathbf{Q}^{t, k} \times_1 Z \times_2 Z.
\]
By the definition and basic properties of tensor matricisation \citep[e.g. Lemma 4 in ][]{zhang2018tensor}, we see that 
\[
    \mathcal{M}_1(\mathbf{\Omega}^{t, k}) = Z \mathcal{M}_1 (\mathbf{Q}^{t, k})(Z \otimes I_{L})^{\top}.
\]
Fix  $t \in [T]$ and $k \in [t]$. We aim to show that $\mathrm{rank}(\mathcal{M}_1(\mathbf{\Omega}^{t, k})) = \mathrm{rank}(\mathcal{M}_1(\mathbf{Q}^{t, k}))$.  By Lemma S3 in the Supplement of \cite{wang2025multilayer}, it suffices to verify that $Z \otimes I_{L} \in \mathbb{R}^{nL \times KL}$ has full column rank, i.e.~$\mathrm{rank}(Z \otimes I_{L}) = KL$. 
Using the property of Kronecker products, we obtain $\mathrm{rank}(Z \otimes I_{L}) = \mathrm{rank}(Z) \mathrm{rank}( I_{L}) = LK$. Thus $\mathcal{M}_1(\boldsymbol{\Omega}^{t,k})$ and $\mathcal{M}_1(\mathbf{Q}^{t,k})$ have the same rank.

Next, we have that
\[
    \|\mathcal{M}_1(\mathbf{\Omega}^{t, k})\| \leq \|Z\| \|\mathcal{M}_1(\mathbf{Q}^{t, k})\| \|Z \otimes I_L\| = \sigma_1^2(Z) \| \mathcal{M}_1(\mathbf{Q}^{t, k}) \| \leq s_{\max} \|\mathbf{Q}^{t, k}\|.
\]
It follows from Lemma S4 in the Supplement of \cite{wang2025multilayer}  that 
\[
    \sigma_{\min} (\mathcal{M}_1(\mathbf{\Omega}^{t, k})) \geq \sigma_{\min} (Z) \sigma_{\min}(\mathcal{M}_1(\mathbf{Q}^{t, k})) \sigma_{\min} (Z \otimes I_L) = \sigma_K^2(Z)  \sigma_{K} (\mathcal{M}_1(\mathbf{Q}^{t, k})) \geq s_{\min} \sigma_{\min} (\mathbf{Q}^{t, k}).
\]
Applying the same arguments, we obtain for any $s \in [3]$ that 
\[
\mathrm{rank} \big( \mathcal{M}_s(\mathbf{\Omega}^{t, k} \big)
= \mathrm{rank}(\mathcal{M}_s(\mathbf{Q}^{t, k}))
\]
and
\[ s_{\min} \sigma_{\min} (\mathbf{Q}^{t, k}) \leq  \sigma_{\min}\big(\mathcal{M}_s(\boldsymbol{\Omega}^{t, k})\big)  \leq \big\| \mathcal{M}_s(\boldsymbol{\Omega}^{t, k}) \big\| \leq   s_{\max} \| \mathbf{Q}^{t, k}\|.
\]  
We complete the proof.

\end{proof}

\end{document}